%% 
%% Copyright 2007-2020 Elsevier Ltd
%% 
%% This file is part of the 'Elsarticle Bundle'.
%% ---------------------------------------------
%% 
%% It may be distributed under the conditions of the LaTeX Project Public
%% License, either version 1.2 of this license or (at your option) any
%% later version.  The latest version of this license is in
%%    http://www.latex-project.org/lppl.txt
%% and version 1.2 or later is part of all distributions of LaTeX
%% version 1999/12/01 or later.
%% 
%% The list of all files belonging to the 'Elsarticle Bundle' is
%% given in the file `manifest.txt'.
%% 

%% Template article for Elsevier's document class `elsarticle'
%% with numbered style bibliographic references
%% SP 2008/03/01
%%
%% 
%%
%% $Id: elsarticle-template-num.tex 190 2020-11-23 11:12:32Z rishi $
%%
%%
%\documentclass[preprint,12pt]{elsarticle}

%% Use the option review to obtain double line spacing

\documentclass[%authoryear,
preprint,review,12pt]{elsarticle}

%% Use the options 1p,twocolumn; 3p; 3p,twocolumn; 5p; or 5p,twocolumn
%% for a journal layout:
%% \documentclass[final,1p,times]{elsarticle}
%% \documentclass[final,1p,times,twocolumn]{elsarticle}
%% \documentclass[final,3p,times]{elsarticle}
%% \documentclass[final,3p,times,twocolumn]{elsarticle}
%% \documentclass[final,5p,times]{elsarticle}
%% \documentclass[final,5p,times,twocolumn]{elsarticle}

%% For including figures, graphicx.sty has been loaded in
%% elsarticle.cls. If you prefer to use the old commands
%% please give \usepackage{epsfig}
%\usepackage{setspace}
%% The amssymb package provides various useful mathematical symbols
\usepackage{amssymb}
%% The amsthm package provides extended theorem environments
 
\usepackage{amsthm}

%% The lineno packages adds line numbers. Start line numbering with
%% \begin{linenumbers}, end it with \end{linenumbers}. Or switch it on
%% for the whole article with \linenumbers.
%% \usepackage{lineno}

\usepackage[titletoc,toc,title]{appendix}
\usepackage{dsfont}                     %
\usepackage{amsbsy}					    %
\usepackage{amsfonts}					%
\usepackage{amsmath}					%
\usepackage{amssymb}					%
\usepackage{amsthm}					    %
\usepackage{bbm}						%
\usepackage{graphicx}                   %
\usepackage{multirow}					%
\usepackage{mathtools} %
\usepackage{enumerate}
\usepackage{rotating}
\usepackage[colorlinks,citecolor=blue,urlcolor=blue]{hyperref}
%\journal{Journal of Statistical Planning and Inference}

%\usepackage{xr}
%\externaldocument{supp}

% put your definitions there:
\DeclareMathOperator{\sign}{sign}

% Symbols
\newcommand{\1}{\mathds{1}}

\newcommand{\cX}{X^{\circ}}
\newcommand{\cx}{x^{\circ}}
\newcommand{\cY}{Y^{\circ}}

\newcommand{\eff}{\text{eff}} %
\newcommand{\E}{\mathds{E}} %
\newcommand{\var}{\text{Var}} %
\newcommand{\ex}{\text{ex}} %
\newcommand{\ceil}[1]{\lceil #1 \rceil} 	%
\newcommand{\mat}[1]{\boldsymbol{#1}}      %
\newcommand{\norm}[1]{\| #1 \|}       	%
\newcommand{\R}{\mathds{R}}                	%

\newcommand{\Z}{\mathds{Z}}                	%
 %

% Abreviaturas
				 	  	%
				 	  	%
\newcommand{\as}{a.s.}				  	 	%
\newcommand{\ase}{\mbox{ASE}}		%
\newcommand{\cdf}{c.d.f.}				 	%
\newcommand{\iid}{i.i.d.}				   	%
\newcommand{\pdf}{p.d.f.}				 	%
\newcommand{\rv}{r.v.}                	%
                	%

% Outras coisas
    	%
    	%
    	%
\newcommand{\abs}[1]{\left| #1 \right|}   	%
\newcommand{\ch}[1]{\left\{ #1 \right\}}   	%
\newcommand{\co}[1]{\left[ #1 \right]}	   	%
\newcommand{\innprod}[1]{ \langle #1 \rangle } %
\newcommand{\pa}[1]{\left( #1 \right)}	   	%
\newcommand{\comment}[1]{}			   	%

\newenvironment{lalign*}{\linenomath\csname align*\endcsname}{\csname endalign*\endcsname\endlinenomath}
% ------------------------------------------------------------ %
\newtheorem{theorem}{Theorem} %[chapter] %
\newtheorem{corollary}{Corollary} %[chapter] %
\newtheorem{definition}{Definition} %[chapter] %[teorema]
\newtheorem{lemma}{Lemma} %[chapter] %[teorema]
 %[chapter] %[teorema]
%\newenvironment{proof}{\noindent\textbf{Proof} }{\hfill \rule[.5mm]{2mm}{2mm}\\} %
%\theoremstyle{remark}
\newtheorem{remark}{Remark} %

\begin{document}

\begin{frontmatter}

%% Title, authors and addresses

%% use the tnoteref command within \title for footnotes;
%% use the tnotetext command for theassociated footnote;
%% use the fnref command within \author or \address for footnotes;
%% use the fntext command for theassociated footnote;
%% use the corref command within \author for corresponding author footnotes;
%% use the cortext command for theassociated footnote;
%% use the ead command for the email address,
%% and the form \ead[url] for the home page:
%% \title{Title\tnoteref{label1}}
%% \tnotetext[label1]{}
%% \author{Name\corref{cor1}\fnref{label2}}
%% \ead{email address}
%% \ead[url]{home page}
%% \fntext[label2]{}
%% \cortext[cor1]{}
%% \affiliation{organization={},
%%             addressline={},
%%             city={},
%%             postcode={},
%%             state={},
%%             country={}}
%% \fntext[label3]{}

\title{Wavelet-based estimation of power densities of size-biased data}

%% use optional labels to link authors explicitly to addresses:
%% \author[label1,label2]{}
%% \affiliation[label1]{organization={},
%%             addressline={},
%%             city={},
%%             postcode={},
%%             state={},
%%             country={}}
%%
%% \affiliation[label2]{organization={},
%%             addressline={},
%%             city={},
%%             postcode={},
%%             state={},
%%             country={}}

\author[label1]{Michel H. Montoril}
\ead{michel@ufscar.br}
\author[label2]{Alu\'{\i}sio Pinheiro}
\ead{pinheiro@ime.unicamp.br}
\author[label3]{Brani Vidakovic}
\ead{brani@stat.tamu.edu}
\affiliation[label1]{organization={Department of Statistics, Federal University of São Carlos}, country = {Brazil}}
\affiliation[label2]{organization={Department of Statistics, University of Campinas}, country = {Brazil}}
\affiliation[label3]{organization={Department of Statistics, Texas A\&M University}, country = {USA}}

%
%\affiliation{organization={},%Department and Organization
%            addressline={}, 
%            city={},
%            postcode={}, 
%            state={},
%            country={}}

\begin{abstract}
We propose a new wavelet-based method for density estimation when the data are size-biased. More specifically, we consider a power of the density of interest, where this power exceeds 1/2. Warped wavelet bases are employed, where warping is attained by some continuous cumulative distribution function. A special case is the conventional orthonormal wavelet estimation, where the warping distribution is the standard continuous uniform. We show that both linear and nonlinear wavelet estimators are consistent, with  optimal and/or near-optimal rates. Monte Carlo simulations are performed to compare four special settings which are easy to interpret in practice.  An application with a real dataset on fatal traffic accidents involving alcohol illustrates the method. We observe that warped bases provide more flexible and superior estimates for both simulated and real data. Moreover, we find that estimating the power of a density (for instance, its square root) further improves the results.
\end{abstract}

%%%Graphical abstract
%\begin{graphicalabstract}
%%\includegraphics{grabs}
%\end{graphicalabstract}
%
%%%Research highlights
%\begin{highlights}
%\item Research highlight 1
%\item Research highlight 2
%\end{highlights}

\begin{keyword}
Daubechies-Lagarias algorithm \sep
density estimation\sep irregular design\sep size-biased data\sep warped wavelets.
\MSC 62G07 \sep 62G20.
%% keywords here, in the form: keyword \sep keyword

%% PACS codes here, in the form: \PACS code \sep code

%% MSC codes here, in the form: \MSC code \sep code
%% or \MSC[2008] code \sep code (2000 is the default)

\end{keyword}

\end{frontmatter}

%% \linenumbers

%% main text
\section{Introduction}\label{sec_intro}
Frequently one may be interested in the probability density function (\pdf) $ f $ of some random variable (\rv) $ X $. The estimation of $ f $ is done based on a sample $ X_1, X_2, \ldots, X_n $,  usually consisting of independent and identically distributed (\iid) \rv's. In this scenario, there is a wide range of solutions \citep[see, e.g.][]{Fan.Gijbels-1996, Efromovich-1999, Klemela-2009, Scott-2015}. Sometimes it may be impossible to collect such a sample. Instead, observing $ X = x $ happens under the interference of some biasing device that imposes weights according to the magnitude (size) of $ x $. In this situation one observes a sample $ Y_1, Y_2, \ldots, Y_n $ of $ Y $, which has \pdf\ $ g $. This is a \textit{biased sample} and its \pdf\ is related to $ f $ by
\begin{equation}\label{eq:pdf}
g(y) = \dfrac{w(y) f(y)}{\mu},
\end{equation}
where $ g $ is known to be the biased \pdf, $ w $ is a weighting function, and $ \mu = \E\co{w(X)} $.

The problem of biased data is introduced by \cite{Cox-1969}, which proposes
\[ \hat{F}(x) = \dfrac{\hat{\mu}}{n} \sum_{i = 1}^{n} w^{-1}(Y_i) \1(Y_i \leq x) \]
as the estimator of the cumulative distribution function (\cdf) $ F $, where $ w^{-1}(y) = 1/w(y) $, and $ \1(A) $ is one if $ A $ is true, and zero, otherwise. Moreover,
\begin{equation}\label{eq:mu-hat}
\hat{\mu} = \dfrac{n}{\sum_{i = 1}^{n} w^{-1}(Y_i)}.
\end{equation}

Since then, studies involving biased data have gained attention, especially  because of their relevance to a wide range of applications. Consider the following example \cite{Efromovich-1999, Ramirez.Vidakovic-2010-JSPI}. We are interested in the  distribution of the concentration of alcohol in the blood of intoxicated drivers. This data is usually available from routine police reports on arrested drivers charged with driving under the influence. Drivers with higher levels of intoxication have a higher  chance of being arrested, so the collected data are size-biased toward higher concentration of alcohol in the blood. Several other similar examples can be found on the literature. See, e.g. \cite{Efromovich-2004-JSPI, Efromovich-2004-AS, Ramirez.Vidakovic-2010-JSPI} and the references therein.

In terms of the estimation methodology, different approaches have been used to estimate $ f $. For example, \cite{Vardi-1982-AS} considered a nonparametric maximum likelihood approach; \cite{Jones-1991-B} analyzed the mean square error properties of a kernel estimation method;  \cite{ElBarmi.Simonoff-2000-JNS} proposed a simple transformation approach; \cite{Efromovich-2004-AS, Efromovich-2004-JSPI} studied the asymptotic properties of $ f $ and $ F $, respectively, via Fourier series; \cite{Brunel.etal-2009-T} considered projection estimator methods for right censored data; and \cite{Borrajo.etal-2017-JoNS} proposed bandwidth selection methods for the estimation of $ f $ when the kernel approach is used. Also, in the context of biased data, \cite{Tenzer.etal-2021-JotASA} proposed two approaches to test independence, both based on resampling methods.

In density estimation problems, wavelet bases are strong competitors to other orthonormal bases, such as Fourier and Hermite, among others. Wavelet bases are known to possess several optimality properties, such as adaptive simultaneous localization in space and scale/frequency, and have been used to solve several statistical problems. In the general density estimation context: \cite{Doukhan-1988-CSI, Doukhan.Leon-1990-CSIM} introduce linear wavelet estimators; \cite{Kerkyacharian.Picard-1992-SPL, Kerkyacharian.Picard-1993-S&PL} explore linear wavelet estimator in Besov spaces; \cite{Donoho.etal-1995-JRSSSBM, Donoho.etal-1996-AS} consider nonlinear estimators and studies their minimax properties in Besov spaces; \cite{Pinheiro.Vidakovic-1997-CSDA} proposes the estimation based on the square root of the density, which is useful to control positiveness and $ L_1 $-norm for the density estimate (the density estimate to integrate to 1); and \cite{Gine.Nickl-2009-AP} derives several uniform limits for the linear wavelet estimator.

Several studies have been developed on wavelet estimation of densities for biased data. Papers \cite{Ramirez.Vidakovic-2010-JSPI} and \cite{Chesneau.etal-2012-JoNS} consider wavelet-based methods to estimate the density of stratified biased data under the assumption that the data is independent and associated, respectively, and \cite{Cutillo.etal-2014-JoSPaI} derives asymptotic properties in $ L_2 $-sense for linear and nonlinear wavelet-based estimators. Paper \cite{Guo.Kou-2019-JIA} exploits pointwise estimation, while \cite{Kou.Guo-2018-JIA} and \cite{Guo.Kou-2019-JoCaAM} study the asymptotic properties of wavelet estimators for the density of multivariate (strong mixing and independent) biased data. Papers \cite{Yu-2020-RM} and \cite{Yu.Liu-2020-JoCaAM} consider the case of biased data with multiple change-points.

The novelty of the proposed approach is that we consider the estimation of the power density for biased data, say $ f^a $, $ a \geq 1/2 $. The standard approach for the direct density estimation is a special case when the power is $ a = 1 $. Another special case we should mention is $ a = 1/2 $, considered by \cite{Pinheiro.Vidakovic-1997-CSDA} for ``unbiased'' \iid\ data. In the case of $ a = 1/2 $, there is an advantage of dealing with orthonormal bases. Projection estimators can  be constructed to ensure non-negative density estimates that integrate to one (see the aforementioned reference for more details). Moreover, no $L_2$ assumption on $f$ is required.

Another contribution of this paper is the use of warped wavelet bases in this context. This can be useful in stabilizing numerical estimates for finite data, specially in the regions with sparse observations, which is quite common given the biasing function. These warped wavelet bases provided good performance \cite{Montoril.etal-2018-IJWMIP}. Some other references associated to warped wavelets are \cite{Cai.Brown-1998-AS, Cai.Brown-1999-SPL, Kerkyacharian.Picard-2004-B}.

This paper is organized as follows. In Section \ref{sec:wav-est} we propose and analyze the wavelet-based estimation method. Some theoretical results, special cases and computational aspects are discussed there. In Section \ref{sec:numerical}, we evaluate the performance of the methodology, using four special cases, through Monte Carlo simulation studies and a real dataset application. Some comments and conclusions are made in Section \ref{sec:conclusions}. %The proofs of the theoretical results are deferred to the \ref{sec:appendix}.

\section{Wavelet-based estimator}\label{sec:wav-est}

\subsection{A brief review of wavelets}

Wavelet bases are systems of functions capable of an efficient and parsimonious representation of other square integrable functions. Specifically, any function $ f \in L_2([0,1]) $ can be represented in $L_2([0,1])$-norm as
\[ f(x) = \sum_{k \in \Z} c_{j_0 k}\phi_{j_0k}(x) +
\sum_{j=j_0}^{\infty}\sum_{k \in \Z} d_{jk}\psi_{jk}(x), \]
where $ \phi_{j_0k}(x) = 2^{j_0/2} \phi(2^{j_0}x - k) $ and $
\psi_{jk}(x) = 2^{j/2} \psi(2^{j}x - k) $ are generated by the scaling
$ \phi $, and embedded  on a  multiresolution analysis of $L_2([0,1])$
\citep{Mallat-2008}. $\phi$ is called the scaling function or father wavelet. $\psi$ is called mother wavelet or simply wavelet, and it is also generated by $\phi$.

Since we assume that $ f $ is defined on $ [0,1] $, we consider the periodized version of the wavelet bases, whose atoms can be written as
\[
\phi_{jk}^p(x) = \sum_{l} \phi_{jk}(x-l), \; \psi_{jk}^p(x) = \sum_{l} \psi_{jk}(x-l), \; x \in [0,1],
\]
where $ k = 0, \ldots, 2^j - 1 $, $ j \in \Z $. One can show that, when $ \{ \phi_{Jk} \}_k $ generates an orthonormal basis, then $ \{ \phi_{Jk}^p \}_k $ will be orthonormal as well. Furthermore, if we consider compactly supported Daubechies wavelet bases, their periodized version shares most of their properties, with the advantage of dealing with the boundary problems \citep{Restrepo.Leaf-1997-IJNME}. In the sequel we adopt the periodized wavelets, and drop the superscript $ p $ for notational convenience. Thus, it is easy to see that shifts are bounded within the scales,
\[ f(x) = \sum_{k=0}^{2^{j_0}-1} c_{j_0 k}\phi_{j_0k}(x) +
\sum_{j=j_0}^{\infty}\sum_{k=0}^{2^j-1} d_{jk}\psi_{jk}(x), \]
where the Fourier coefficients can be written as
\[ c_{j_0k} = \int_{0}^{1} \phi_{j_0k}(x) f(x) dx \quad \text{and} \quad d_{jk} = \int_{0}^{1} \psi_{jk}(x) f(x) dx.\]

We now denote the $a$-th power of $f$ as $ f^a(x) = [f(x)]^a $, $ a \in \R $, let $ h $ be the \pdf\ associated to the continuous \cdf\ $ H $, consider $ H^* $ as the inverse of $ H $, and take $ r^a = f^a \circ H^* $ and $ y = H(x) $, $ x \in [0, 1] $. Then,
\begin{eqnarray}\label{eq:w-analysis}
f^a(x) & = & f^a\pa{H^{*}\pa{H(x)}} \equiv r^a(y) \nonumber \\
& = & \sum_{k = 0}^{2^{j_0}-1} c_{j_0k} \phi_{j_0k}(y) + \sum_{j \geq j_0} \sum_{k = 0}^{2^j-1} d_{jk} \psi_{jk} (y) \\
& = & \sum_{k = 0}^{2^{j_0}-1} c_{j_0k} \phi_{j_0k}\co{H(x)} + \sum_{j \geq j_0} \sum_{k = 0}^{2^j-1} d_{jk} \psi_{jk} \co{H(x)}. \nonumber
\end{eqnarray}
The wavelet basis in \eqref{eq:w-analysis} is ``warped'' by $ H $, and the expansion can be seen as a generalization of the ordinary wavelet analysis. Observe that, when $ H(x) = x $, \eqref{eq:w-analysis} reduces to the usual case. Furthermore, as discussed in Section \ref{sec_intro}, this warped representation may be advantageous for
statistical analyses of irregularly spaced data \citep{Montoril.etal-2018-IJWMIP}.

\subsection{Linear wavelet-based estimation}

We consider $ f^a $, $ a \geq 1/2$ for the size-biased data problem, where $ f $ is defined as in \eqref{eq:pdf}. Assuming that $ f \in L_{2a}([0,1]) $ (or, equivalently, $ f^a \in L_2([0,1]) $), $ f^a $ can be approximated by its orthogonal projection on some multiresolution space $ V_{J_0} $, say $ f_{J_0}^a $, for any arbitrary resolution level $ J_0 $, which results in
\begin{equation}
f_{J_0}^a(x) = \sum_{k = 0}^{2^{J_0}-1} c_{J_0k} \phi_{J_0k}\co{H(x)}. \label{eq:rep1} %\\
%& \equiv & \sum_{k = 0}^{2^{j_0}-1} c_{j_0k} \phi_{j_0k}\co{H(x)} + \sum_{j = j_0}^{J-1} \sum_{k = 0}^{2^j-1} d_{jk} \psi_{jk} \co{H(x)}. \label{eq:rep2}
\end{equation}
%It is important remember that both representations above are equivalent, i.e., the bases $ \{\phi_{Jk}; k = 0, 1, \ldots, 2^J-1\} $ and $ \{\phi_{j_0k}, \psi_{jl}; k = 0, 1, \ldots, 2^{j_0}-1, l = 0, 1, \ldots, 2^j - 1, j = j_0, \ldots, J-1 \} $ spam $ V_J $. For the sake of simplicity, let us focus initially on representation \eqref{eq:rep1}.
The Fourier coefficients satisfy
\begin{eqnarray} \label{eq:cjk}
\begin{split}
c_{J_0k} & = \innprod{f^a \circ H^*, \phi_{J_0k}} %= \int_{0}^{1} r^a(y) \psi_{jk}(y) dy
%= \int_{0}^{1} f^a(x) \phi_{J_0k}\co{H(x)} h(x) dx
=  \int_{0}^{1} \phi_{J_0k}\co{H(x)} \dfrac{\mu^a g^a(x)}{w^a(x)} h(x) dx \\
%& =  \mu^a \int_{0}^{1} \phi_{J_0k}\co{H(x)} \dfrac{g^a(x)}{w^a(x)} h(x) dx \\
& =  \mu^a \E\ch{ \dfrac{\phi_{J_0k}\co{H(Y)} g^{a-1}(Y)}{w^a(Y) h(Y)} },
\end{split}
\end{eqnarray}
where $ k = 0, 1, \ldots, 2^{J_0} - 1 $.

Based on \eqref{eq:cjk}, the coefficients could be estimated by moment matching, resulting in
\[ \bar{c}_{J_0k} = \dfrac{\mu^a}{n} \sum_{i = 1}^{n} \dfrac{\phi_{J_0k}\co{H(Y_i)} g^{a-1}(Y_i) h(Y_i)}{w^a(Y_i)}. \]
Such estimator  is not useful in practical situations because both $ \mu $ and $ g $ are unknown. This problem can be solved by plugging in their estimates. Observe that $ g $ can be easily estimated from the biased data. Kernel-based and wavelet-based estimators are just two efficient methodologies. Let us denote this estimator by $ \hat{g} $. We can use $ \hat{\mu} $ as defined by \eqref{eq:mu-hat}. Therefore, the linear wavelet estimator of $ f^a $ can be written as
\begin{eqnarray}\label{est:linear}
\begin{split}
\hat{f}_{J_0}^a(x) & = \sum_{k = 0}^{2^{J_0}-1} \hat{c}_{J_0k} \phi_{J_0k}\co{H(x)}, \\
\hat{c}_{J_0k} & = \dfrac{\hat{\mu}^a}{n} \sum_{i = 1}^{n} \dfrac{\phi_{J_0k}\co{H(Y_i)} \hat{g}^{a-1}(Y_i) h(Y_i)}{w^a(Y_i)}.
\end{split}
\end{eqnarray}

One can then estimate $ f $ by
\begin{eqnarray}
\hat{f}_{J_0}(x) = \co{\hat{f}_{J_0}^a(x)}^{1/a}.\label{eq_linearwavelet}
\end{eqnarray}

\subsection{Regularized wavelet-based estimation}\label{sec:reg-ests}

Choosing the resolution level $J_0$ is a well-known problem in statistical analysis by wavelets \citep[see, e.g.][for details]{Vidakovic-1999, Morettin.etal-2017}.  Larger values of $J_0$ lead to larger variances, whilst smaller values yield fewer coefficients leading to oversmoothing. Balancing bias and variance may be attained by employing more detail coefficients.  Regularization of these ``extra'' coefficients helps reducing oversmoothing and providing adaptive estimates. We consider a projection on $ V_{J_1} $:
\begin{eqnarray}
f_{J_1}^a(x) & = & \sum_{k = 0}^{2^{J_0}-1} c_{J_0k} \phi_{J_0k}\co{H(x)}  + \sum_{j = j_0}^{J_1-1} \sum_{k = 0}^{2^j-1} d_{jk} \psi_{jk} \co{H(x)} \\
\label{eq:rep2} %\\
& = & f_{J_0}^a(x) + \sum_{j = J_0}^{J_1-1} \sum_{k = 0}^{2^j-1} d_{jk} \psi_{jk} \co{H(x)}.
%& \equiv & \sum_{k = 0}^{2^{j_0}-1} c_{j_0k} \phi_{j_0k}\co{H(x)}
\end{eqnarray}

Analogously to \eqref{eq:cjk}, one has
\begin{equation}\label{djk}
%c_{J_0k} & = & \mu^a \E\ch{ \dfrac{\phi_{J_0k}\co{H(Y)} g^{a-1}(Y) h(Y)}{w^a(Y)} } \\
d_{jk} = \mu^a \E\ch{ \dfrac{\psi_{jk}\co{H(Y)} g^{a-1}(Y) h(Y)}{w^a(Y)} },
\end{equation}
$ k = 0, 1, \ldots, 2^j-1 $, $ j = J_0, \ldots, J_1 - 1 $. The detail coefficients can be estimated as
\[ \hat{d}_{jk} = \dfrac{\hat{\mu}^a}{n} \sum_{i = 1}^{n} \dfrac{\psi_{jk}\co{H(Y_i)} \hat{g}^{a-1}(Y_i) h(Y_i)}{w^a(Y_i)}. \]

We shrink $\hat{d}_{jk}$ by
\[
\hat{d}_{jk}^* = \lambda_{jk} \hat{d}_{jk},
\]
where
$ 0 \leq \lambda_{jk} \leq 1 $ plays the role of thresholding regularizer. There are several regularization methods that satisfy this representation. Two of the most famous are the hard- and the soft-thresholding approaches, where the latter is written as $ \hat{d}_{jk}^* = \sign(\hat{d}_{jk}) (|\hat{d}_{jk}| - \lambda)_+ $, and the former satisfies $ \hat{d}_{jk}^* = \hat{d}_{jk} \1(|\hat{d}_{jk}| > \lambda) $. See \cite{Vidakovic-1999} for details.

The proposed regularized nonlinear wavelet estimator is then given by
\begin{equation}\label{est:nonlinear}
\tilde{f}_{J_1}^a(x) = \hat{f}_{J_0}^a(x) + \sum_{j = J_0}^{J_1-1} \sum_{k = 0}^{2^j-1} \hat{d}_{jk}^* \psi_{jk} \co{H(x)}.
\end{equation}

We show in Section \ref{sec:theory} that proposals \eqref{est:linear} and \eqref{est:nonlinear} both result in consistent estimators.

\subsection{Theoretical results}\label{sec:theory}

In this section we discuss the mean integrated square error (MISE) consistency of $ \hat{f}_{J_0}^a $ and $ \tilde{f}_{J_1}^a $. For instance, we say that $ \hat{f}_{J} $ is MISE-consistent estimating $ f$ if $ \lim_n \E\norm{\hat{f}_{J} - f}_2 = 0 $, where $ \norm{h}_p = \pa{\int_{0}^{1} h^p(x) dx}^{1/p} $, $ 1 \leq p < \infty $.

Usually $f$ possess some degree of smoothness. Specifically we assume that $f$ belongs to a Sobolev space.

\begin{definition}
Let $ m \in \{0, 1, \ldots\} $ and $ 1 \leq p \leq \infty $. The Sobolev space corresponds to the set of functions $ W_p^m([0,1]) = \ch{f \in L_p([0,1]): \ f^{(m)} \in L_p([0,1])} $. It is equipped with the norm $ \norm{f}_{W_p^m} = \norm{f}_p + \norm{f^{(m)}}_p $.
\end{definition}

Let us focus on the Sobolev ball
\[ \tilde{W}_p^m(U) = \ch{f \in W_p^m([0,1]) : f \text{ is a } \pdf, \ \norm{f^{(m)}}_p \leq U}. \]
This class of functions is similar, for example, to the class used by \cite{Hardle.etal-1998} (Theorem 10.1) or \cite{Wang.etal-2013-JIA}.

Further, we impose some regularity conditions. First some notation is required. For two sequences of positive numbers $ a_n $ and $ b_n, $  we say that $ a_n \lesssim b_n $, if the ratio is uniformly bounded, and $ a_n \asymp b_n $, if $ a_n \lesssim b_n $ and $ b_n \lesssim a_n $.

\subsubsection*{Assumptions}
\begin{enumerate}
\item[(a1)] $ f $ in \eqref{eq:pdf} is bounded away from zero and infinity and $ f^a \in \tilde{W}_2^m(U) $, for $ a \geq 1/2 $, $ 0 < U < \infty $ and $ m = 1, 2, \ldots $.
\item[(a2)] $ w $ in \eqref{eq:pdf} is bounded away from zero and infinity.
%\item[(a3)] $ J_0 \equiv J_0(n) $ and $ J_1 \equiv J_1(n) $ are positive integers such that $ J_0 \leq J_1 $, $ 2^{J_0} \asymp 2^{J_1} $ and $ J_12^{J_1}/n \to 0 $ as $ n \to \infty $;
\item[(a3)] The \cdf\ $ H $ used to warp the wavelet basis is continuous
and strictly monotone. Its \pdf\ $ h $ is bounded away from zero and infinity uniformly on $ [0, 1] $.
\item[(a4)] The employed wavelet basis is a periodized version of some Daubechies compactly supported wavelet basis, with at least $ m $ vanishing moments.
\end{enumerate}

The above assumptions are frequently used in the literature. For example, (a2) is used in \cite{Efromovich-2004-JSPI, Efromovich-2004-AS, Wang.etal-2013-JIA} and (a3) is considered by \cite{Montoril.etal-2018-IJWMIP}.

\begin{remark}
The assumptions (a1) and (a2) ensure that $ g $ in \eqref{eq:pdf} is also bounded away from zero and infinity.
\end{remark}

\begin{remark}
The proofs of the results presented in this section are available in the Supplementary Material. % \ref{sec:appendix}.
\end{remark}

\begin{theorem}\label{theo:lin-a1}
Suppose assumptions (a1) -- (a4) hold. Furthermore, assume that $ J_0 \equiv J_0(n) $ is an increasing sequence of positive integers such that $ 2^{J_0}/n \to 0 $. Then, for $ a = 1 $, $ \hat{f}_{J_0}^a $ in \eqref{est:linear} is MISE-consistent. Its rate of convergence is given by
\[
\sup_{f^a \in \tilde{W}_2^m(U)} \E \norm{\hat{f}_{J_0}^a - f^a}_2^2 \lesssim \dfrac{2^{J_0}}{n} + 2^{-2mJ_0}.
\]
\end{theorem}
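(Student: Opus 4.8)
The plan is to transport the whole problem to the warped coordinate $ y = H(x) $, where the basis becomes an ordinary orthonormal system, and then run a coefficientwise bias--variance analysis. Writing $ \hat{r}^a(y) = \sum_{k=0}^{2^{J_0}-1}\hat{c}_{J_0k}\phi_{J_0k}(y) $ and $ r^a = f^a\circ H^* $, we have $ \hat{f}_{J_0}^a(x) - f^a(x) = (\hat{r}^a - r^a)(H(x)) $. Substituting $ y = H(x) $ and using that, by (a3), $ h $ is bounded away from zero and infinity, gives the norm equivalence
\[ \norm{\hat{f}_{J_0}^a - f^a}_2^2 = \int_0^1 (\hat{r}^a - r^a)^2(y)\,\frac{dy}{h(H^*(y))} \asymp \norm{\hat{r}^a - r^a}_{L_2(dy)}^2, \]
so it suffices to bound $ \E\norm{\hat{r}^a - r^a}_{L_2(dy)}^2 $. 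Since $ \{\phi_{J_0k},\psi_{jk}\} $ is orthonormal in $ L_2(dy) $, Parseval splits this into a variance and a bias term,
\[ \E\norm{\hat{r}^a - r^a}_{L_2(dy)}^2 = \sum_{k=0}^{2^{J_0}-1}\E\co{(\hat{c}_{J_0k} - c_{J_0k})^2} + \sum_{j\geq J_0}\sum_{k=0}^{2^j-1} d_{jk}^2. \]

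For the bias term, the warped detail coefficients $ d_{jk} $ coincide with the ordinary wavelet coefficients of $ r^a $, so the double sum equals $ \norm{r^a - P_{J_0}r^a}_{L_2(dy)}^2 $. The smoothness $ f^a\in\tilde{W}_2^m(U) $ transfers to $ r^a $ through the warping under (a3), and the $ m $ vanishing moments of the Daubechies basis (a4) then give the standard decay estimate $ \sum_{j\geq J_0}\sum_k d_{jk}^2 \lesssim 2^{-2mJ_0} $, uniformly over the ball.

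The crux is the variance term, which is where the plug-in $ \hat{\mu} $ enters. For $ a=1 $ we have $ g^{a-1}\equiv 1 $, so $ \hat{g} $ is not needed and the estimator factors as $ \hat{c}_{J_0k} = (\hat{\mu}/\mu)\,\check{c}_{J_0k} $, where $ \check{c}_{J_0k} = \tfrac{\mu}{n}\sum_{i=1}^n \phi_{J_0k}\co{H(Y_i)}h(Y_i)/w(Y_i) $ is unbiased for $ c_{J_0k} $. Setting $ R = \hat{\mu}/\mu $ and splitting $ \hat{c}_{J_0k} - c_{J_0k} = R(\check{c}_{J_0k} - c_{J_0k}) + (R-1)c_{J_0k} $, I would treat the two pieces separately. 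By (a2), $ w $ is bounded away from zero and infinity, so $ R $ is bounded almost surely and $ \E\co{(R-1)^2}\lesssim 1/n $, the latter because $ R-1 $ is a Lipschitz function of the sample mean of the bounded, mean-one i.i.d.\ variables $ \mu/w(Y_i) $. A change of variables $ u=H(y) $ then gives, uniformly in $ k $,
\[ \E\co{(\check{c}_{J_0k} - c_{J_0k})^2} = \tfrac1n\var(\xi_{1k}) \leq \tfrac{\mu}{n}\int \phi_{J_0k}^2(u)\,\frac{(hf)(H^*(u))}{w(H^*(u))}\,du \lesssim \tfrac1n\int\phi_{J_0k}^2(u)\,du = \tfrac1n, \]
where $ \xi_{ik} = \mu\,\phi_{J_0k}\co{H(Y_i)}h(Y_i)/w(Y_i) $ and the bound uses that $ f,h,w $ are bounded under (a1)--(a3). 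Summing the $ 2^{J_0} $ coefficients controls the first piece by $ 2^{J_0}/n $, while the second contributes $ \E\co{(R-1)^2}\sum_k c_{J_0k}^2 \lesssim \norm{r^a}_{L_2(dy)}^2/n \lesssim 1/n $. Collecting all terms yields $ \E\norm{\hat{f}_{J_0}^a - f^a}_2^2 \lesssim 2^{J_0}/n + 2^{-2mJ_0} $; since $ 2^{J_0}/n\to0 $ and $ J_0\to\infty $, both terms vanish and MISE-consistency follows. The main obstacle is isolating the contribution of the plug-in $ \hat{\mu} $ and showing it is negligible uniformly over the Sobolev ball; the clean product factorization $ \hat{c}_{J_0k} = (\hat{\mu}/\mu)\check{c}_{J_0k} $ is special to $ a=1 $, and I expect the general-$ a $ analysis, which must also absorb the error of the pilot estimator $ \hat{g} $, to be considerably more delicate.
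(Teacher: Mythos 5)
Your proof is correct and follows essentially the same route as the paper's: the paper also reduces to the warped coordinate via the norm equivalence implied by (a3), applies Parseval to split bias and variance, bounds the bias by $2^{-2mJ_0}$ via the vanishing-moment/smoothness argument, and controls each $\E|\hat{c}_{J_0k}-c_{J_0k}|^2\lesssim n^{-1}$ by exactly your decomposition into the centered unbiased average plus the $(\hat{\mu}/\mu - 1)c_{J_0k}$ remainder (the paper's terms $I_2$ and $I_3$ in its Lemma, with $I_1$ vanishing since $a=1$). Your closing observation that the general-$a$ case must additionally absorb the pilot estimator $\hat{g}$ matches how the paper organizes Theorems 1 and 2 around a common lemma.
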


Theorem \ref{theo:lin-a1} states that the use of warping wavelets in the conventional case ($ a = 1 $) does not impact the minimax rate of convergence. Therefore, the rate of convergence will be minimized if we consider $ 2^{J_0} \asymp n^{\frac{1}{2m + 1}} $. In this case, one can see that $$ \sup_{f^a \in \tilde{W}_2^m(U)} \E \norm{\hat{f}_{J_0}^a - f^a}_2^2 \lesssim n^{-\frac{2m}{2m + 1}}. $$

\begin{theorem}\label{theo:lin-aa}
Suppose assumptions (a1) -- (a4) hold. Furthermore, assume that $ J_0 \equiv J_0(n) $ is an increasing sequence of positive integers. If there exists a positive sequence $ D_n $ such that $ 2^{J_0}D_n \to 0 $ as $ n \to \infty $ and\linebreak $ \sup_{y \in [0,1]} \abs{\hat{g}(y) - g(y)}^2 \lesssim D_n $, then for $ a \neq 1 $, $ \hat{f}_{J_0}^a $ given by \eqref{est:linear} is MISE-consistent. Its rate of convergence will be
\[
\sup_{f^a \in \tilde{W}_2^m(U)} \E \norm{\hat{f}_{J_0}^a - f^a}_2^2 \lesssim 2^{J_0}D_n + \dfrac{2^{J_0}}{n} + 2^{-2mJ_0}.
\]
\end{theorem}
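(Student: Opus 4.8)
The plan is to start from the standard bias–variance decomposition of the MISE for the linear wavelet estimator and to adapt the argument of Theorem \ref{theo:lin-a1} to the case $a \neq 1$, where the extra complication is that the coefficient estimator $\hat{c}_{J_0k}$ in \eqref{est:linear} now involves the plug-in factor $\hat{g}^{a-1}$ rather than being linear in the data. First I would write $\E\norm{\hat{f}_{J_0}^a - f^a}_2^2 = \E\norm{\hat{f}_{J_0}^a - f_{J_0}^a}_2^2 + \norm{f_{J_0}^a - f^a}_2^2$, using orthonormality of $\{\phi_{J_0k}\}$ so that the stochastic part equals $\sum_{k} \E(\hat{c}_{J_0k} - c_{J_0k})^2$ and the deterministic part is the squared approximation error. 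For the approximation (bias) term, assumption (a1) places $f^a$ in the Sobolev ball $\tilde{W}_2^m(U)$, and (a4) guarantees $m$ vanishing moments, so the usual tail-bound for multiresolution projections gives $\norm{f_{J_0}^a - f^a}_2^2 \lesssim 2^{-2mJ_0}$; this is exactly the third term in the claimed rate and requires no new idea beyond Theorem \ref{theo:lin-a1}.

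The core of the work is bounding the variance-type term $\sum_{k=0}^{2^{J_0}-1}\E(\hat{c}_{J_0k} - c_{J_0k})^2$. I would introduce the intermediate, infeasible estimator
\[
\bar{c}_{J_0k} = \frac{\hat{\mu}^a}{n}\sum_{i=1}^n \frac{\phi_{J_0k}[H(Y_i)]\,g^{a-1}(Y_i)\,h(Y_i)}{w^a(Y_i)},
\]
which uses the true $g$ but the estimated $\hat\mu$, and split
\[
\E(\hat{c}_{J_0k} - c_{J_0k})^2 \lesssim \E(\hat{c}_{J_0k} - \bar{c}_{J_0k})^2 + \E(\bar{c}_{J_0k} - c_{J_0k})^2.
\]
The second summand is the "clean" variance driven by $\hat\mu$ and the i.i.d.\ averaging; by the argument of Theorem \ref{theo:lin-a1}, using that $\phi_{J_0k}$ is bounded by $\lesssim 2^{J_0/2}$ and that $g$, $w$, $h$ are all bounded away from $0$ and $\infty$ (Remark following the assumptions, plus (a2)--(a3)), each term is $O(2^{J_0}/n \cdot 2^{-J_0})$ after accounting for the $\phi$-normalization, so summing over the $2^{J_0}$ values of $k$ yields the $2^{J_0}/n$ term. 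The first summand is where the plug-in error enters: writing the difference as an average of $\phi_{J_0k}[H(Y_i)]\,[\hat{g}^{a-1}(Y_i) - g^{a-1}(Y_i)]\,h(Y_i)/w^a(Y_i)$ (times the bounded factor $\hat\mu^a$), I would apply a mean-value/Lipschitz bound $|\hat{g}^{a-1} - g^{a-1}| \lesssim |\hat{g} - g|$ valid on the compact range where $g$ is bounded away from zero, control it by $\sup_y|\hat{g}(y)-g(y)|^2 \lesssim D_n$, bound $\phi_{J_0k}^2 \lesssim 2^{J_0}$, and sum over $k$. Accounting correctly for the normalization of the periodized wavelets, this produces the $2^{J_0}D_n$ term.

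The main obstacle I expect is handling the dependence introduced by $\hat\mu$ and $\hat{g}$ cleanly, since both are random and computed from the same sample as the averaging, so the split above is not a genuine independence decomposition; I would control the $\hat\mu^a$ factor by showing it concentrates around $\mu^a$ (it is a bounded, consistent estimator under (a2)) and absorb it into the constants, and I would use the uniform-in-$y$ control on $\hat{g}-g$ so that the plug-in error bound holds deterministically on the event $\{\sup_y|\hat{g}-g|^2 \lesssim D_n\}$ and can be pulled outside the expectation over the wavelet atoms. Combining the three contributions gives $\sup_{f^a\in\tilde{W}_2^m(U)}\E\norm{\hat{f}_{J_0}^a - f^a}_2^2 \lesssim 2^{J_0}D_n + 2^{J_0}/n + 2^{-2mJ_0}$, and under the stated conditions $2^{J_0}D_n \to 0$ and $2^{J_0}/n \to 0$ (the latter forced once $J_0\to\infty$ with the Sobolev bias vanishing) the right-hand side tends to $0$, establishing MISE-consistency.
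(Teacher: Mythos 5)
Your proposal is correct and follows essentially the same route as the paper's proof: an orthogonal (Parseval) decomposition into the Sobolev bias $2^{-2mJ_0}$ plus the sum of per-coefficient mean-square errors, with each coefficient error split into the $\hat g$ plug-in term (bounded by $D_n$ via the Lipschitz bound $\abs{\hat g^{a-1}-g^{a-1}}\lesssim\abs{\hat g-g}$ and the sup-norm control, using $\E\,\phi_{J_0k}^2[H(Y)]\asymp 1$ rather than the crude pointwise bound) and the $\hat\mu$/i.i.d.-averaging terms, each $O(n^{-1})$. The one point to handle carefully is that the warped atoms $\phi_{J_0k}\circ H$ are not exactly orthonormal in $L_2([0,1],dx)$, so the exact Parseval identity you invoke holds only after the change of variables $y=H(x)$, i.e., for $r^a=f^a\circ H^*$; the paper works there and transfers the bound back via the norm equivalence $\norm{f^a}^2\asymp\norm{r^a}^2$ guaranteed by (a3), which turns your equalities into $\asymp$ without affecting the stated rate.
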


Theorem \ref{theo:lin-aa} states that  the rate of convergence will no longer be minimax when we consider a non-trivial power of $ f $. As expected, the rate of convergence is slower. This happens because the estimators of the coefficients $ c_{J_0k} $ in \eqref{est:linear} depend on another estimator, specifically $ \hat{g} $. We need not only convergence of $ \hat{c}_{J_0k} $ to  $ c_{J_0k} $, but  the sup-norm convergence between $ \hat{g} $ and $ g $ as well.

We should also note regarding Theorem \ref{theo:lin-aa}  that its stated  MISE-consistency depends on $ 2^{J_0} D_n \to 0$ as $ n \to \infty $, where $D_n$ is $ \hat{g} $'s rate of convergence. One finds several sup-norm convergence results for kernel density estimators \citep{Silverman-1978-AS, Gine.Guillou-2002-AIHP} and wavelet-based estimators, for both linear and nonlinear approaches \citep{Gine.Nickl-2009-AP}.

We illustrate this issue by considering $ \hat{g} $ a linear wavelet-based estimator with some resolution level $ j_n $, i.e.,
\begin{eqnarray}
\begin{split}
\hat{g}(x) & = \sum_{k = 0}^{2^{j_n} - 1} \hat{\alpha}_{j_nk} \phi_{j_nk}(x), \\
\hat{\alpha}_{j_nk} & = \dfrac{1}{n} \sum_{i = 1}^{n} \phi_{j_nk}(Y_i).
\end{split}
\end{eqnarray}
The resolution level $ j_n $ is assumed to be an increasing sequence of $ n $ that satisfies
\begin{equation}\label{ass:jn}
\dfrac{j_n 2^{j_n}}{n} \to 0, \quad \dfrac{\log\log n}{n} \to 0 \quad \text{and} \quad \sup_{n \geq n_0} (j_{2n} - j_n) \leq \tau
\end{equation}
for some $ \tau \geq 1 $ and some $ n_0 < \infty $. The necessary conditions for Theorem \ref{theo:lin-aa} to hold  are guaranteed by Theorem \ref{theo:gine.et.al}.

\begin{theorem}[\citep{Gine.Nickl-2009-AP}]\label{theo:gine.et.al}
Suppose that $ g \in W_2^m([0,1]) $. If the assumptions \eqref{ass:jn} and (a4) hold, then
\[ \sup_y |\hat{g}(y) - g(y)| \lesssim \sqrt{\dfrac{j_n 2^{j_n}}{n}} + 2^{-mj_n}. \]
\end{theorem}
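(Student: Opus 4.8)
The plan is to bound the uniform risk by the classical bias--variance split
\[
\sup_y \abs{\hat{g}(y) - g(y)} \leq \sup_y \abs{\hat{g}(y) - \E\hat{g}(y)} + \sup_y \abs{\E\hat{g}(y) - g(y)},
\]
and to show that the deterministic (bias) term contributes $2^{-mj_n}$ while the stochastic term contributes $\sqrt{j_n 2^{j_n}/n}$. The starting observation is that, since $\hat{\alpha}_{j_nk}$ is an unbiased estimator of $\alpha_{j_nk} = \int_0^1 \phi_{j_nk}(x) g(x)\,dx$, the mean $\E\hat{g}$ is exactly the orthogonal projection $g_{j_n}$ of $g$ onto the multiresolution space $V_{j_n}$. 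Writing the projection kernel $K_{j_n}(y,x) = \sum_{k=0}^{2^{j_n}-1}\phi_{j_nk}(y)\phi_{j_nk}(x)$, both terms can be read off from $\hat{g}(y) - g(y) = n^{-1}\sum_{i=1}^n\pa{K_{j_n}(y,Y_i) - \E K_{j_n}(y,Y)} + \pa{g_{j_n}(y) - g(y)}$.

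For the bias term I would use the approximation-theoretic properties of $V_{j_n}$. By (a4) the basis has $m$ vanishing moments, so $K_{j_n}$ reproduces polynomials of degree up to $m-1$; expanding $g$ by Taylor's theorem around $y$ and invoking the smoothness $g \in W_2^m([0,1])$ together with the standard Sobolev embedding estimates gives $\sup_y\abs{g_{j_n}(y) - g(y)} \lesssim 2^{-mj_n}$. This step is essentially deterministic and routine; its only inputs are the vanishing moments, the compact support of $\phi$, and the control of $g$'s derivatives supplied by the Sobolev assumption.

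The main work is the stochastic term, i.e.\ the uniform deviation of the centered empirical average of $K_{j_n}(\cdot, Y_i)$. Here I would treat $\sup_y\abs{\hat{g}(y)-\E\hat{g}(y)}$ as the supremum of an empirical process indexed by the class $\ch{K_{j_n}(y,\cdot): y\in[0,1]}$. The route is two-fold. First, bound the \emph{expected} supremum: the envelope and the variance of each $K_{j_n}(y,\cdot)$ are both of order $2^{j_n}$ (by the compact support of $\phi$, the reproducing property $\int K_{j_n}(y,x)^2\,dx = K_{j_n}(y,y) \asymp 2^{j_n}$, and $\norm{h}_\infty<\infty$), while the effective metric entropy of the index class scales like $j_n = \log_2 2^{j_n}$, so a chaining/moment bound yields $\E\sup_y\abs{\hat g(y) - \E\hat g(y)} \lesssim \sqrt{j_n 2^{j_n}/n}$. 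Second, upgrade the expectation to a high-probability (and, along subsequences, almost sure) bound via Talagrand's concentration inequality for suprema of empirical processes, whose variance and sup parameters are again $2^{j_n}$ and $2^{j_n}$; the conditions in \eqref{ass:jn}, namely $j_n 2^{j_n}/n \to 0$, $\log\log n / n \to 0$ and the doubling property $\sup_{n\geq n_0}(j_{2n}-j_n)\leq\tau$, are precisely what is needed to run a Borel--Cantelli argument along a geometric subsequence and then interpolate to all $n$.

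The hard part is the sharp expected-supremum bound with the correct logarithmic factor $j_n$: one must verify that taking the supremum over the continuum $y\in[0,1]$ costs only a factor $\sqrt{j_n}$ over the pointwise standard deviation $\sqrt{2^{j_n}/n}$. I expect this to follow from the finite-dimensionality of $V_{j_n}$ together with the localization of the wavelet kernel (for each $y$ only $O(1)$ translates $\phi_{j_nk}$ are nonzero), which keeps the covering numbers of the index class polynomial in $2^{j_n}$ and hence contributes a single power of $j_n$ under the square root. Assembling the two terms then gives the asserted rate $\sqrt{j_n 2^{j_n}/n} + 2^{-mj_n}$.
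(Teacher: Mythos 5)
The paper does not actually prove this statement: Theorem \ref{theo:gine.et.al} is imported from Giné and Nickl (2009), and the authors only remark that the original version is stated for Besov classes and covers the Sobolev case. So there is no in-paper proof to compare against; what you have written is a reconstruction of the cited source's argument, and in outline it is the right one — the decomposition through the projection kernel $K_{j_n}$, the identification $\E\hat{g}=g_{j_n}$, the expected-supremum bound of order $\sqrt{j_n 2^{j_n}/n}$ obtained from the $O(2^{j_n})$ envelope and variance together with polynomial covering numbers of the localized kernel class, Talagrand's concentration inequality, and a Borel--Cantelli argument along a geometric subsequence exploiting $\sup_{n\ge n_0}(j_{2n}-j_n)\le\tau$. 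This is precisely the architecture of the Giné--Nickl sup-norm theorems.

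The one step you dismiss as routine is the one that does not go through as stated. From $g\in W_2^m([0,1])$ alone you cannot conclude $\sup_y\abs{g_{j_n}(y)-g(y)}\lesssim 2^{-mj_n}$: bounding the local Taylor remainder in sup norm using only an $L_2$ bound on $g^{(m)}$ costs a factor $2^{j_n/2}$ (Cauchy--Schwarz over the $O(2^{-j_n})$-support of $K_{j_n}(y,\cdot)$), so the generic rate is $2^{-(m-1/2)j_n}$; equivalently, $W_2^m=B^m_{2,2}$ embeds only into $B^{m-1/2}_{\infty,\infty}$. Giné and Nickl's sup-norm bias bound of order $2^{-tj_n}$ assumes $g\in B^t_{\infty,\infty}$, i.e.\ Hölder-type smoothness; to obtain the displayed $2^{-mj_n}$ you need $g^{(m)}$ bounded, not merely square-integrable, or you must accept the weaker exponent. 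This looseness is already present in the paper's transcription of the theorem (and it silently affects the exponent in Corollary \ref{corol:lin-aa}), so you are reproducing the statement faithfully, but your sketch should either strengthen the smoothness hypothesis or correct the bias exponent. The stochastic half of your argument is sound as outlined.
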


The original version of Theorem \ref{theo:gine.et.al} is more general, and $g$ is considered to live in a Besov space. This is not an issue here because Soboloev spaces are covered as well. See \cite{Gine.Nickl-2009-AP} Remarks 3 and 8 for details. If we take $ j_n = J_0 $ Corollary \ref{corol:lin-aa} summarizes the consistency results for linear warped wavelet estimators.

\begin{corollary}\label{corol:lin-aa}
Suppose assumptions (a1) -- (a4). Furthermore, assume that $ J_0 \equiv J_0(n) $ is an increasing sequence of positive integers satisfying \eqref{ass:jn} and $ J_0 2^{2J_0}/n \to 0 $. If $ g $ satisfies suppositions in Theorem \ref{theo:gine.et.al}, then for $ a \neq 1 $, $ \hat{f}_{J_0}^a $ given by \eqref{est:linear} is MISE-consistent. Its rate of convergence is given by
\[
\sup_{f^a \in \tilde{W}_2^m(U)} \E \norm{\hat{f}_{J_0}^a - f^a}_2^2 \lesssim \dfrac{J_0 2^{2J_0}}{n} + 2^{-(2m-1)J_0}.
\]
\end{corollary}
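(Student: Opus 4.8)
The plan is to feed the sup-norm convergence rate of Theorem~\ref{theo:gine.et.al} into the generic bound of Theorem~\ref{theo:lin-aa} and then collapse the resulting terms. Since both constituent theorems are already available, the only substantive work is identifying the sequence $D_n$, checking that the hypotheses of Theorem~\ref{theo:lin-aa} are met under the present assumptions, and bookkeeping which error terms dominate.

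First I would specialize Theorem~\ref{theo:gine.et.al} to the resolution level $j_n = J_0$. Because $g$ is assumed to satisfy that theorem's suppositions, \eqref{ass:jn} holds for $J_0$, and (a4) is in force, it gives $\sup_y |\hat{g}(y) - g(y)| \lesssim \sqrt{J_0 2^{J_0}/n} + 2^{-mJ_0}$. Squaring and using $(u+v)^2 \leq 2(u^2+v^2)$ yields $\sup_{y \in [0,1]} |\hat{g}(y) - g(y)|^2 \lesssim J_0 2^{J_0}/n + 2^{-2mJ_0}$, so I may take $D_n = J_0 2^{J_0}/n + 2^{-2mJ_0}$ as the sequence required by Theorem~\ref{theo:lin-aa}. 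I would then verify the driving hypothesis $2^{J_0} D_n \to 0$: here $2^{J_0} D_n = J_0 2^{2J_0}/n + 2^{-(2m-1)J_0}$, whose first summand vanishes by the assumption $J_0 2^{2J_0}/n \to 0$ and whose second vanishes since $J_0 \to \infty$ and $2m-1 \geq 1$ for $m \geq 1$.

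With the hypotheses confirmed, Theorem~\ref{theo:lin-aa} applies and gives
\[
\sup_{f^a \in \tilde{W}_2^m(U)} \E \norm{\hat{f}_{J_0}^a - f^a}_2^2 \lesssim 2^{J_0} D_n + \frac{2^{J_0}}{n} + 2^{-2mJ_0} = \frac{J_0 2^{2J_0}}{n} + 2^{-(2m-1)J_0} + \frac{2^{J_0}}{n} + 2^{-2mJ_0}.
\]
Finally I would reduce these four terms to two. Since $J_0 2^{J_0} \geq 1$ for every $J_0 \geq 1$, the term $2^{J_0}/n$ is absorbed into $J_0 2^{2J_0}/n$; and since $2m-1 < 2m$ gives $2^{-2mJ_0} \leq 2^{-(2m-1)J_0}$, the last term is absorbed into $2^{-(2m-1)J_0}$. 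This leaves exactly $J_0 2^{2J_0}/n + 2^{-(2m-1)J_0}$, the claimed rate, and both surviving terms tend to zero (the first by hypothesis, the second because $J_0 \to \infty$), establishing MISE-consistency. The main obstacle, such as it is, lies not in any single estimate but in ensuring that the single choice $j_n = J_0$ simultaneously respects the resolution constraints \eqref{ass:jn} of Theorem~\ref{theo:gine.et.al} and forces $2^{J_0} D_n \to 0$ as Theorem~\ref{theo:lin-aa} demands; once the two rate expressions are aligned, the remaining simplification is routine.
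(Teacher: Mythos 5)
Your proposal is correct and follows essentially the same route as the paper: take $D_n \lesssim J_0 2^{J_0}/n + 2^{-2mJ_0}$ from Theorem \ref{theo:gine.et.al} with $j_n = J_0$, substitute it into the bound of Theorem \ref{theo:lin-aa}, and absorb the dominated terms $2^{J_0}/n$ and $2^{-2mJ_0}$ into $J_0 2^{2J_0}/n$ and $2^{-(2m-1)J_0}$, respectively. Your explicit verification that $2^{J_0}D_n \to 0$ under the stated hypotheses is a welcome detail the paper leaves implicit.
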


The rate of convergence above will be optimal if $ 2^{J_1} \asymp (n/\log n)^{1/(2m + 1)} $, where \[ \sup_{f^a \in \tilde{W}_2^m(U)} \E \norm{\hat{f}_{J_0}^a - f^a}_2^2 \lesssim \pa{\dfrac{\log n}{n}}^{\frac{2m-1}{2m+1}}. \]

Theorems \ref{theo:nonlin-a1} and \ref{theo:nonlin-aa} state that it is possible for nonlinear warped wavelet estimators to attain the same MISE convergence rates obtained in Theorems \ref{theo:lin-a1} and \ref{theo:lin-aa} (and Corollary \ref{corol:lin-aa}) for linear warped wavelet estimators.

\begin{theorem}\label{theo:nonlin-a1}
Suppose assumptions (a1) -- (a4). Furthermore, assume that $ J_0 \equiv J_0(n) $ and $ J_1 \equiv J_1(n) $ are increasing sequences of positive integers such that $ J_0 \leq J_1 $, $ 2^{J_0} \asymp 2^{J_1} $ and $ 2^{J_1}/n \to 0 $. Then, for $ a = 1 $, $ \tilde{f}_{J_1}^a $ given by \eqref{est:nonlinear} is MISE-consistent. Its rate of convergence is
\[
\sup_{f^a \in \tilde{W}_2^m(U)} \E \norm{\tilde{f}_{J_1}^a - f^a}_2^2 \lesssim \dfrac{2^{J_1}}{n} + 2^{-2mJ_1}.
\]
\end{theorem}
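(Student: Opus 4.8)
The plan is to reduce the warped estimation problem to an ordinary (unwarped) one in the variable $y=H(x)$, and then split the mean integrated square error into a scaling-coefficient variance, a detail-coefficient term, and a tail bias. First I would change variables $y=H(x)$ in $\E\norm{\tilde f_{J_1}^a-f^a}_2^2$. Since $f^a(x)=r^a(H(x))$ with $r^a=f^a\circ H^*$, and since by (a3) the warping density $h$ is bounded away from $0$ and $\infty$, the substitution gives $\norm{\tilde f_{J_1}^a-f^a}_2^2\asymp\norm{\tilde r_{J_1}^a-r^a}_2^2$, where $\tilde r_{J_1}^a(y)=\sum_k\hat c_{J_0k}\phi_{J_0k}(y)+\sum_{j=J_0}^{J_1-1}\sum_k\hat d_{jk}^*\psi_{jk}(y)$ is the estimator expressed in the unwarped basis. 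By Parseval's identity for the orthonormal periodized basis $\ch{\phi_{J_0k},\psi_{jk}}$,
\[
\E\norm{\tilde r_{J_1}^a-r^a}_2^2=\underbrace{\sum_{k}\E\pa{\hat c_{J_0k}-c_{J_0k}}^2}_{\text{(I)}}+\underbrace{\sum_{j=J_0}^{J_1-1}\sum_{k}\E\pa{\hat d_{jk}^*-d_{jk}}^2}_{\text{(II)}}+\underbrace{\sum_{j\geq J_1}\sum_{k}d_{jk}^2}_{\text{(III)}}.
\]

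Next I would dispose of the two easy terms. Because $a=1$, the estimators $\hat c_{J_0k}$ in \eqref{est:linear} carry no dependence on $\hat g$, so term (I) is exactly the scaling-coefficient variance already isolated in the proof of Theorem \ref{theo:lin-a1}, whence $\text{(I)}\lesssim 2^{J_0}/n\asymp 2^{J_1}/n$ by the hypothesis $2^{J_0}\asymp 2^{J_1}$. Term (III) is the tail of the wavelet expansion of $r^a$ and obeys the same Sobolev decay $\text{(III)}\lesssim 2^{-2mJ_1}$ that served as the bias bound in Theorem \ref{theo:lin-a1} (via (a4) and the membership $f^a\in\tilde W_2^m(U)$).

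The substance is term (II), which I would control using only the admissibility bound $0\le\lambda_{jk}\le1$. Writing $\hat d_{jk}^*-d_{jk}=\lambda_{jk}(\hat d_{jk}-d_{jk})-(1-\lambda_{jk})d_{jk}$ and applying $(u+v)^2\le 2u^2+2v^2$ yields $\E(\hat d_{jk}^*-d_{jk})^2\le 2\E(\hat d_{jk}-d_{jk})^2+2d_{jk}^2$. Summed over the detail levels, the $d_{jk}^2$ part is again a Sobolev tail, $\sum_{j=J_0}^{J_1-1}\sum_k d_{jk}^2\lesssim 2^{-2mJ_0}\lesssim 2^{-2mJ_1}$. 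For the remaining part I would isolate the random normalizer by setting $\tilde d_{jk}=(\mu/\hat\mu)\hat d_{jk}$, which is unbiased for $d_{jk}$ by \eqref{djk}; then $\E(\hat d_{jk}-d_{jk})^2\lesssim\E(\hat d_{jk}-\tilde d_{jk})^2+\var(\tilde d_{jk})$. A change of variables $u=H(y)$ together with the boundedness of $f$, $w$, $h$ from (a1)--(a3) and $\norm{\psi_{jk}}_2=1$ shows $\var(\tilde d_{jk})\lesssim 1/n$ uniformly in $(j,k)$; summing over the $\lesssim(J_1-J_0)2^{J_1}$ detail coefficients and noting that $2^{J_0}\asymp 2^{J_1}$ forces $J_1-J_0=O(1)$ gives a contribution of order $2^{J_1}/n$. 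Collecting (I)--(III) then produces the asserted rate $2^{J_1}/n+2^{-2mJ_1}$.

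The main obstacle I anticipate is the cross term $\E(\hat d_{jk}-\tilde d_{jk})^2=\E\co{(\hat\mu/\mu-1)^2\tilde d_{jk}^2}$, in which the random normalizer $\hat\mu$ is correlated with the empirical detail coefficient $\tilde d_{jk}$; since $\norm{\psi_{jk}}_\infty\asymp 2^{j/2}$, naive higher-moment bounds on $\tilde d_{jk}$ grow with the scale. I would control it by Cauchy--Schwarz, $\E\co{(\hat\mu/\mu-1)^2\tilde d_{jk}^2}\le\sqrt{\E(\hat\mu/\mu-1)^4}\,\sqrt{\E\tilde d_{jk}^4}$, playing the scale-independent concentration $\E(\hat\mu-\mu)^4\lesssim n^{-2}$ of the average in \eqref{eq:mu-hat} against the merely polynomial-in-scale growth of $\E\tilde d_{jk}^4$, so that this cross term is of smaller order than $2^{J_1}/n$ under $2^{J_1}/n\to0$. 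This is precisely the estimate already confronted in the linear case, so no genuinely new phenomenon arises beyond the bookkeeping for the finitely many extra detail levels between $J_0$ and $J_1$.
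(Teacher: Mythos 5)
Your proposal follows essentially the same route as the paper's proof: reduce to the unwarped problem via $\norm{f^a}^2\asymp\norm{r^a}^2$ from (a3), apply Parseval's identity to split into scaling, detail, and tail terms, and control the thresholded detail coefficients through $0\le\lambda_{jk}\le1$ and $(u+v)^2\le 2u^2+2v^2$, with the per-coefficient bound $\E(\hat d_{jk}-d_{jk})^2\lesssim n^{-1}$ (which the paper delegates to its Lemma 1) summed over the $\lesssim 2^{J_1}$ coefficients. The only cosmetic difference is that you handle the random normalizer $\hat\mu$ multiplicatively via Cauchy--Schwarz with fourth moments, whereas the paper's lemma isolates it additively and bounds $\E|1/\mu-1/\hat\mu|^2\lesssim n^{-1}$; both yield the same $O(n^{-1})$ contribution per coefficient.
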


\begin{theorem}\label{theo:nonlin-aa}
Suppose assumptions (a1) -- (a4). Furthermore, assume that $ J_0 \equiv J_0(n) $ and $ J_1 \equiv J_1(n) $ are increasing sequences of positive integers such that $ J_0 \leq J_1 $, $ 2^{J_0} \asymp 2^{J_1} $, and there exists a positive sequence $ D_n $ such that $ 2^{J_1}D_n \to 0 $, as $ n \to \infty $, with $ \sup_{y \in [0,1]} \abs{\hat{g}(y) - g(y)}^2 \lesssim D_n $. If $ g $ satisfies Theorem \ref{theo:gine.et.al}, then for $ a \neq 1 $, $ \hat{f}_{J_1}^a $ given by \eqref{est:nonlinear} is MISE-consistent. Its rate of convergence is given by
\[
\sup_{f^a \in \tilde{W}_2^m(U)} \E \norm{\tilde{f}_{J_1}^a - f^a}_2^2 \lesssim 2^{J_1}D_n + \dfrac{2^{J_1}}{n} + 2^{-2mJ_1}. %\dfrac{J_1 2^{2J_1}}{n} + 2^{-(2m-1)J_1}.
\]
\end{theorem}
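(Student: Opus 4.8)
The plan is to measure the error in the $h$-weighted norm $\norm{\varphi}_{2,h}^2=\int_0^1\varphi^2(x)h(x)\,dx$, in which the warped system is orthonormal, and then transfer back to the Lebesgue norm through (a3). A change of variables $y=H(x)$ shows that $\ch{\phi_{J_0k}\co{H(\cdot)}}_k\cup\ch{\psi_{jk}\co{H(\cdot)}}_{j\geq J_0,k}$ is a complete orthonormal system for $L_2(h\,dx)$, and since (a3) makes $h$ bounded away from zero and infinity, $\norm{\varphi}_2^2\asymp\norm{\varphi}_{2,h}^2$ for every $\varphi$. Because the warped-basis coefficients of $\tilde f_{J_1}^a$ in \eqref{est:nonlinear} are $\hat c_{J_0k}$ at scale $J_0$, $\hat d_{jk}^*=\lambda_{jk}\hat d_{jk}$ for $J_0\leq j<J_1$, and $0$ for $j\geq J_1$, Parseval in the weighted space gives
\[\E\norm{\tilde f_{J_1}^a-f^a}_{2,h}^2=\sum_k\E\pa{\hat c_{J_0k}-c_{J_0k}}^2+\sum_{j=J_0}^{J_1-1}\sum_k\E\pa{\hat d_{jk}^*-d_{jk}}^2+\sum_{j\geq J_1}\sum_k d_{jk}^2.\]
The key structural remark is that $J_0\leq J_1$ together with $2^{J_0}\asymp2^{J_1}$ force $J_1-J_0$ to be bounded by a fixed constant, so the middle sum ranges over only finitely many scales.

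The outer two terms are already available. The coarse estimation error $\sum_k\E(\hat c_{J_0k}-c_{J_0k})^2$ is exactly the stochastic part controlled inside the proof of Theorem~\ref{theo:lin-aa}, namely $\lesssim 2^{J_0}D_n+2^{J_0}/n$. The tail $\sum_{j\geq J_1}\sum_k d_{jk}^2$ equals the squared $L_2$-approximation error of $f^a\circ H^*$ by its projection onto $V_{J_1}$, and the same Sobolev-smoothness and vanishing-moments estimate used for the bias in Theorems~\ref{theo:lin-a1} and~\ref{theo:lin-aa} yields $\lesssim 2^{-2mJ_1}$.

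For the middle term I would use only $0\leq\lambda_{jk}\leq1$. Writing $\hat d_{jk}^*-d_{jk}=\lambda_{jk}(\hat d_{jk}-d_{jk})-(1-\lambda_{jk})d_{jk}$ and applying $(A-B)^2\leq 2A^2+2B^2$ gives $\E(\hat d_{jk}^*-d_{jk})^2\leq 2\E(\hat d_{jk}-d_{jk})^2+2d_{jk}^2$. Summed over $J_0\leq j<J_1$ and $k$, the second piece is $\lesssim\sum_{j\geq J_0}\sum_k d_{jk}^2\lesssim 2^{-2mJ_0}\asymp2^{-2mJ_1}$. Since $\hat d_{jk}$ in \eqref{djk} has exactly the plug-in form of $\hat c_{J_0k}$ with $\phi_{J_0k}$ replaced by $\psi_{jk}$, the argument of Theorem~\ref{theo:lin-aa} applies scale by scale to give $\sum_k\E(\hat d_{jk}-d_{jk})^2\lesssim 2^jD_n+2^j/n$; summing over the $J_1-J_0=O(1)$ detail scales, each with $2^j\leq2^{J_1}$, produces $\lesssim 2^{J_1}D_n+2^{J_1}/n$. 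Collecting the three bounds and using $2^{J_0}\asymp2^{J_1}$ collapses the whole expression to $2^{J_1}D_n+2^{J_1}/n+2^{-2mJ_1}$, and consistency follows since each summand tends to zero.

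The main work lies in the per-scale estimate $\sum_k\E(\hat d_{jk}-d_{jk})^2\lesssim 2^jD_n+2^j/n$. The plan there is to split off the idealized coefficient $\bar d_{jk}=\frac{\mu^a}{n}\sum_i\psi_{jk}\co{H(Y_i)}g^{a-1}(Y_i)h(Y_i)/w^a(Y_i)$, which is unbiased for $d_{jk}$ by \eqref{djk}; its variance is $\lesssim 2^j/n$ because, by (a2)--(a3) and the orthonormality of the warped wavelets, $\E\psi_{jk}^2\co{H(Y)}\asymp1$ while at most $O(1)$ indices $k$ are active at each point. The remaining discrepancy $\hat d_{jk}-\bar d_{jk}$ carries the plug-in error of $\hat\mu$ and $\hat g$, dominated uniformly through $\sup_y\abs{\hat g(y)-g(y)}^2\lesssim D_n$ from Theorem~\ref{theo:gine.et.al} and the $n^{-1/2}$-consistency of the ratio estimator $\hat\mu$, which yields the $2^jD_n$ contribution. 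This step is structurally identical to the coarse-scale computation behind Theorem~\ref{theo:lin-aa}, so the genuinely new ingredients of the proof are the weighted-norm Parseval decomposition, the shrinkage inequality, and the finiteness of the number of detail scales.
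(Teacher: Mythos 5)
Your proposal is correct and follows essentially the same route as the paper: the weighted-norm Parseval identity you use is just the change of variables $y=H(x)$ that the paper performs by passing to $r^a=f^a\circ H^*$ (its relation \eqref{asymp:norm-rel}), the shrinkage bound via $0\le\lambda_{jk}\le1$ is the same inequality as in the proof of Theorem~\ref{theo:nonlin-a1}, and the per-coefficient bound $\E(\hat d_{jk}-d_{jk})^2\lesssim D_n+n^{-1}$ is exactly Lemma~\ref{lemma:diff-c} applied with $\psi_{jk}$ in place of $\phi_{J_0k}$. The only cosmetic difference is that you control the middle sum by noting $J_1-J_0=O(1)$, whereas the paper uses the telescoping identity $\sum_{j=J_0}^{J_1-1}\sum_k d_{jk}^2=\rho_{J_0}^2-\rho_{J_1}^2$ together with $\rho_{J_0}\asymp\rho_{J_1}$; both rest on the same hypothesis $2^{J_0}\asymp2^{J_1}$.
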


\begin{corollary}\label{corol:nonlin-aa}
Suppose assumptions (a1) -- (a4) hold. Furthermore, assume that $ J_0 \equiv J_0(n) $ and $ J_1 \equiv J_1(n) $ are increasing sequences of positive integers such that $ J_0 \leq J_1 $, $ 2^{J_0} \asymp 2^{J_1} $ and $ J_1 2^{2J_1}/n \to 0 $. If $ g $ behaves as stated in Theorem \ref{theo:gine.et.al}, then for $ a \neq 1 $, $ \hat{f}_{J_1}^a $ given by \eqref{est:nonlinear} is MISE-consistent. Its rate of convergence is given by
\[
\sup_{f^a \in \tilde{W}_2^m(U)} \E \norm{\tilde{f}_{J_1}^a - f^a}_2^2 \lesssim \dfrac{J_1 2^{2J_1}}{n} + 2^{-(2m-1)J_1}.
\]
\end{corollary}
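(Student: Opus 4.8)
The plan is to obtain this corollary as a direct specialization of Theorem~\ref{theo:nonlin-aa}, replacing the abstract rate $D_n$ of the pilot estimator $\hat g$ by the explicit sup-norm rate furnished by Theorem~\ref{theo:gine.et.al}. The argument runs exactly parallel to the passage from Theorem~\ref{theo:lin-aa} to Corollary~\ref{corol:lin-aa} in the linear setting. First I would take $\hat g$ to be the linear wavelet estimator of $g$ with resolution level $j_n = J_1$ (any level $\asymp J_1$ serves equally, since $2^{J_0}\asymp 2^{J_1}$). Because $g$ is assumed to behave as in Theorem~\ref{theo:gine.et.al}, that theorem gives
\[
\sup_{y} \abs{\hat g(y) - g(y)}^2 \lesssim \dfrac{J_1 2^{J_1}}{n} + 2^{-2m J_1} \eqqcolon D_n .
\]

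Before invoking Theorem~\ref{theo:nonlin-aa} I would verify its standing hypothesis $2^{J_1} D_n \to 0$. Expanding, $2^{J_1} D_n = J_1 2^{2J_1}/n + 2^{-(2m-1)J_1}$; the first summand vanishes by the assumption $J_1 2^{2J_1}/n \to 0$, and the second vanishes since $J_1 \to \infty$ and $2m-1 \geq 1$ for $m \geq 1$. The ancillary conditions \eqref{ass:jn} required to apply Theorem~\ref{theo:gine.et.al} at $j_n = J_1$ are also met: $J_1 2^{J_1}/n \leq J_1 2^{2J_1}/n \to 0$, the condition $\log\log n / n \to 0$ holds automatically, and the slow-growth requirement $\sup_{n\geq n_0}(j_{2n}-j_n)\leq\tau$ is satisfied by the admissible sequences of interest (notably the near-optimal $2^{J_1}\asymp (n/\log n)^{1/(2m+1)}$).

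With the hypotheses in place, Theorem~\ref{theo:nonlin-aa} delivers
\[
\sup_{f^a \in \tilde{W}_2^m(U)} \E \norm{\tilde f_{J_1}^a - f^a}_2^2 \lesssim 2^{J_1} D_n + \dfrac{2^{J_1}}{n} + 2^{-2mJ_1},
\]
and substituting $D_n$ produces the four-term bound $J_1 2^{2J_1}/n + 2^{-(2m-1)J_1} + 2^{J_1}/n + 2^{-2mJ_1}$. The closing step is to absorb the two subordinate terms: since $J_1 \geq 1$ and $2^{J_1}\geq 1$ one has $2^{J_1}/n \leq J_1 2^{2J_1}/n$, and since $2^{-J_1}\leq 1$ one has $2^{-2mJ_1}\leq 2^{-(2m-1)J_1}$. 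This collapses the bound to $J_1 2^{2J_1}/n + 2^{-(2m-1)J_1}$, exactly as claimed, and MISE-consistency is immediate because both surviving terms tend to zero under the stated assumptions. I expect no genuine obstacle here, as all the analytic difficulty is already discharged inside Theorems~\ref{theo:nonlin-aa} and~\ref{theo:gine.et.al}; the only delicate point is the bookkeeping that confirms the hypotheses of those two results are mutually compatible and that the term-domination step is legitimate for every integer $m \geq 1$.
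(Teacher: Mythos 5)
Your proposal is correct and follows essentially the same route as the paper: the authors prove Corollary \ref{corol:nonlin-aa} by combining the generic bound of Theorem \ref{theo:nonlin-aa} with the explicit rate $D_n \lesssim J_1 2^{J_1}/n + 2^{-2mJ_1}$ from Theorem \ref{theo:gine.et.al} and then absorbing the subordinate terms, exactly as in their passage from Theorem \ref{theo:lin-aa} to Corollary \ref{corol:lin-aa}. Your verification of the hypotheses $2^{J_1}D_n \to 0$ and \eqref{ass:jn} and the term-domination bookkeeping are all sound and in fact more explicit than the paper's own one-line proof.
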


It is usual in the literature to demonstrate that nonlinear wavelet-based estimators are asymptotically minimax up to a logarithmic term \citep[see, e.g.][]{Donoho.etal-1996-AS}, where the finest resolution level does not depend on unknown quantities such as regularity parameters of function spaces. In practice, as mentioned by \cite{Gine.Nickl-2009-AP}, one can choose $ J_1 $ sufficiently large (and independent of $ m $) and regularize by shrinking or thresholding selected wavelet coefficient estimates in resolution levels from $ J_0 $ to $ J_1 $. In the case of the regularized wavelet estimators proposed here, when $ a = 1 $, assumption (a3) ensures that the adaptive rate of convergence for the nonlinear warped estimator can be easily derived based on results known in the literature (under similar arguments presented in \eqref{asymp:norm-rel}, proof of Theorem \ref{theo:lin-aa}, in the Supplementary Material). An adaptive rate that could be taken into account is presented in Theorem 4.1 of \cite{Chesneau-2010-JotKSS}, where the author consider a block thresholding approach. The case where $ a \neq 1 $ is more problematic. Observe that the rate of convergence in Theorem \ref{theo:nonlin-aa} depends on the sequence $ D_n $, which is quite generic and makes the development of the results unfeasible. Even in specific situations, such as the one illustrated in Corollary \ref{corol:lin-aa}, the assumptions necessary to derive adaptive rates of convergence tend to be unrealistic.

By the arguments presented above, we simply focus on showing that the proposed nonlinear wavelet estimators \eqref{est:nonlinear} are still MISE-convergent, although not in an adaptive way. Therefore, Theorems \ref{theo:nonlin-a1} and \ref{theo:nonlin-aa} guarantee that the regularized warped wavelet estimators can both rely on a sparse representation and attain the same rates of convergence as their linear versions, where the optimal rate of the former still depend on resolution levels which, in sequel, depend on the regularity parameter $ m $. In practice, this can be seen as a drawback, because $ m $ is unknown. On the other hand, one can easily perform an empirical analysis to estimate the finest resolution level. We illustrate it for Theorem \ref{theo:nonlin-a1} and Corollary \ref{corol:nonlin-aa}.

%Although Theorems \ref{theo:nonlin-a1} and \ref{theo:nonlin-aa} guarantee that the regularized warped wavelet estimators attain the same rates of convergence from their linear versions, the former still chooses resolution levels which depend on the regularity parameter of the Sobolev space. In practice, this can be seen as a drawback, because $ m $ is unknown. On the other hand, one can easily perform an empirical analysis to estimate the finest resolution level. We illustrate this empirical analysis for Theorem \ref{theo:nonlin-a1} and Corollary \ref{corol:nonlin-aa}.

Let us focus initially on the case where $ a = 1 $. The rate of convergence of $ \sup_{f^a \in \tilde{W}_2^m(U)} \E \norm{\tilde{f}_{J_1}^a - f^a}_2^2 $, in the case where one chooses $ 2^{J_1} \asymp n^{\frac{1}{2k+1}} $ is $ R_m^k(n) $. Thus, one can see that $ R_m^k(n) = n^{-2(k \wedge m)/(2k+1)} $. Observe that $ R_m^m(n) \lesssim n^{-2m/(2m+1)} $. Therefore, it is possible to compare the performance of a ``bad'' choice of resolution level with respect to the optimal rate. In this case, let us denote $ \eff(k, m) = R_m^k(n)/R_m^m(n) = n^{-2(k \wedge m)/(2k+1) + 2m/(2m+1)} $, which play the role of a kind of asymptotic relative efficiency. For the case where $ a \neq 1 $, one can consider the choice of $ 2^{J_1} \asymp (n/\log n)^{\frac{1}{2k+1}} $ and obtain $ R_m^k(n) = ((\log n)/n)^{\frac{2(k \wedge m) - 1}{2k + 1}} $, which provides $ \eff(k, m) = ((\log n)/n)^{\frac{2(k \wedge m) - 1}{2k + 1} - \frac{2m-1}{2m+1}} $.

The asymptotic relative efficiencies based on Theorems \ref{theo:nonlin-a1} and \ref{theo:nonlin-aa} are presented in Figure \ref{fig:asseff}. In the case where $ a = 1 $, for choices $k \leq m $ the efficiency is closer to one, and it increases quickly for choices $ k > m $. On the other hand, in the case where $ a \neq 1 $, smaller values of $ k $ are interesting when the $ f^a $ is not too regular ($ m $ small). The more regular $ f^a $ becomes (greater values of $ m $), the choice $ k = 1 $ tends to increase $ \eff(1, m) $. Therefore, $ k = 2 $ or $ k = 3 $ seem to be good candidates to provide nearly optimal rates of convergence.

\begin{figure}[!ht]
\centering
\includegraphics[angle=0,width=0.49\linewidth]{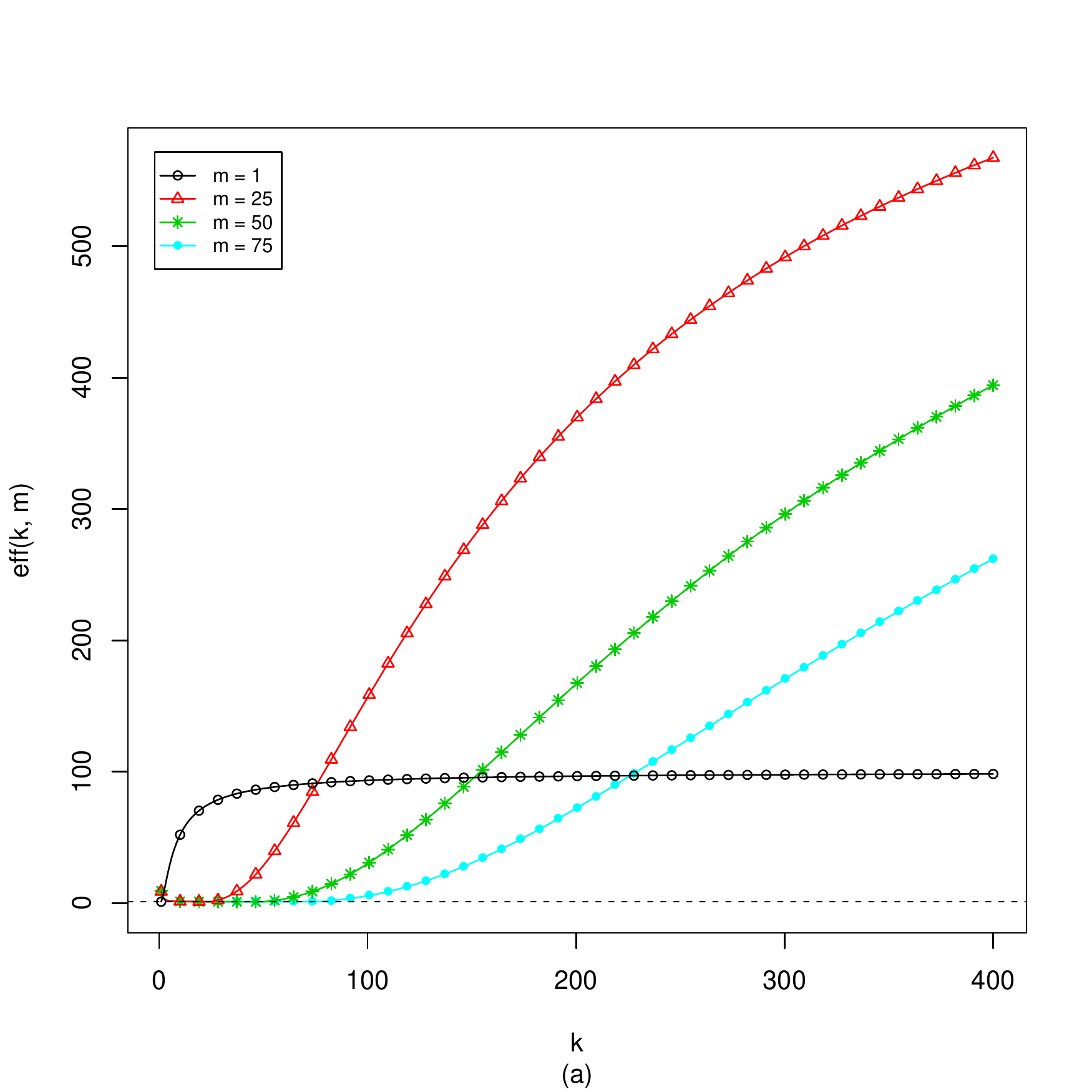}
\includegraphics[angle=0,width=0.49\linewidth]{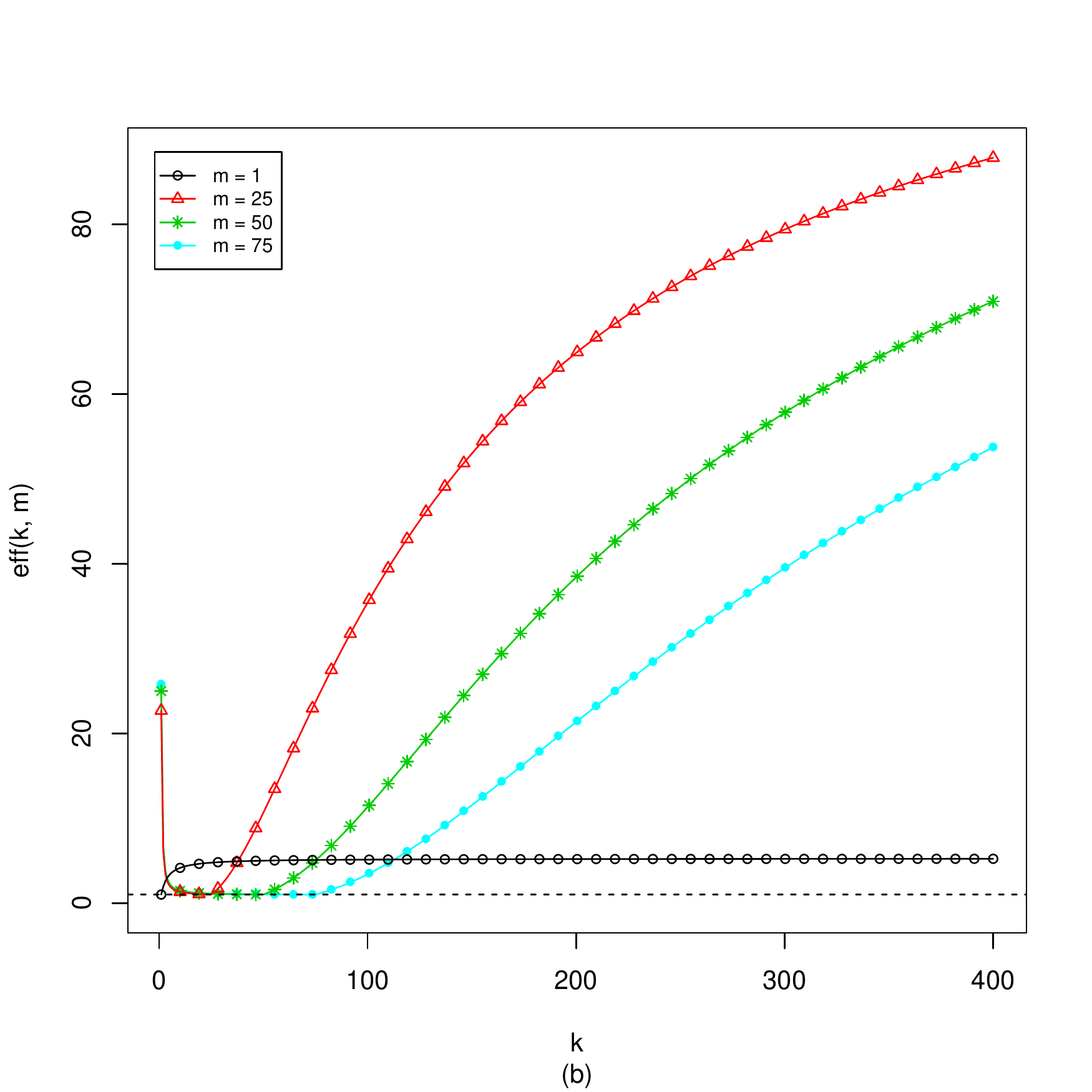}
\caption{Relative efficiency (Theorem \ref{theo:nonlin-a1} and Corollary \ref{corol:nonlin-aa}) $ \eff(k, m) $, for $ 1 \leq k \leq 400 $, using $ n = 1000 $ and the regularity parameter $ m = 1, 25, 50, 75 $. The horizontal dashed line is a reference where $ \eff(k,m) = 1 $. Figures (a) and (b) correspond to the $ \eff(k, m) $ in the cases $ a = 1 $ and $ a \neq 1 $, respectively.}
\label{fig:asseff}
\end{figure}

\subsection{A few special cases}\label{sec:special-cases}

There are infinitely many estimators for the density $ f $ based on \eqref{est:linear}, because of many choices for the power $ a $ and the warping function $ H $. When $ H(x) = x $, $ 0 < x < 1 $, the case where $ a = 1 $ represents the standard estimator \citep[see, e.g.][]{Ramirez.Vidakovic-2010-JSPI, Cutillo.etal-2014-JoSPaI}. Still in the case of an identity warping function, $ a = 1/2 $ can be seen as a natural generalization of \cite{Pinheiro.Vidakovic-1997-CSDA} to the context of biased data. On the other hand, one can explore the possibility of warping the wavelet basis, taking into account some different function $ H $. This can be seen as an attempt to improve the estimates, specially in regions where the weighting function is closer to zero. An interesting case corresponds to $ H(x) = G(x) $, i.e., we consider the \cdf\ of the biased data. Since, as mentioned before, $ G $ (and, consequently, $ g $) is unknown, and requires estimation. A natural candidate is the empirical \cdf, which we denote by $ \hat{G} $. With respect to $ h $, a natural candidate is $ \hat{g} $, the same estimator employed for $ g $. Therefore, the wavelet coefficient estimator in \eqref{est:linear} becomes
\[ \hat{c}_{J_0k} = \dfrac{\hat{\mu}^a}{n} \sum_{i = 1}^{n} \dfrac{\phi_{J_0k}\co{\hat{G}(Y_i)} \hat{g}^{a}(Y_i)}{w^a(Y_i)}. \]

%In this work we consider four cases to be explored, which will represent a method of estimation.
In this work we consider four cases to be explored (see Section \ref{sec:numerical}). These methods of estimation correspond to
\begin{enumerate}
\item[$ \mat{m_1} $:] when $ a = 1/2 $ and $ H(x) = x $;
\item[$ \mat{m_2} $:] when $ a = 1 $ and $ H(x) = x $;
\item[$ \mat{m_3} $:] when $ a = 1/2 $ and $ H(x) =\hat{G}(x) $;
\item[$ \mat{m_4} $:] when $ a = 1 $ and $ H(x) = \hat{G}(x) $.
\end{enumerate}

The procedure of regularization is analogous to that discussed in Section \ref{sec:reg-ests}. Hereafter, we refer to $ m_k $, $ k = 1, 2, 3, 4 $ as the regularized method.

\subsection{Computational aspects}\label{sec:com-asp}

In the real world the range of data is seldom the unit interval. We briefly discuss here how we employ the proposed wavelet estimators the domain is arbitrary. For appropriate $ q $ and $ s $, take $ \cY = (Y-q)/s $ and $ \cX = (X - q)/s $, with densities $ g_{\cY} $ and $ f_{\cX} $, respectively. Hence, it is easy to see that $ f(x) = f_{\cX}\pa{(x-q)/s}/s $ and, after some algebra, one can derive
\begin{eqnarray*}
\hat{f}_{J_0}(x) & = & \frac{1}{s}\co{\hat{f}_{\cX}^a(\cx)}^{1/a}, \\
\hat{f}_{\cX}^a(\cx) & = & \sum_{k = 0}^{2^{J_0} - 1} \hat{c}_{J_0k}^{\circ} \phi_{J_0k}\co{H(\cx)}, \\
\hat{c}_{J_0k}^{\circ} & = & \dfrac{\hat{\mu}^a}{n} \sum_{i = 1}^{n} \dfrac{\phi_{J_0k}\co{H(\cY_i)} \hat{g}_{\cY}^{a-1}(\cY_i) h(\cY_i)}{w^a(Y_i)}.
\end{eqnarray*}
It is important to mention that $ \hat{\mu} $ and $ w $ are associated with $ Y_i $'s, and not with $ \cY_i $'s.

The periodized wavelets impose a periodic analysis to the function of interest. Therefore, a solution is transforming the data into $ [\epsilon, 1-\epsilon] $,
for some $ 0 \leq \epsilon < 1 $. We denote the ordered sample $ y_{(1)} < y_{(2)} < \ldots < y_{(n)} $, and  $ r = y_{(n)} - y_{(1)} $. The adequate constants are given then by $ q = y_{(1)} - \epsilon r/(1 - 2\epsilon) $ and $ s = r $. Following \cite{Montoril.etal-2019-SB}, we consider $ \epsilon = 1.9^{-J_1} $.

\begin{remark}
The transformation of the data, as described above, has impact only in the cases where the wavelet bases are not warped. In fact, the empirical \cdf\ will always be $ k/n $, $ k = 1, \ldots, n $, for the observed sample.
\end{remark}

Regularization is performed analogously to Section \ref{sec:reg-ests}.

\section{Numerical studies}\label{sec:numerical}

We present now some Monte Carlo simulations as well as a real data application. The dataset is not equally spaced. With the exception of the Haar basis, compactly supported orthonormal wavelets
do not posses analytic expressions, so we need some numerical interpolation. We employ the Daubechies-Lagarias algorithm \citep{Daubechies.Lagarias-1991-SJMA, Daubechies.Lagarias-1992-SJMA}, which can attain any preassigned precision \citep[see, e.g.][]{Vidakovic-1999}.  Analyses are performed with Symmlets S10 (Daubechies least asymmetric 20-tap filter).

We consider methods of estimation $ m_k $, $ k = 1,2,3,4 $, as presented in Section \ref{sec:special-cases}.
This gives us an idea of how the density's square root estimate ($a=1/2$) can improve the ordinary approach
($ a = 1 $), as well as if a warped wavelet basis can provide a better performance. For the methods $ m_1 $ and $ m_3 $, we estimate $ g $ by a Gaussian kernel with bandwidth selected according to \cite{Sheather.Jones-1991-JRSSB}\comment{Other methods, such as linear and regularized wavelet-based estimates, were tried to estimate $ g $. However, the final estimates of the densities of interest were not as good as the ones provided when using the adopted kernel method.}. For the methods $ m_3 $ and $ m_4 $, the wavelet basis is warped by the empirical \cdf\ of the data, linearly interpolated.

Regularization is done by the universal hard threshold, i.e.,\linebreak $ \lambda = \hat{\sigma} \sqrt{2\log 2^{J_1-1}}$, where $ \hat{\sigma} $ is the median absolute deviation of the detail coefficients in the finest resolution level \citep{Donoho.Johnstone-1994-B}.

\subsection{Simulation studies}\label{sec:simul}

The performance of the estimation methods is evaluated by Monte Carlo simulation studies. For such a task, we consider three different examples described below.

\begin{proof}[Example 1]
We assume that $ X \sim \text{Beta}(2.5, 2.5) $ and $ w(y) = y^{-2}(1-y)^{-2} $. In this case, $ Y \sim \text{Beta}(0.5, 0.5) $. Therefore, the shapes of $ f $ and $ g $ are ``inverted'', as it can be observed in the first row of Figure \ref{fig:densities}.
\end{proof}

\begin{proof}[Example 2]
Let us denote by $ \beta(x, a, b) $ the density of a beta distribution with parameters $ a $ and $ b $ evaluated at $ x $. Thus, in this example we consider a mixture of three betas for the density of interest:
\[ f(x) = (1/3)\beta(x, 20, 3) + (1/3)\beta(x, 40, 40) + (1/3)\beta(x, 3, 20).  \]
The weighting function is $ w(y) = y $. This results in a biased sample from the density
\[ g(y) = (40/69)\beta(y, 21, 3) + (1/3)\beta(x, 41, 40) + (2/23)\beta(x, 4, 20). \]
In this example, the biased density remains a mixture of betas, but now with ``unbalanced'' weights, as presented in the second row of Figure \ref{fig:densities}.
\end{proof}

%\subsubsection*{Example 3 ($ ex_1 $)}

\begin{proof}[Example 3]
In this example we consider a piecewise linear density for $ f $, where
\[ f(x) = \begin{cases}
\dfrac{64x + 1}{9}, & 0 \leq x < 0.25, \\
\dfrac{32(1 - 2x) + 1}{9}, & 0.25 \leq x < 0.5, \\
\dfrac{x(32x - 31) + 12}{9}, & 0.5 \leq x < 0.75, \\
\dfrac{x(65 - 32x) - 24}{9}, & 0.75 \leq x \leq 1.
\end{cases} \]
As biasing function, we employed in this example $ w(y) = 0.1 + 2x^2 $. We do not present the cumbersome resulting biased density of $ Y $. However, it can be seen in the third row of Figure \ref{fig:densities} that the biased density presents a different shape (still not smooth, but no longer piecewise linear). This provides a challenge to estimate $ f $. Data values are numerically generated by accept-reject algorithm \cite[Section 3.6]{Efromovich-1999}.
\end{proof}

\begin{figure}[!h]
\centering
\includegraphics[angle=0,width=0.39\linewidth]{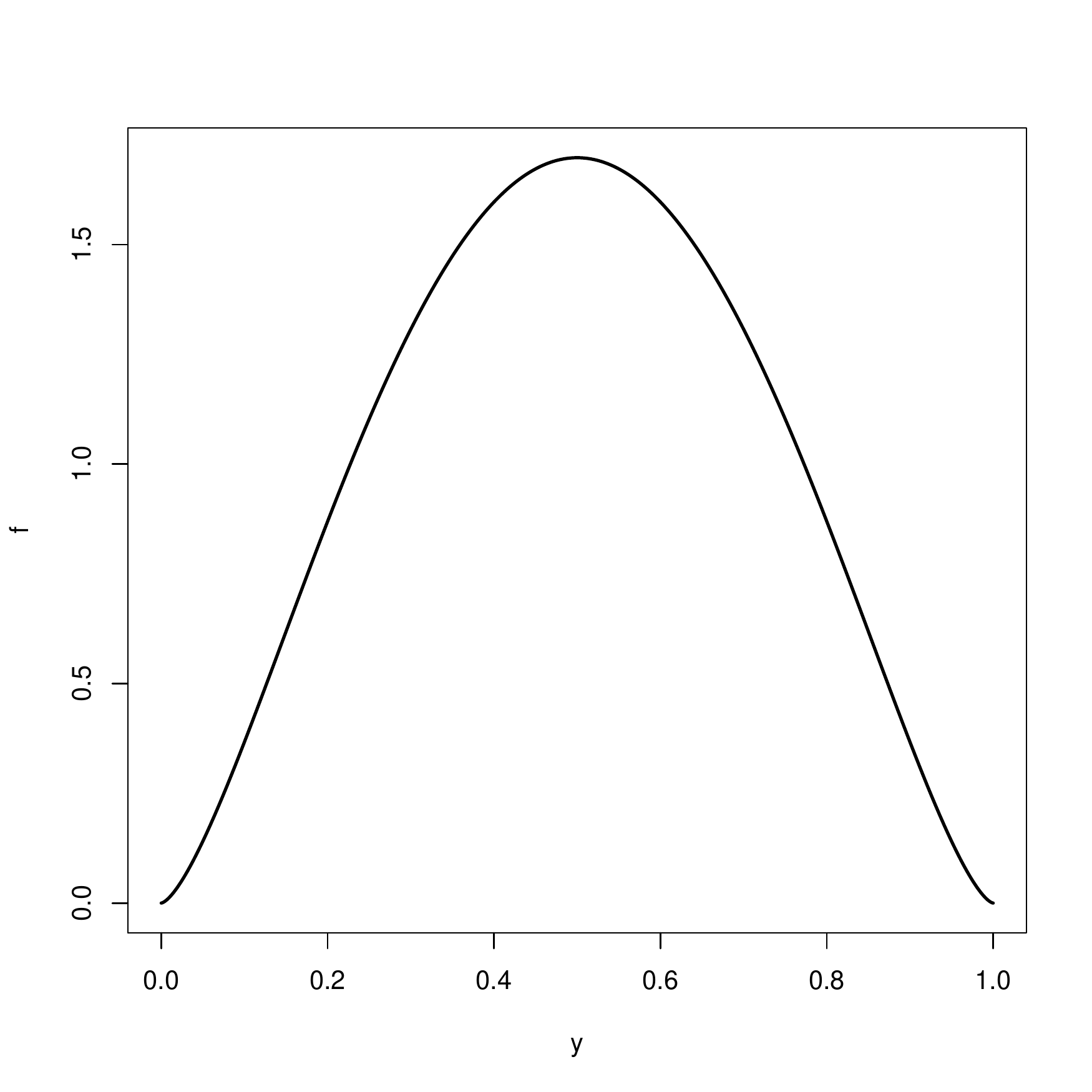}
\includegraphics[angle=0,width=0.39\linewidth]{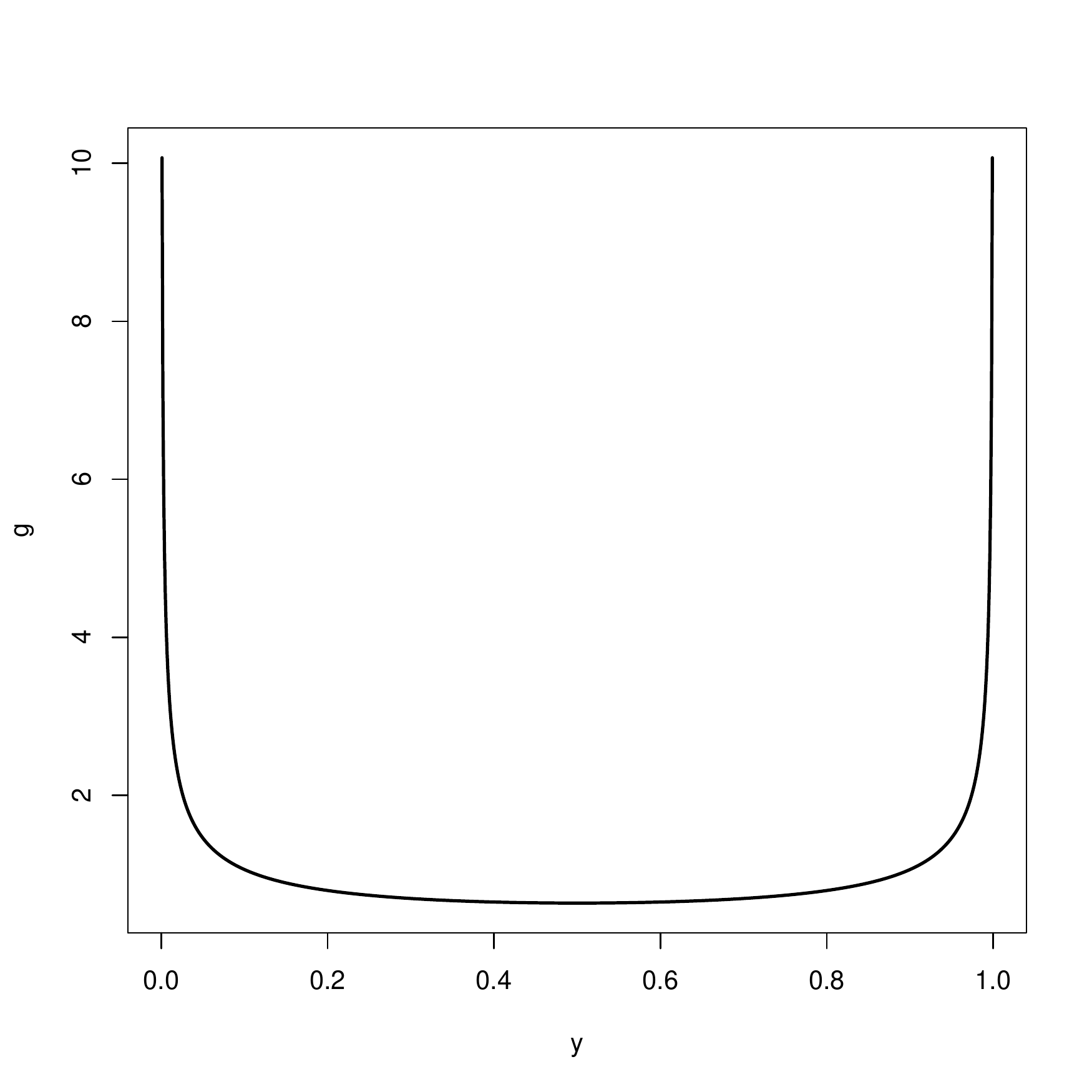} \\
\includegraphics[angle=0,width=0.39\linewidth]{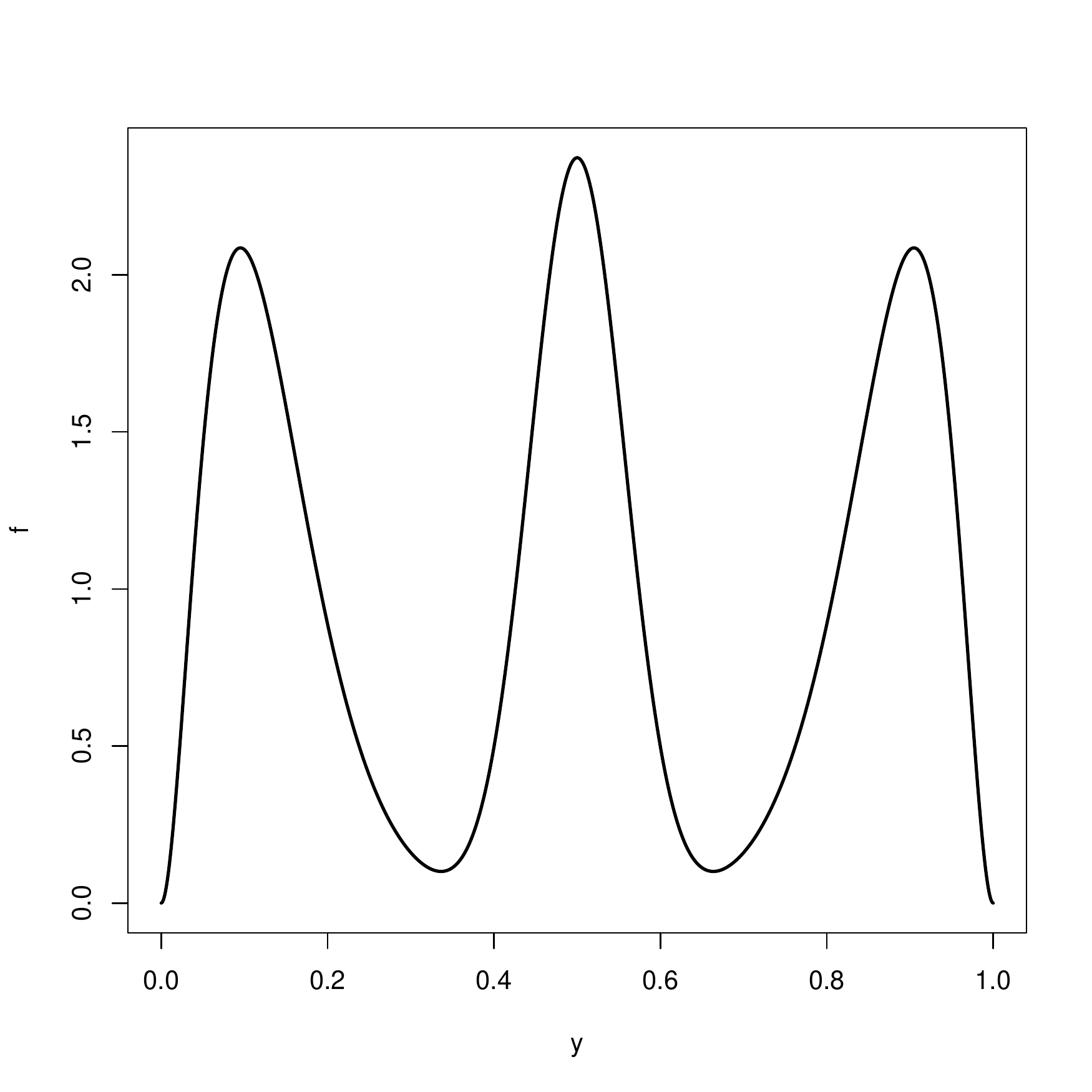}
\includegraphics[angle=0,width=0.39\linewidth]{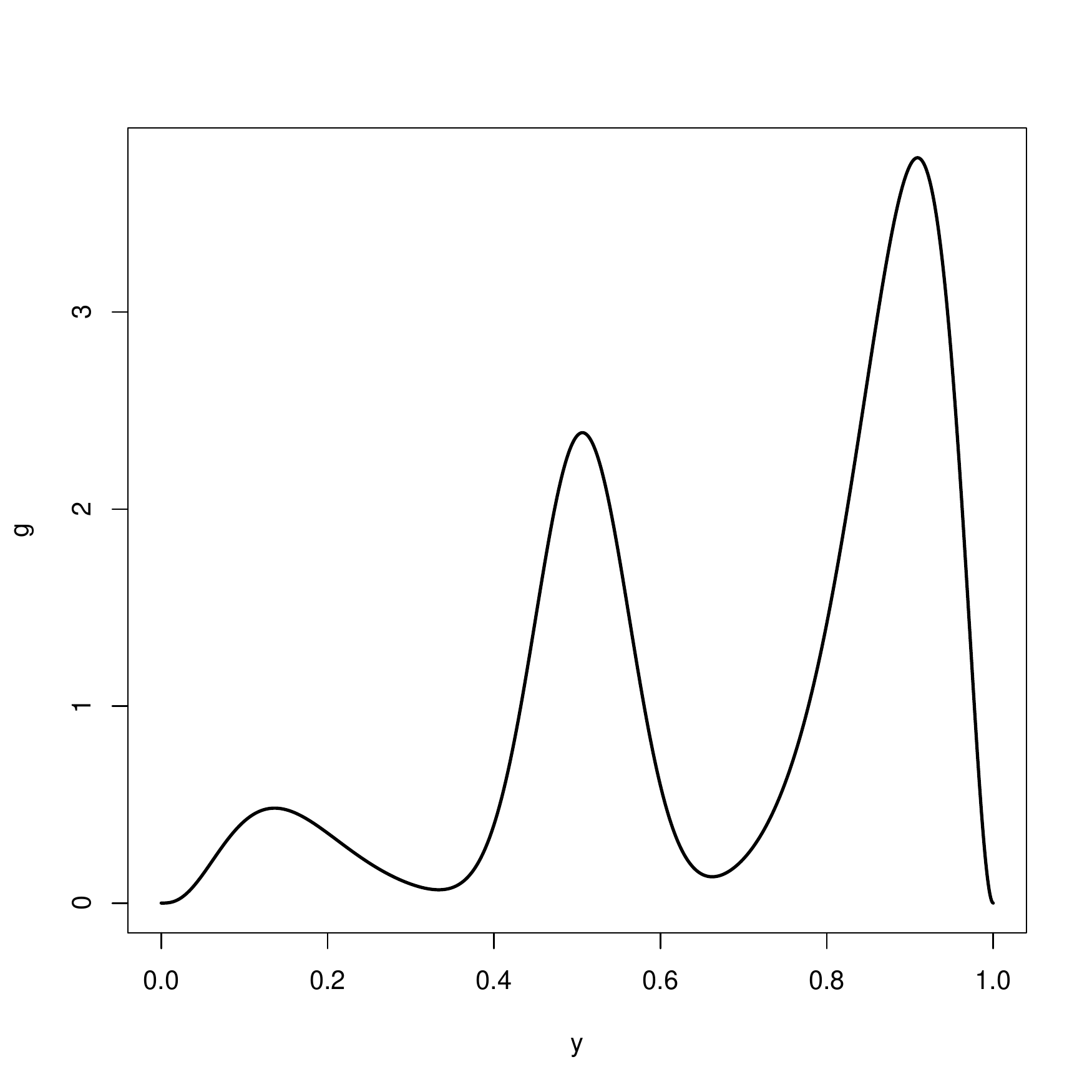} \\
\includegraphics[angle=0,width=0.39\linewidth]{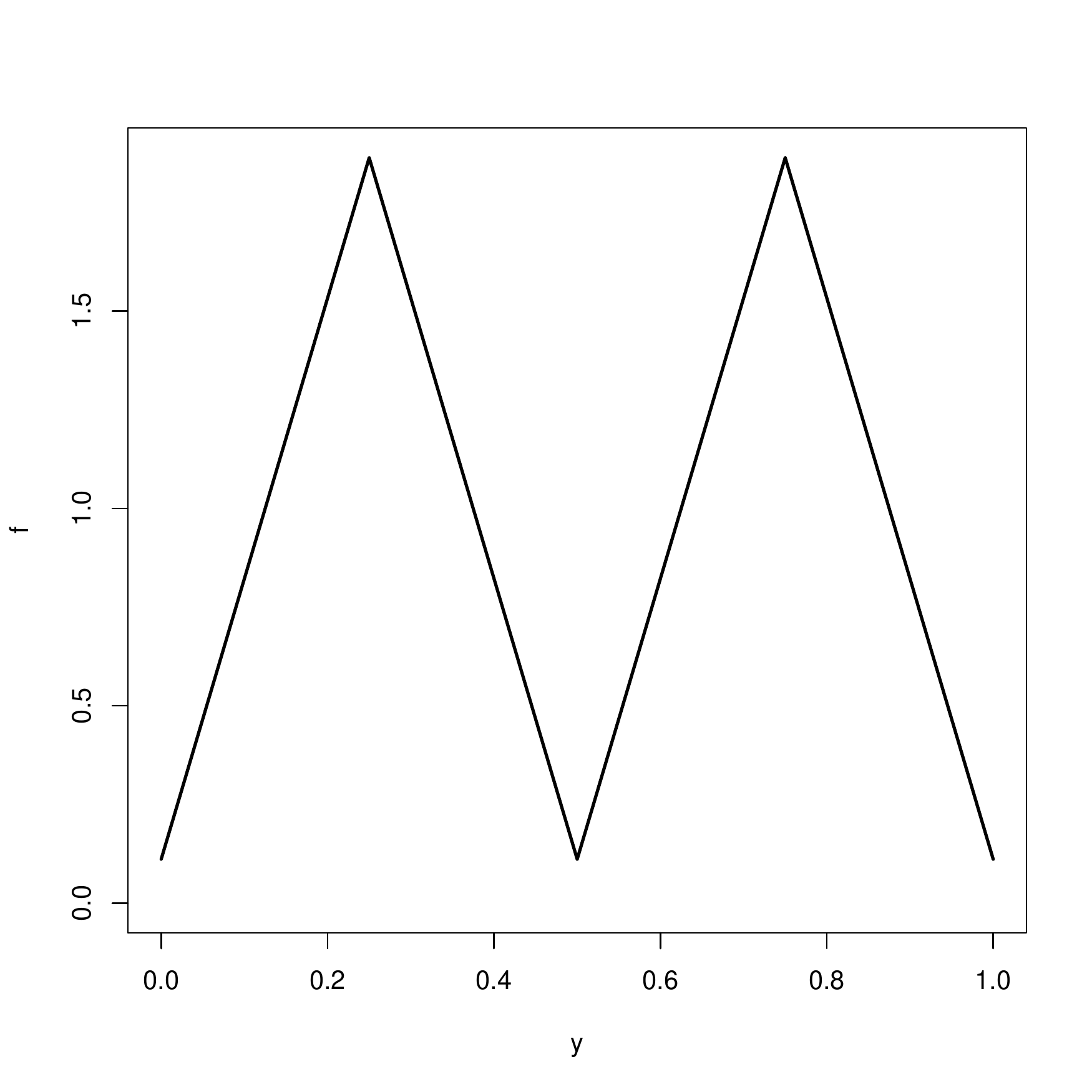}
\includegraphics[angle=0,width=0.39\linewidth]{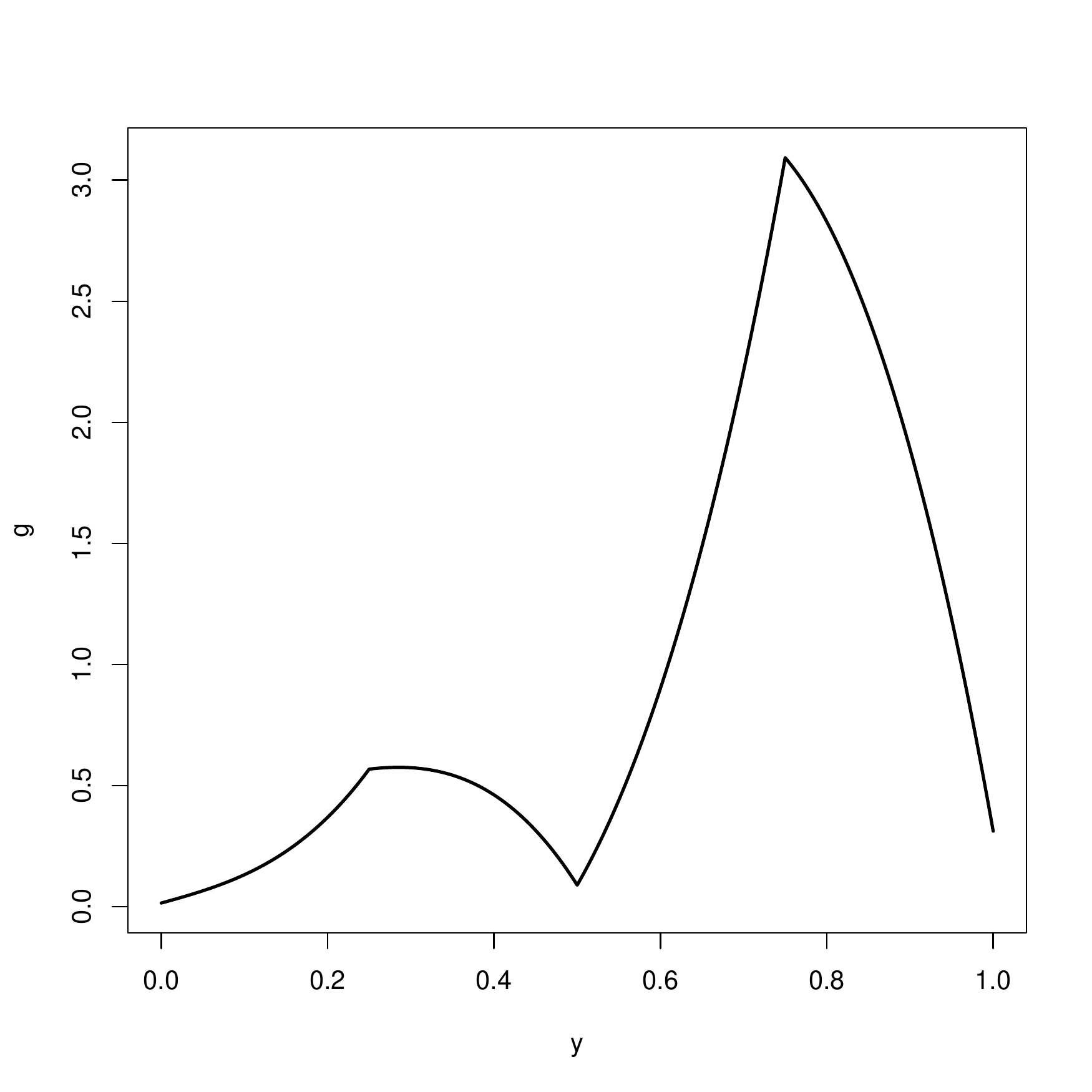}
\caption{Densities used in the simulations (Examples $ \text{ex}_1 $-$ \text{ex}_3 $). The first and second columns represent the densities $ f $ and $ g $, respectively. The $ i $-th row corresponds to the simulation example $ \text{ex}_i $, $ i = 1, 2, 3 $.}
\label{fig:densities}
\end{figure}

The examples above will be denoted hereafter by $ \ex_1 $, $ \ex_2 $ and $ \ex_3 $. We generate $ 1,000 $ biased samples with sizes $ n = 250, 500, 750, 1000 $. For the finest resolution level, we consider the cases $ J_1 = \ceil{p \log_2n} $, where $ p = 0.20, 0.45, 0.70, 0.95 $ and $ \ceil{x} $ represents the smallest integer greater than or equal to $ x $. We adopt as coarsest resolution level $ J_0 = 0 $. %In this work we use the hard threshold to regularize the estimates.

We numerically evaluate the estimate's closeness to the real function of interest by the average square error (ASE), defined as
\[ \ase(\hat{f}, f) = \dfrac{1}{n} \sum_{i = 1}^{n_{\text{grid}}} [\hat{f}(x_i) - f(x_i)]^2, \]
where $ x_1, \ldots, x_{n_{\text{grid}}} $ correspond to a grid of $ n_{\text{grid}} = 250 $ equally spaced points inside the unit interval. Since we have $1,000$ samples for each sample size, data range vary. In order to make the ASEs comparable, we consider the maximum among the minimums and the minimum of the maximums of the datasets to represent $ x_1 $ and $ x_{n_{\text{grid}}} $, respectively.

Performance of the finest resolution level candidates can be observed in Figure \ref{fig:jcomp}. For $ m_1 $ and $ m_2 $ (ordinary wavelet basis), these methods tend to provide poor estimates for larger values of $ J_1 $ ($ p = 0.70 $ and $ 0.95 $). In $ \ex_1 $, $ p = 0.20 $ show a performance slightly superior to  $ p = 0.45 $. On the other hand, one sees considerable improvement when changing from $ p = 0.20 $ to $ p = 0.45 $ for $ \ex_2 $. Finally, in $ \ex_3 $, $ p = 0.45 $ provides a slight improvement for the estimates, when compared to $ p = 0.20 $. This suggests that $ J_1 = \ceil{0.45\log_2n} $ seems to be a good choice for the finest level. Furthermore, when comparing these two methods, $ m_1 $ presents the worst estimates, with greater mean and variability, sometimes providing negative density estimates.

\begin{figure}[!htb]
\centering
\includegraphics[angle=0,width=0.24\linewidth]{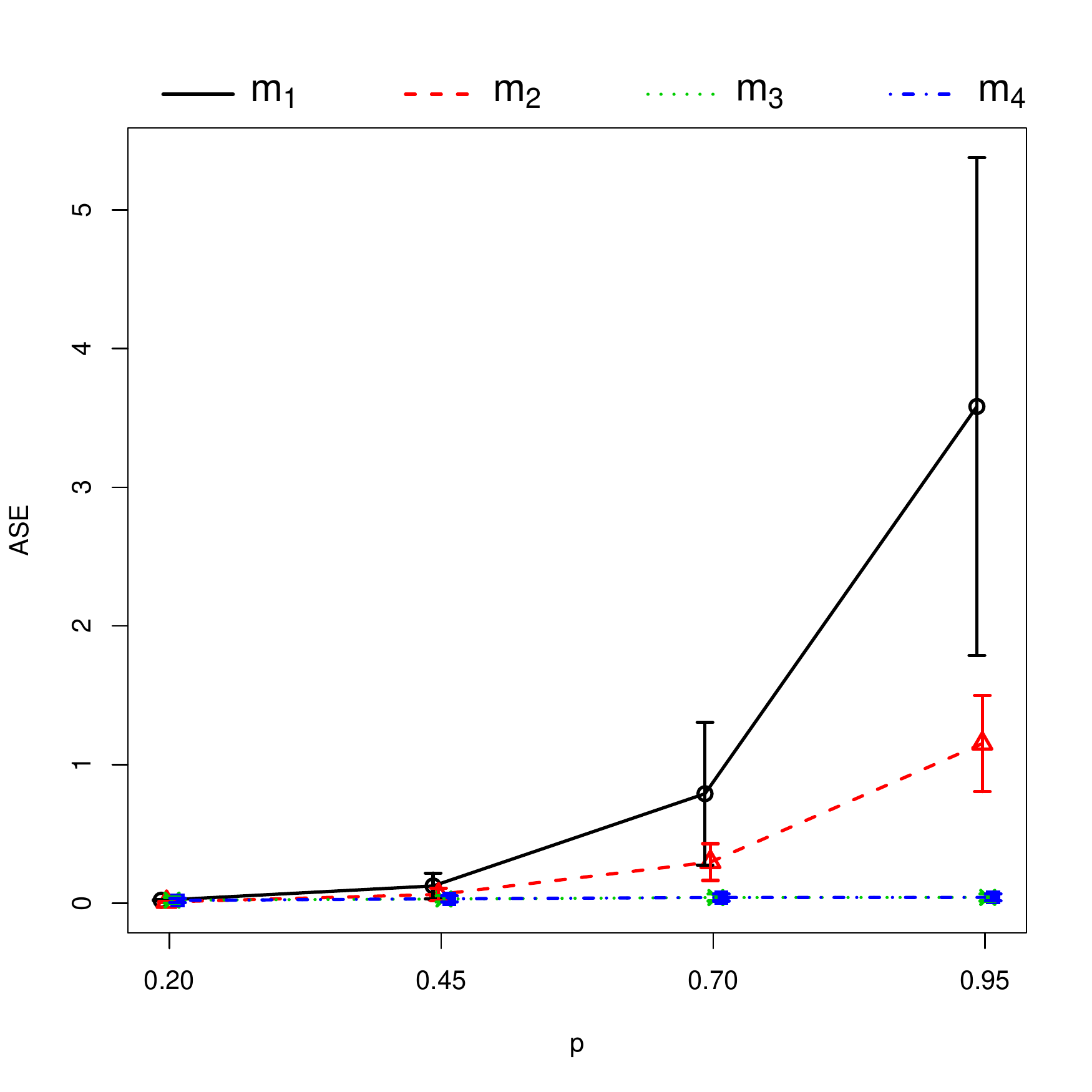}
\includegraphics[angle=0,width=0.24\linewidth]{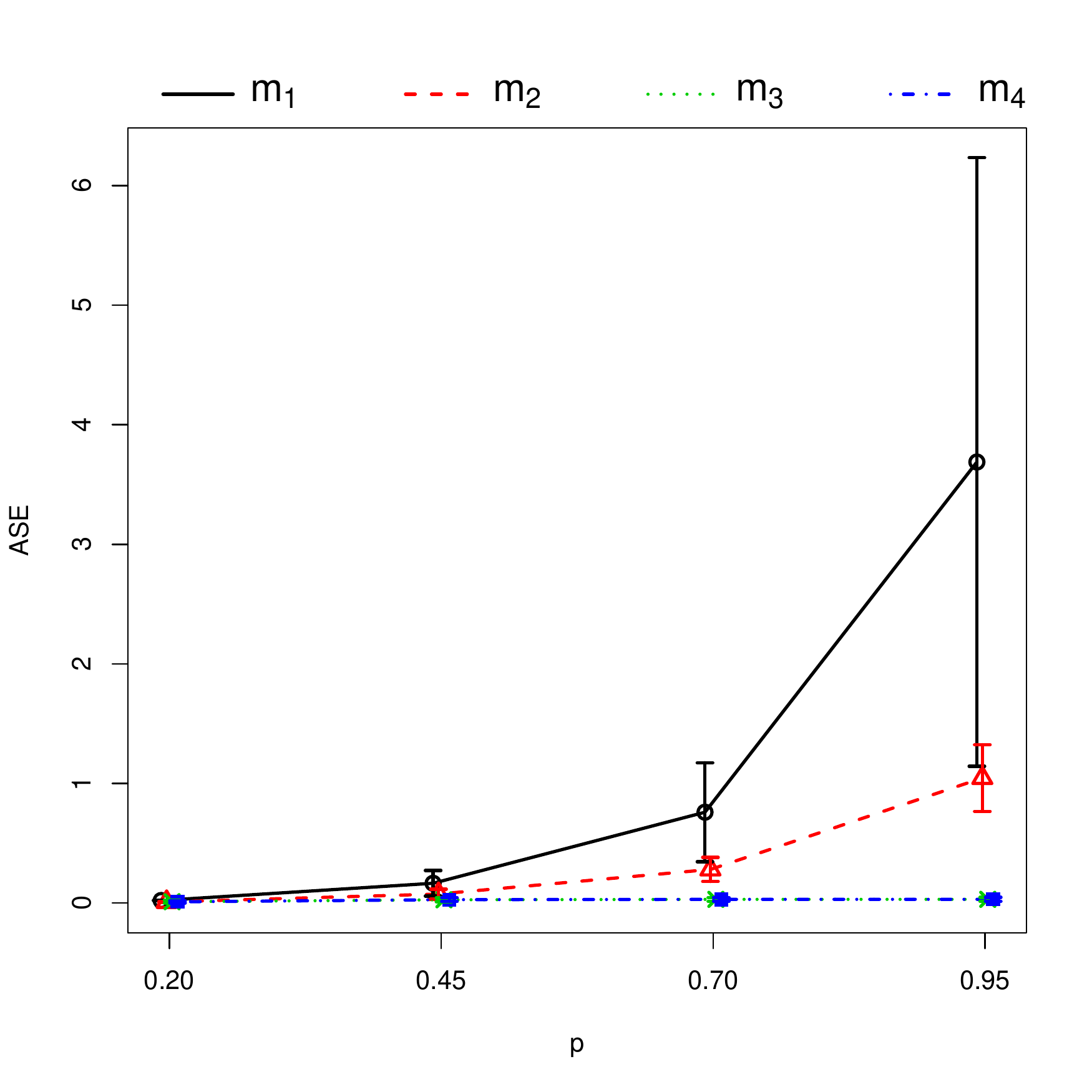}
\includegraphics[angle=0,width=0.24\linewidth]{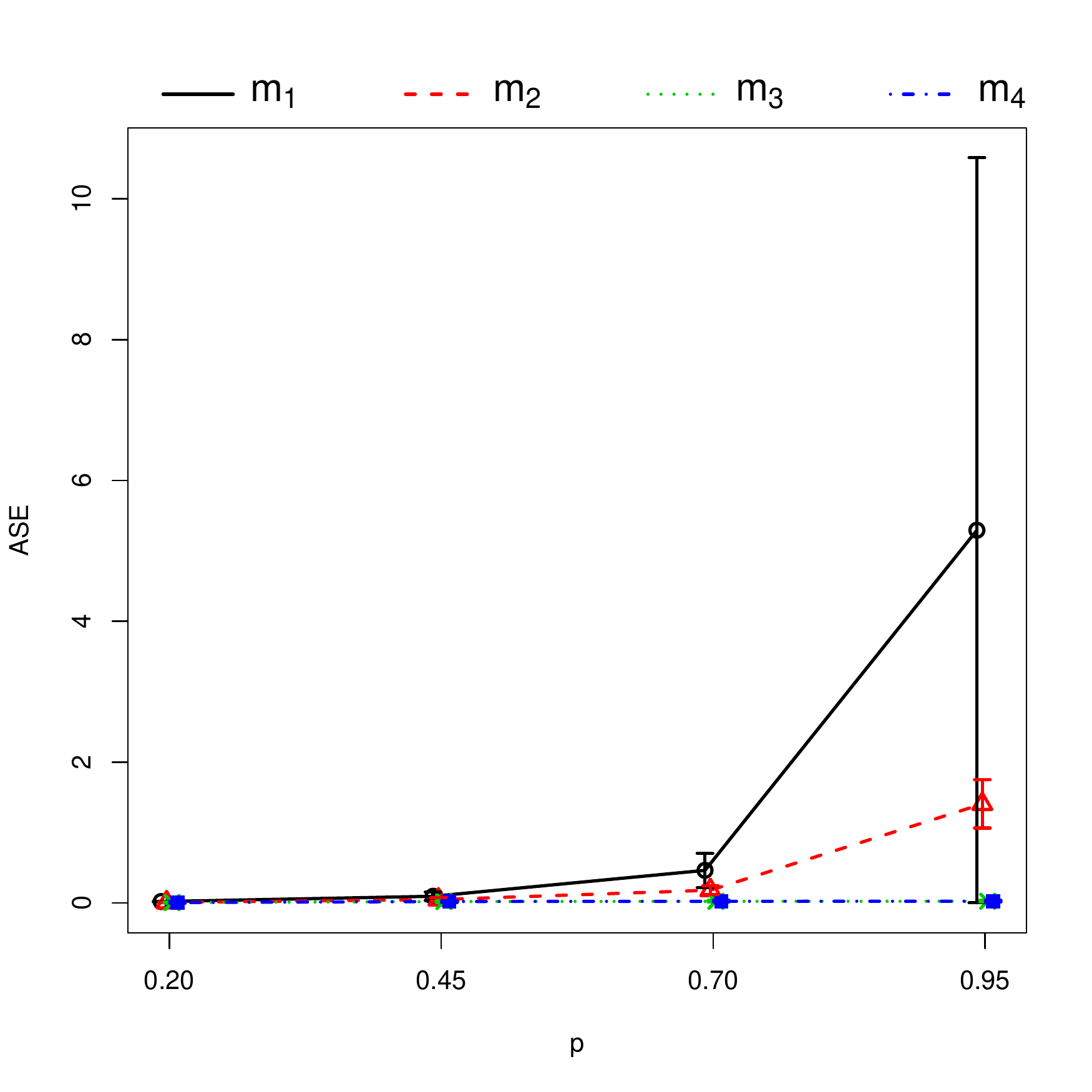}
\includegraphics[angle=0,width=0.24\linewidth]{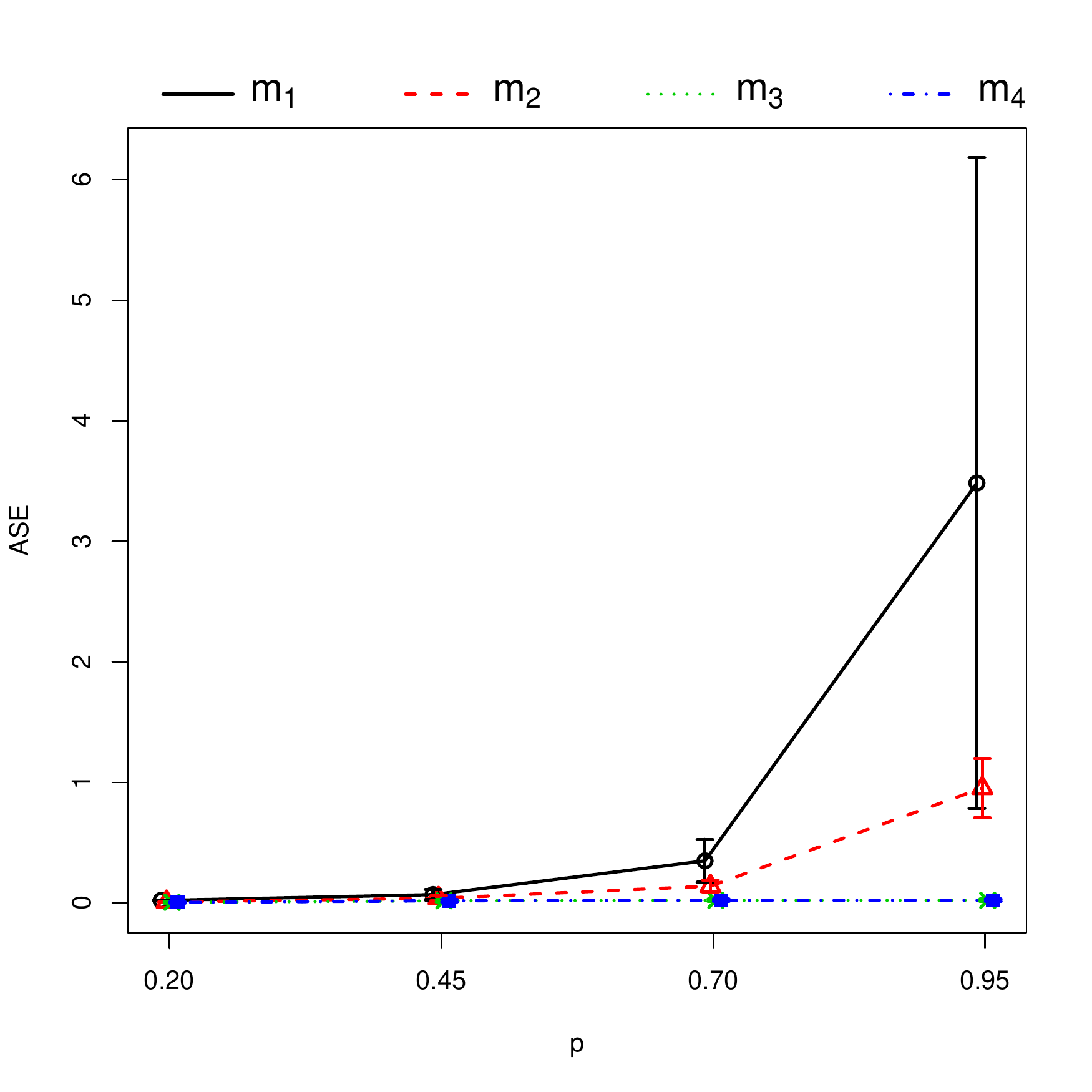} \\
\includegraphics[angle=0,width=0.24\linewidth]{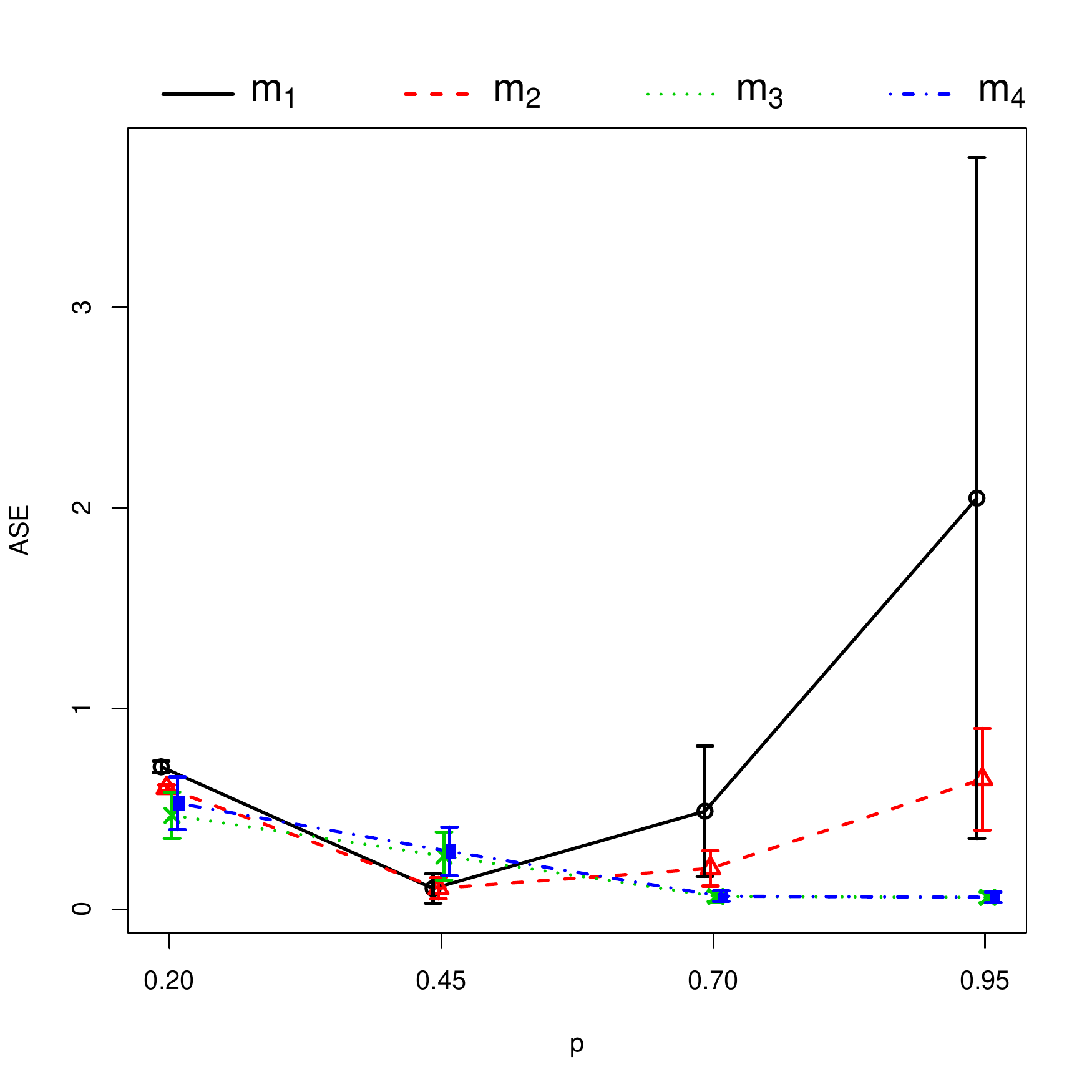}
\includegraphics[angle=0,width=0.24\linewidth]{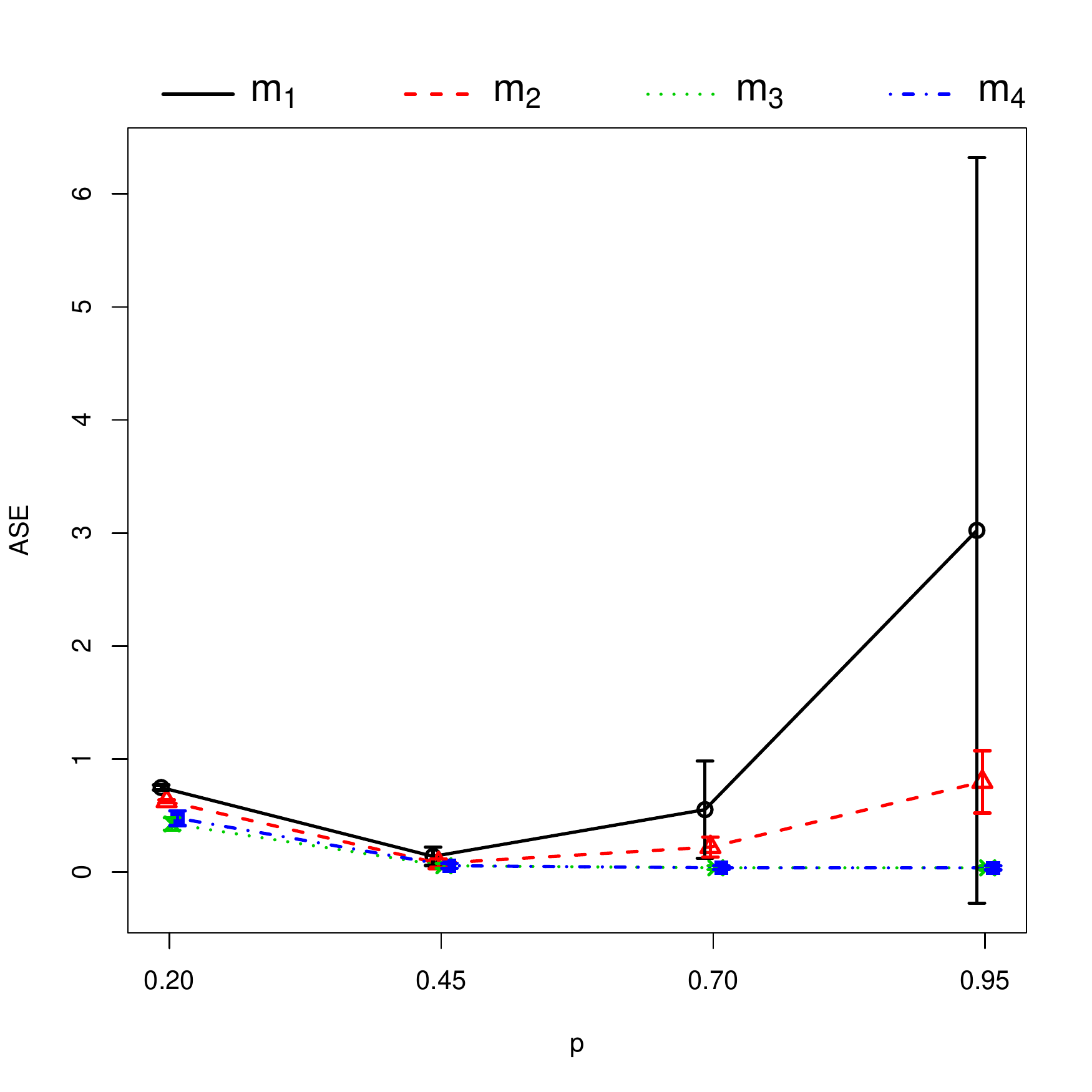}
\includegraphics[angle=0,width=0.24\linewidth]{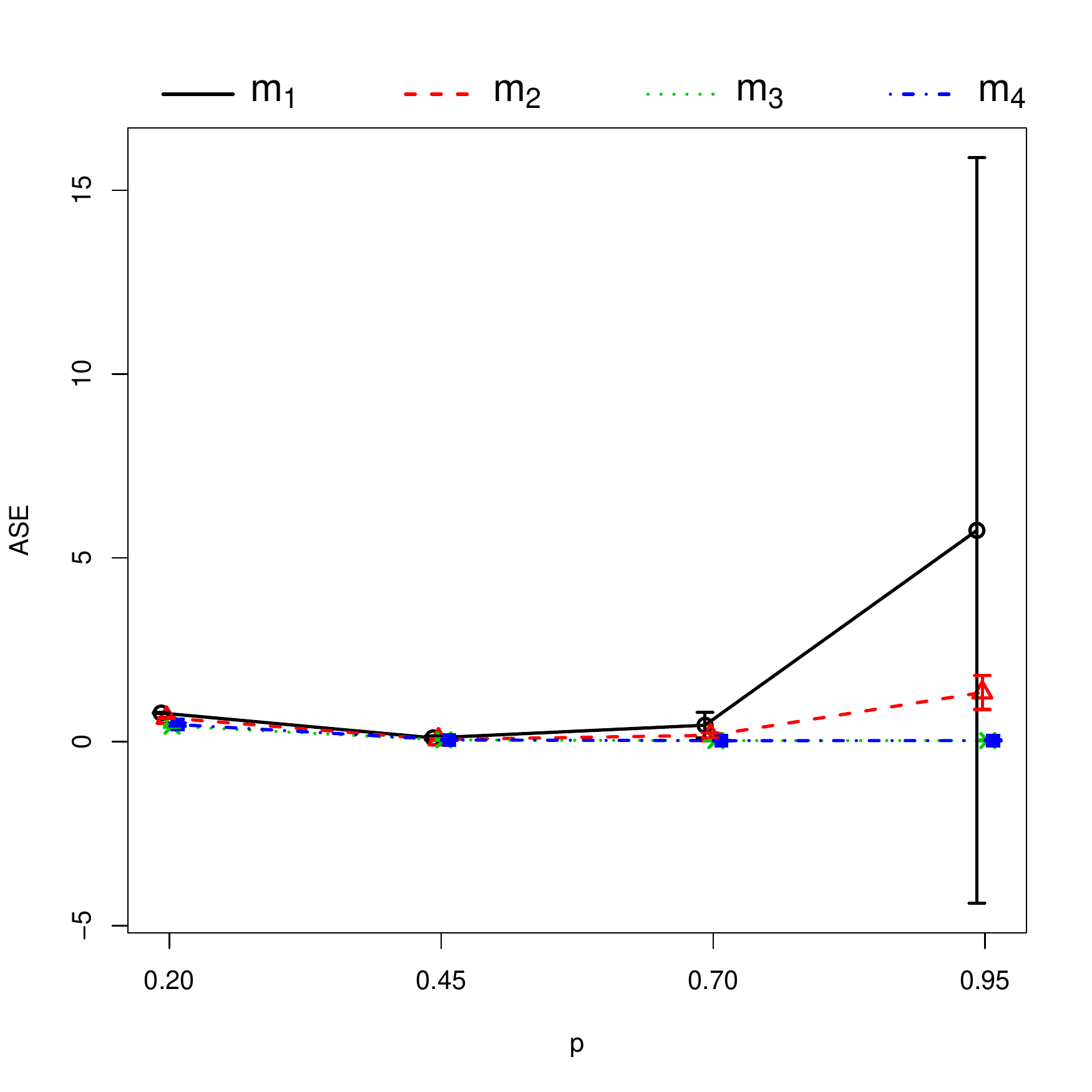}
\includegraphics[angle=0,width=0.24\linewidth]{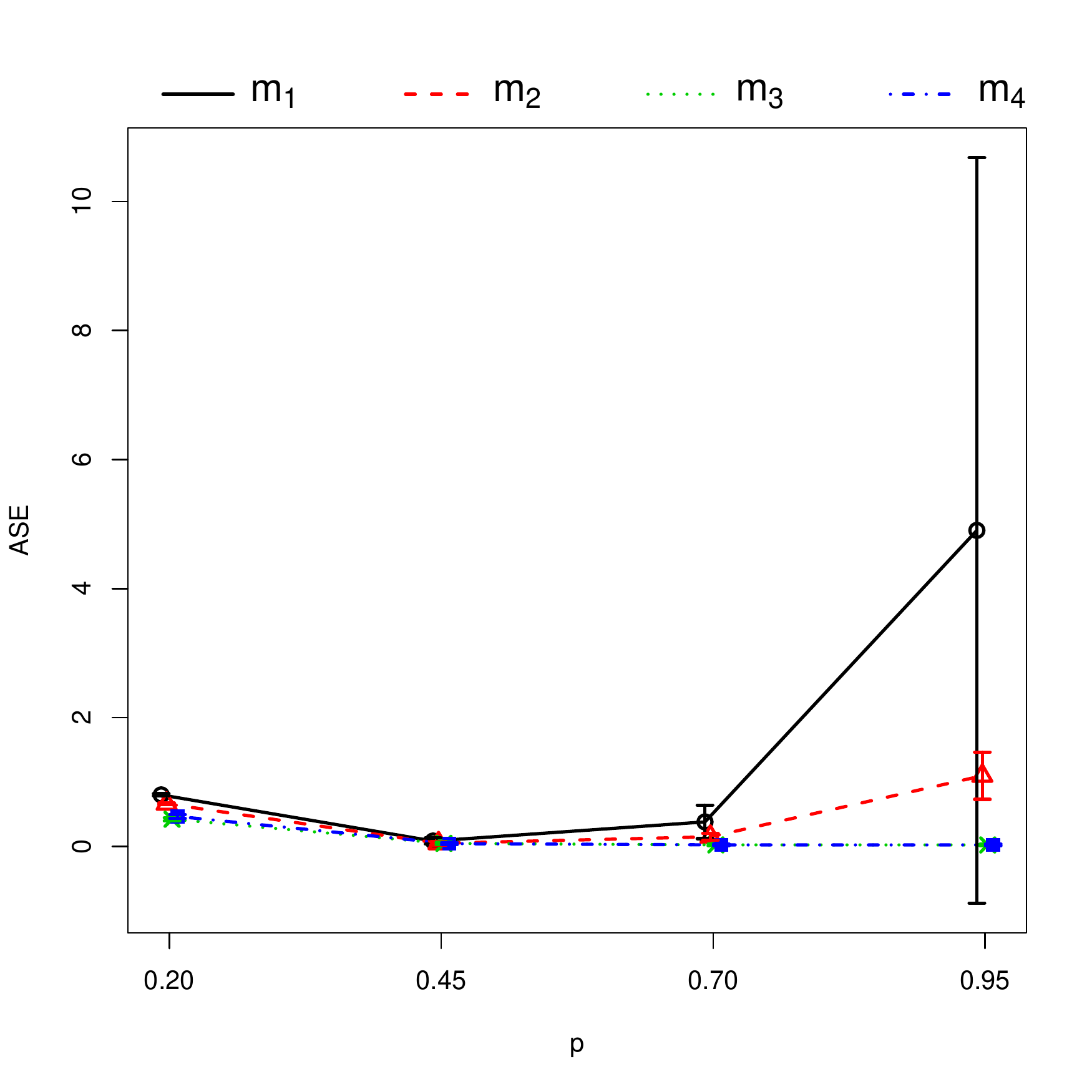} \\
\includegraphics[angle=0,width=0.24\linewidth]{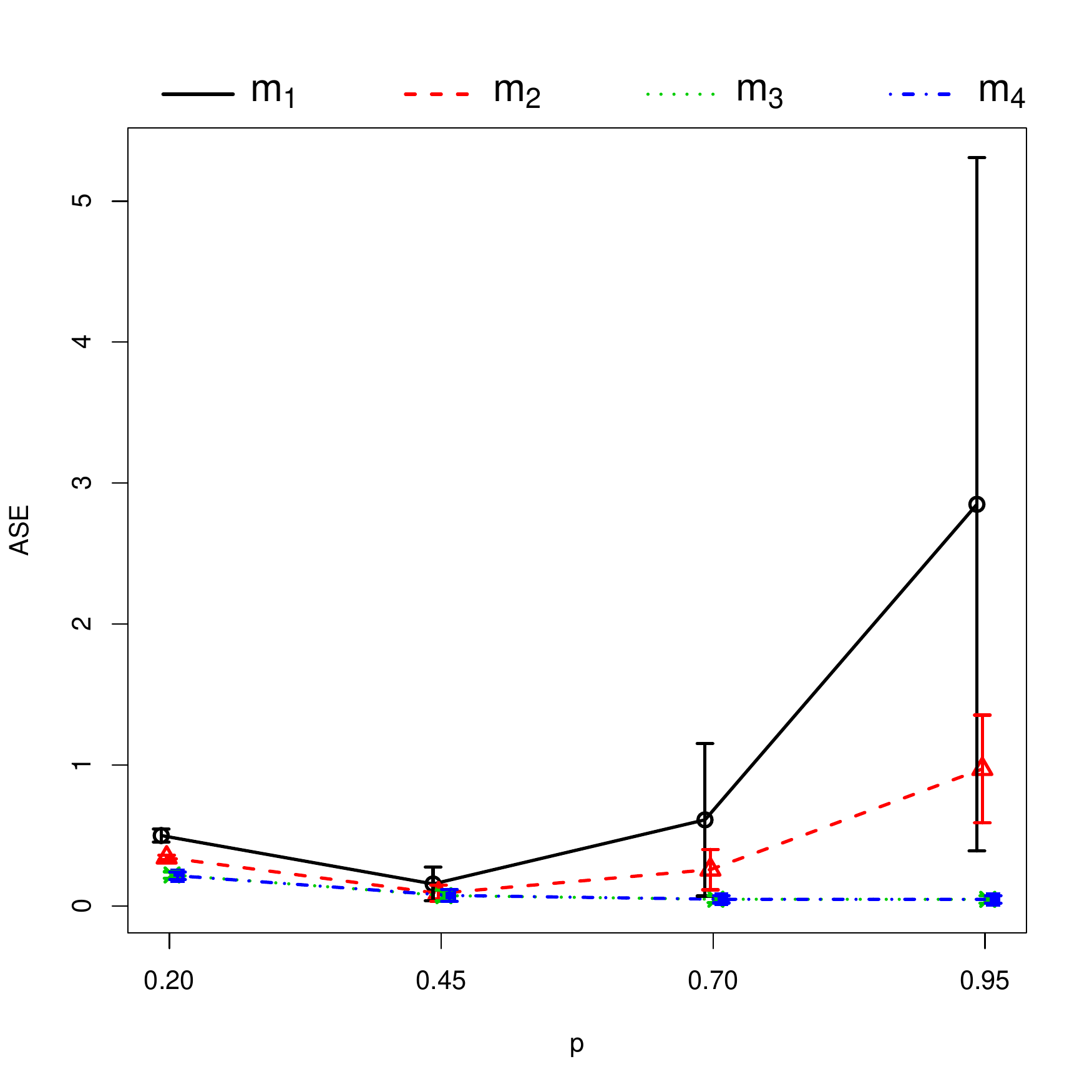}
\includegraphics[angle=0,width=0.24\linewidth]{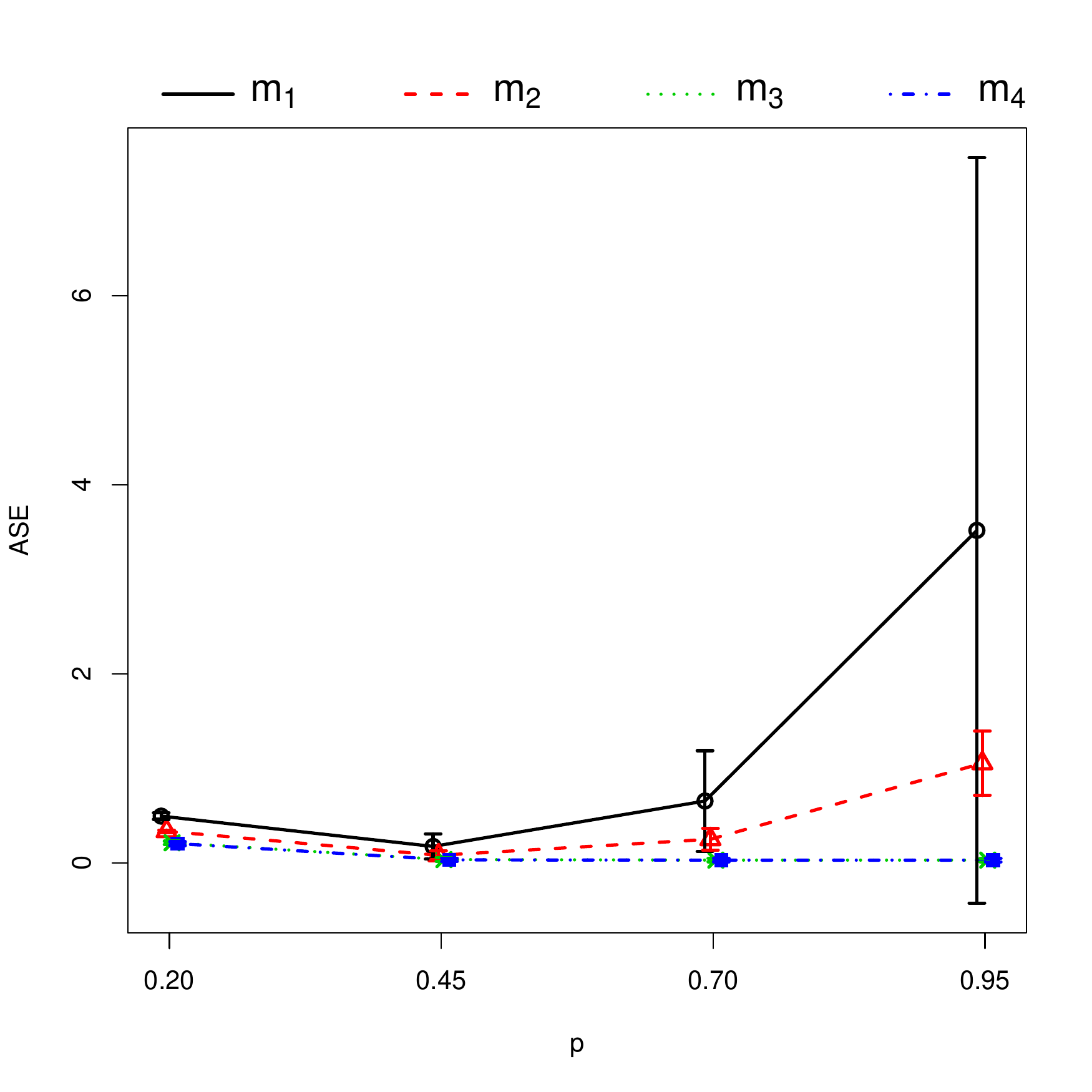}
\includegraphics[angle=0,width=0.24\linewidth]{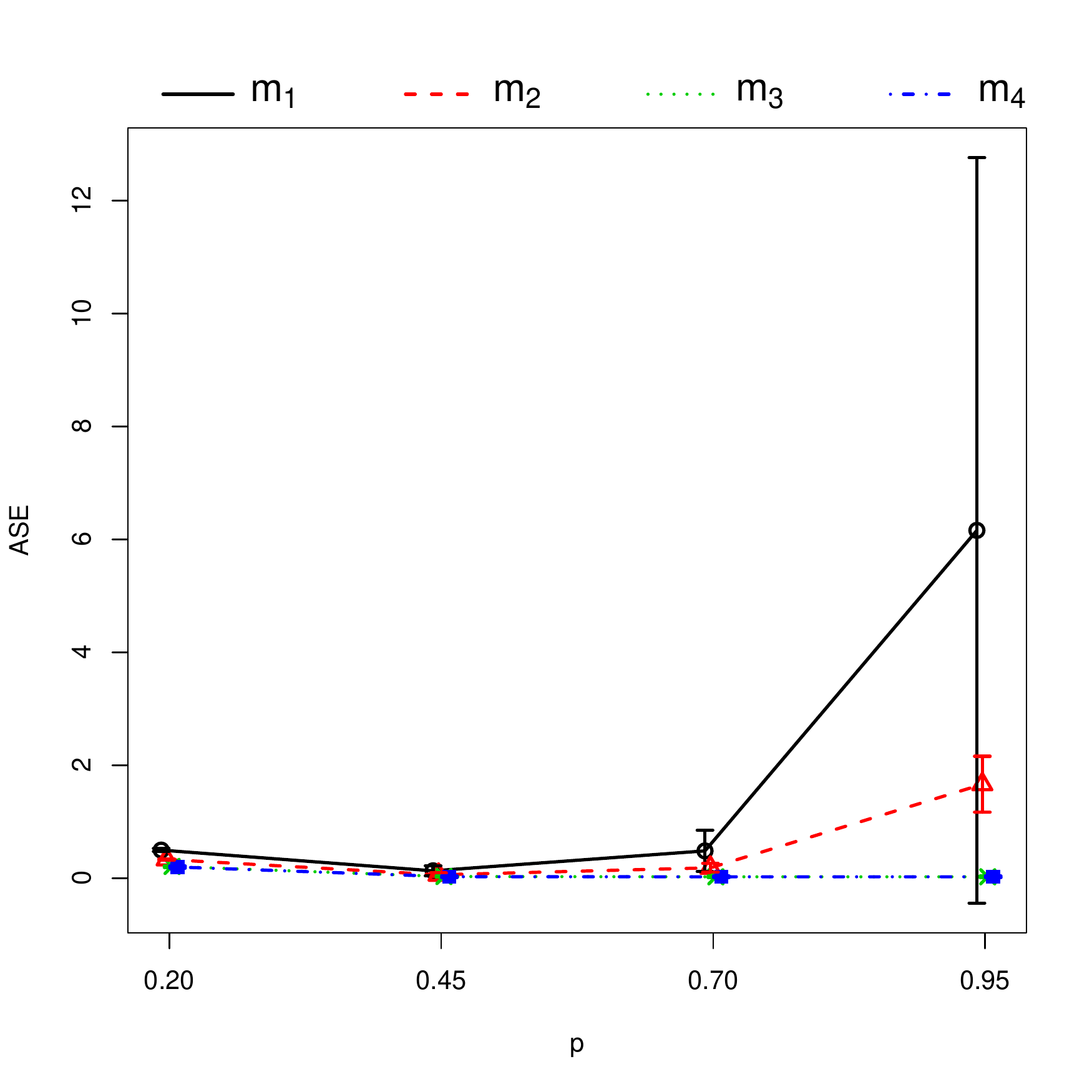}
\includegraphics[angle=0,width=0.24\linewidth]{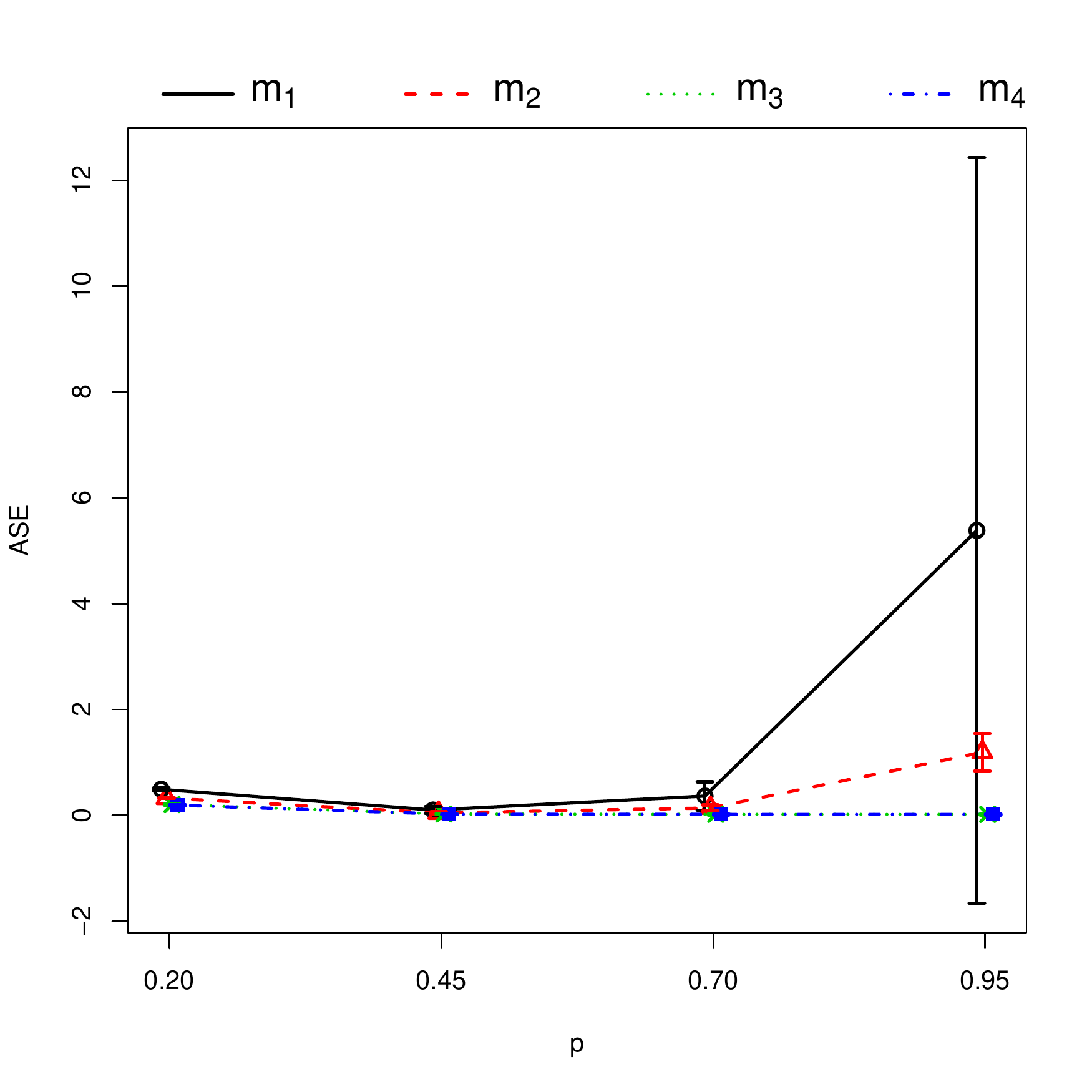}
\caption{Average plus/minus a standard deviation of the ASEs. The full lines, dashed lines, dotted lines and dot-dashed lines represent the ASEs obtained by the methods $ m_1 $--$ m_4 $, respectively. Columns 1--4 are related to the datasets with sample sizes $ n = 250, 500, 750, 1000 $, respectively. The $ i $-th row corresponds to the example $ i $, $ i = 1, 2, 3 $.}
\label{fig:jcomp}
\end{figure}

We also see in Figure \ref{fig:jcomp} that the estimates obtained by the warped wavelet basis ($ m_3 $ and $ m_4 $) present an opposite behavior to $ m_1 $ and $ m_2 $. Larger values of $ J_1 $ yield generally better estimates. This suggests that the finest resolution level for $ J_1 = \ceil{0.95\log_2n} $ is a good alternative for methods of estimation based on the warped basis. Moreover, the estimates of $ m_3 $ and $ m_4 $ look very similar (almost equal). Finally, one can see that the warped basis provide an improvement on the estimates, with average ASEs closer to zero as well as smaller standard deviations.

Figure \ref{fig:ncomp} presents the estimator's performance for fixed resolution levels, as sample size increases. In general, one can see that all the four methods of estimation provide estimates that seem to converge to the density of interest. One clear exception corresponds to the case where $ J_1 = \ceil{0.20 \log_2 n} $ in $ \ex_2 $. This reinforces that such a resolution level is not indicated, which is consistent with some arguments above for Figure \ref{fig:jcomp}. Also, one can clearly see the superiority of estimates based on warped wavelets, which provide ASEs with smaller averages and standard deviations.

\begin{figure}[!ht]
\centering
\includegraphics[angle=0,width=0.24\linewidth]{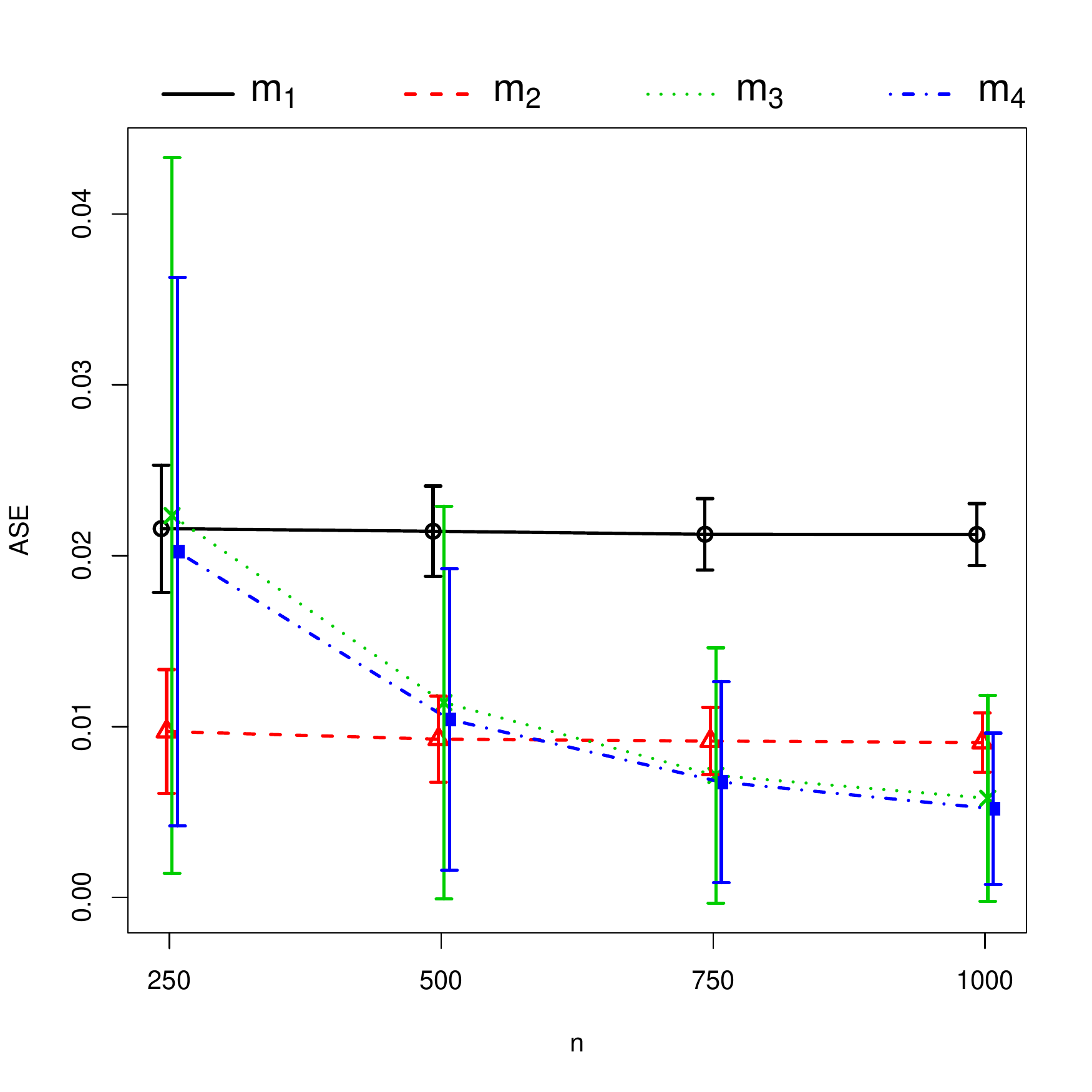}
\includegraphics[angle=0,width=0.24\linewidth]{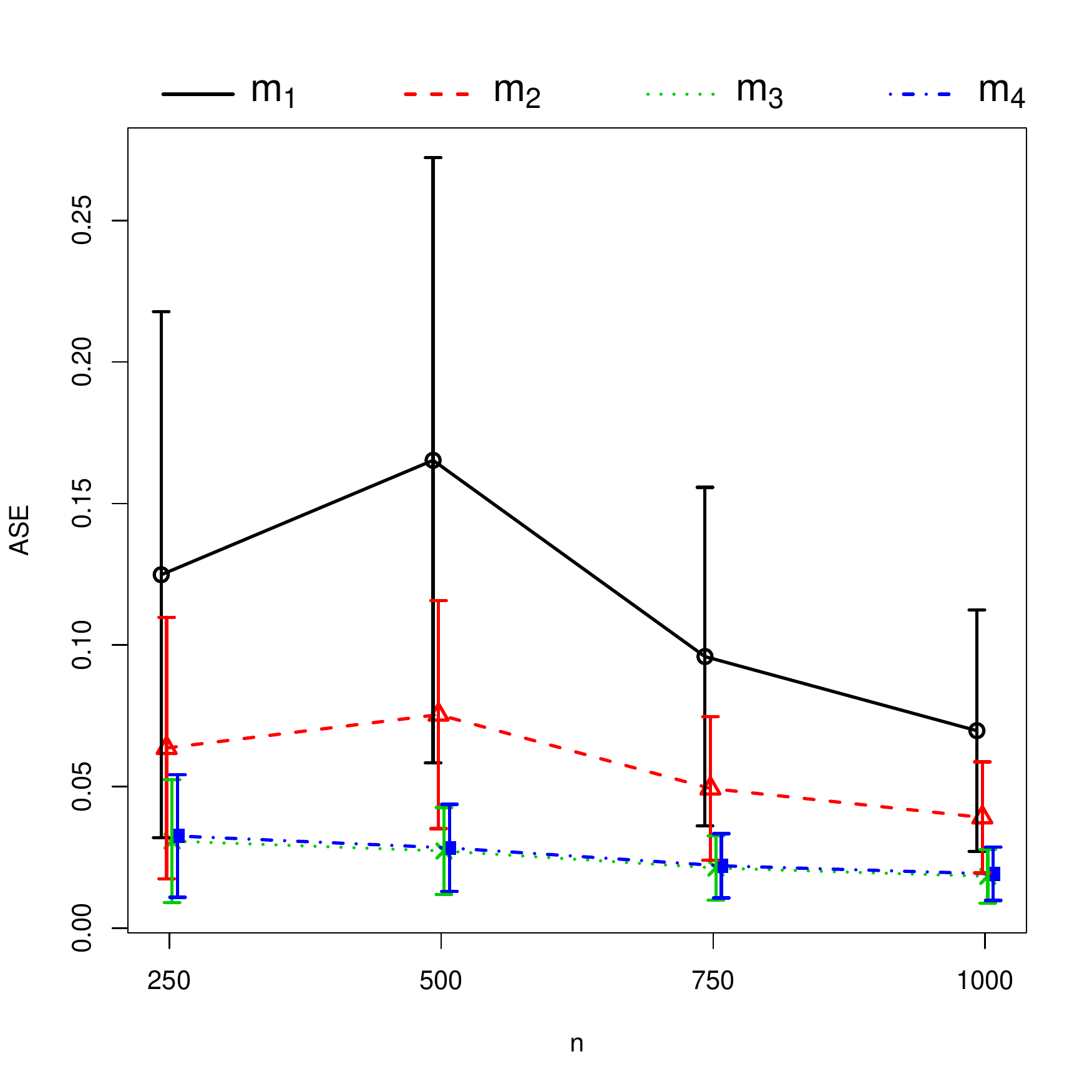}
\includegraphics[angle=0,width=0.24\linewidth]{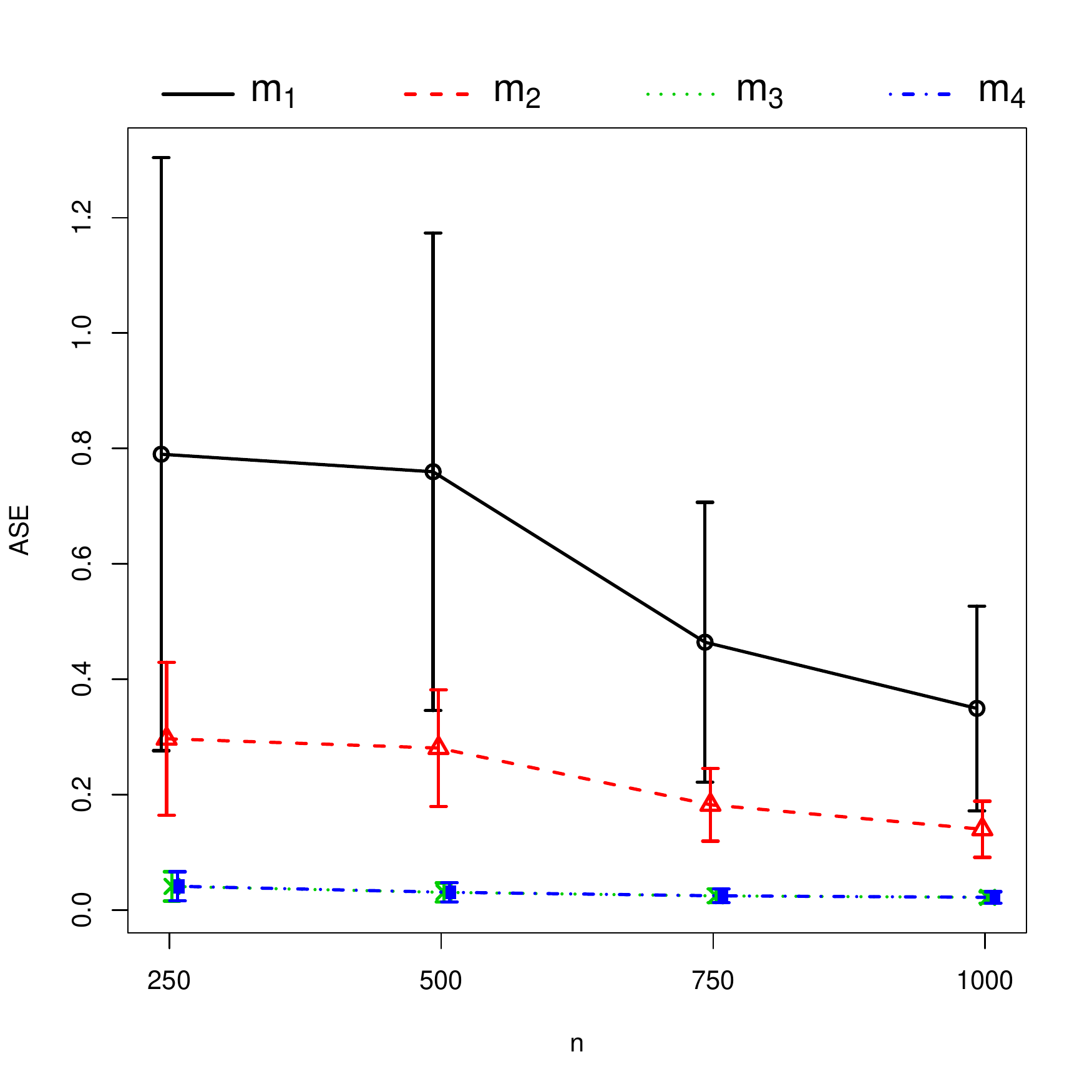}
\includegraphics[angle=0,width=0.24\linewidth]{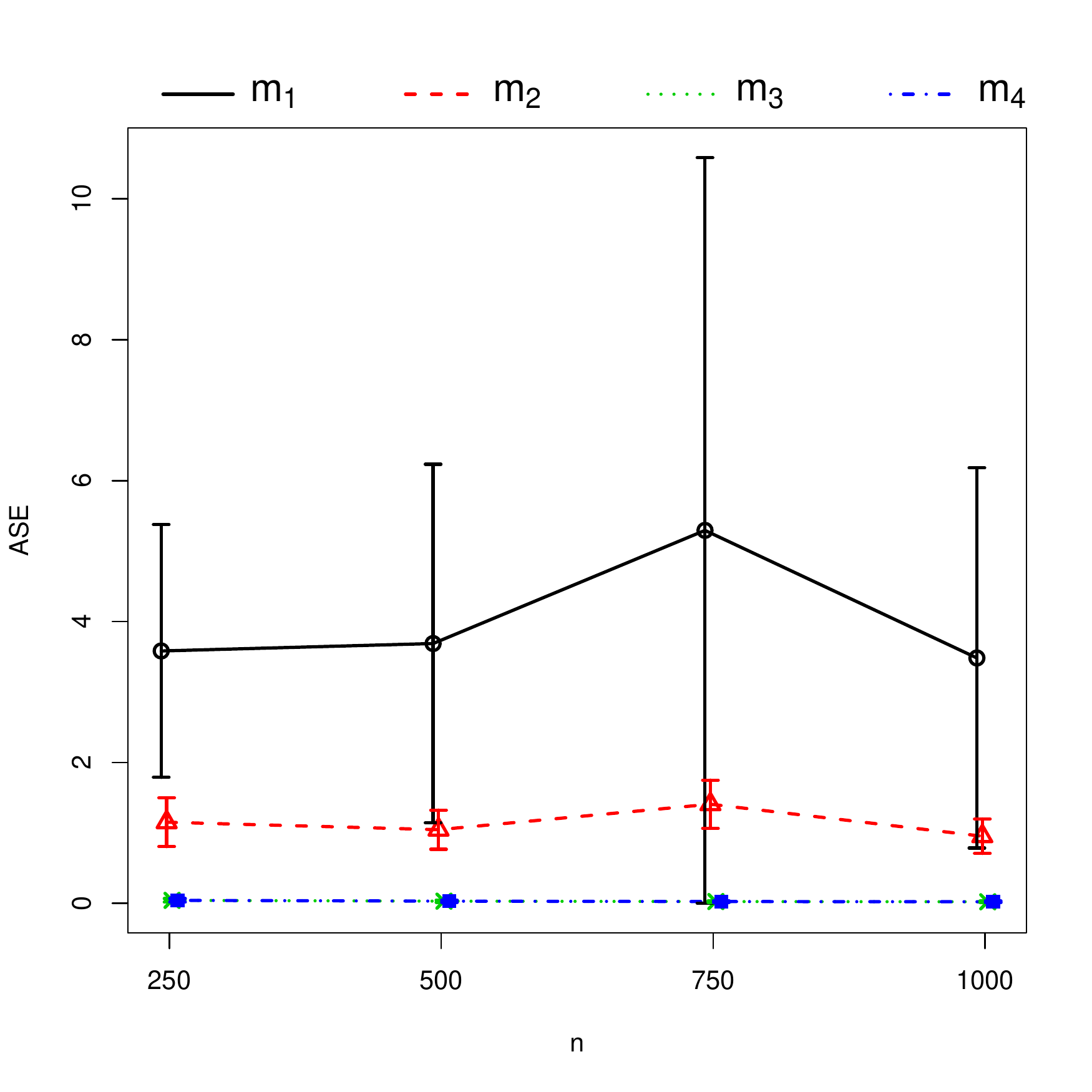} \\
\includegraphics[angle=0,width=0.24\linewidth]{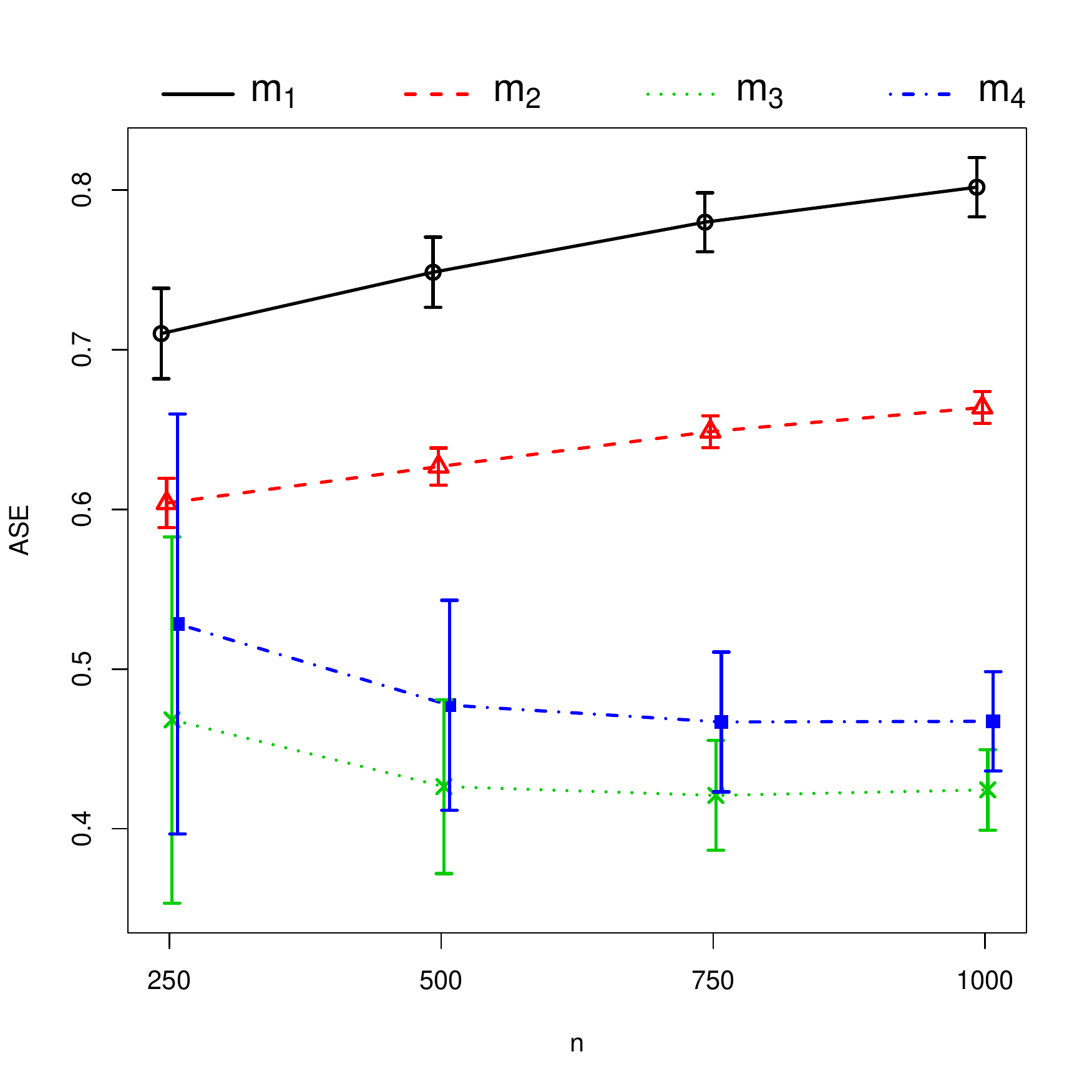}
\includegraphics[angle=0,width=0.24\linewidth]{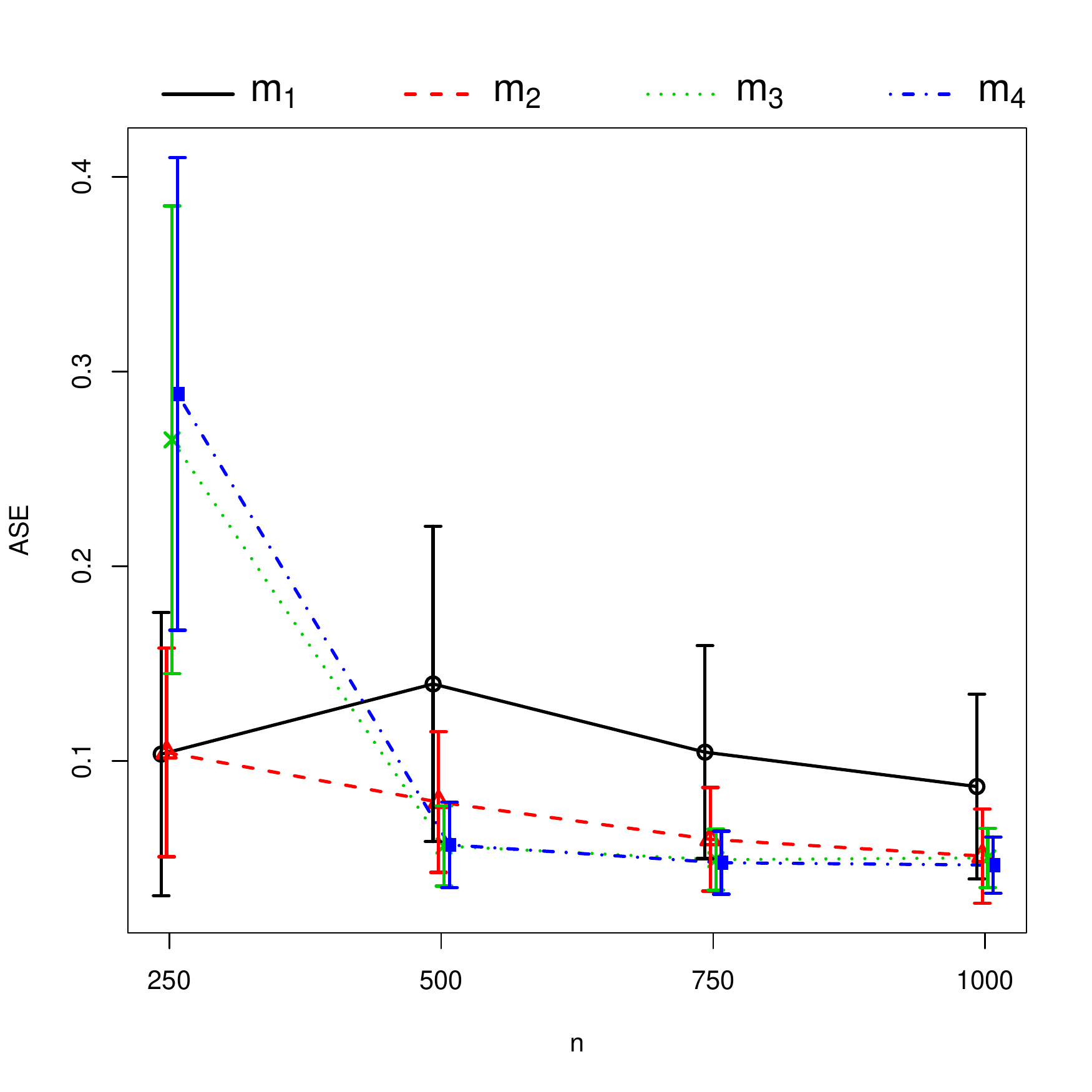}
\includegraphics[angle=0,width=0.24\linewidth]{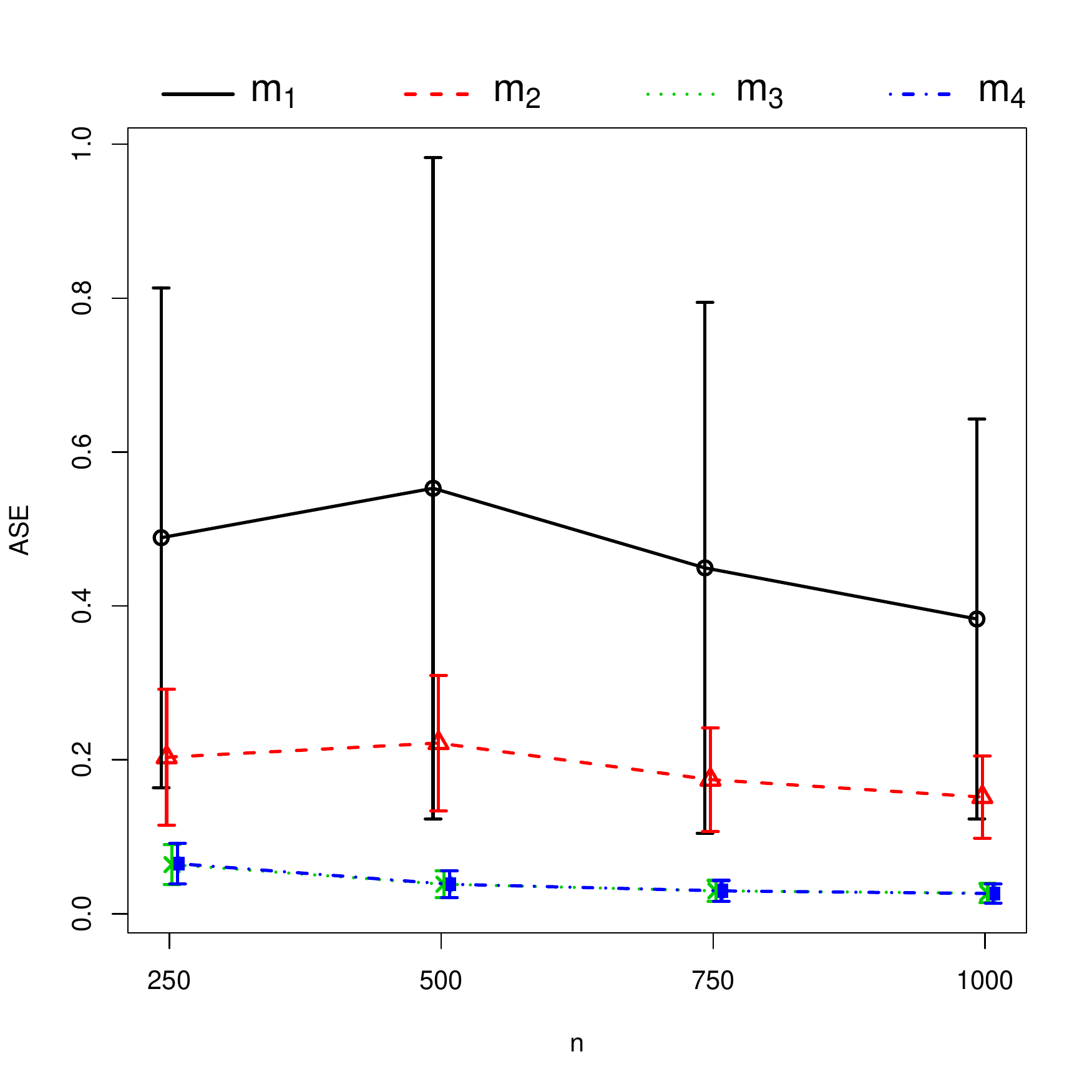}
\includegraphics[angle=0,width=0.24\linewidth]{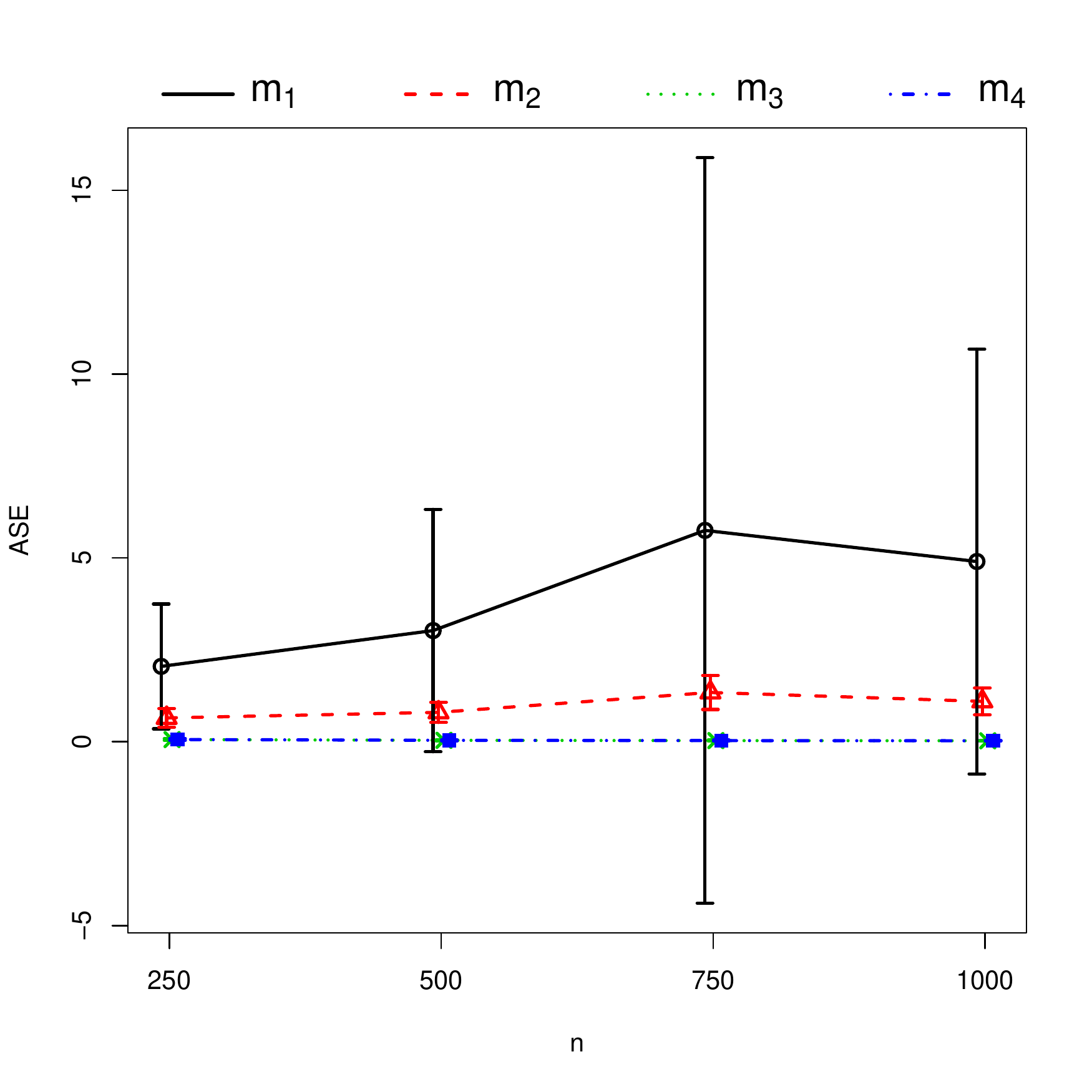} \\
\includegraphics[angle=0,width=0.24\linewidth]{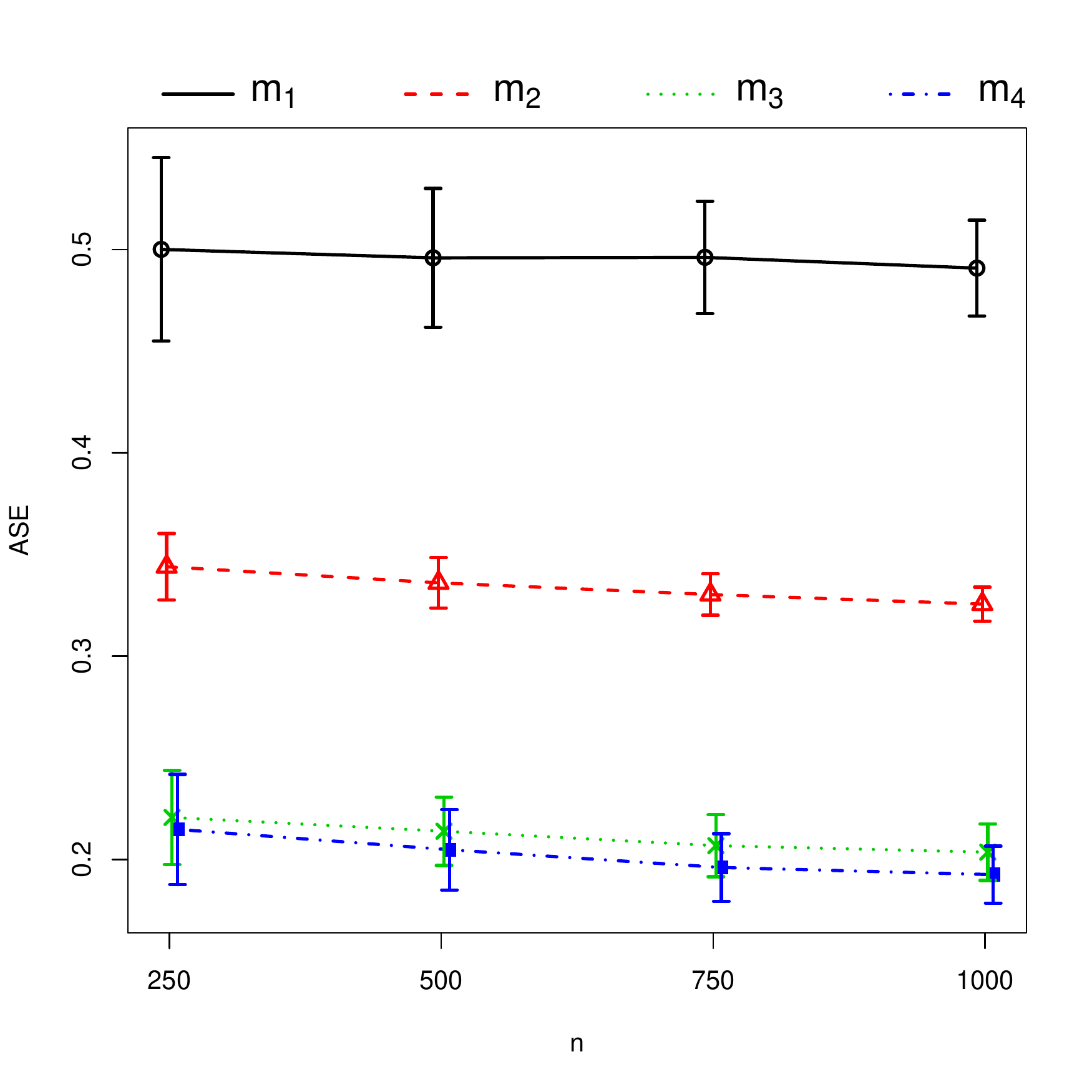}
\includegraphics[angle=0,width=0.24\linewidth]{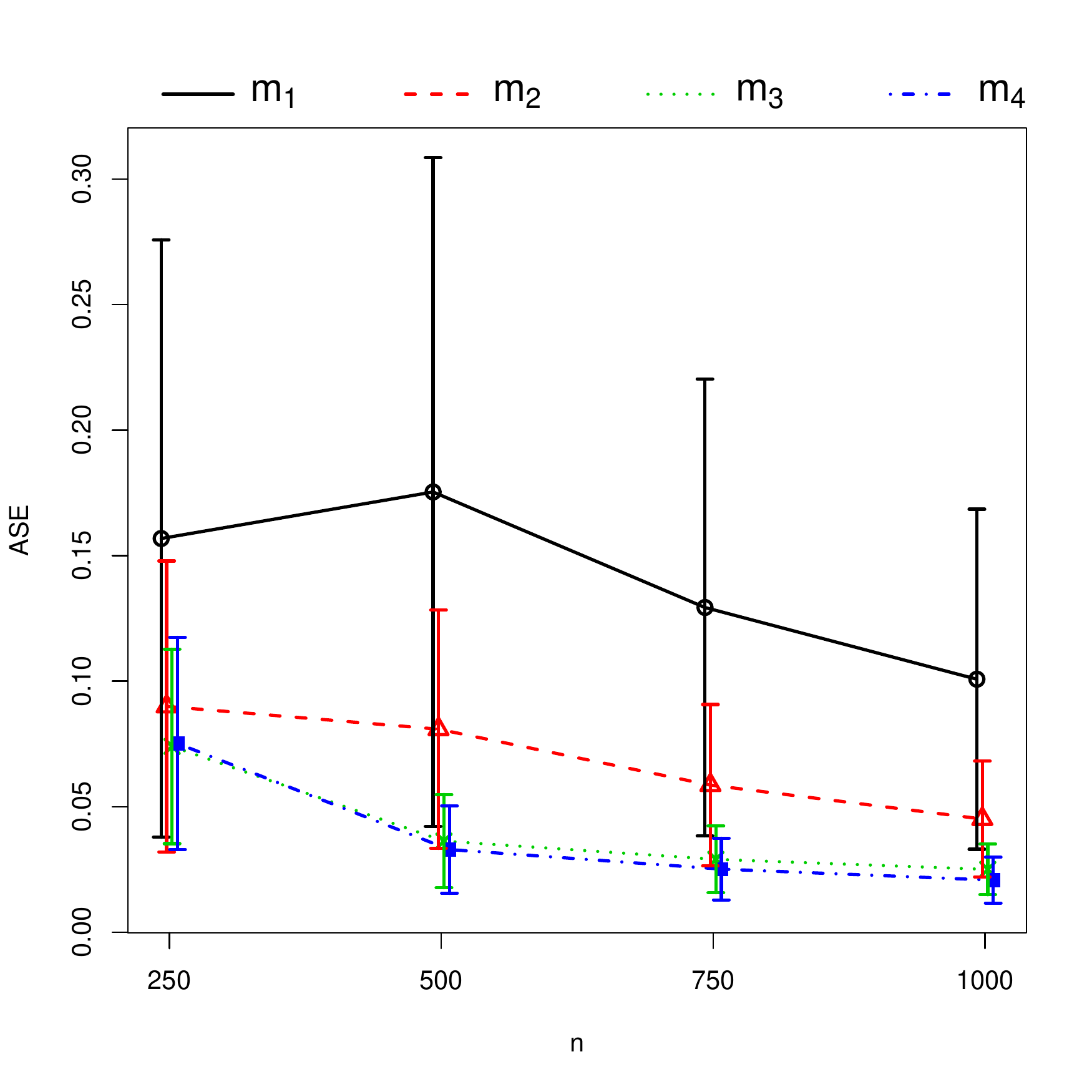}
\includegraphics[angle=0,width=0.24\linewidth]{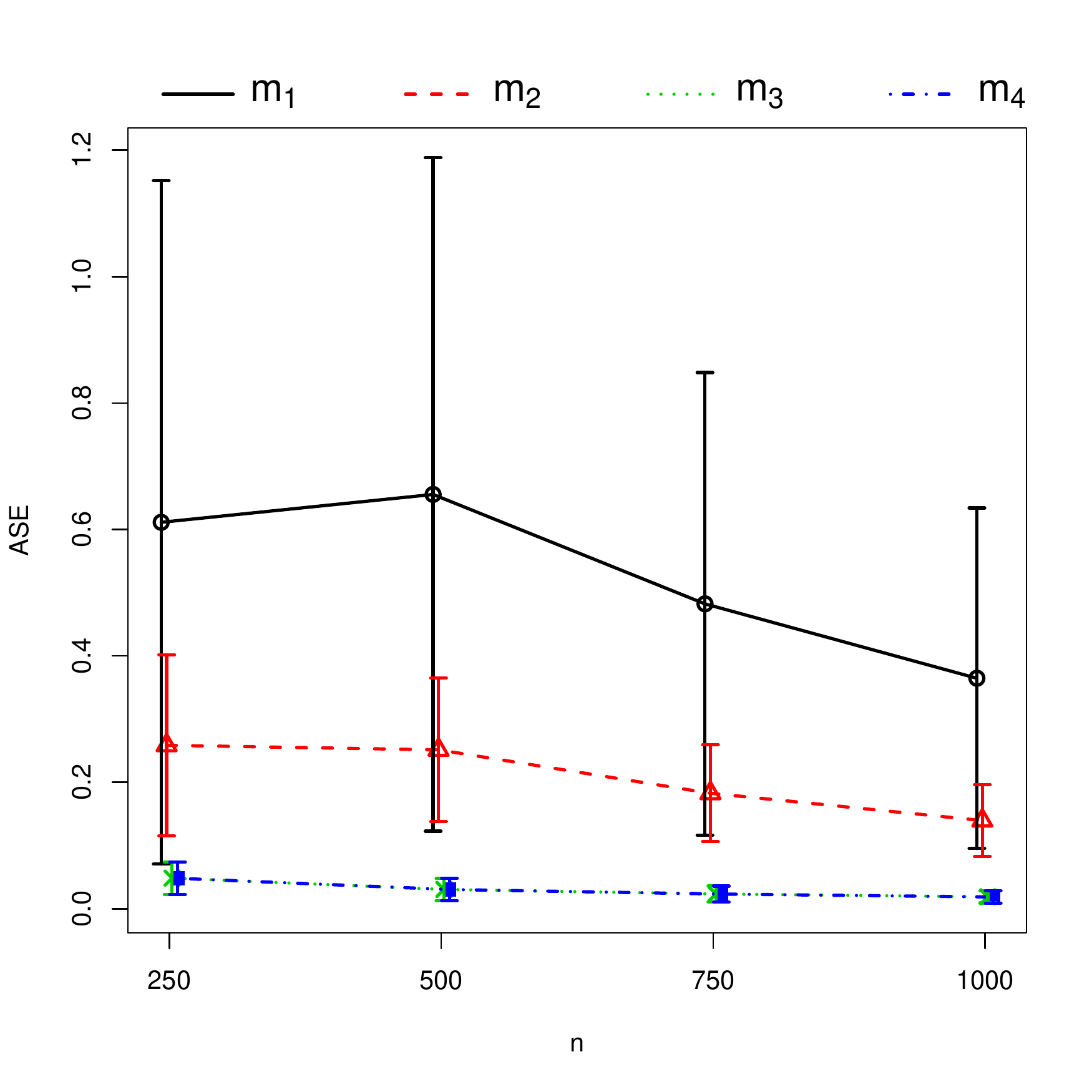}
\includegraphics[angle=0,width=0.24\linewidth]{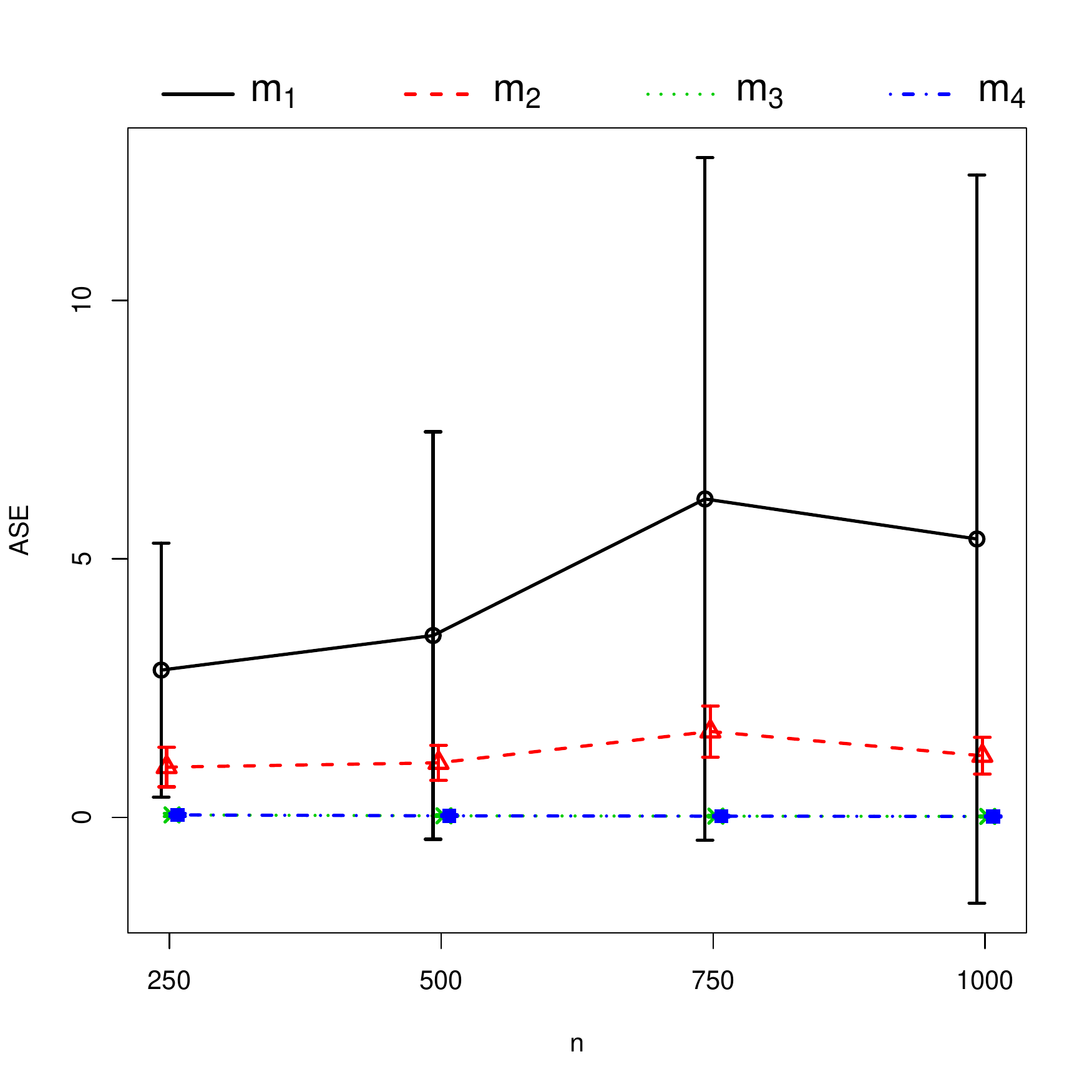}
\caption{Average plus/minus a standard deviation of the ASEs. Full lines, dashed lines, dotted lines and dot-dashed lines represent the ASEs obtained by the methods $ m_1 $--$ m_4 $, respectively. Columns 1--4 are related to the cases where we consider the finest resolution level $ J_1 = \ceil{p\log_2 n} $, $ p = 0.20, 0.45, 0.70, 0.95 $, respectively. The $ i $-th row corresponds to the example $ i $, $ i = 1, 2, 3 $.}
\label{fig:ncomp}
\end{figure}

Finally, another advantage of using warped wavelets can be observed in Figures \ref{fig:estimates-ex1}--\ref{fig:estimates-ex3}, which present pointwise estimates (averages) and 95\% confidence intervals (highest density intervals) based on the 1,000 replications. It becomes clear that estimates based on warped wavelets are more precise for regions in the density's support where the weighting function is close to zero.
%Furthermore, poor estimates are obtained for all methods $ m_k $, $ k = 1, 2, 3, 4 $.
When we employ $ J_1 = \ceil{0.95 \log_2 n} $, the traditional method $ m_2 $ provides negative estimates.

\begin{figure}[!ht]
\centering
\includegraphics[angle=0,width=0.24\linewidth]{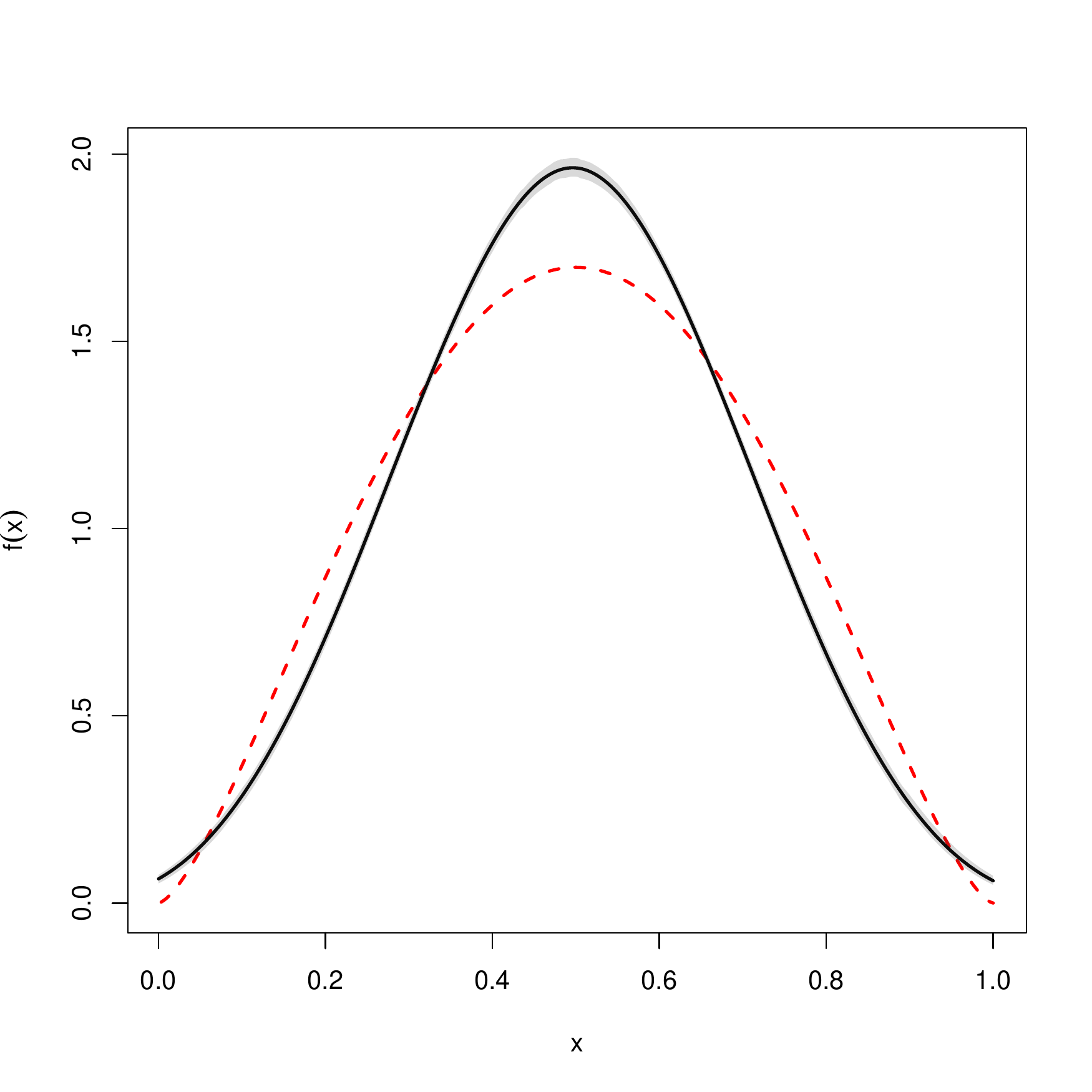}
\includegraphics[angle=0,width=0.24\linewidth]{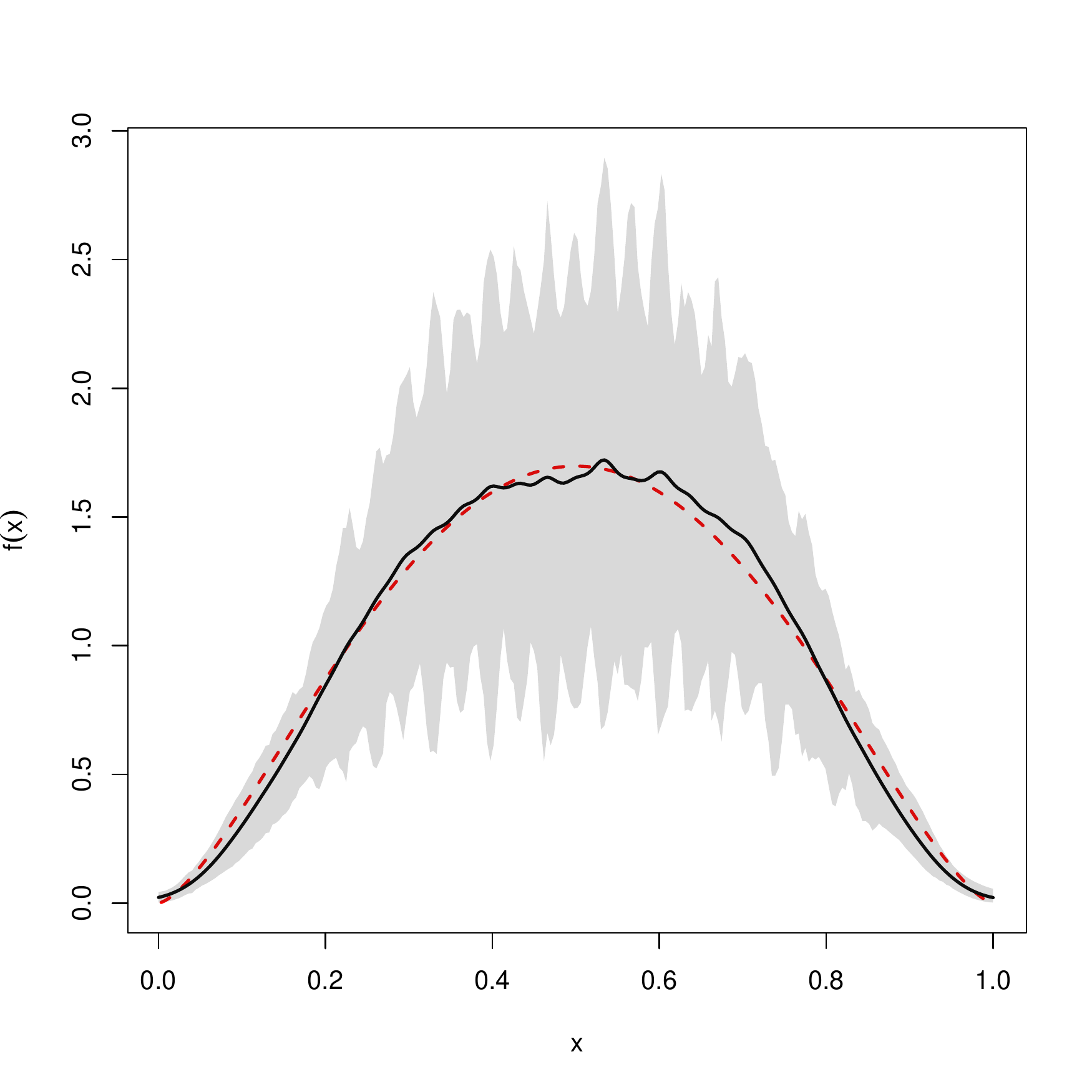}
\includegraphics[angle=0,width=0.24\linewidth]{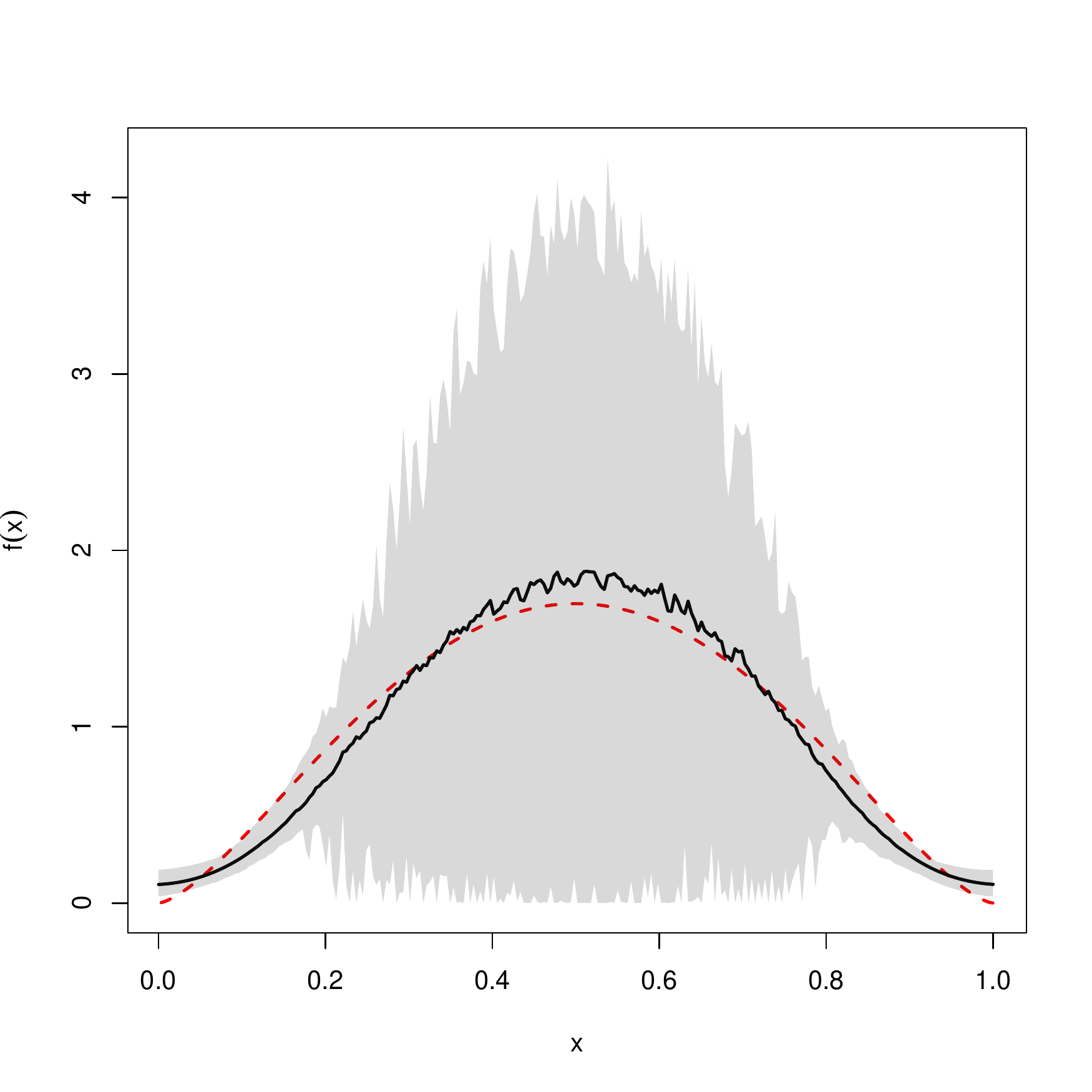}
\includegraphics[angle=0,width=0.24\linewidth]{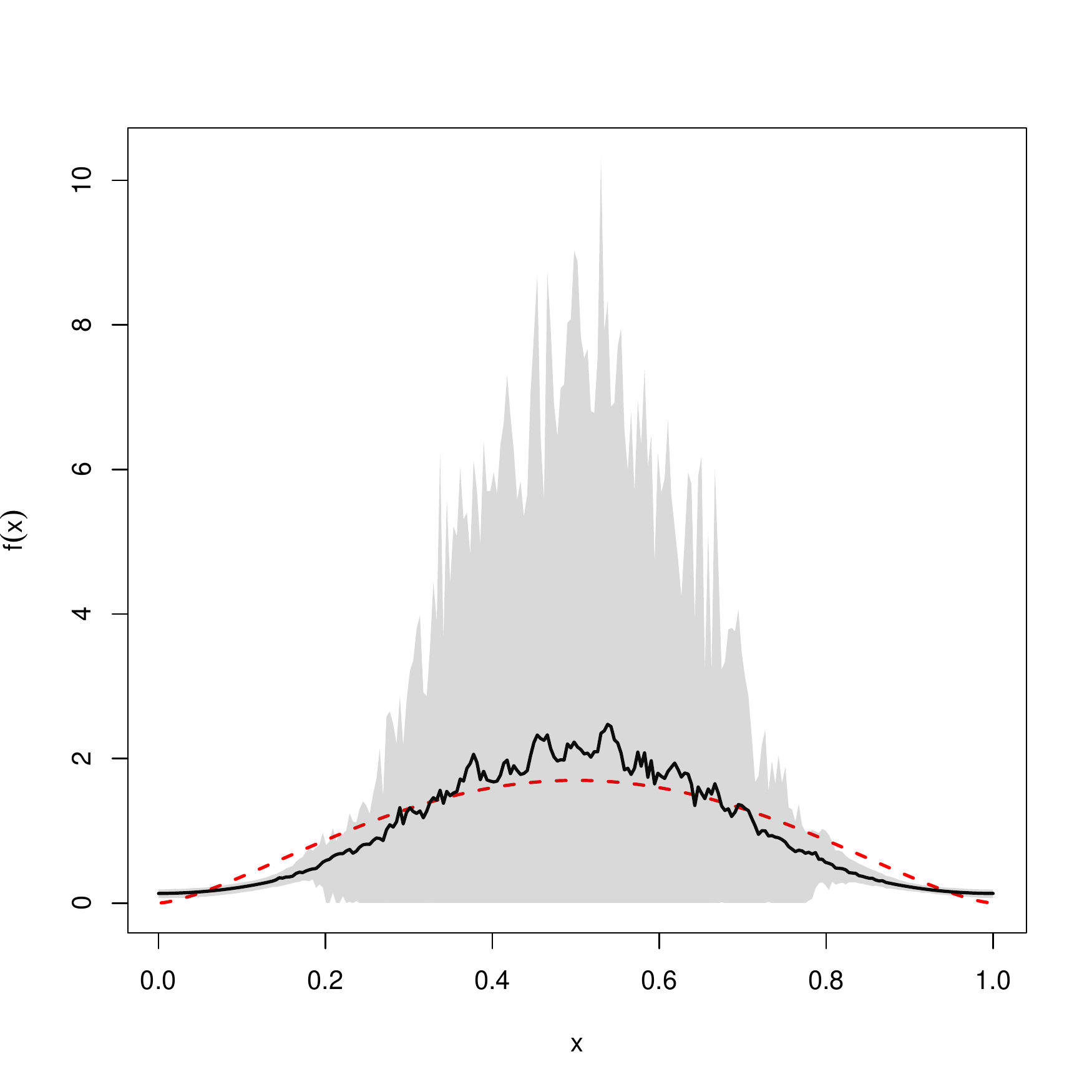} \\
\includegraphics[angle=0,width=0.24\linewidth]{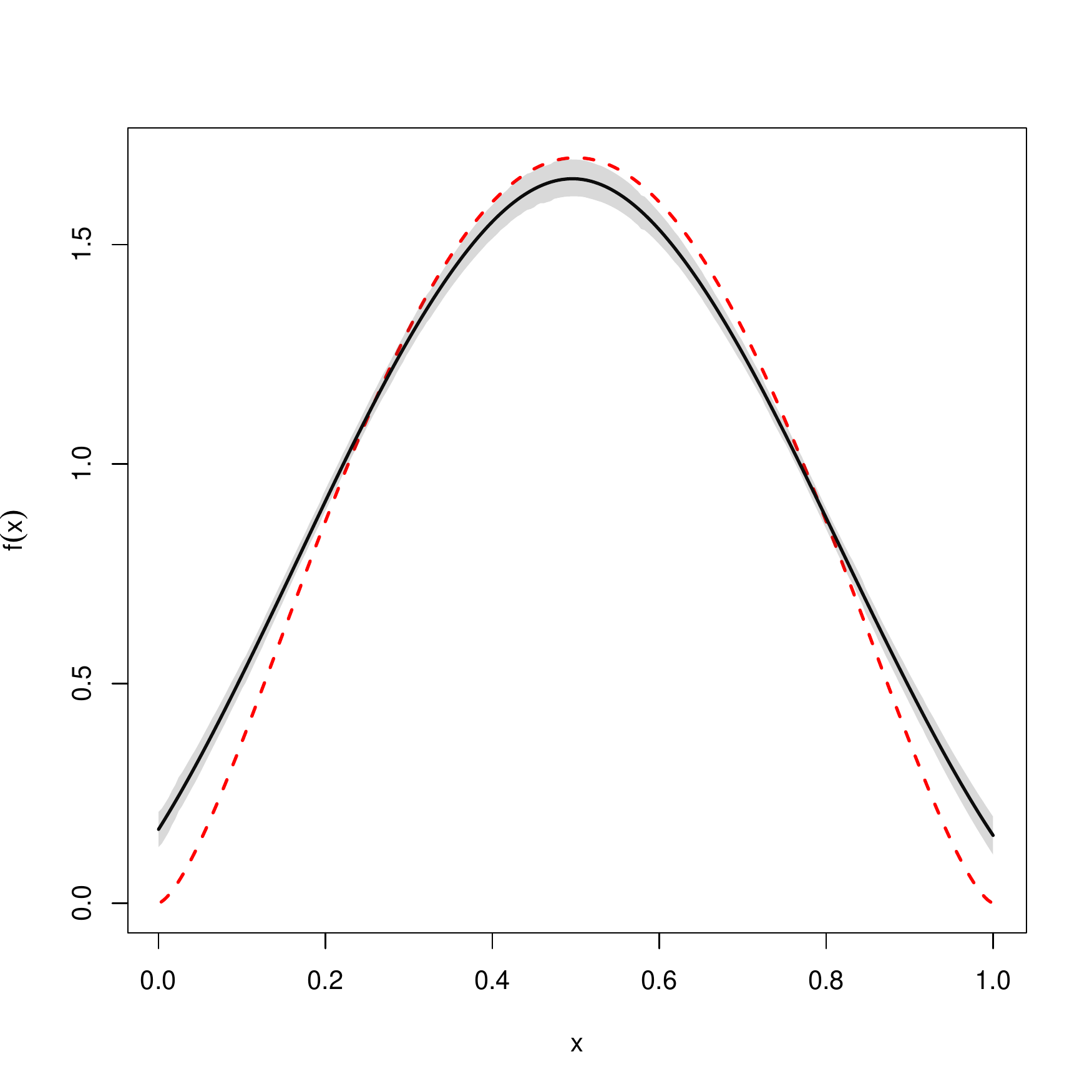}
\includegraphics[angle=0,width=0.24\linewidth]{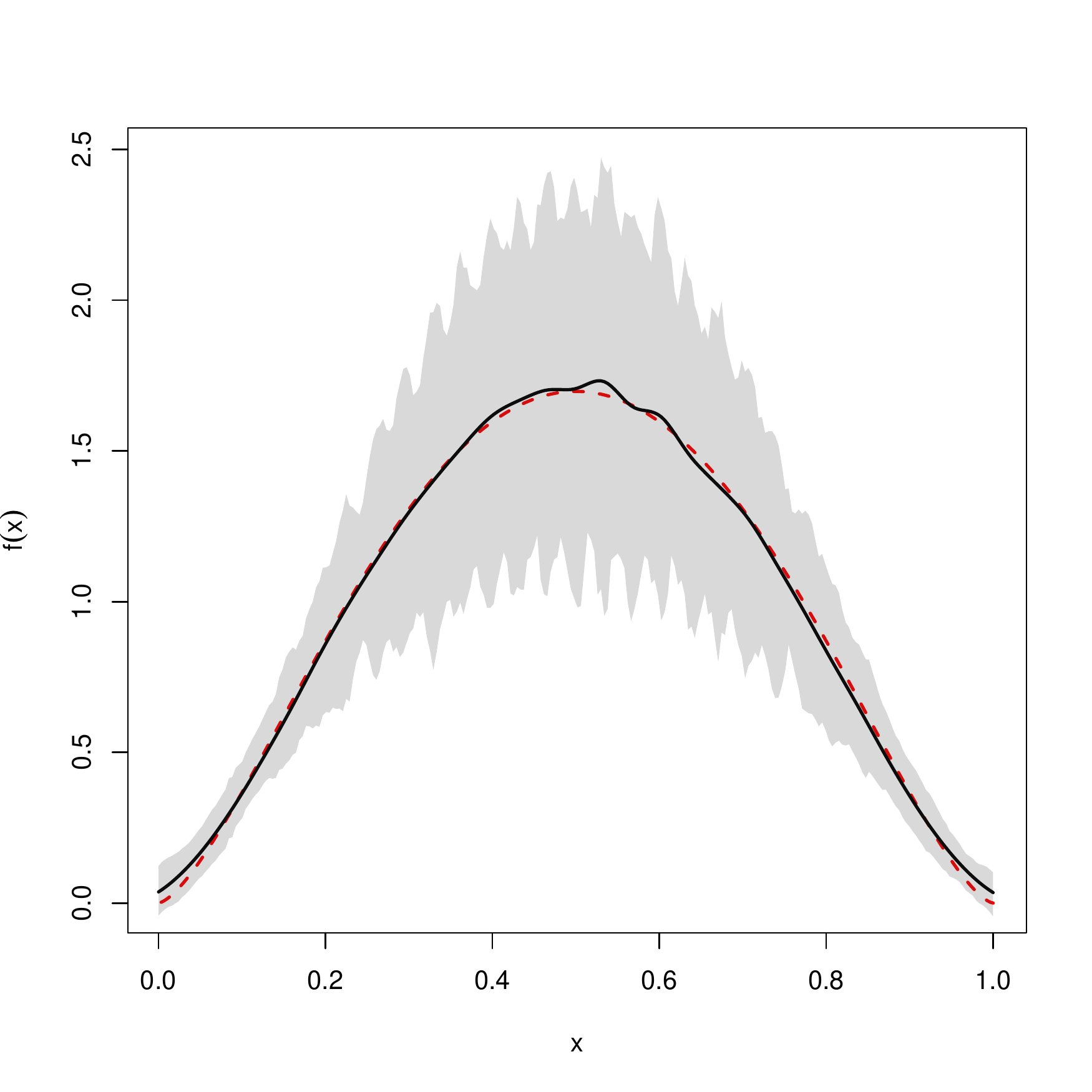}
\includegraphics[angle=0,width=0.24\linewidth]{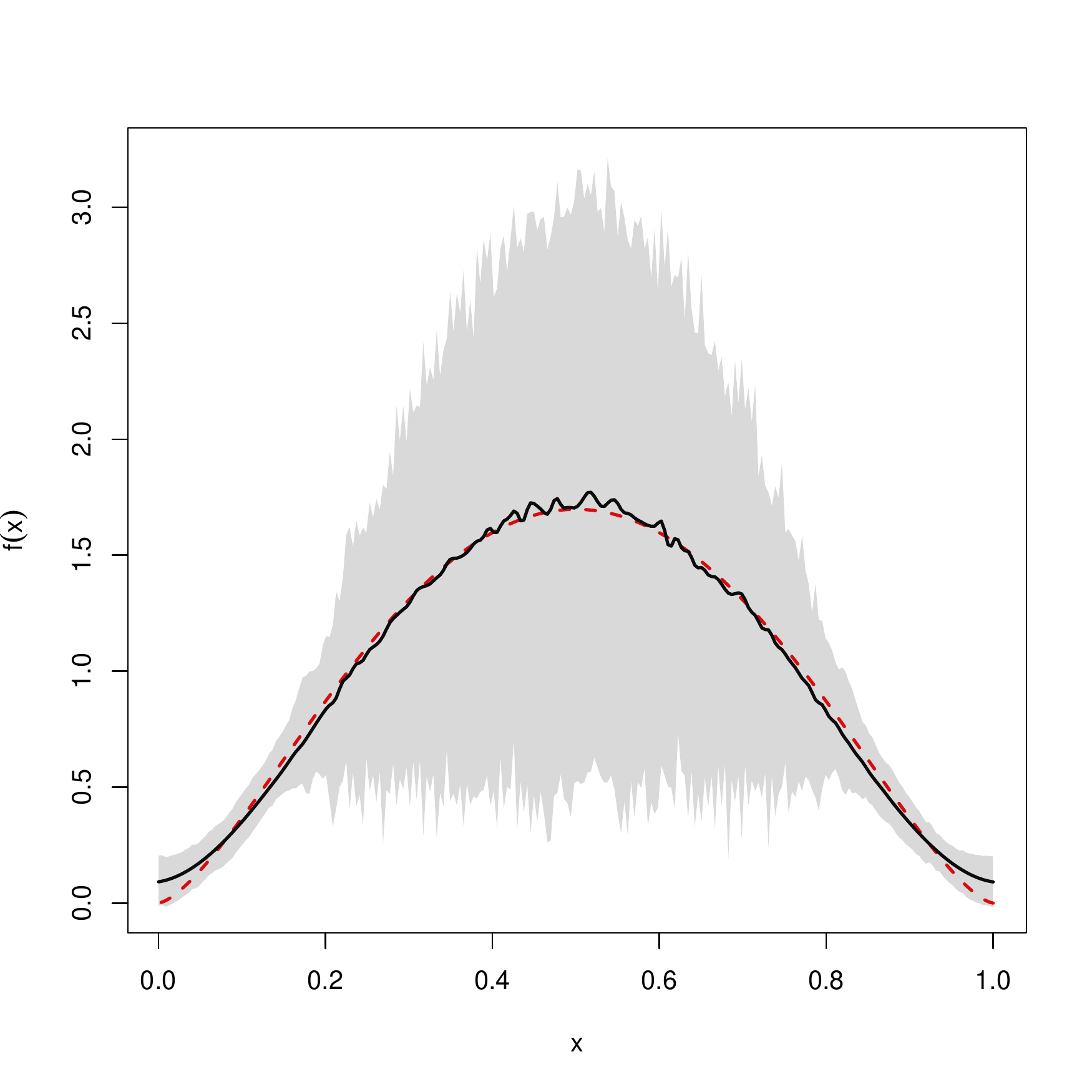}
\includegraphics[angle=0,width=0.24\linewidth]{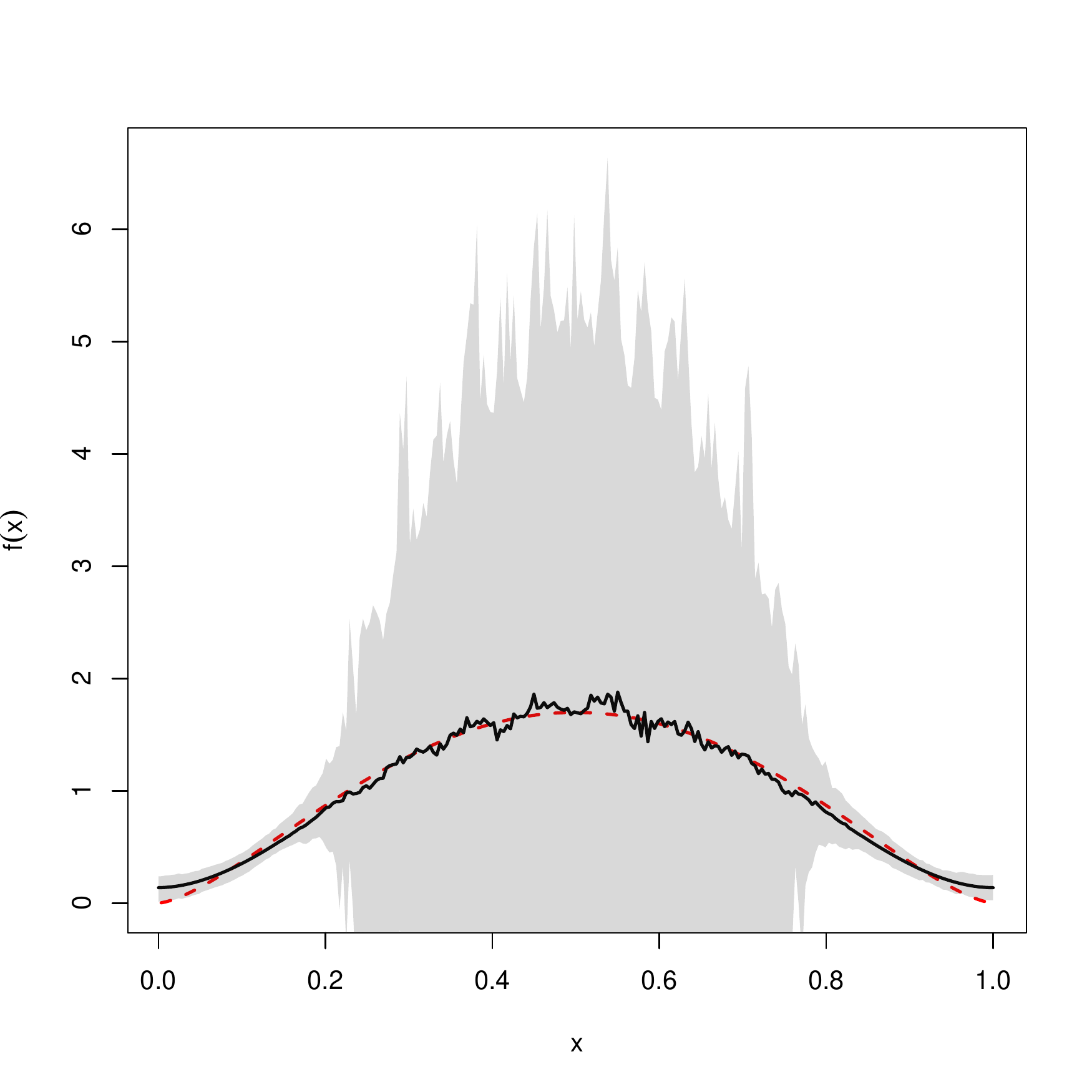} \\
\includegraphics[angle=0,width=0.24\linewidth]{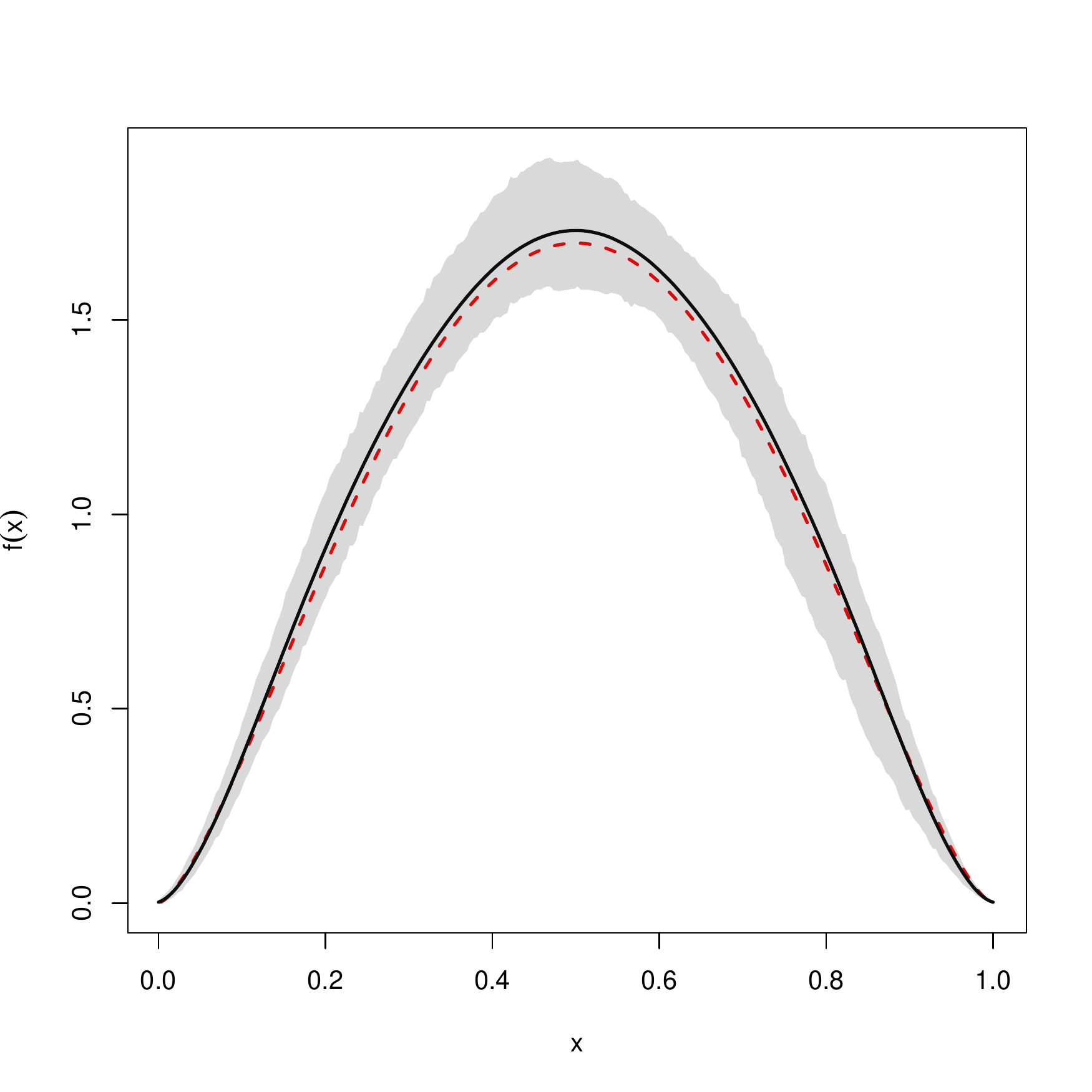}
\includegraphics[angle=0,width=0.24\linewidth]{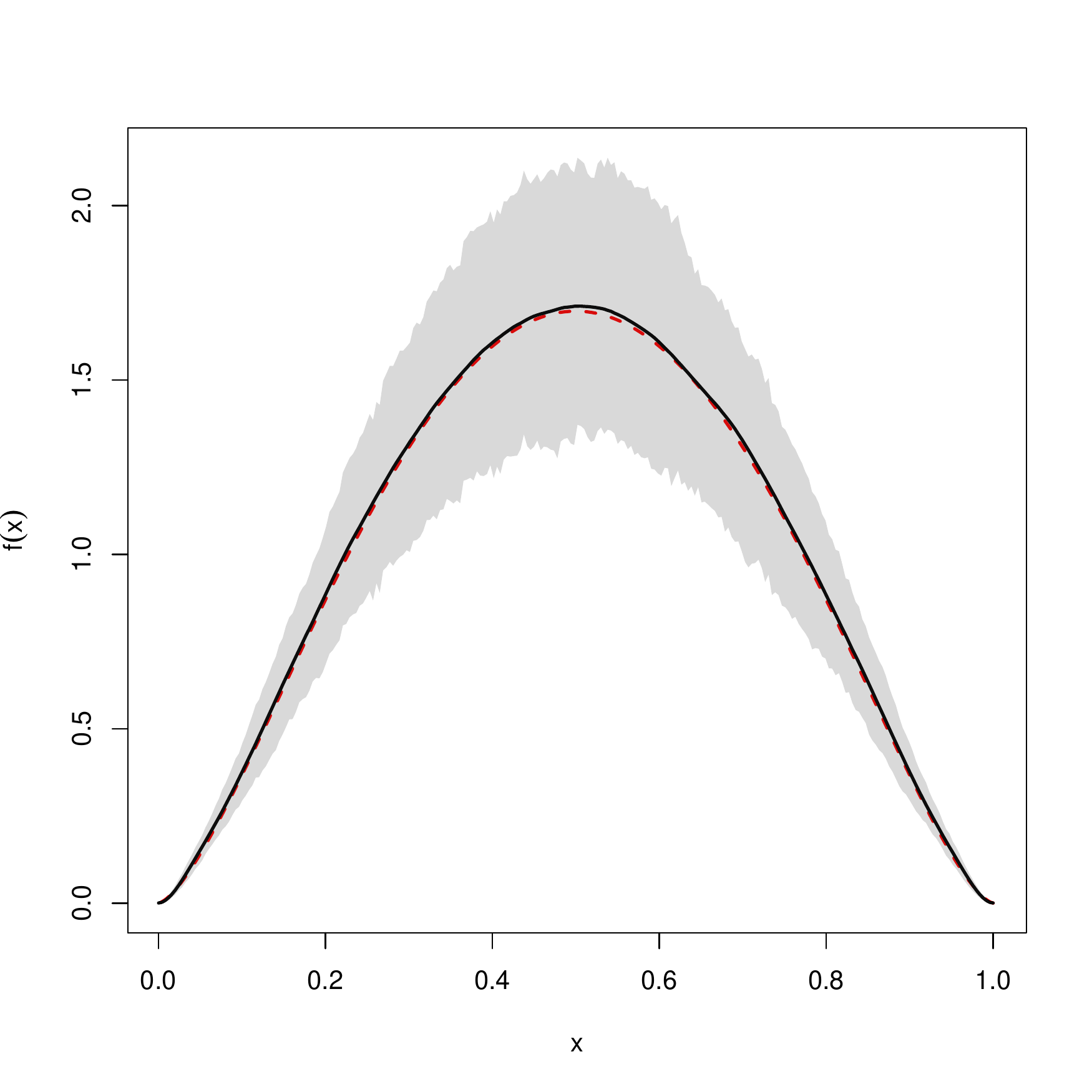}
\includegraphics[angle=0,width=0.24\linewidth]{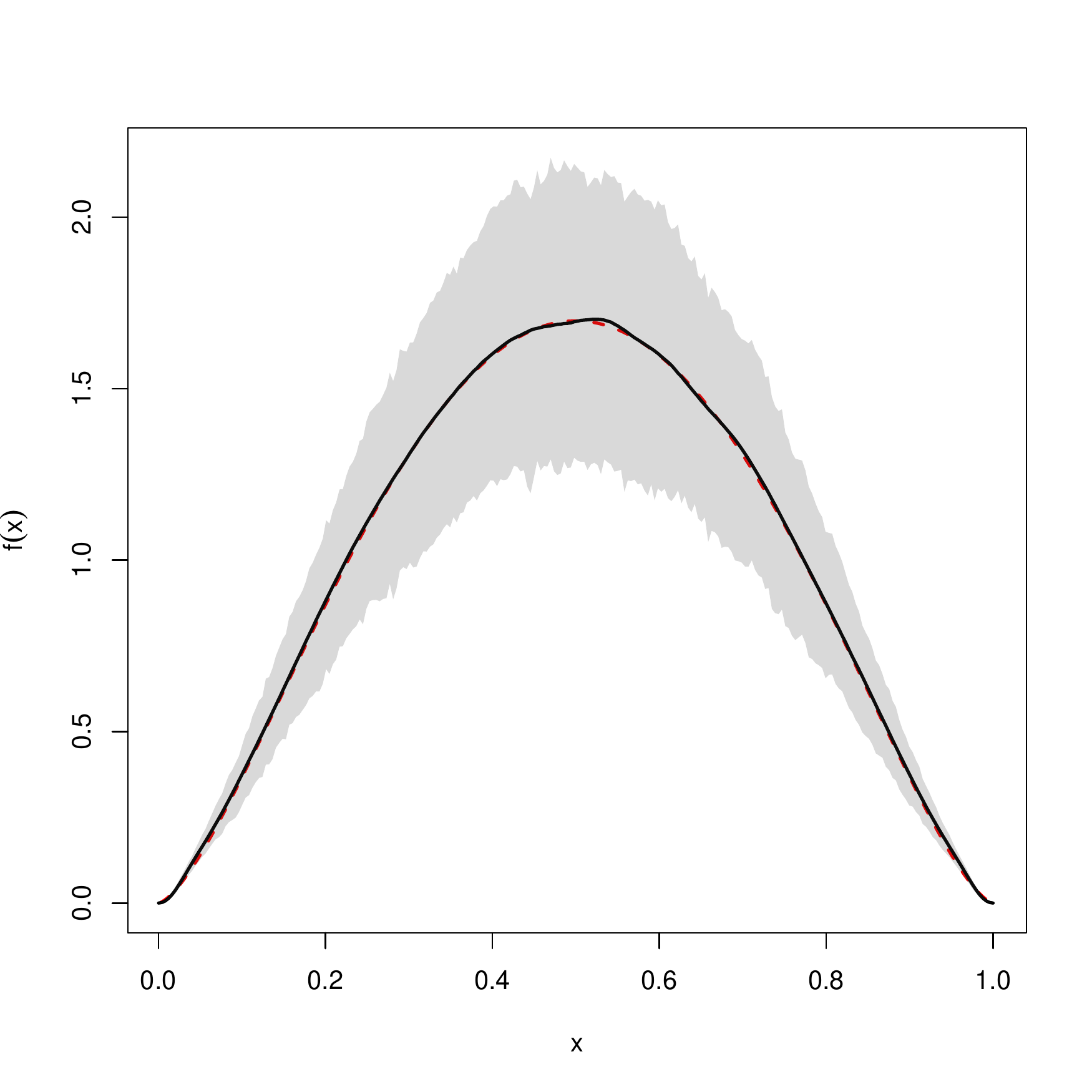}
\includegraphics[angle=0,width=0.24\linewidth]{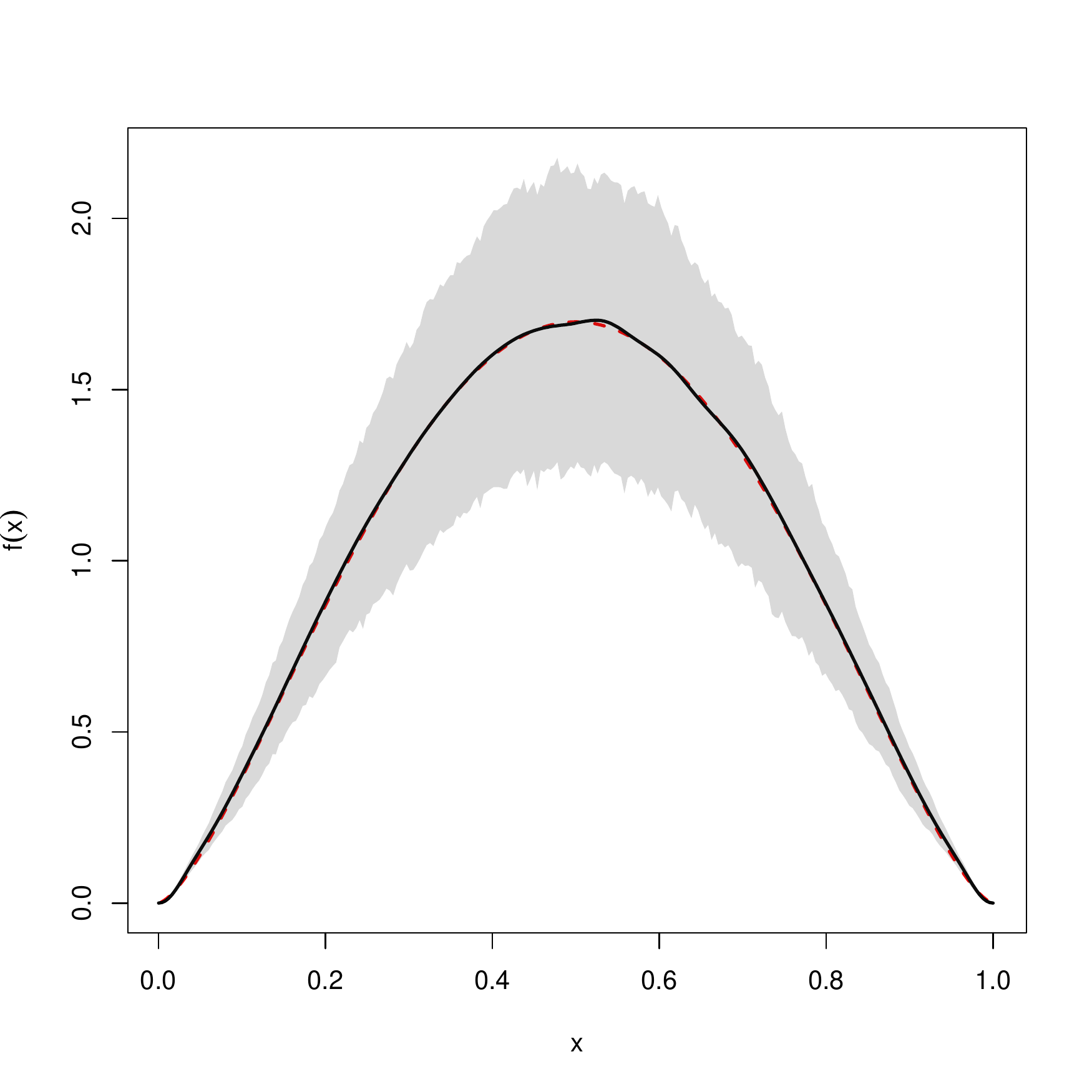} \\
\includegraphics[angle=0,width=0.24\linewidth]{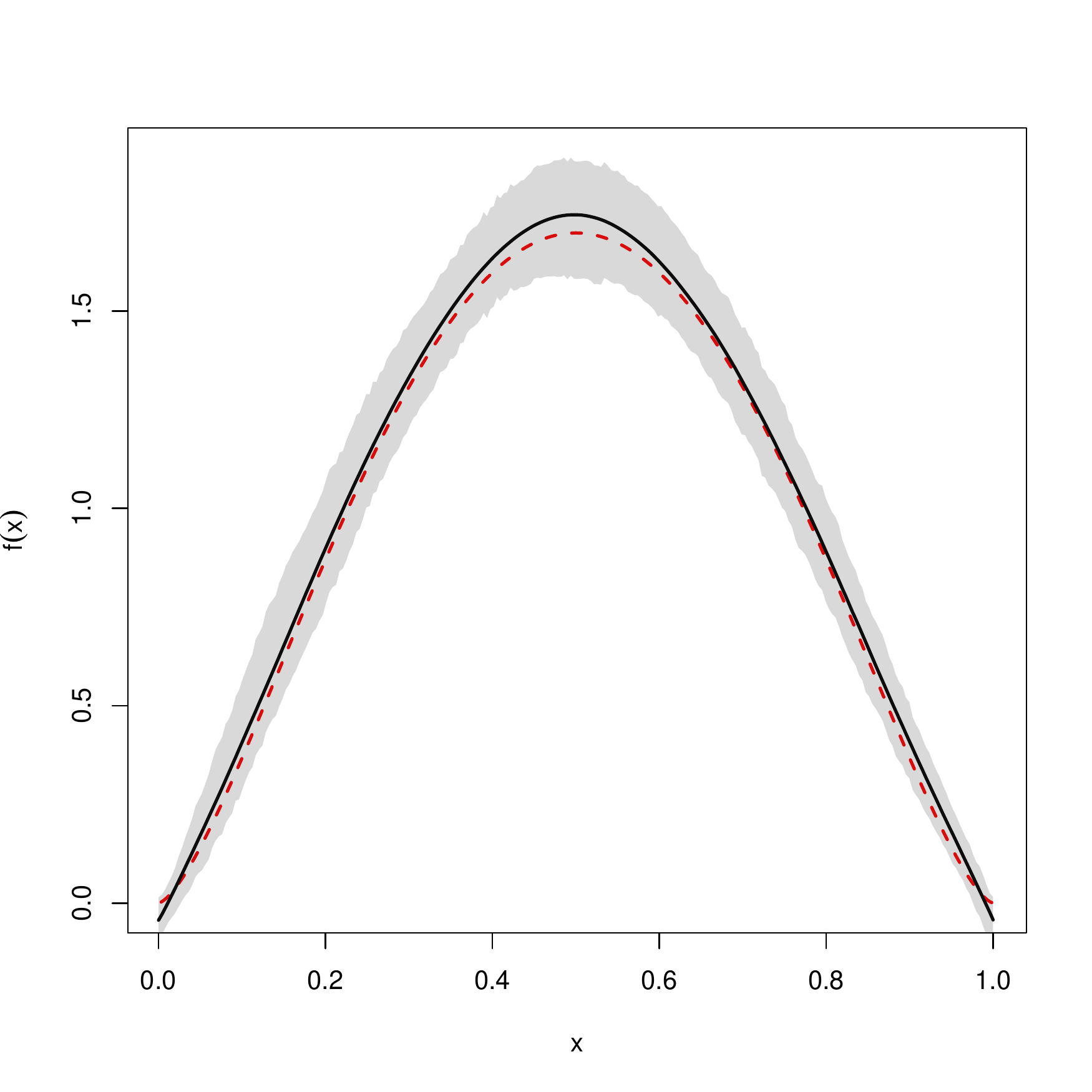}
\includegraphics[angle=0,width=0.24\linewidth]{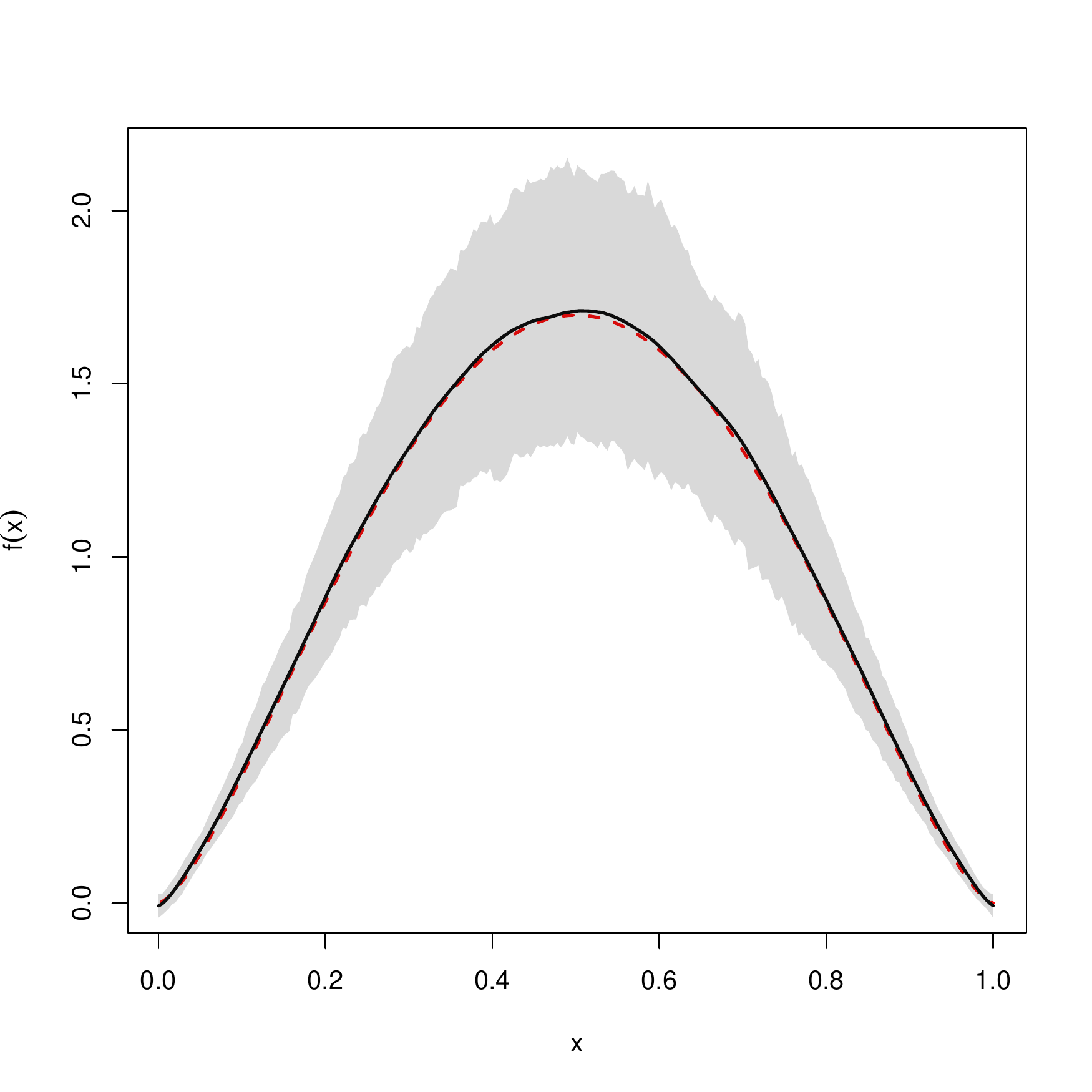}
\includegraphics[angle=0,width=0.24\linewidth]{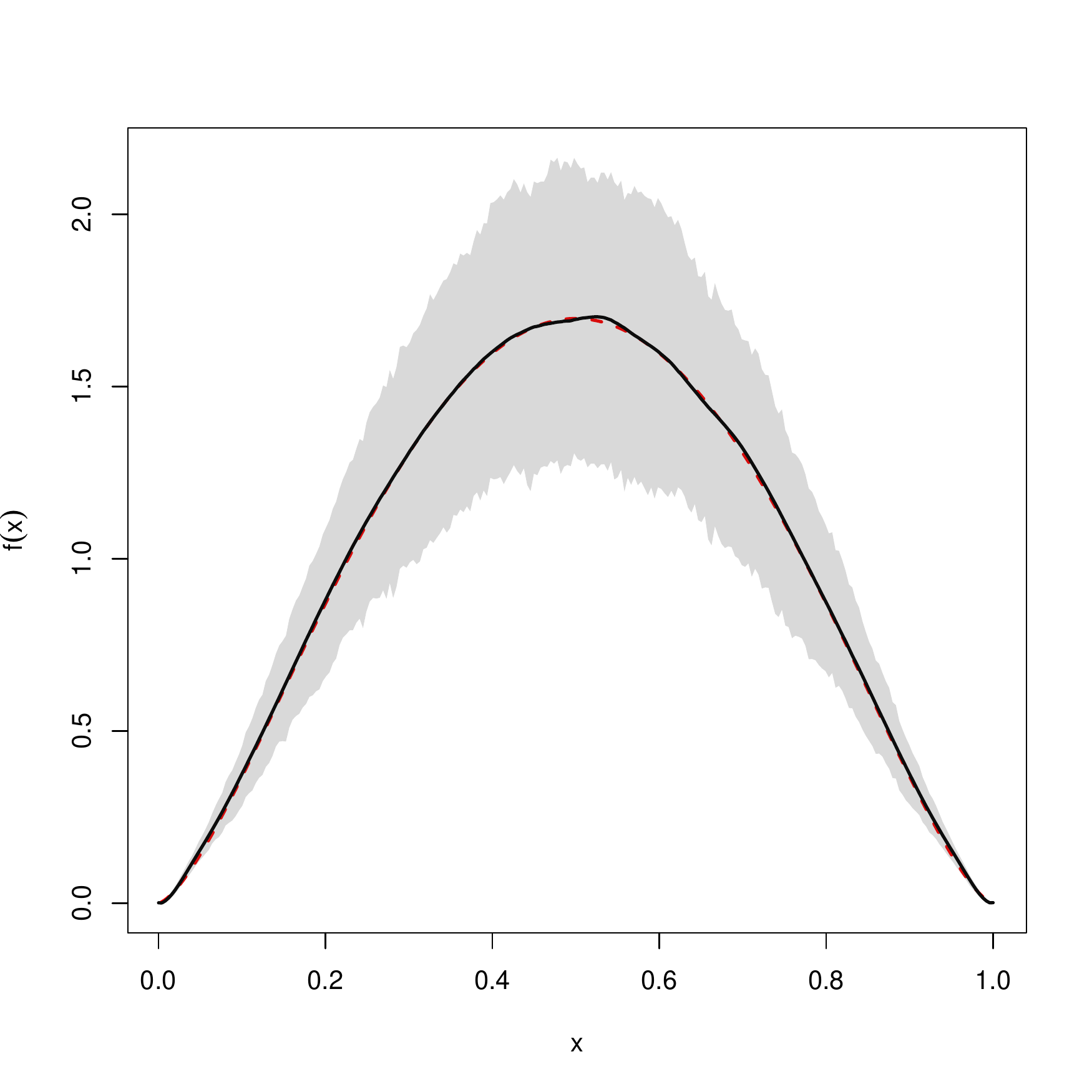}
\includegraphics[angle=0,width=0.24\linewidth]{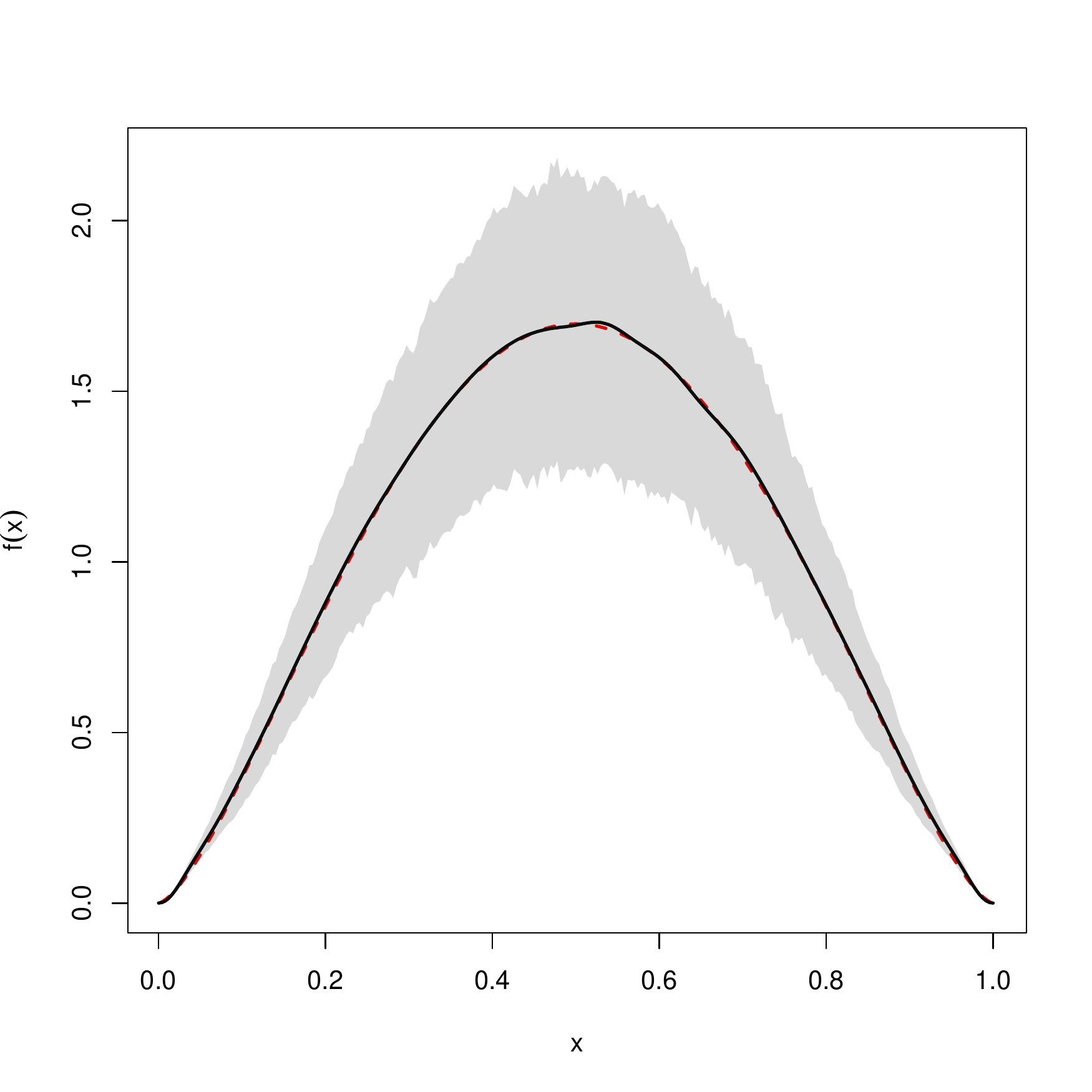} \\
\caption{Pointwise average estimates (full lines), with 95\% confidence intervals (shaded regions), for the density in $ \ex_1 $ (dashed lines), with $ n = 1,000 $. The columns 1--4 are related to the cases where we consider the finest resolution level $ J_1 = \ceil{p\log_2 n} $, $ p = 0.20, 0.45, 0.70, 0.95 $, respectively. The $ k $-th row is related to the estimation method $ m_k $, $ k = 1, 2, 3, 4 $.}
\label{fig:estimates-ex1}
\end{figure}

\begin{figure}[!ht]
\centering
\includegraphics[angle=0,width=0.24\linewidth]{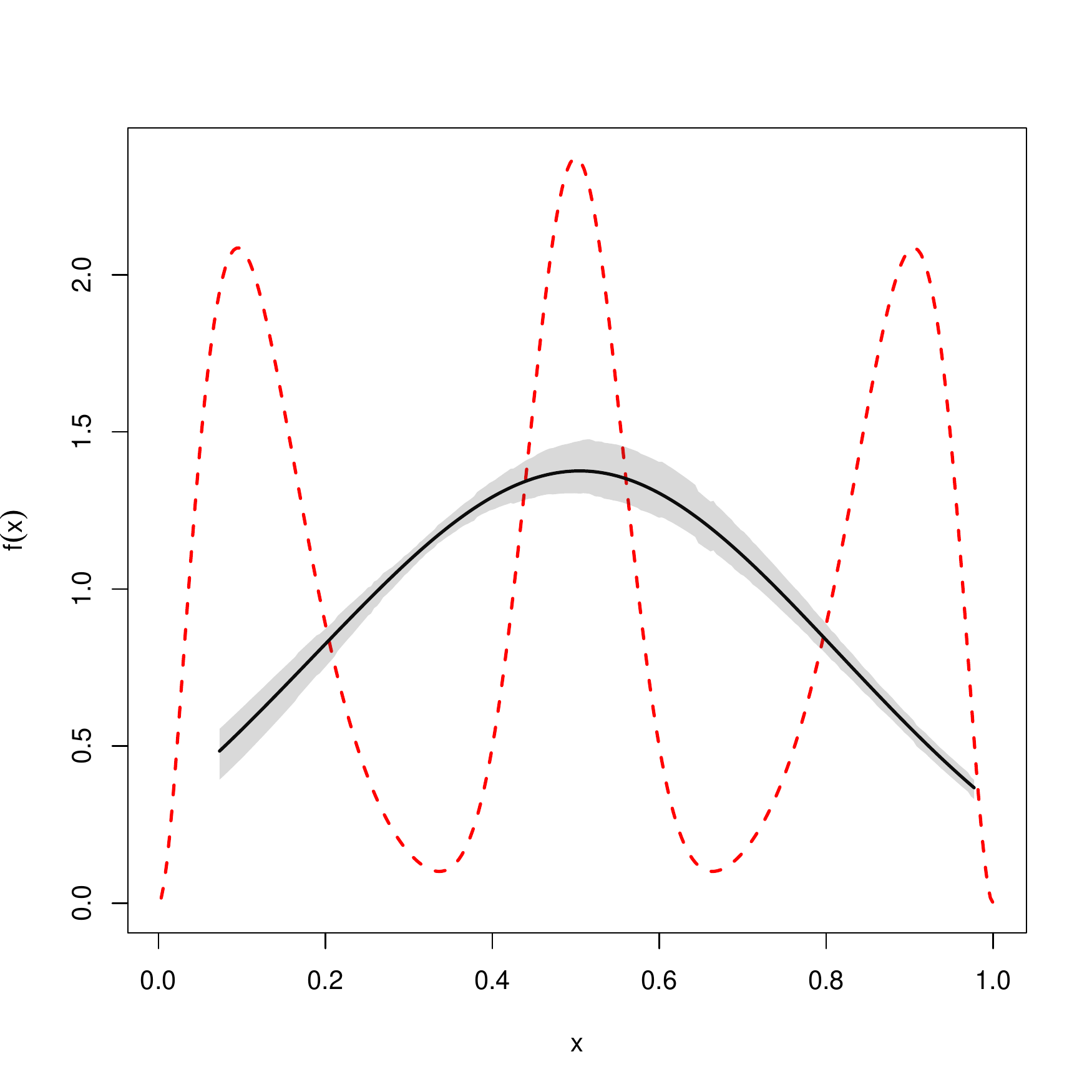}
\includegraphics[angle=0,width=0.24\linewidth]{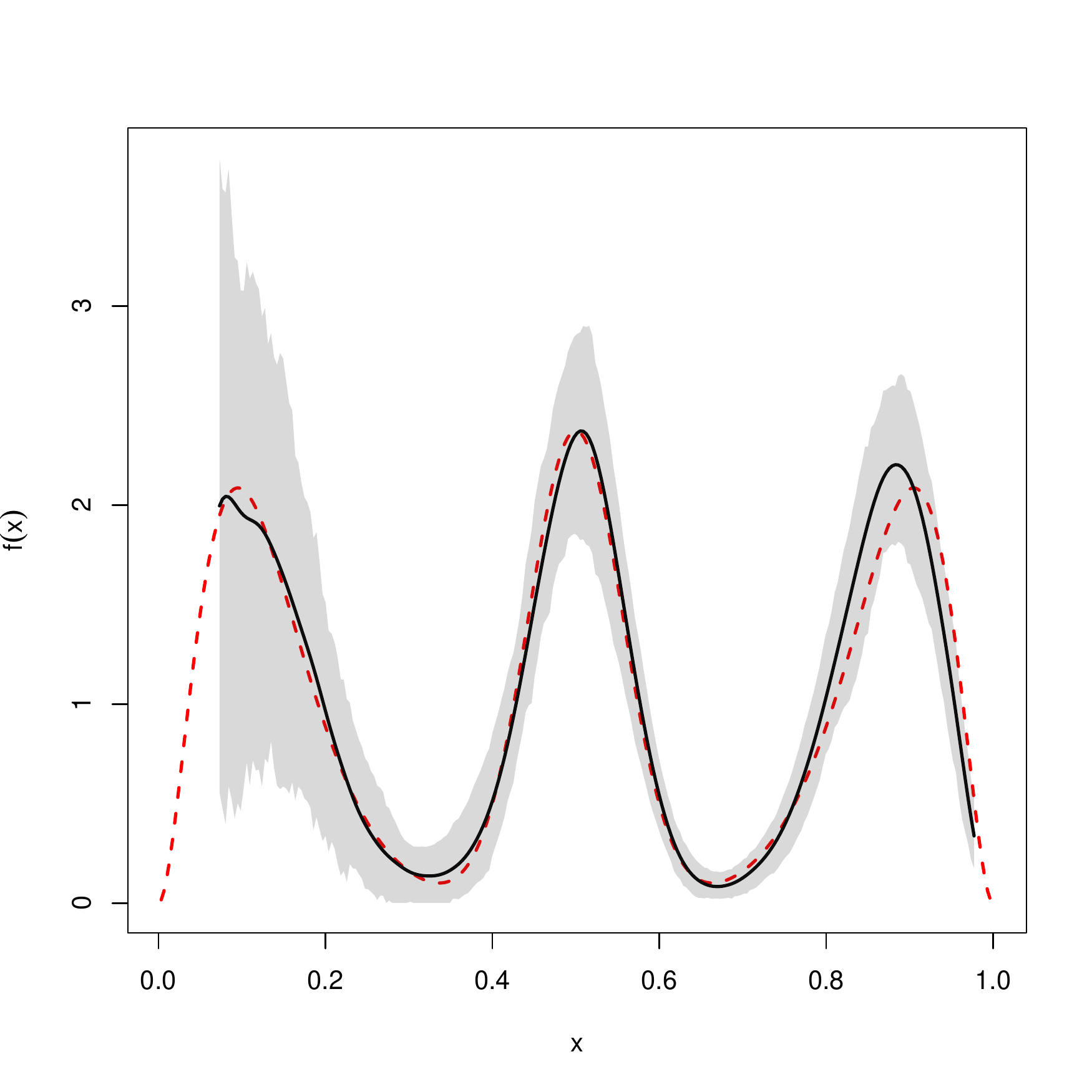}
\includegraphics[angle=0,width=0.24\linewidth]{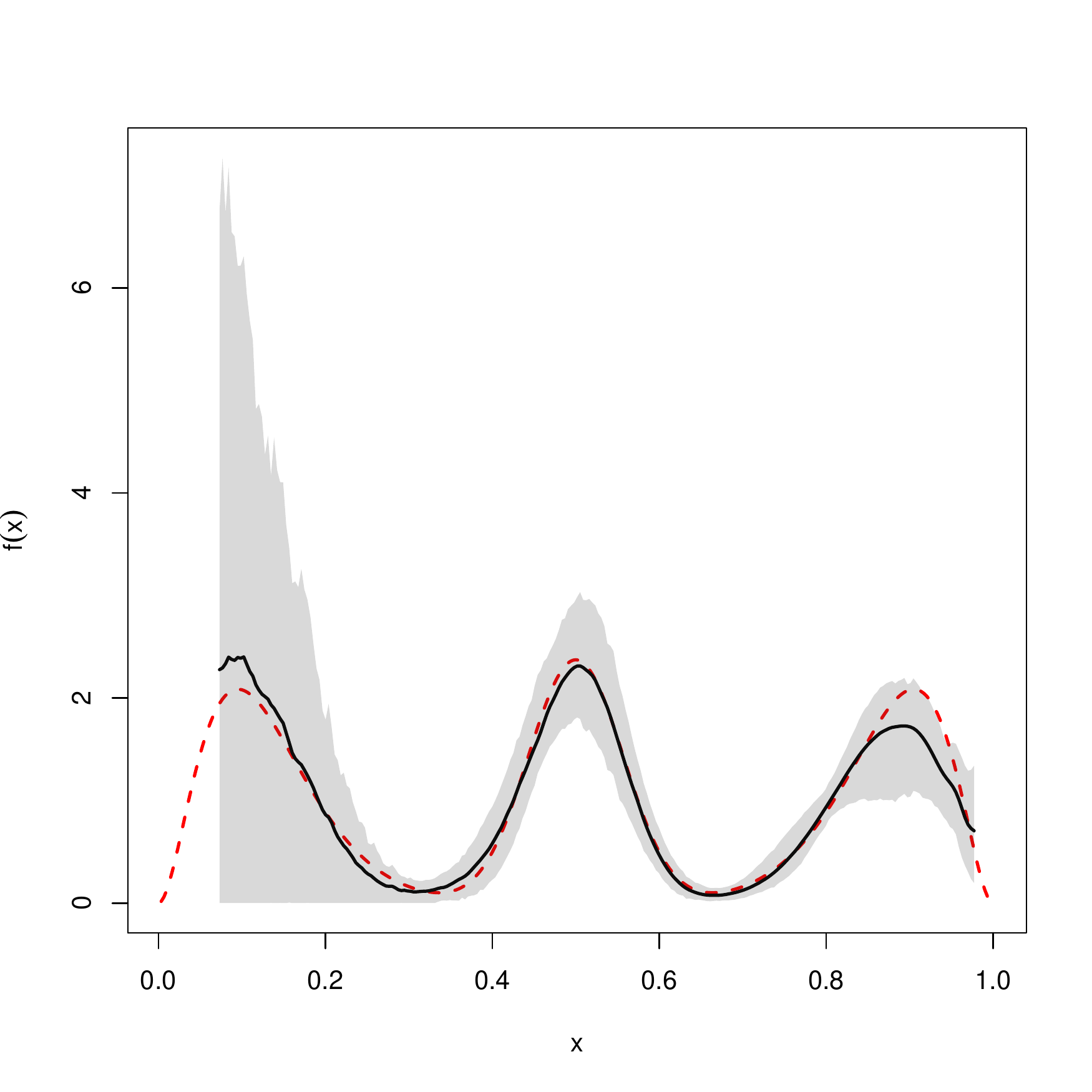}
\includegraphics[angle=0,width=0.24\linewidth]{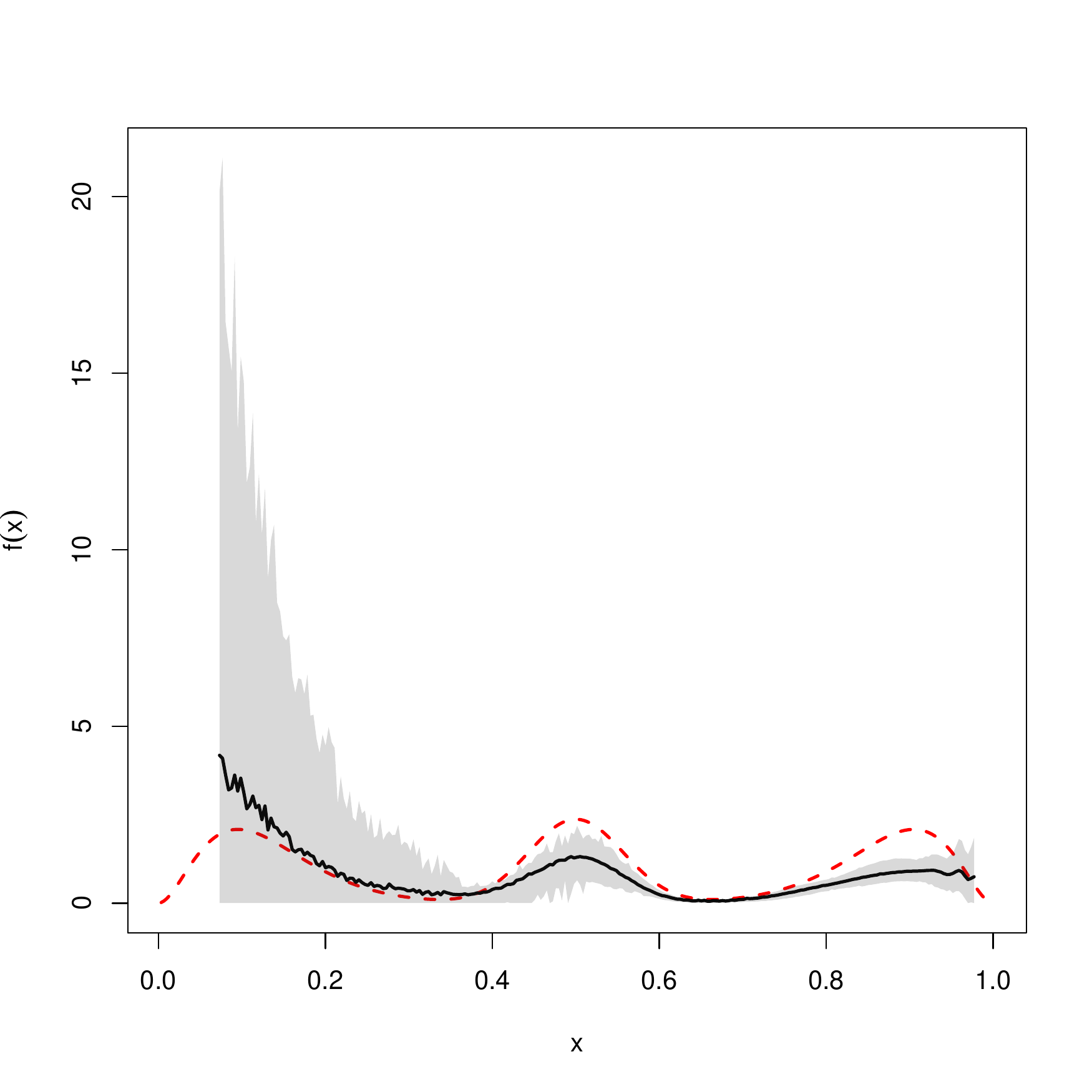} \\
\includegraphics[angle=0,width=0.24\linewidth]{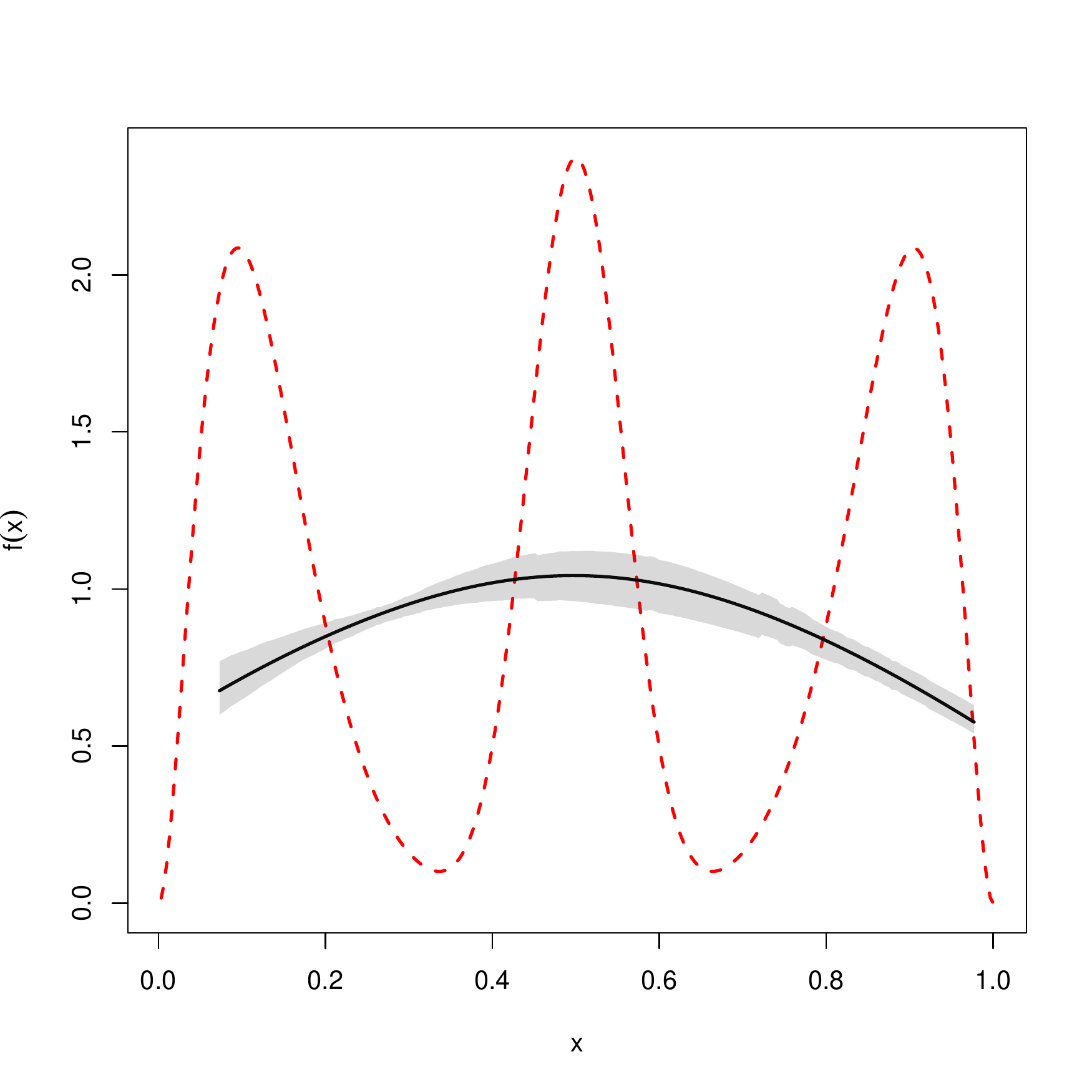}
\includegraphics[angle=0,width=0.24\linewidth]{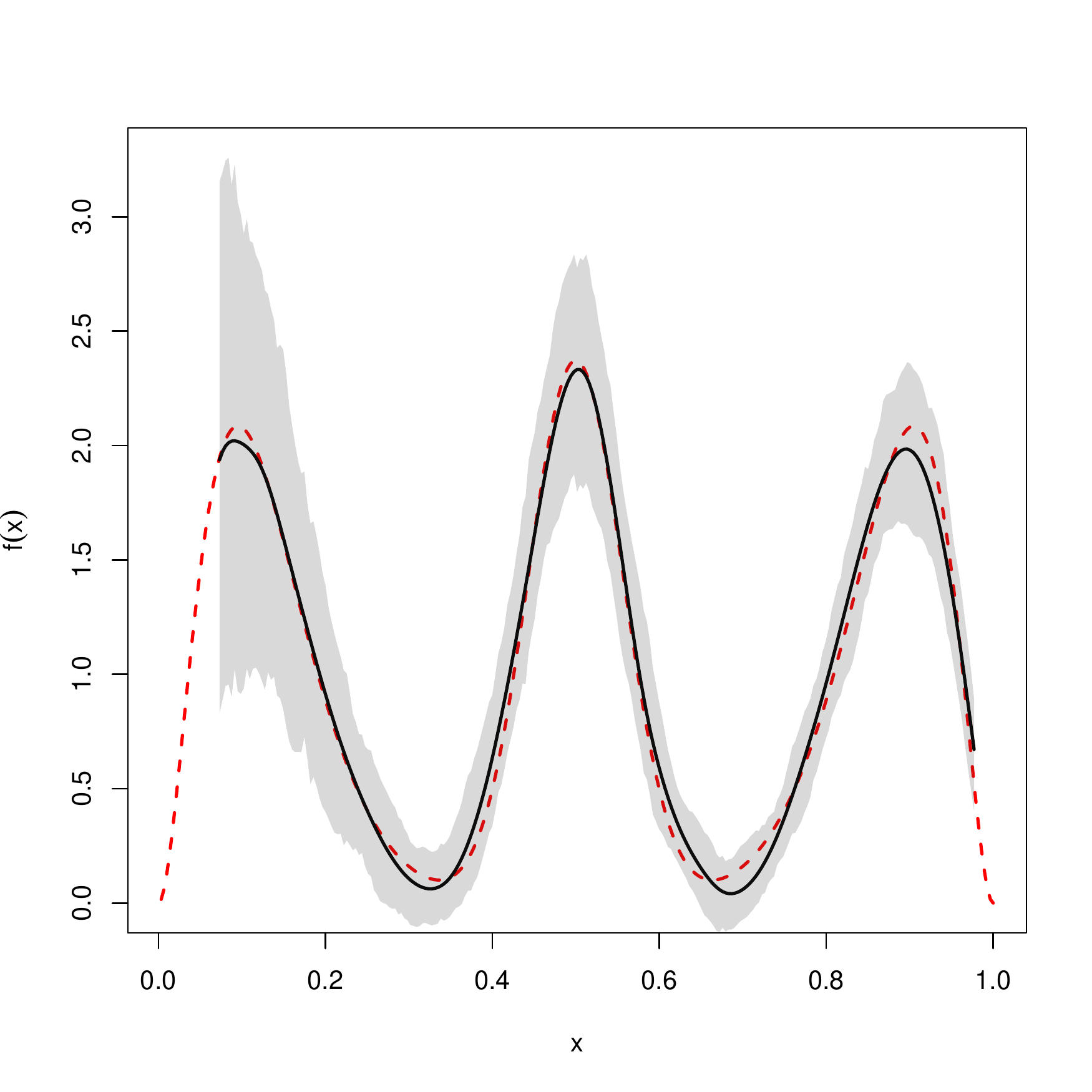}
\includegraphics[angle=0,width=0.24\linewidth]{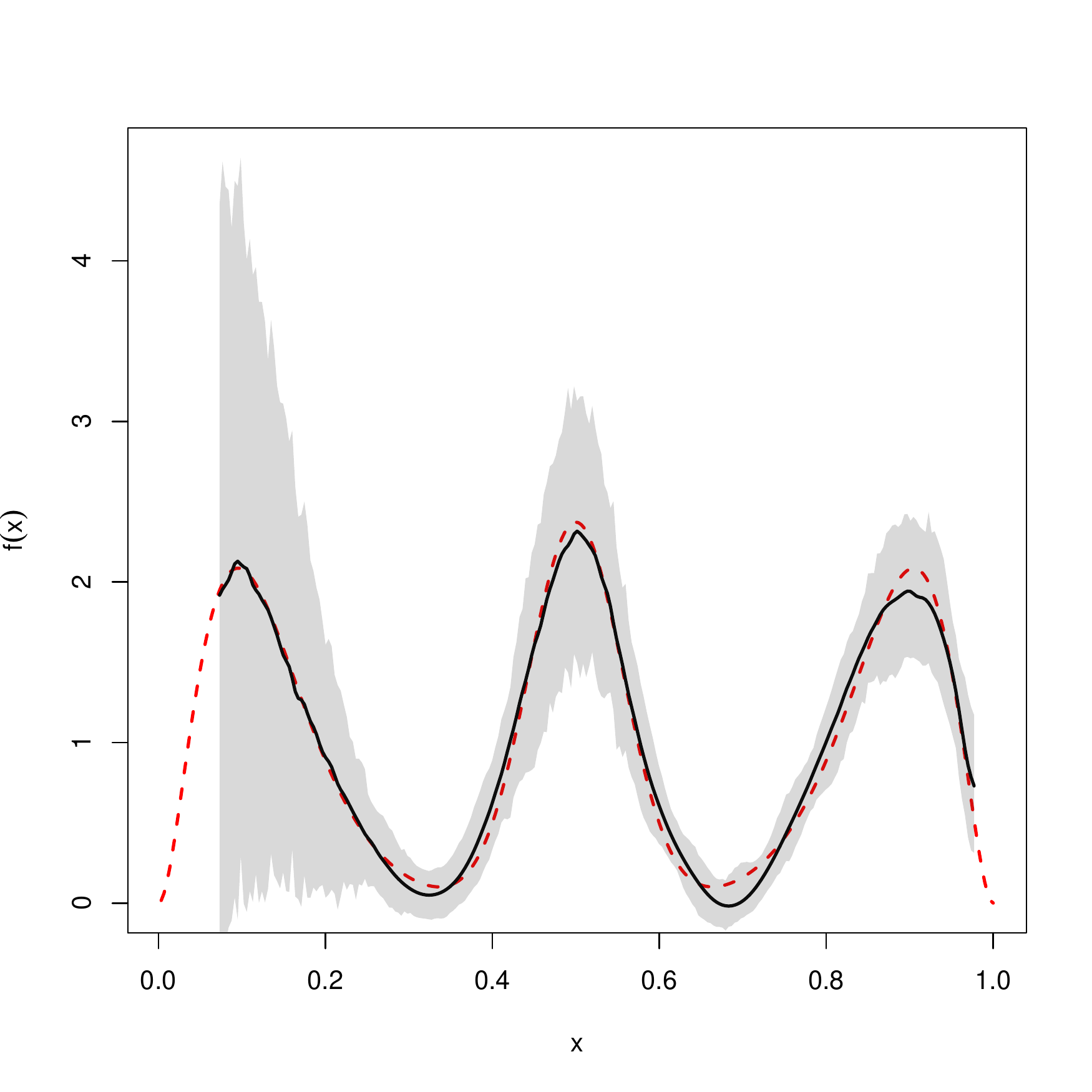}
\includegraphics[angle=0,width=0.24\linewidth]{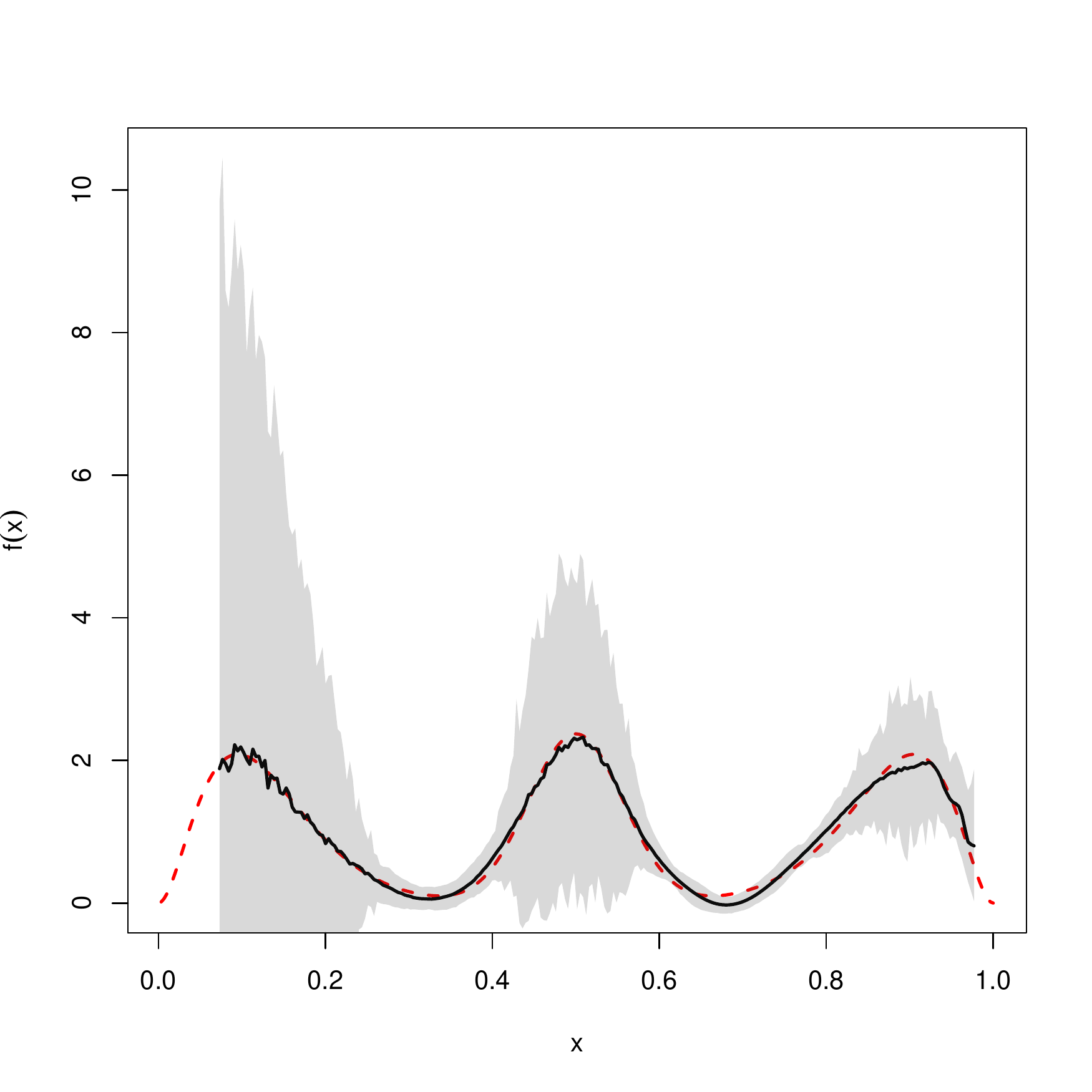} \\
\includegraphics[angle=0,width=0.24\linewidth]{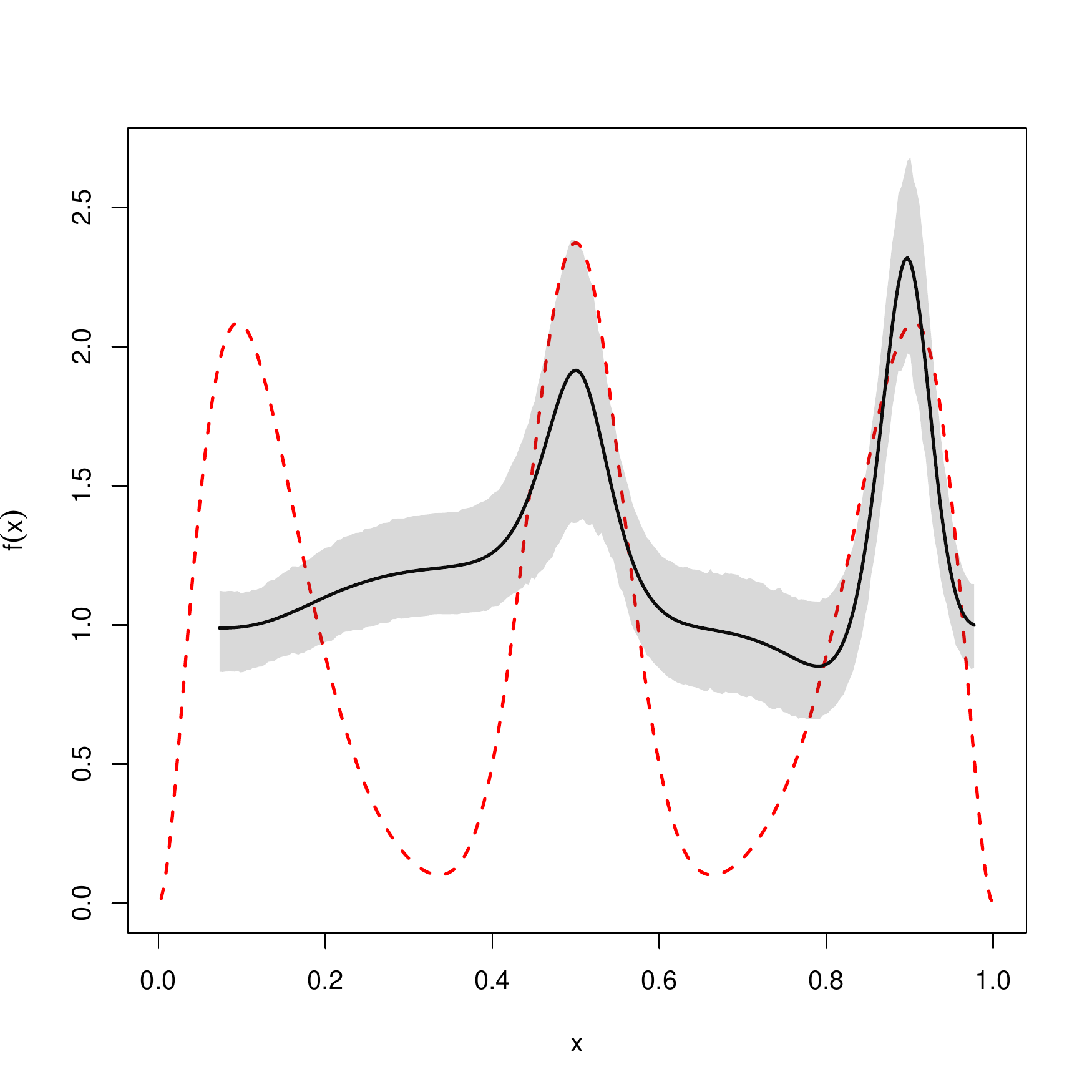}
\includegraphics[angle=0,width=0.24\linewidth]{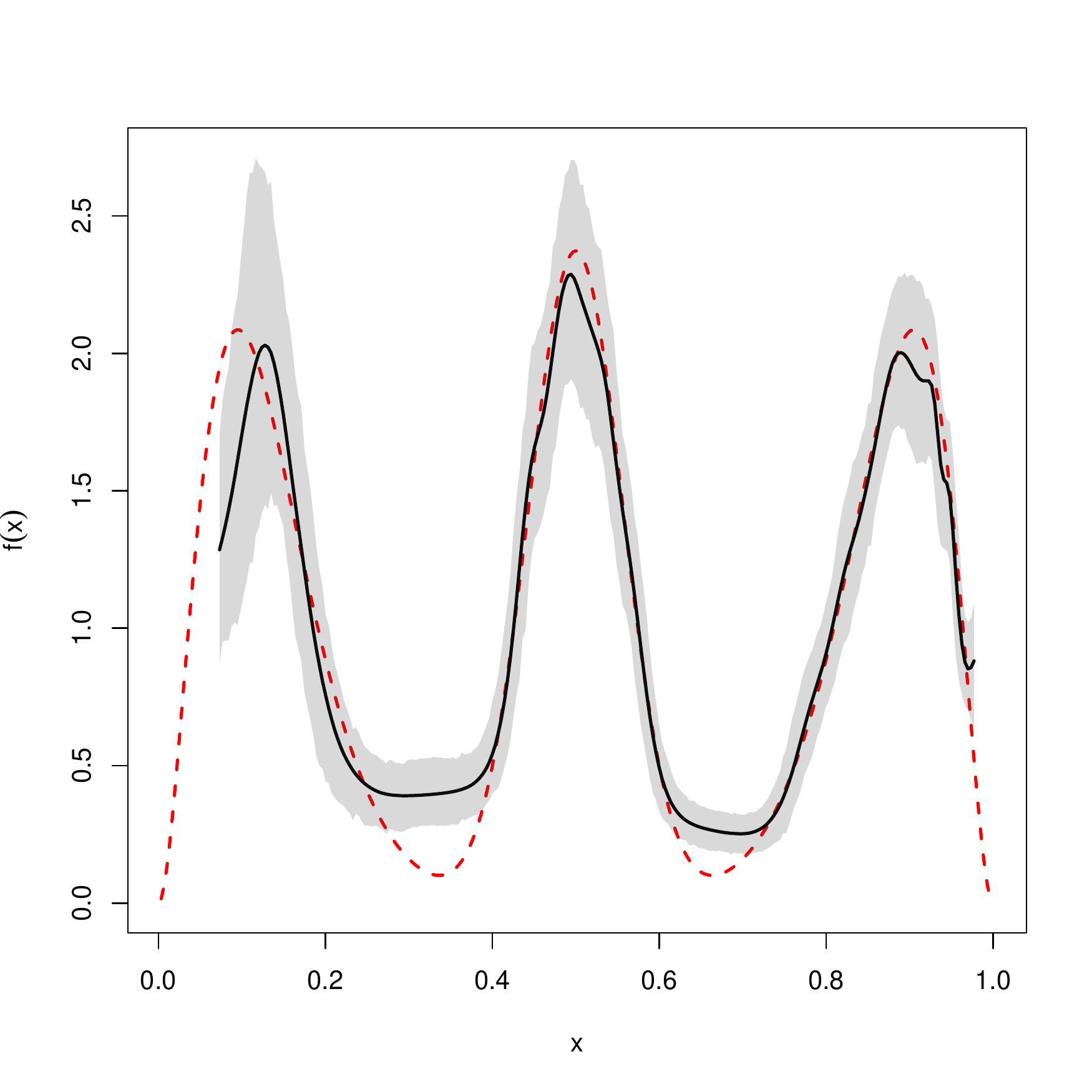}
\includegraphics[angle=0,width=0.24\linewidth]{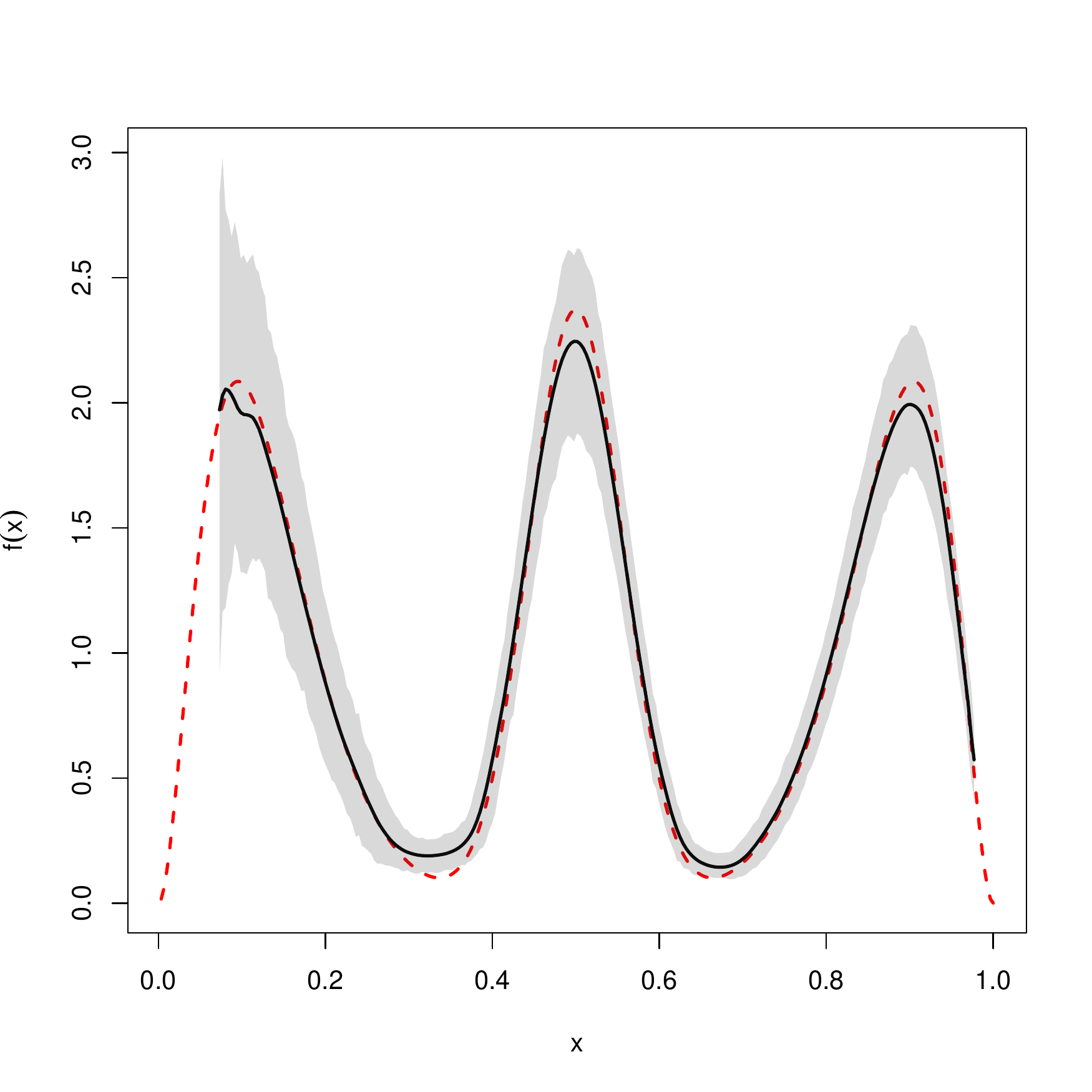}
\includegraphics[angle=0,width=0.24\linewidth]{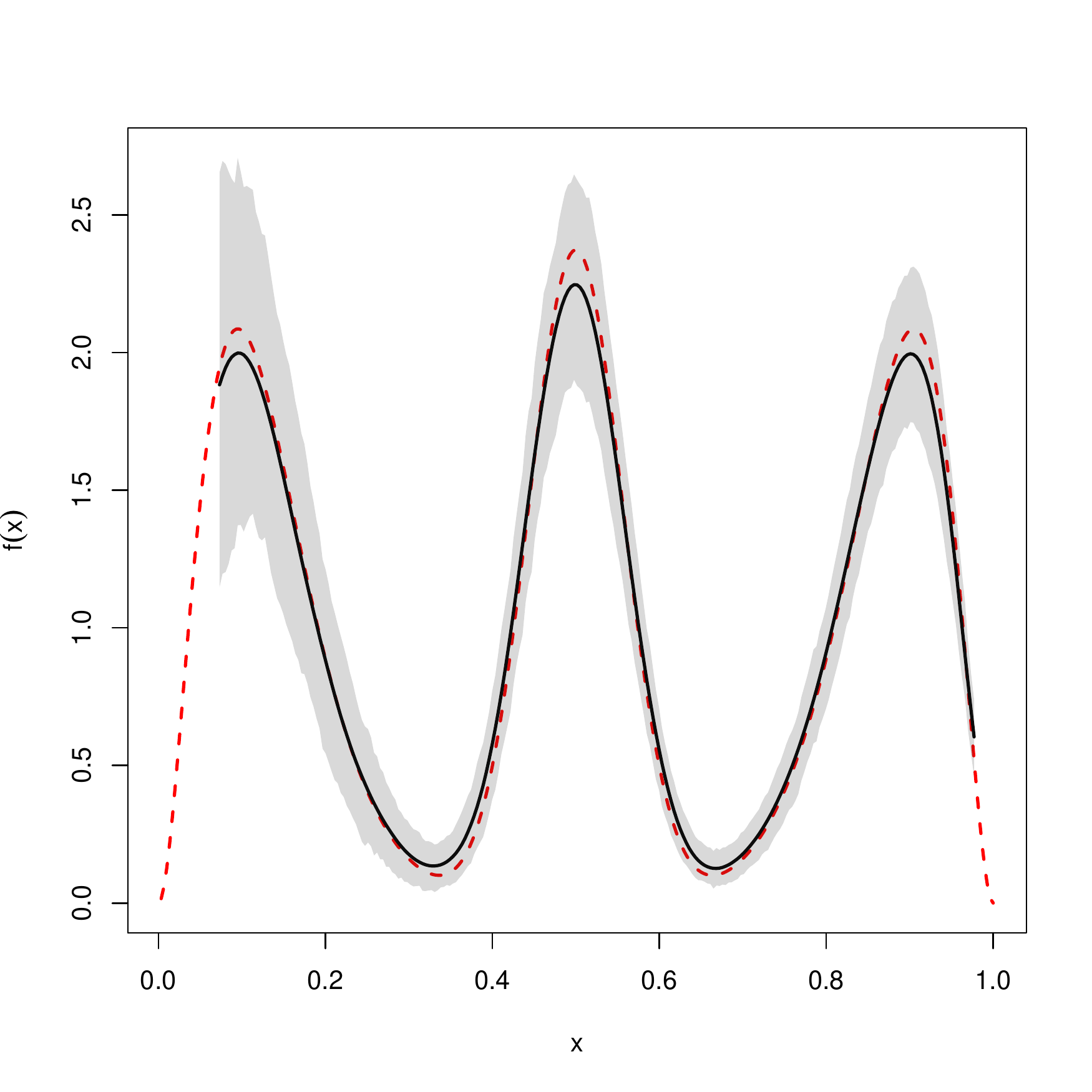} \\
\includegraphics[angle=0,width=0.24\linewidth]{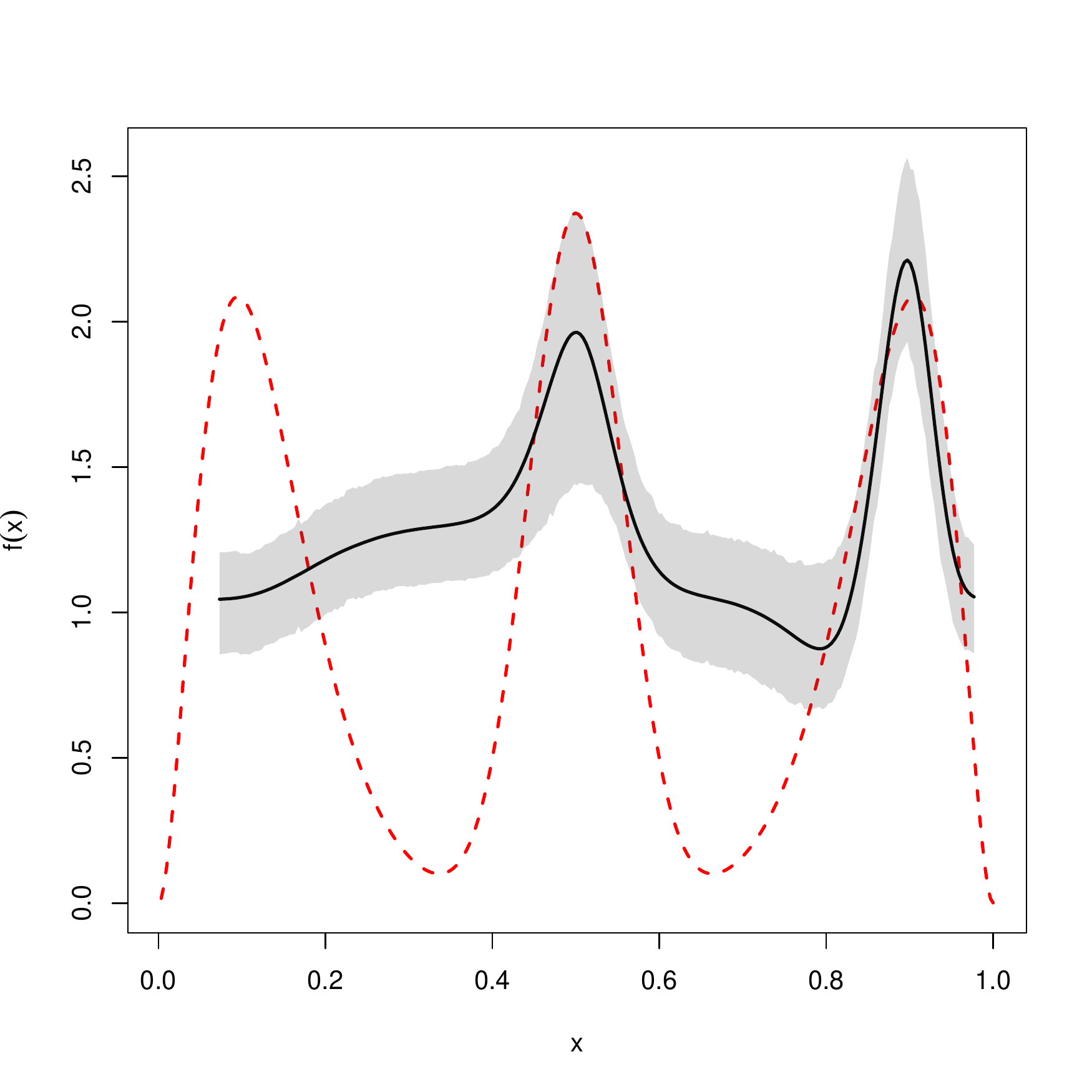}
\includegraphics[angle=0,width=0.24\linewidth]{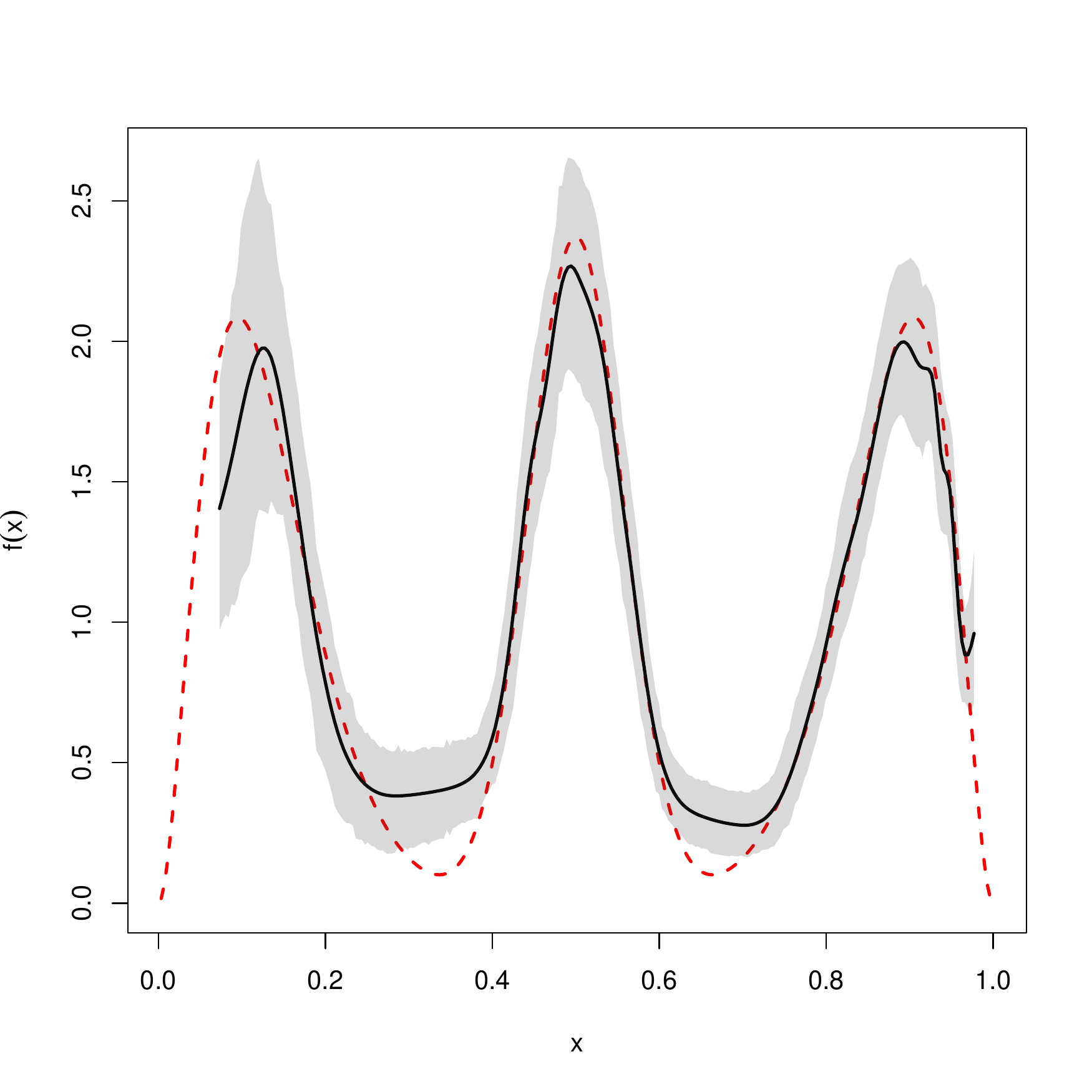}
\includegraphics[angle=0,width=0.24\linewidth]{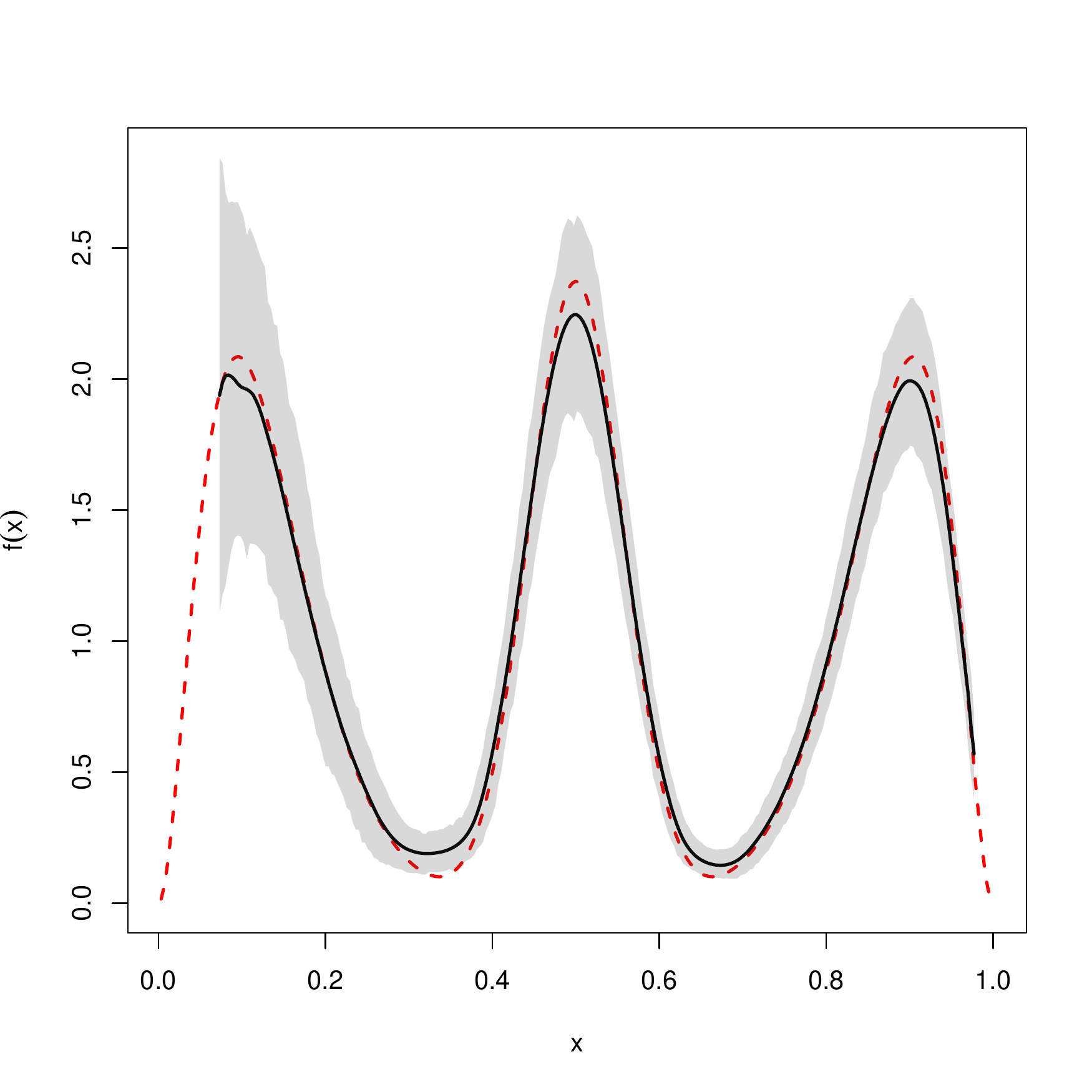}
\includegraphics[angle=0,width=0.24\linewidth]{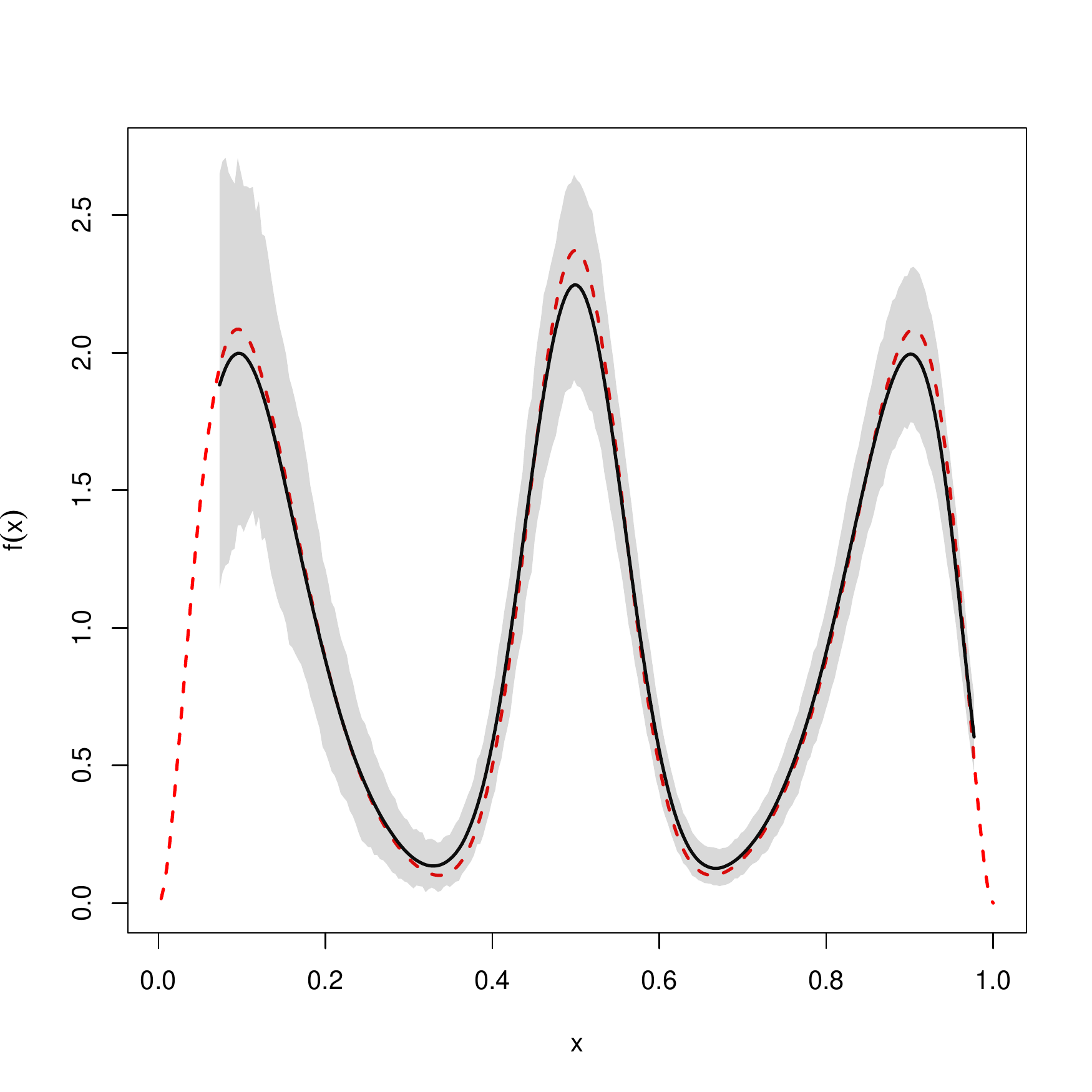} \\
\caption{Pointwise average estimates (full lines), with 95\% confidence intervals (shaded regions), for the density in $ \ex_2 $ (dashed lines), with $ n = 1,000 $. The columns 1--4 are related to the cases where we consider the finest resolution level $ J_1 = \ceil{p\log_2 n} $, $ p = 0.20, 0.45, 0.70, 0.95 $, respectively. The $ k $-th row is related to the estimation method $ m_k $, $ k = 1, 2, 3, 4 $.}
\label{fig:estimates-ex2}
\end{figure}

\begin{figure}[!ht]
\centering
\includegraphics[angle=0,width=0.24\linewidth]{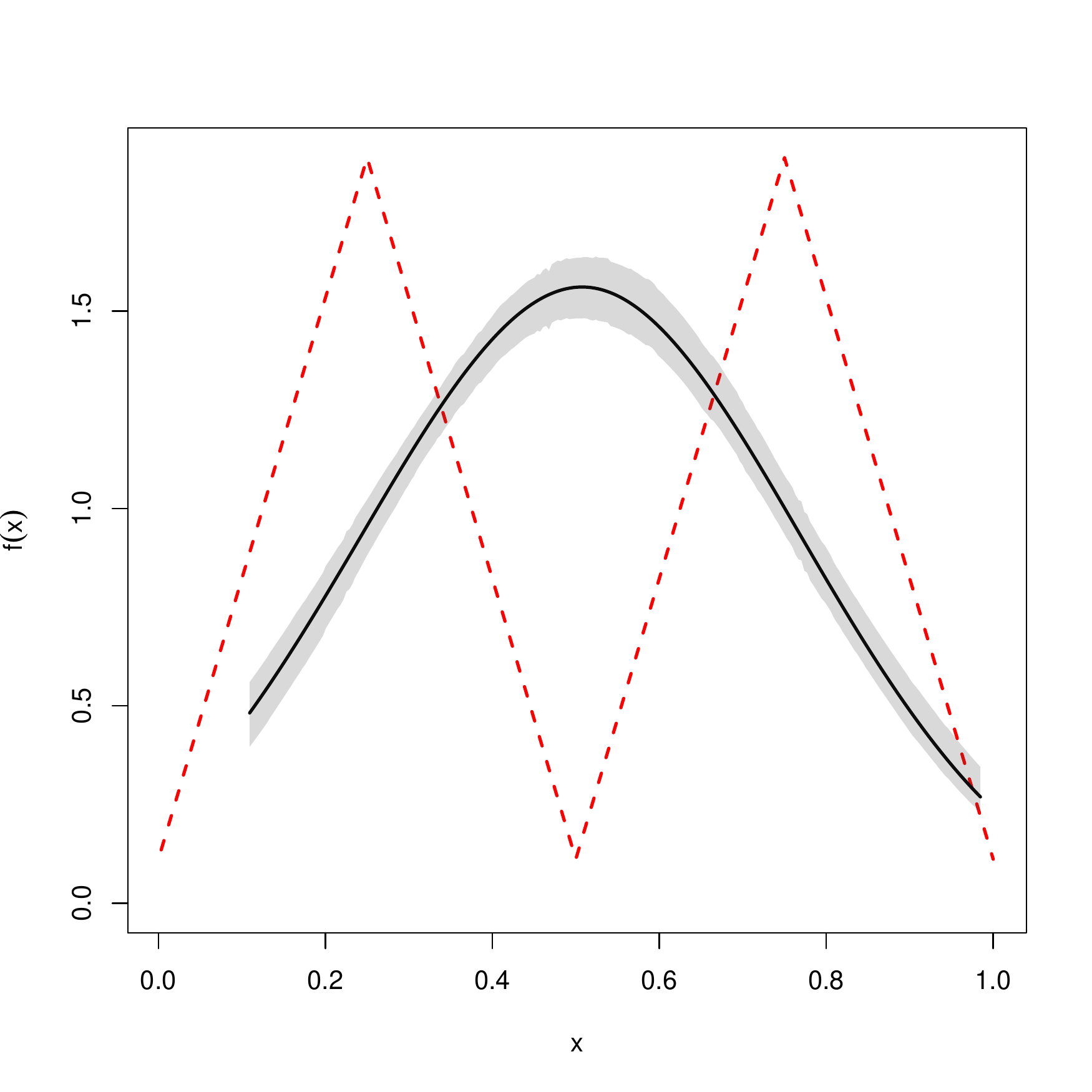}
\includegraphics[angle=0,width=0.24\linewidth]{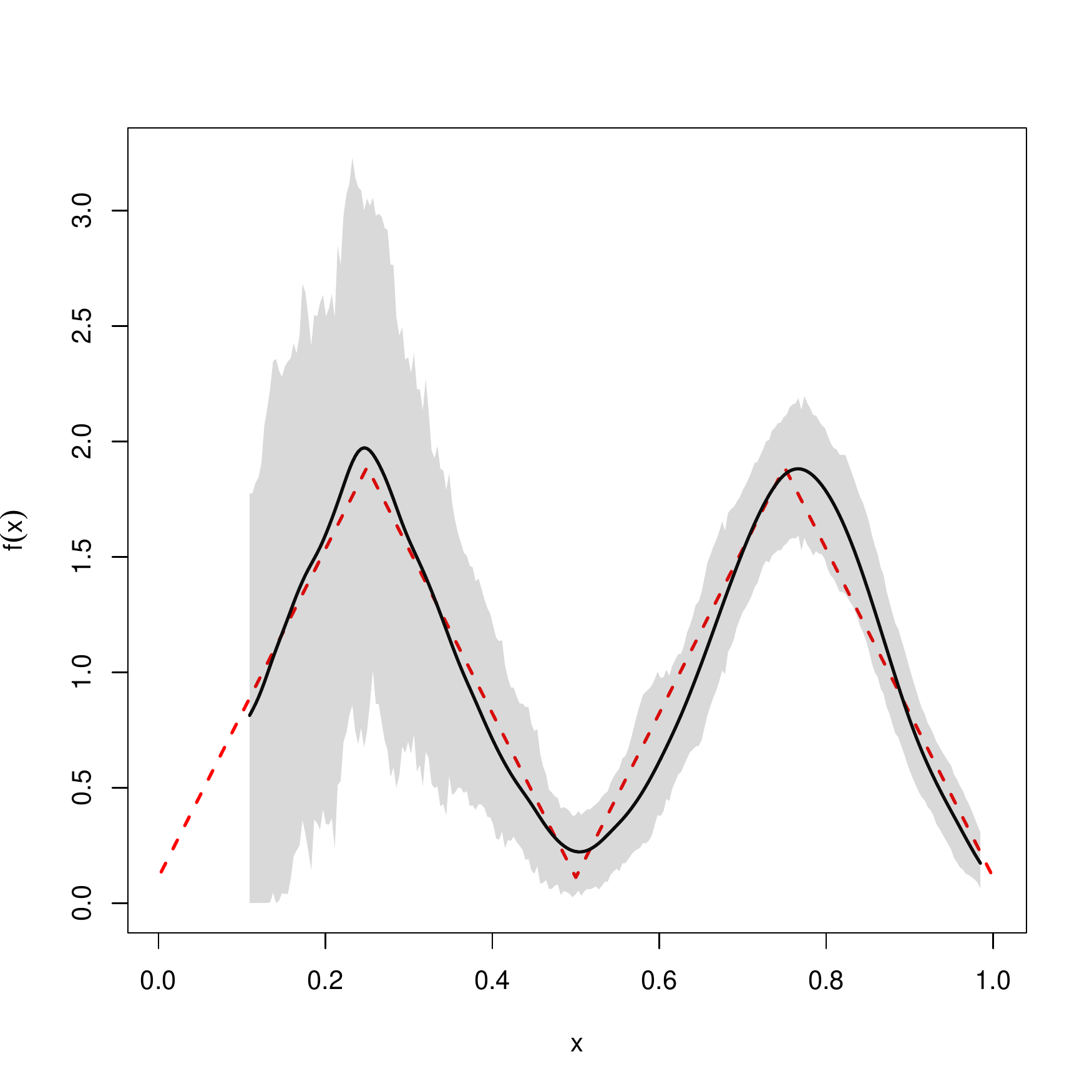}
\includegraphics[angle=0,width=0.24\linewidth]{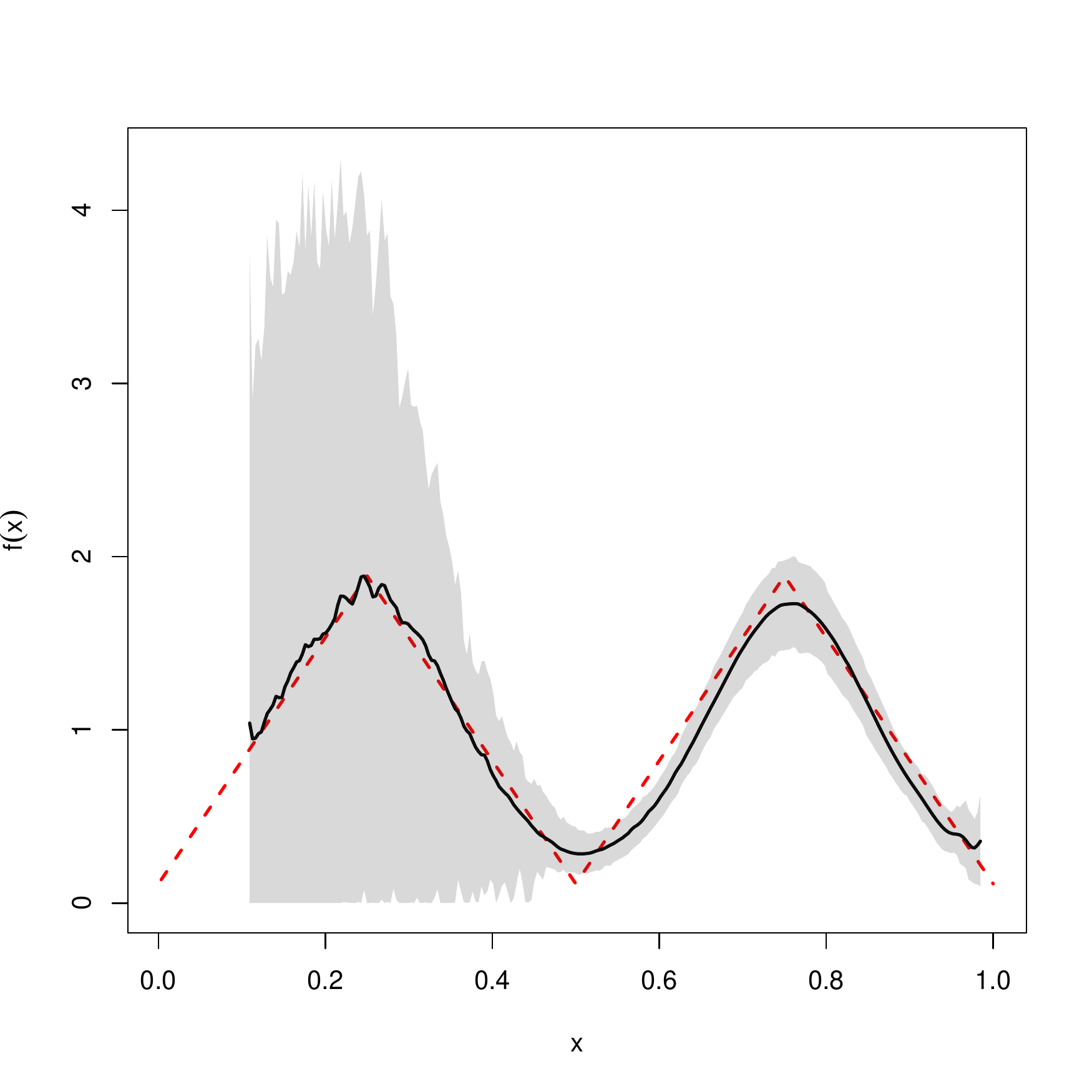}
\includegraphics[angle=0,width=0.24\linewidth]{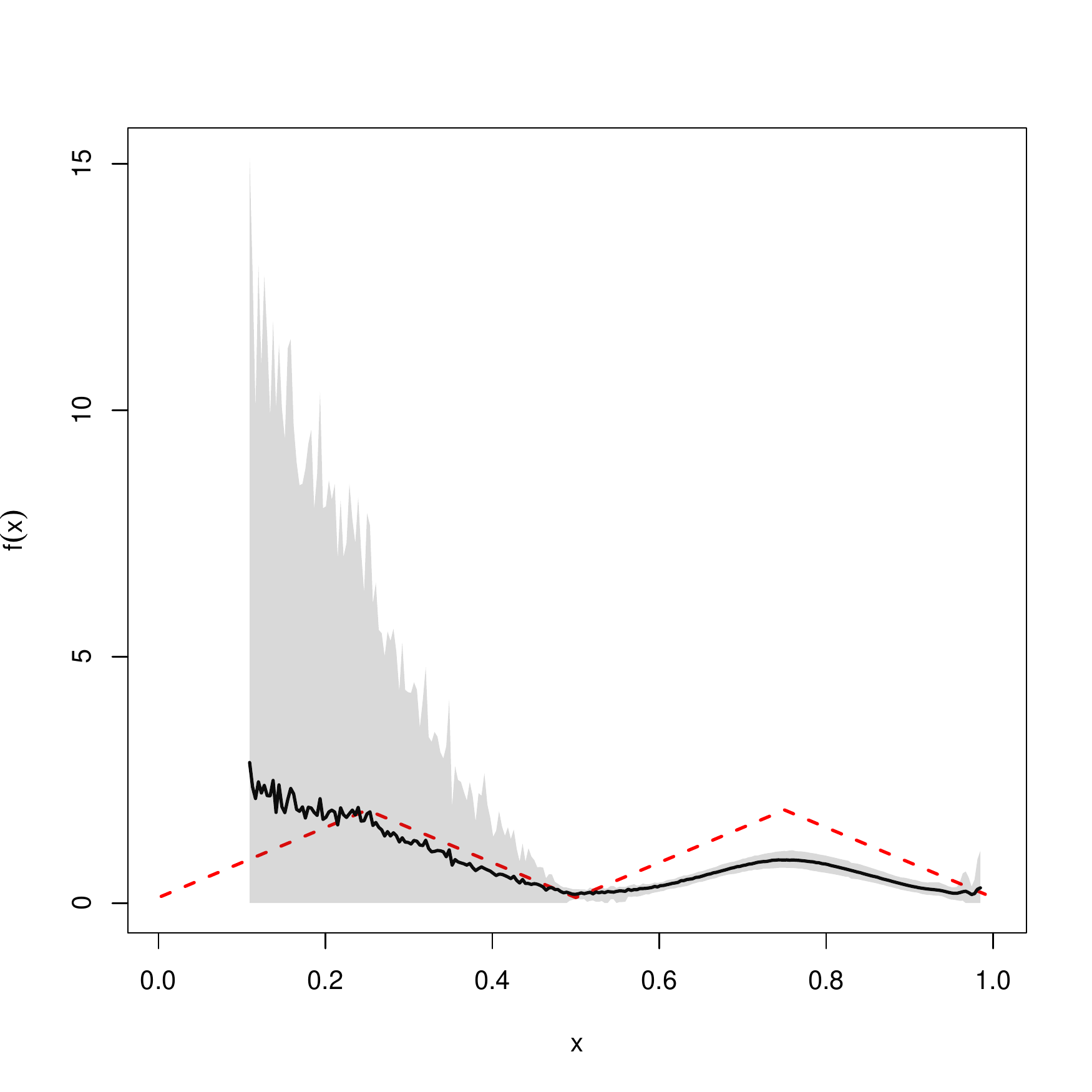} \\
\includegraphics[angle=0,width=0.24\linewidth]{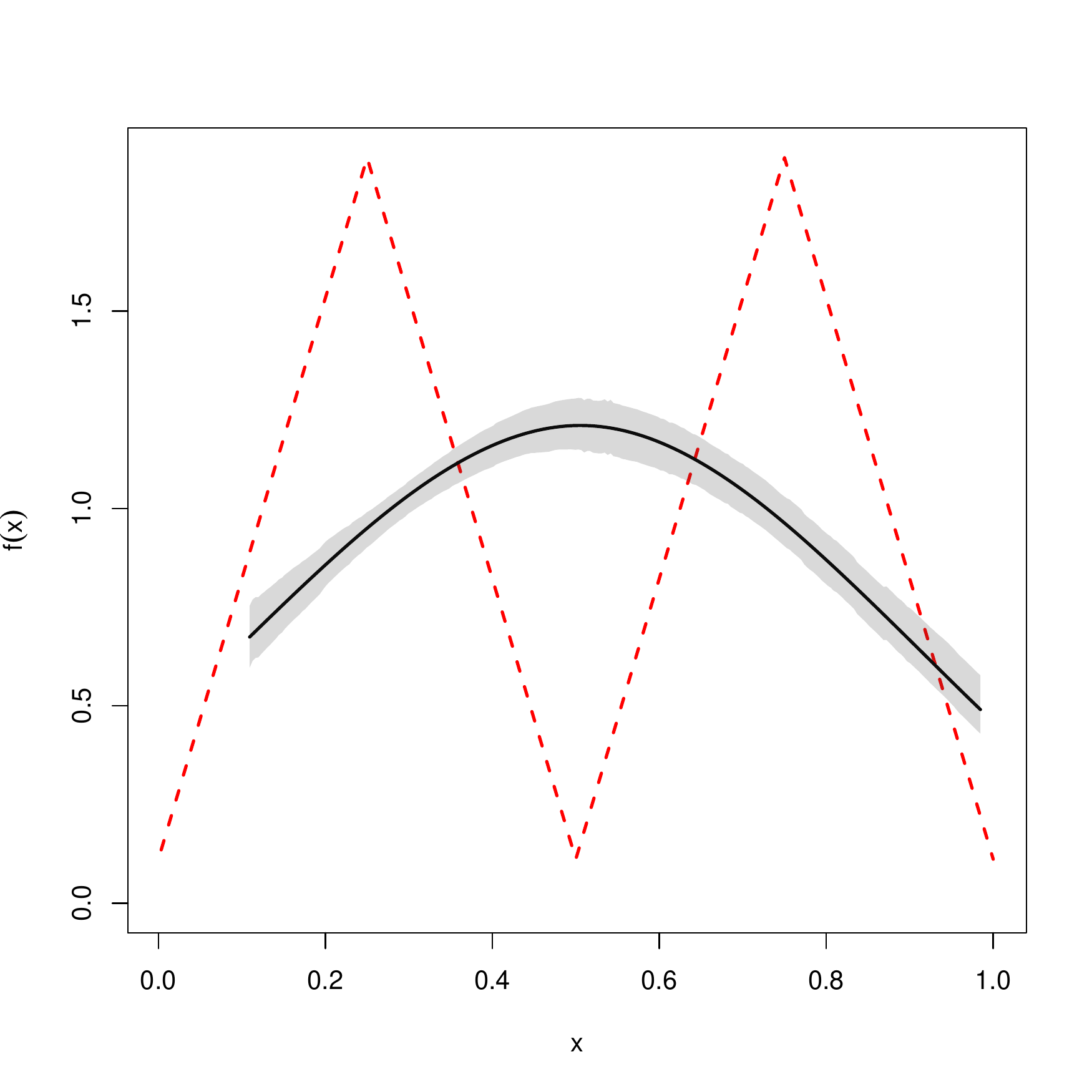}
\includegraphics[angle=0,width=0.24\linewidth]{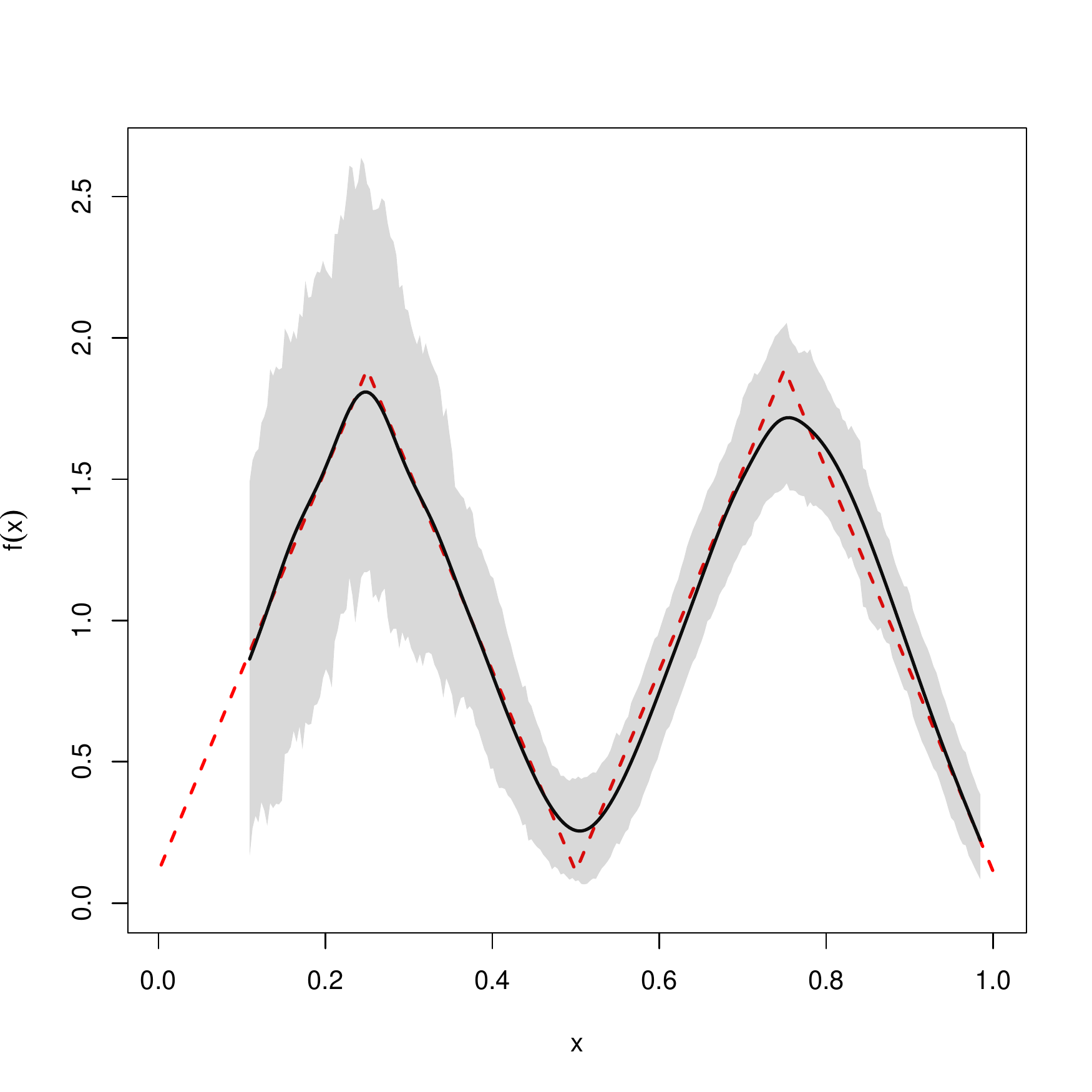}
\includegraphics[angle=0,width=0.24\linewidth]{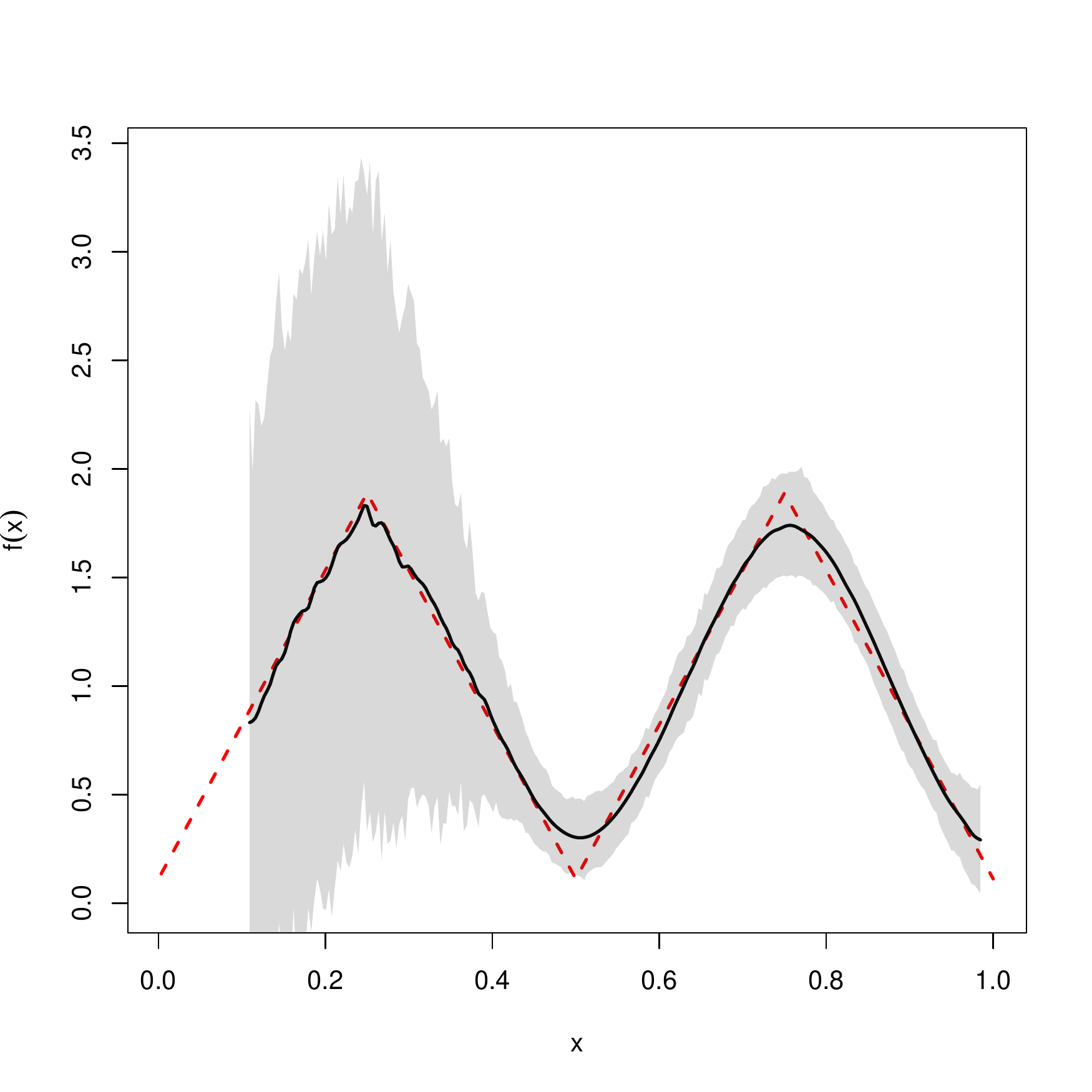}
\includegraphics[angle=0,width=0.24\linewidth]{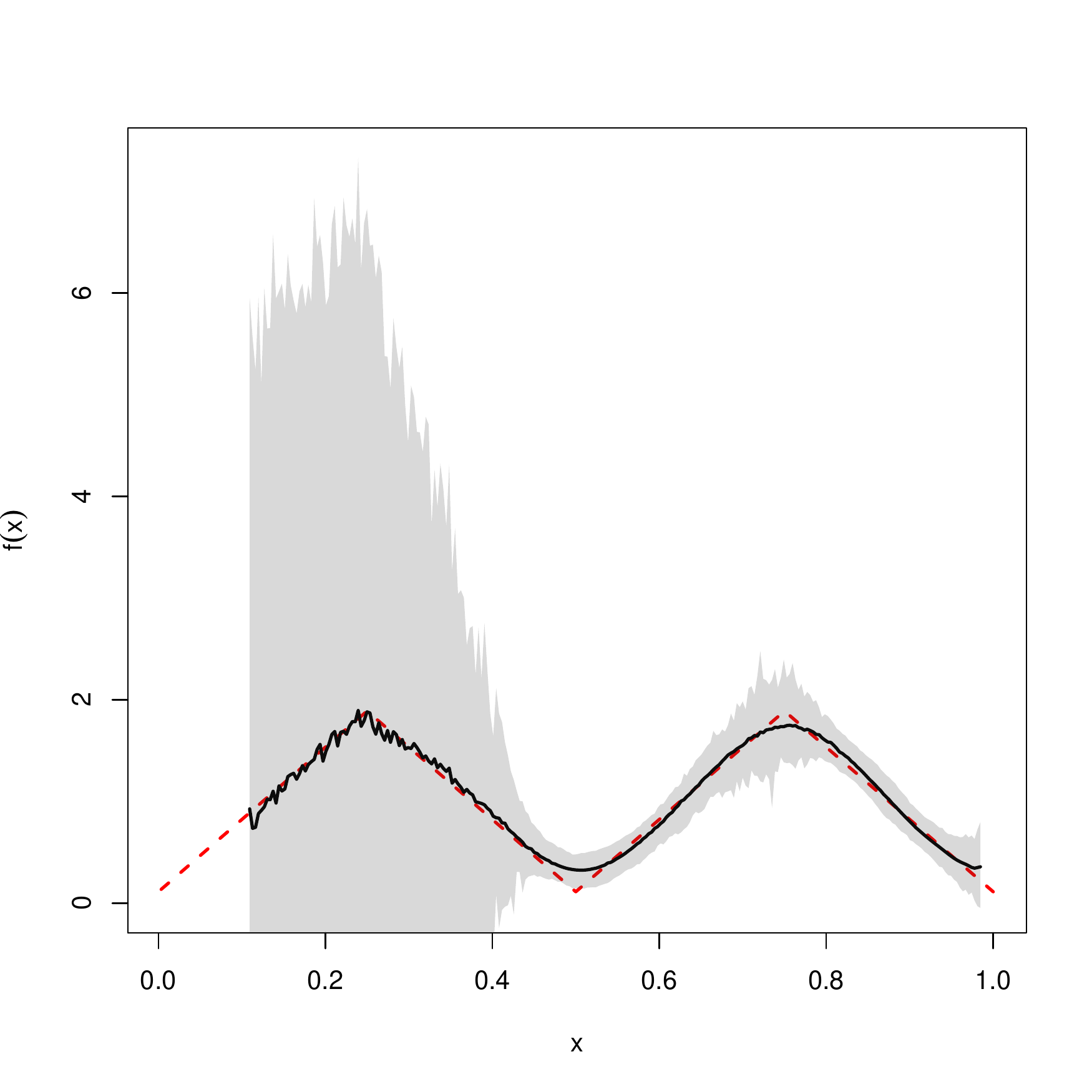} \\
\includegraphics[angle=0,width=0.24\linewidth]{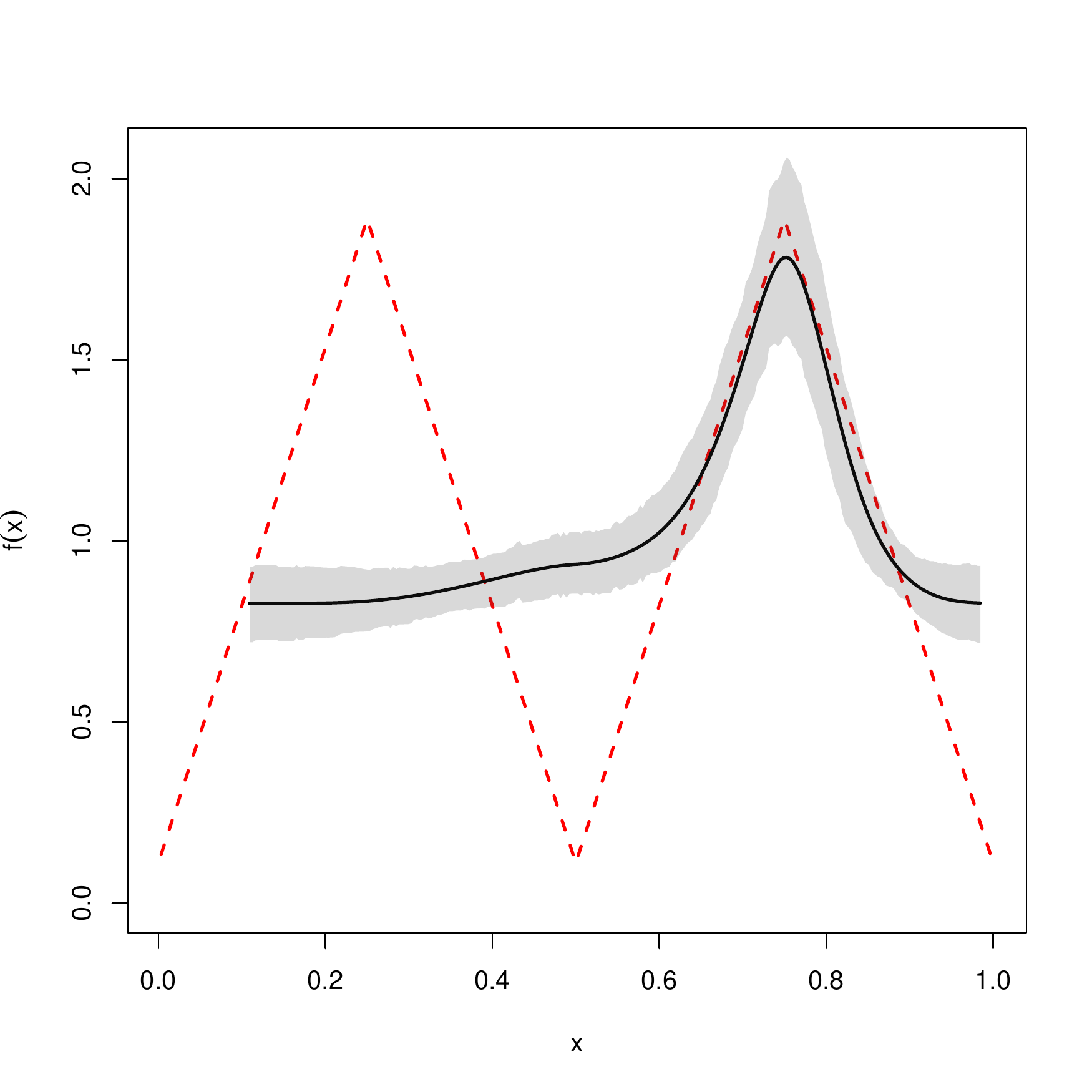}
\includegraphics[angle=0,width=0.24\linewidth]{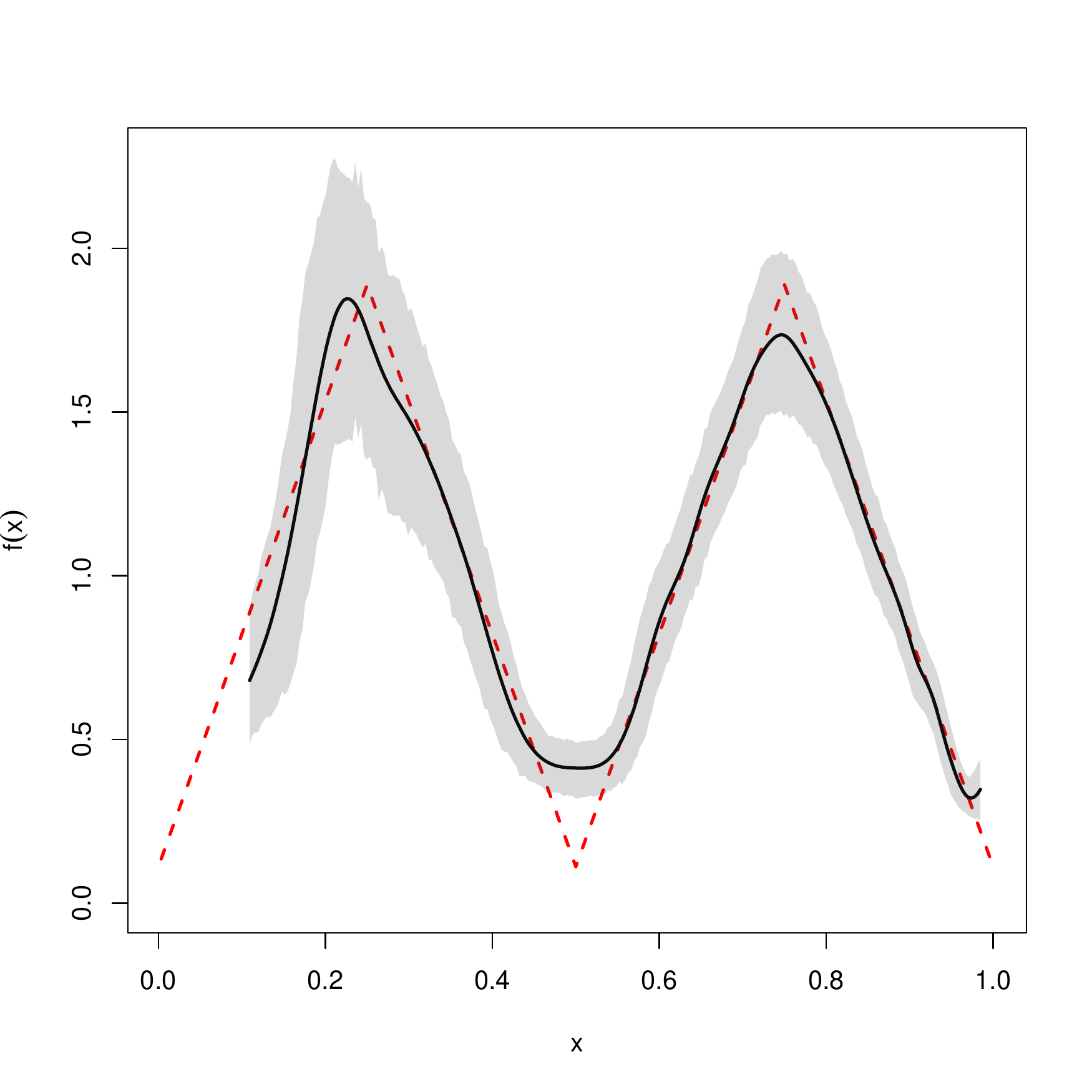}
\includegraphics[angle=0,width=0.24\linewidth]{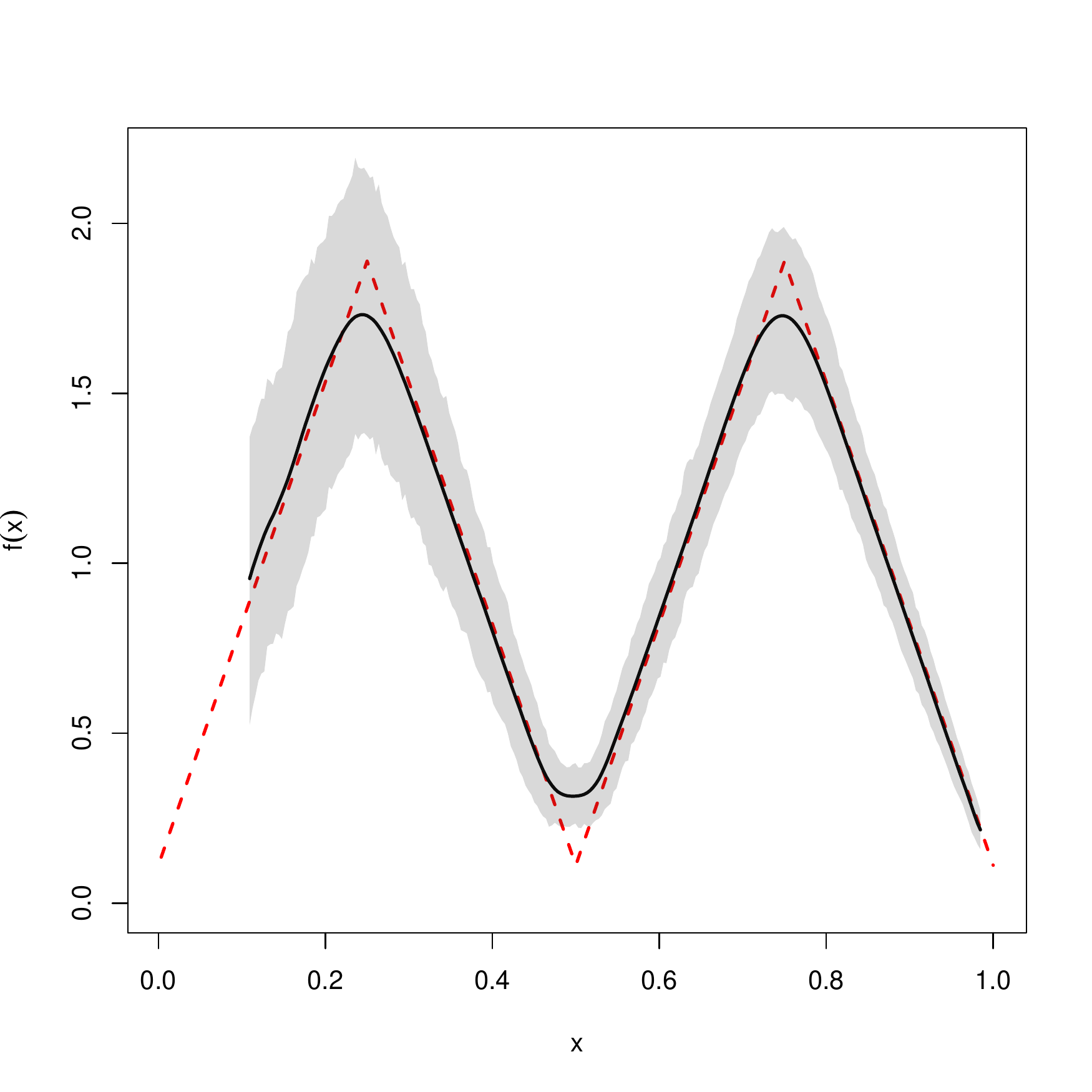}
\includegraphics[angle=0,width=0.24\linewidth]{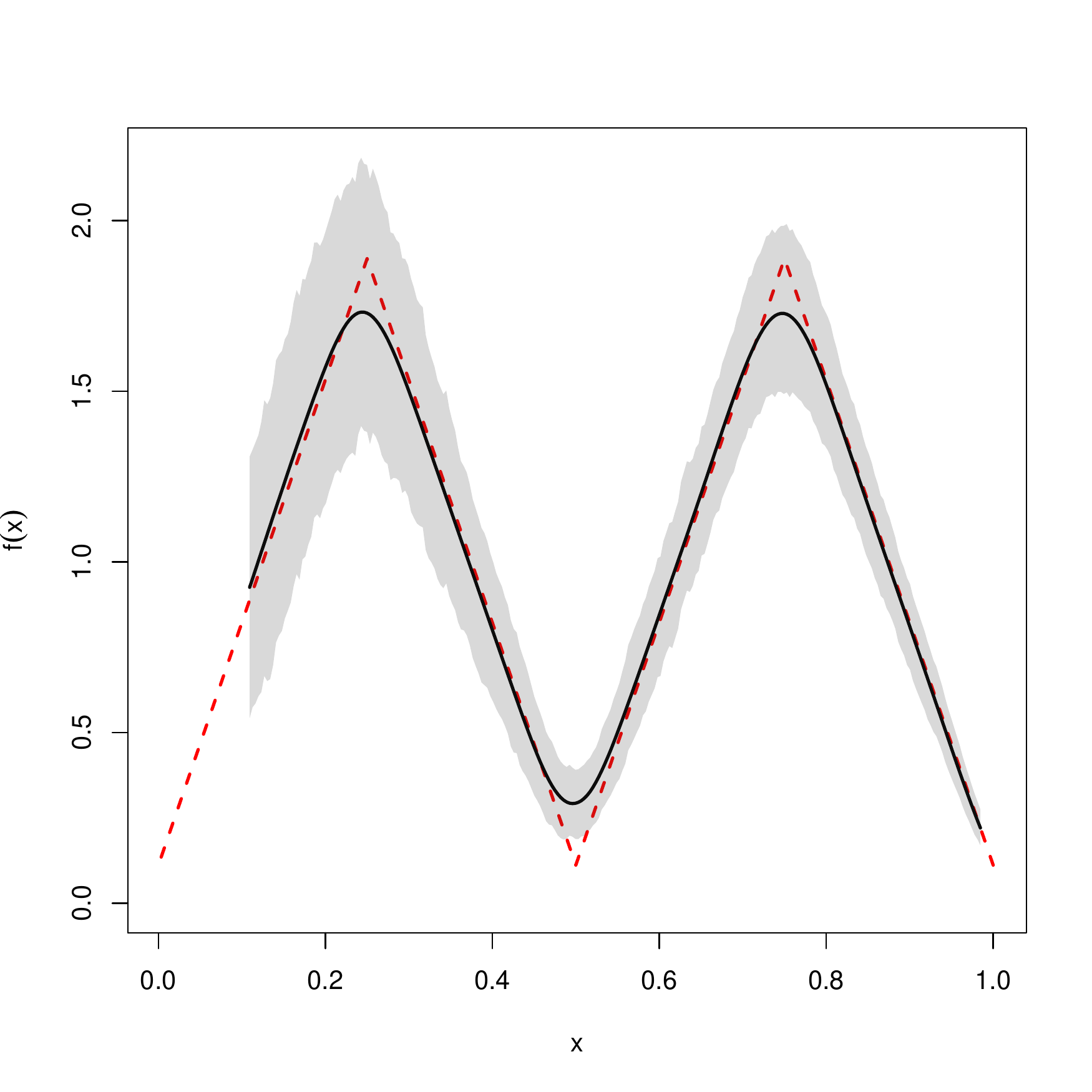} \\
\includegraphics[angle=0,width=0.24\linewidth]{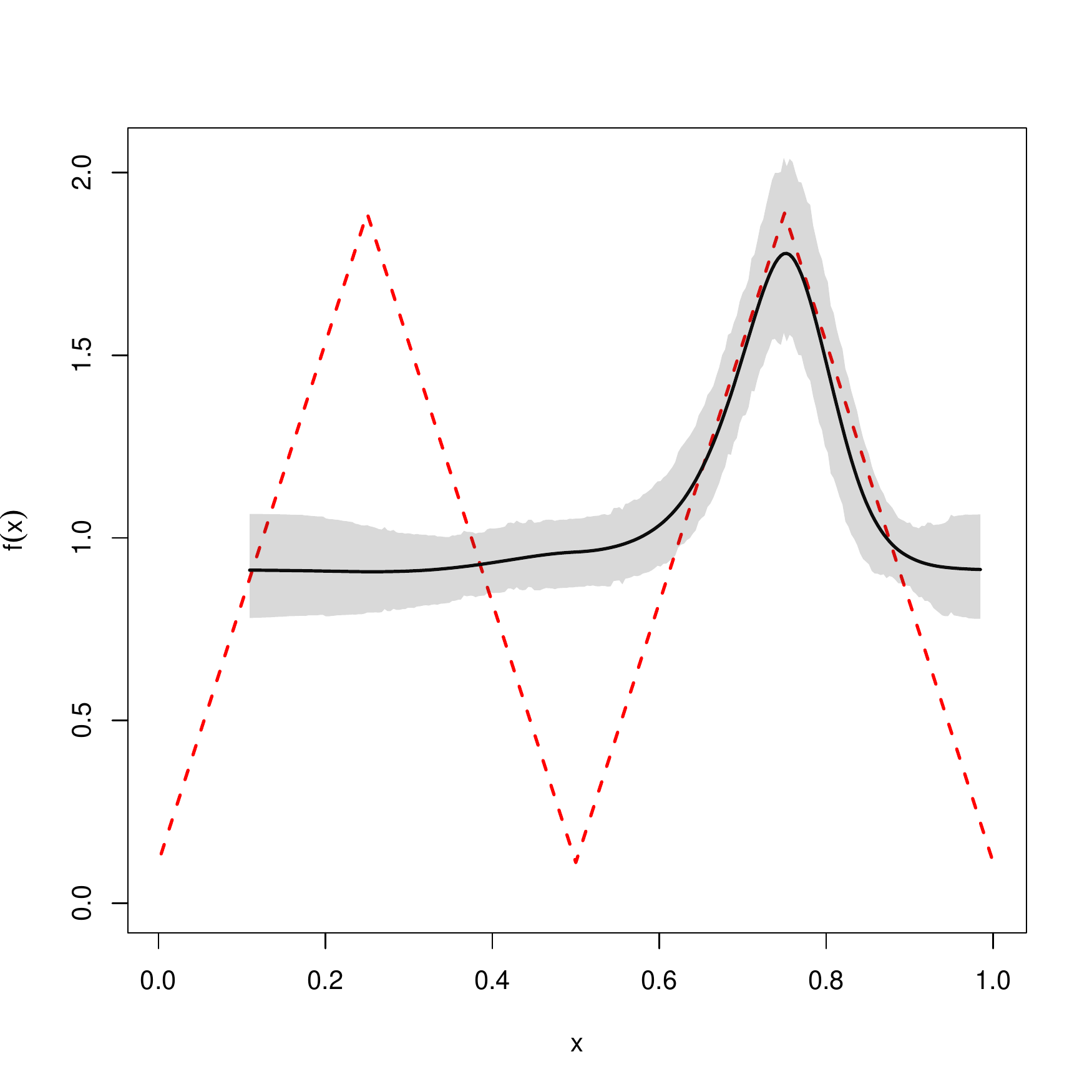}
\includegraphics[angle=0,width=0.24\linewidth]{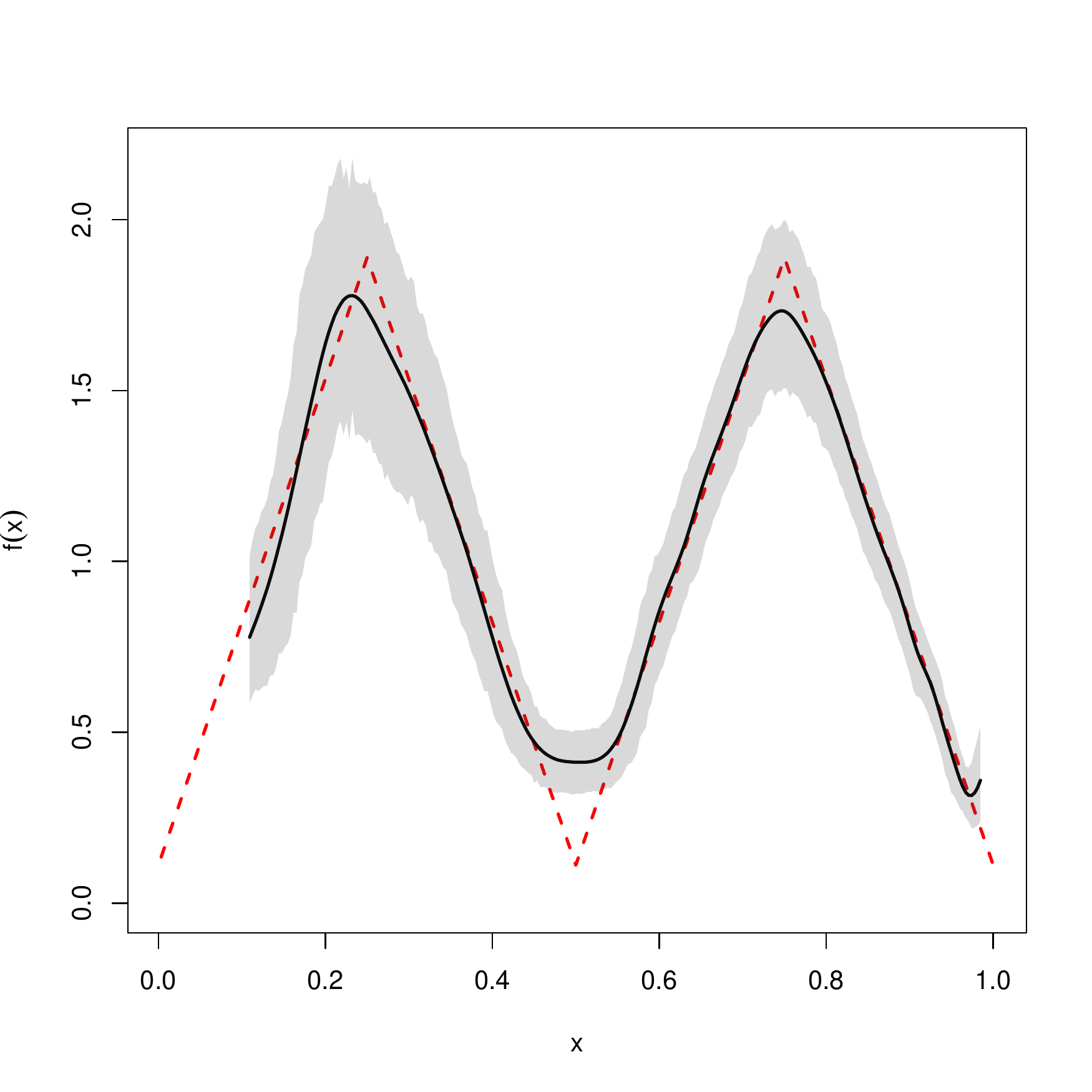}
\includegraphics[angle=0,width=0.24\linewidth]{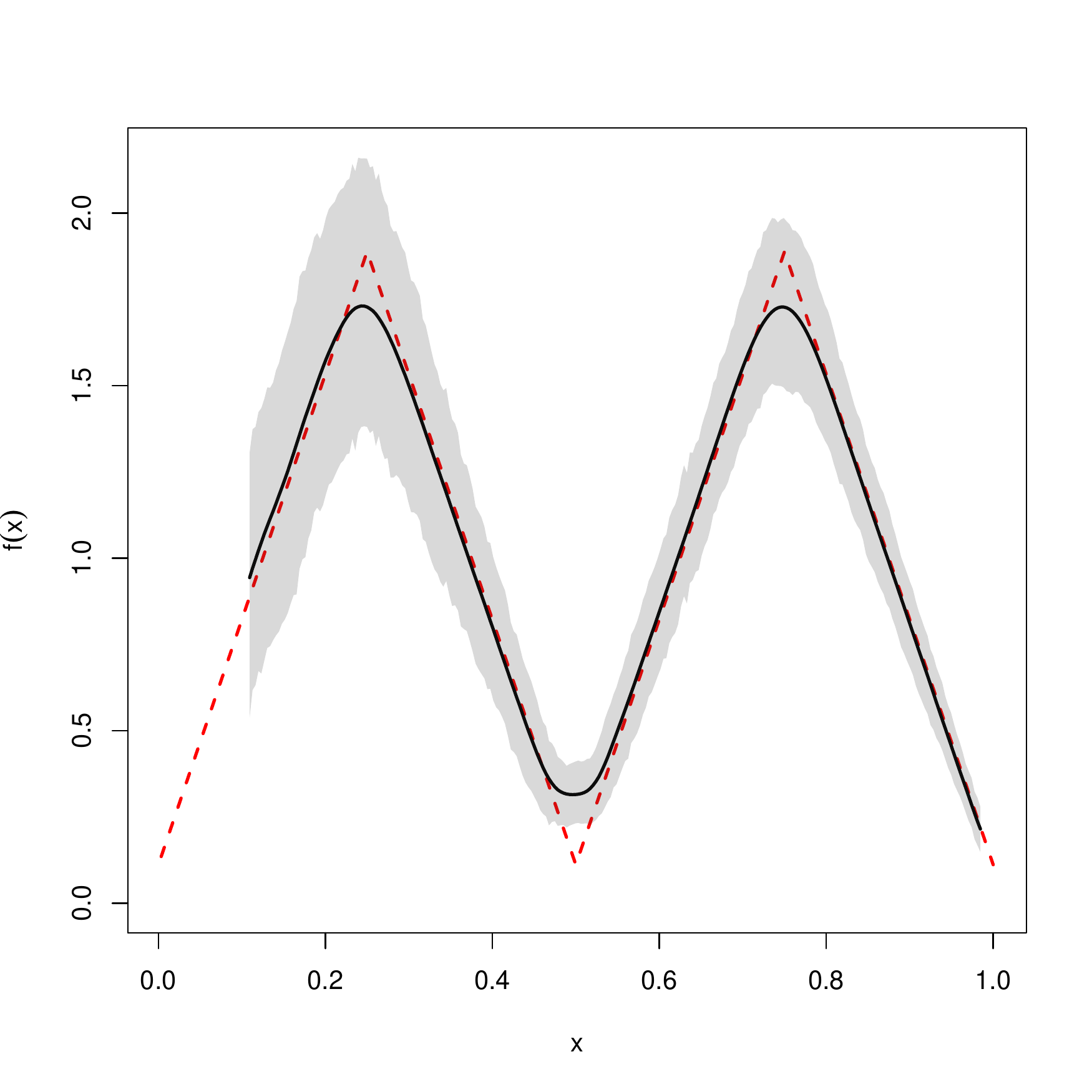}
\includegraphics[angle=0,width=0.24\linewidth]{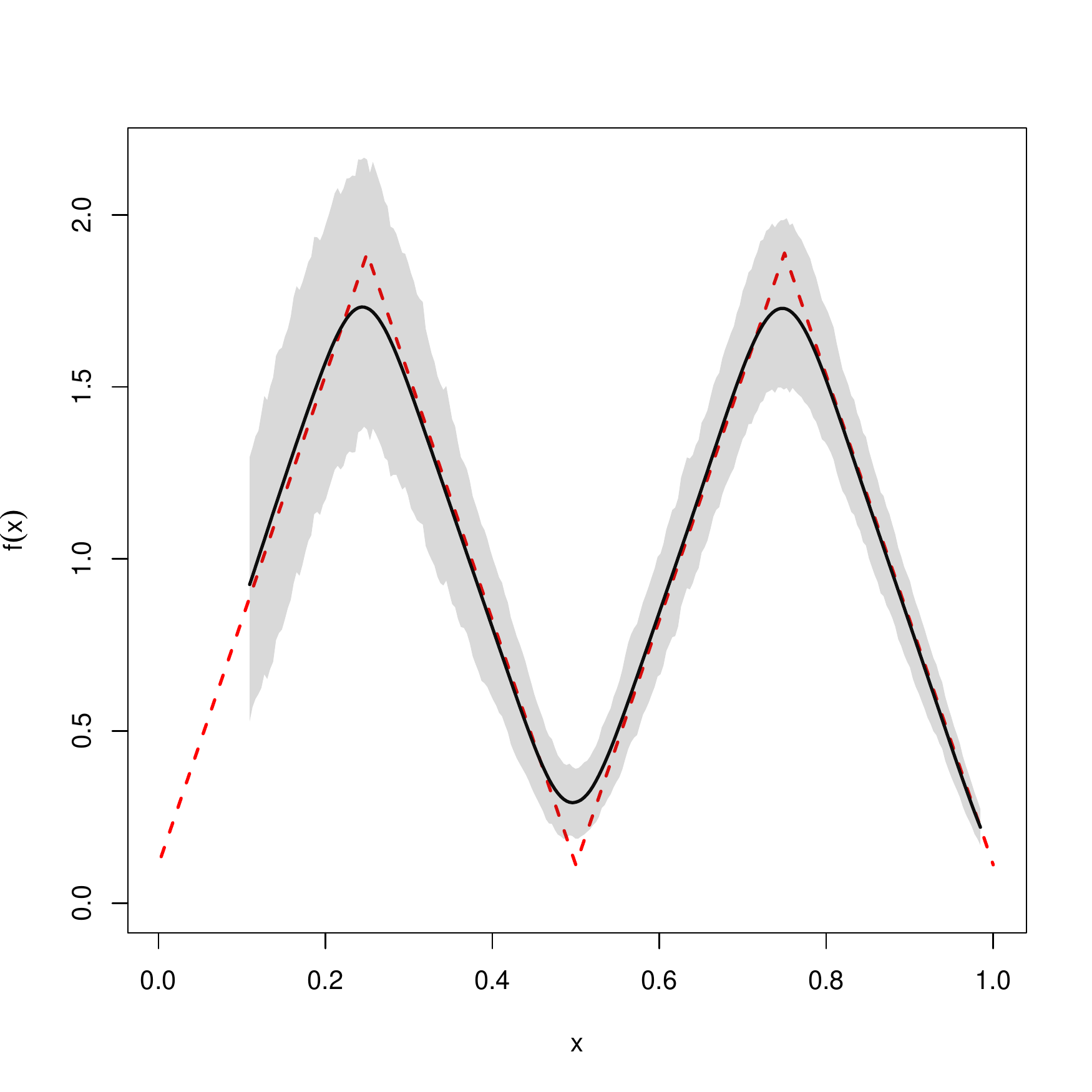} \\
\caption{Pointwise average estimates (full lines), with 95\% confidence intervals (shaded regions), for the density in $ \ex_3 $ (dashed lines), with $ n = 1,000 $. The columns 1--4 are related to the cases where we consider the finest resolution level $ J_1 = \ceil{p\log_2 n} $, $ p = 0.20, 0.45, 0.70, 0.95 $, respectively. The $ k $-th row is related to the estimation method $ m_k $, $ k = 1, 2, 3, 4 $.}
\label{fig:estimates-ex3}
\end{figure}

\subsection{Application}\label{sec:application}

Let us consider the dataset of 2,495 blood alcohol concentrations (BAC) of drivers involved in fatal accidents that occurred in the USA, during the year of 2019. The data was collected from the National Highway Traffic Safety Administration Department of Transportation (\url{www.nhtsa.dot.gov}). It is part of The Fatality Analysis Reporting System (FARS), from where we get  the brief description of the data (more details in this \href{https://www.nhtsa.gov/crash-data-systems/fatality-analysis-reporting-system}{link}).

\begin{quote}
The Fatality Analysis Reporting System (FARS) became operational in 1975, and contains data of fatal traffic crashes within the 50 States, the District of Columbia, and Puerto Rico. To be included in FARS, a crash must involve a motor vehicle traveling on a traffic way customarily open to the public, and must result in the death of a vehicle occupant or a nonoccupant within 30 days of the crash.
\end{quote}

BAC here is expressed in grams/100 ml. According to the 2019\linebreak FARS/CRSS Coding and Validation Manual (available \href{https://static.nhtsa.gov/nhtsa/downloads/FARS/FARS-DOC/Coding\%20and\%20Validation\%20Manual/2019\%20FARS\% 20CRSS\%20Coding\%20and\%20Validation\%20Manual\%20-\%20DOT\%20HS\%20813\%20010.pdf} {here}), we consider only fatal accidents where alcohol is involved (according to the police report). Moreover, crashes that are not included in the state highway inventory, not reported or unknown were discarded. Finally, we considered vehicles classified as automobiles, automobiles derivatives, utility vehicles and two-wheel motorcycles within the 50 states, the District of Columbia and Puerto Rico during 1975. %It was obtained from the National Highway Traffic Safety Administration Department of Transportation (\url{www.nhtsa.dot.gov}) and is the same used by \cite{Ramirez.Vidakovic-2010-JSPI}, where the reader can find more details. The BAC dataset is represented by realizations of continuous \rv's expressed in grams/100ml.

\begin{figure}[!htb]
\centering
\includegraphics[angle=0,width=0.9\linewidth]{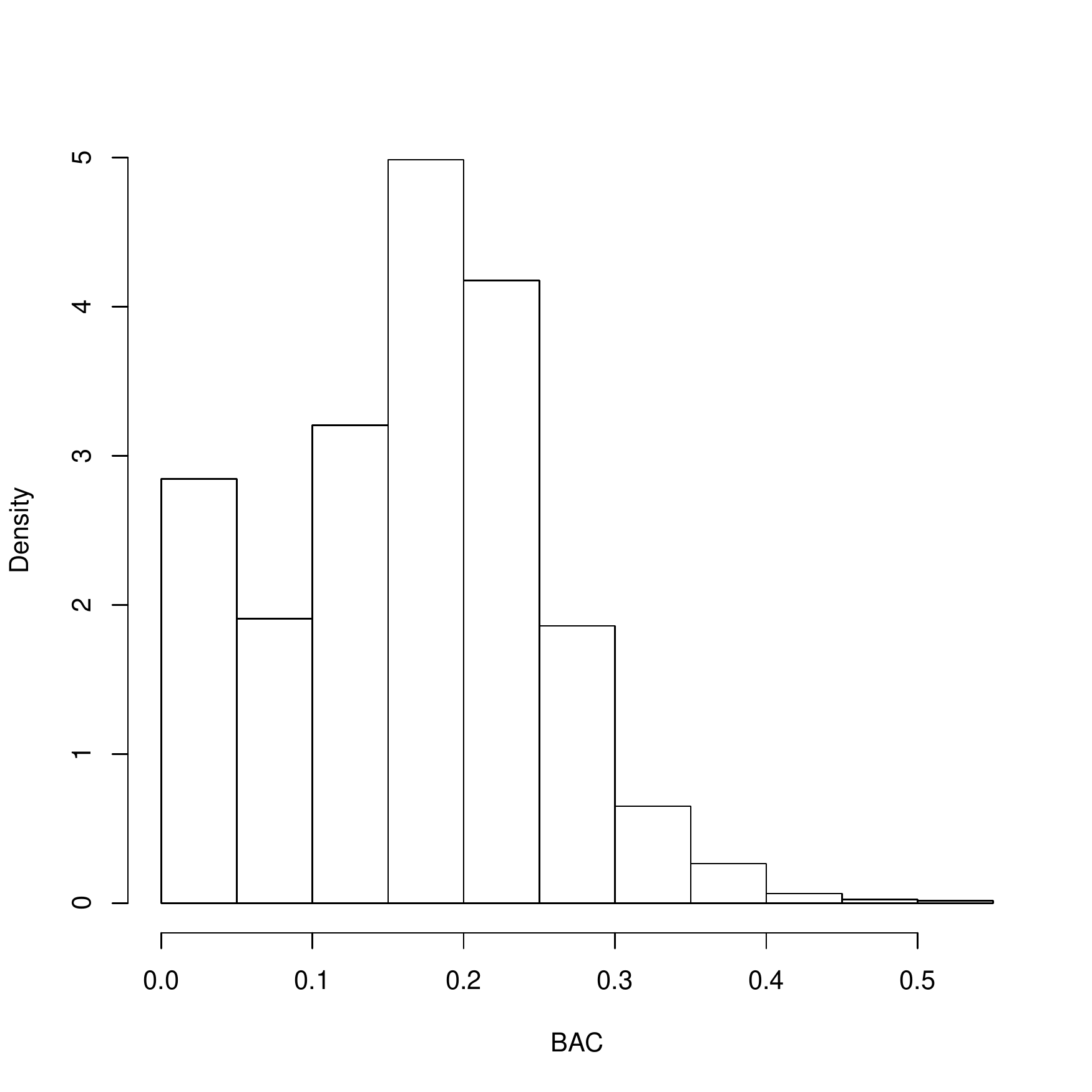}
\caption{Histogram of the BAC values of drivers involved in fatal accidents in the USA in 2019.}
\label{fig:bac-hist}
\end{figure}

This is a typical example of a size-biased data. Indeed, as discussed by \cite{Efromovich-1999}, drunk drivers are more likely be involved in fatal accidents. A histogram is presented in Figure \ref{fig:bac-hist}. The data is mainly concentrated around 0.10 -- 0.25 grams/100 ml. Moreover, the range of observations belongs to the unity interval, with maximum value smaller than 0.55 grams/100 ml, which is not close to one.

The arguments above suggest that the density of interest is biased by an increasing biasing function. Such a function is unknown in practice, and its choice is usually related to historic data, nature of phenomenon and/or common sense \cite{Ramirez.Vidakovic-2010-JSPI}. In general, the biasing function should be studied by additional experiments, but in many cases a linear behavior is recommended \cite{Efromovich-1999}. Therefore, in this analysis, we assume that
\[ w(x) = 0.1 + 0.9x. \]

As the data belongs to the unit interval, with its maximum ``far'' from 1 gram/100 ml, no transformation is needed (Section \ref{sec:com-asp}). Based on  Section \ref{sec:simul}, we consider $ J_1 = \ceil{0.45 \log_2 2495} = 6 $ for $ m_1 $ and $ m_2 $, and $ J_1 = \ceil{0.95 \log_2 2495} = 11 $ for $ m_3 $ and $ m_4 $.

\begin{figure}[!htb]
\centering
\includegraphics[angle=0,width=0.49\linewidth]{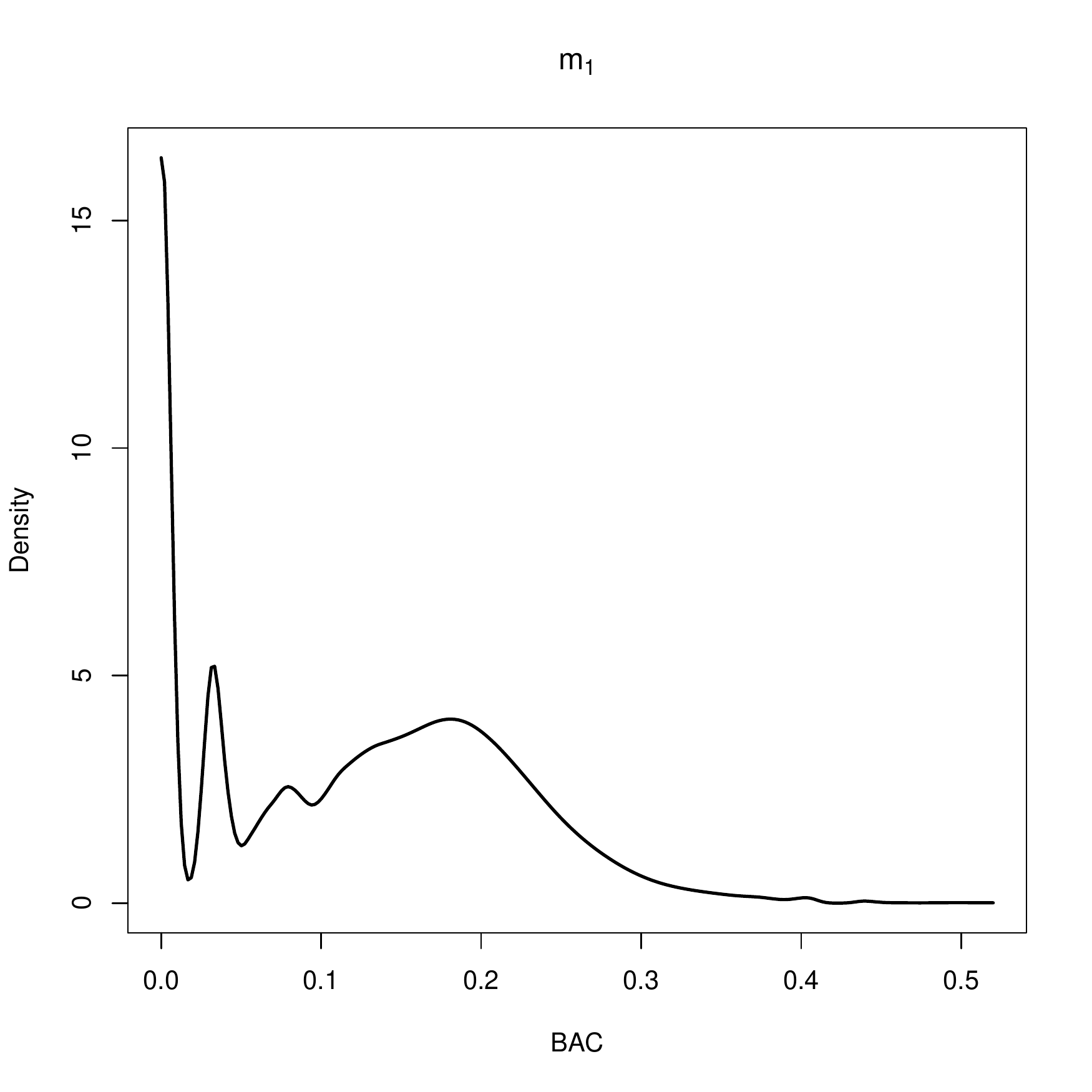}
\includegraphics[angle=0,width=0.49\linewidth]{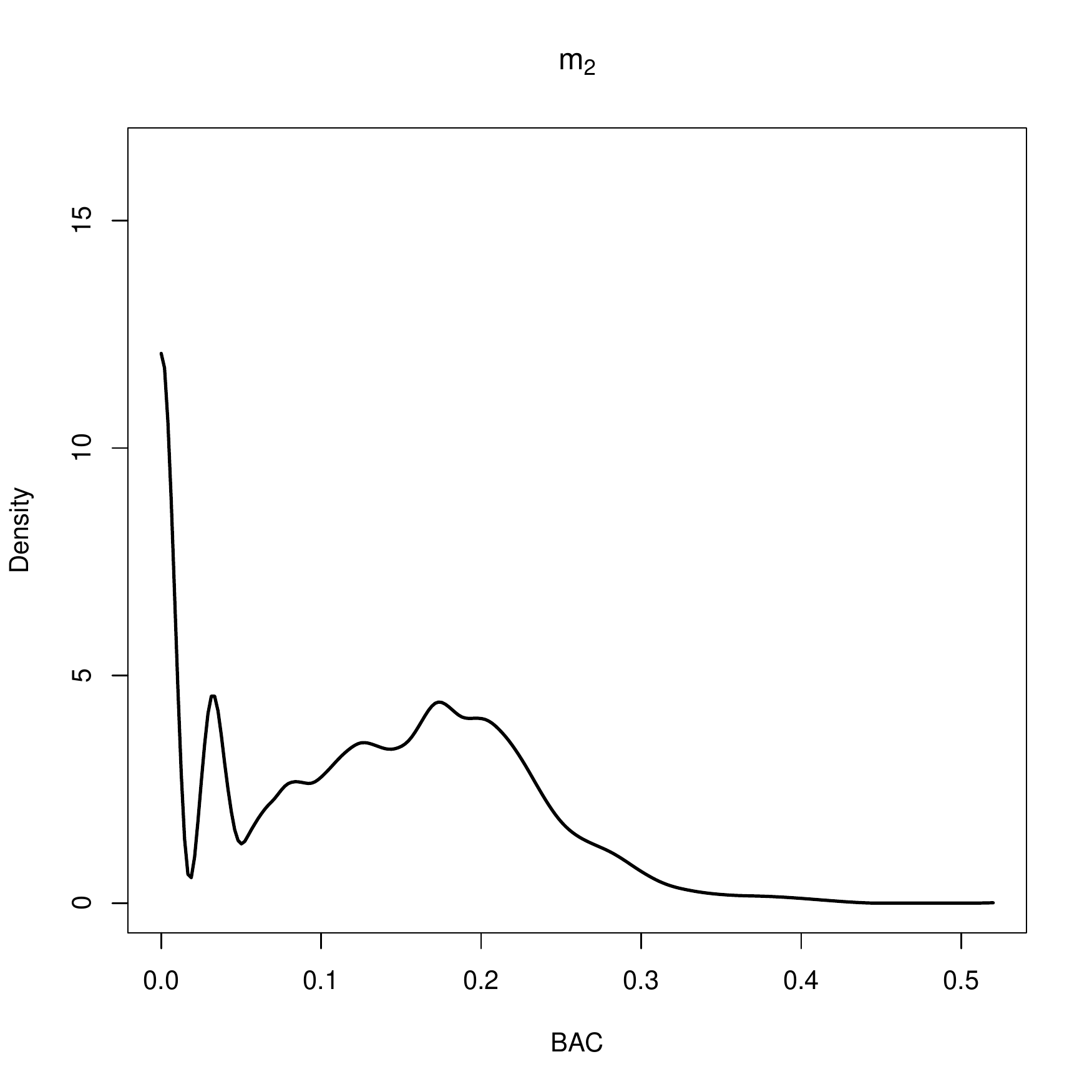} \\
\includegraphics[angle=0,width=0.49\linewidth]{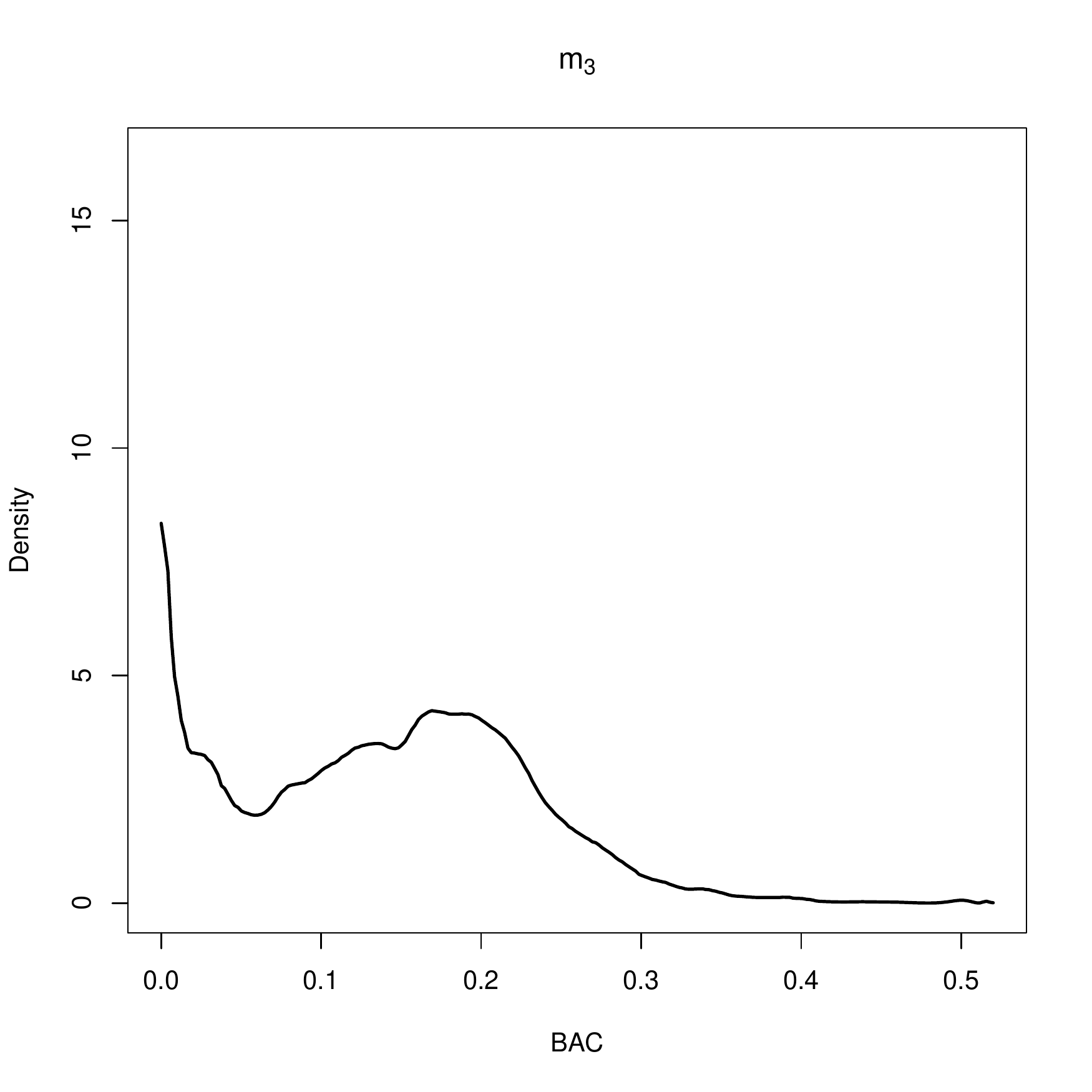}
\includegraphics[angle=0,width=0.49\linewidth]{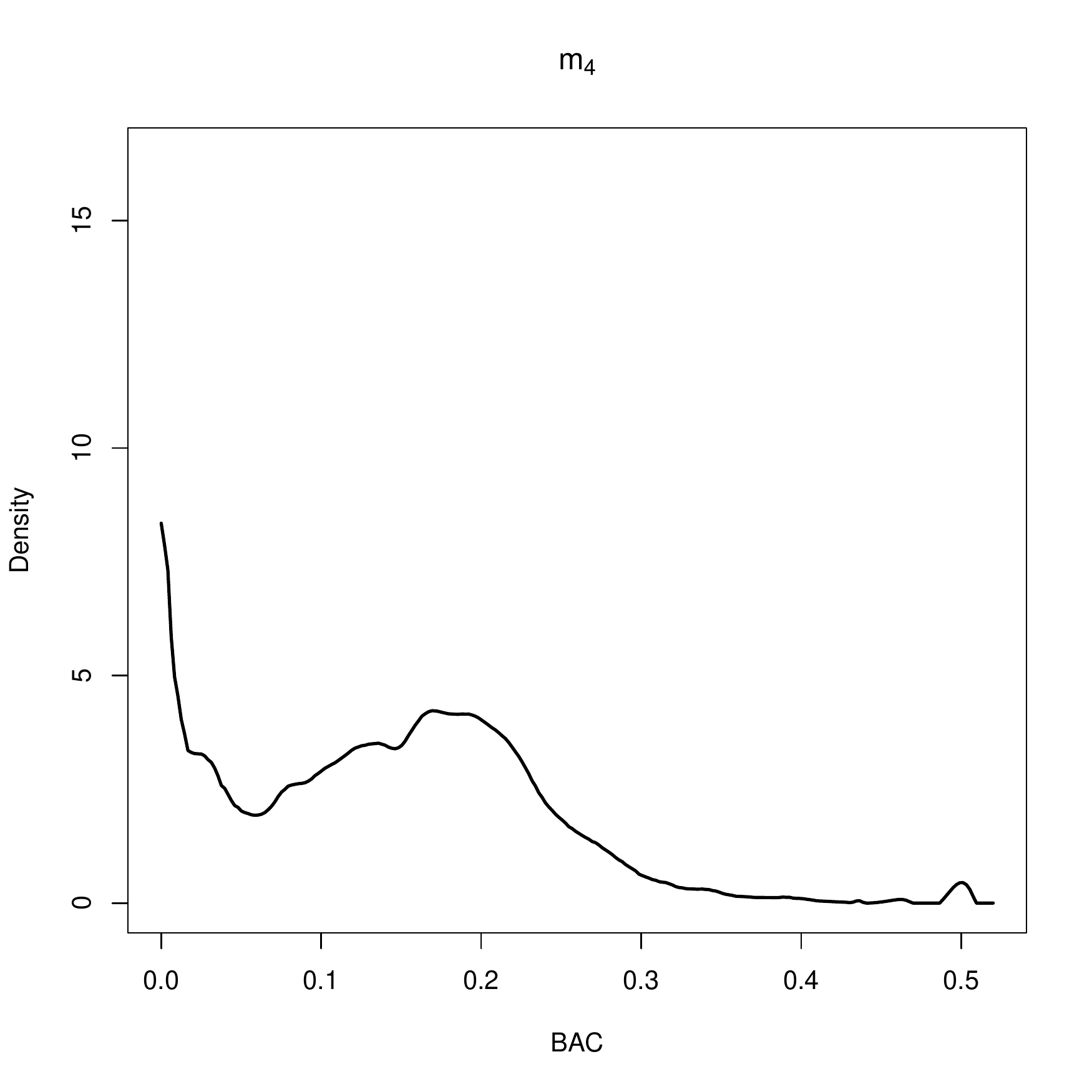}
\caption{BAC density estimates (top to bottom, left to right): $ m_1 $ - $a=1/2$ and $H(x)=x$;  $ m_2 $ - $a=1$ and $H(x)=x$;  $ m_3 $ - $a=1/2$ and $H(x)=\hat{G}(x)$; and $ m_4 $ - $a=1$ and $H(x)=\hat{G}(x)$.}
\label{fig:bac-dens}
\end{figure}

Figure \ref{fig:bac-dens} shows the four estimates. Methods $ m_1 $ and $ m_2 $ (orthonormal wavelets) indicate trimodal behavior, with a higher first peak for small amounts of BAC, whilst $ m_3 $ and $ m_4 $ (warped wavelets), suggest bimodal density, albeit for a tiny bump around 0.5 gram/100 ml for $ m_4 $. The data histogram (Figure \ref{fig:bac-hist}) and $m_1$-$m_3$ lead us to disregard this bump as some unwarranted feature due to $m_4$. Moreover, although not shown here, when $ m_1 $ and $ m_2 $ are employed with $ J_1 = 5 $, the second mode seen for $ J_1 = 6 $ vanishes, bringing all four estimates to a bimodal behavior. Finally, we can see in Figure  \ref{fig:bac-dens}  that some aliasing effect is present: for $m=2$ vis-a-vis $m=1$; and for either $m=2$ or $m=1$ vis-a-vis $m=4$ or $m=3$. Summarizing, we see that warping  and/or square-root estimation improves regularization by eliminating aliasing and most residual bumps. Thence, we conclude that $ m_3 $ provides the best regularized estimate for the true density in this application.
%When we compare the methods of estimation $ m_1 $--$ m_4 $, we can think of $ m_3 $ and $ m_4 $ providing a more realistic result, if we believe that most of the drivers did not ingest any alcohol before driving. Indeed, observe that the first peak of these methods is present for BAC 0 grams/100 ml, which is in accordance with the aforementioned assumption.

\section{Conclusions and further remarks}\label{sec:conclusions}

We propose a novel density estimation method in the context of size-biased data. We consider a wavelet-based method to estimate the power of a density of interest in a general framework, where the wavelet basis is allowed to be warped by some cumulative distribution function.

We show that both linear and regularized wavelet estimators are asymptotically consistent and that they attain optimal or near-optimal rates. In numerical studies, we considered four methods of estimation (particular cases of the proposed methodology), which include powers $ a = 1/2 $ \citep{Pinheiro.Vidakovic-1997-CSDA} and the usual $ a = 1 $, as well as orthonormal and warped wavelet bases. The results indicated that coarser resolution levels are better for ordinary orthonormal wavelet bases, whilst finer resolution levels are better for warped wavelets. They also indicate that warped wavelet estimators outperform orthonormal estimators, especially in the case of $ a = 1/2 $.

An issue not pursued here, which will be left as a topic for further research, regards a sharper data-driven estimate for the finest resolution level $ J_1 $.

\section*{Acknowledgments}
The first author acknowledges FAPESP (Funda\c{c}\~{a}o de Amparo \`{a}
Pesquisa do Estado de S\~{a}o Paulo) Grants 2018/04654-9 and 2020/00646-1. The second author acknowledges FAPESP Grant 2018/04654-9, and CNPq (Conselho Nacional de Desenvolvimento Cient\'{\i}fico e Tecnol\'ogico) Grant 310991/2020-0.
%% if your bibliography is in bibtex format, uncomment commands:
%\bibliographystyle{imsart-number} % Style BST file (imsart-number.bst or imsart-nameyear.bst)

\newpage
%%% The macro below will remove the \hrule's that should contain the abstract
\makeatletter
  \long\def\pprintMaketitle{\clearpage
  \iflongmktitle\if@twocolumn\let\columnwidth=\textwidth\fi\fi
  \resetTitleCounters
  \def\baselinestretch{1}%
  \printFirstPageNotes
  \begin{center}%
 \thispagestyle{pprintTitle}%
   \def\baselinestretch{1}%
    \Large\@title\par\vskip18pt
    \normalsize\elsauthors\par\vskip10pt
    \footnotesize\itshape\elsaddress\par\vskip36pt
    % \hrule\vskip12pt
    % \ifvoid\absbox\else\unvbox\absbox\par\vskip10pt\fi
    % \ifvoid\keybox\else\unvbox\keybox\par\vskip10pt\fi
    % \hrule\vskip12pt
    \end{center}%
  \gdef\thefootnote{\arabic{footnote}}%
  }
\makeatother
%%% The macro above will remove the \hrule's that should contain the abstract

{\linespread{1}
\begin{center}
\Large Supplementary material for ``Wavelet-based estimation of power densities of size-biased data''
\end{center}
\begin{center}
Michel H. Montoril$ ^a $, Aluísio Pinheiro$ ^b $, Brani Vidakovic$ ^c $
\end{center}
\begin{center} \footnotesize
$ ^a $Department of Statistics, Federal University of São Carlos, Brazil \\
$ ^b $Department of Statistics, University of Campinas, Brazil \\
$ ^c $Department of Statistics, Texas A\&M University, USA
\end{center}
}

\bigskip
%% \linenumbers

%% main text
In this supplementary material we present the proofs of the theoretical results in Section \ref{sec:theory} of the main manuscript. For the sake of simplicity, the assumptions used in the paper are presented below.

\subsubsection*{Assumptions}
\begin{enumerate}
\item[(a1)] $ f $ in \eqref{eq:pdf} is bounded away from zero and infinity and $ f^a \in \tilde{W}_2^m(U) $, for $ a \geq 1/2 $, $ 0 < U < \infty $ and $ m = 1, 2, \ldots $.
\item[(a2)] $ w $ in \eqref{eq:pdf} is bounded away from zero and infinity.
%\item[(a3)] $ J_0 \equiv J_0(n) $ and $ J_1 \equiv J_1(n) $ are positive integers such that $ J_0 \leq J_1 $, $ 2^{J_0} \asymp 2^{J_1} $ and $ J_12^{J_1}/n \to 0 $ as $ n \to \infty $;
\item[(a3)] The \cdf\ $ H $ used to warp the wavelet basis is continuous
and strictly monotone. Its \pdf\ $ h $ is bounded away from zero and infinity uniformly on $ [0, 1] $.
\item[(a4)] The employed wavelet basis is a periodized version of some Daubechies compactly supported wavelet basis, with at least $ m $ vanishing moments.
\end{enumerate}

%To avoid confusion with equation labels of the paper, we present the proofs below as an appendix, which provides a modification the equation labels in this supplementary material.

\appendix
\section{Proofs of the theoretical results in Section \ref{sec:theory}}\label{sec:appendix}

We give here the proofs of the results from Section \ref{sec:theory}. By assumptions (a2) and (a3),
\begin{eqnarray}
w(y) \asymp 1, \label{asymp:w} \\
h(y) \asymp 1. \label{asymp:h},
\end{eqnarray}
respectively. Also, note that assumptions (a1)-(a2) guarantee
\begin{eqnarray}
\mu & \asymp & 1, \label{asymp:mu} \\
g(y) & \asymp & 1 \label{asymp:g},
\end{eqnarray}
From \eqref{asymp:w},
\begin{equation}\label{asymp:hmu}
\hat{\mu} = \dfrac{1}{n} \sum_{i = 1}^{n} w^{-1}(Y_i) \asymp 1.
\end{equation}

We need the following lemma.

\begin{lemma}\label{lemma:diff-c}
Under the assumptions (a1)--(a4), for $ k = 0, 1, \ldots, 2^{J_0} - 1 $,
\[ \E\abs{\hat{c}_{J_0k} - c_{J_0k}}^2 \lesssim D_n \1_{\{a \neq 1\}} + n^{-1}, \]
where $ D_n $ is a positive sequence such that $ \sup_{y \in [0,1]} \abs{\hat{g}(y) - g(y)}^2 \lesssim D_n $ \as
\end{lemma}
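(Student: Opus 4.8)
The plan is to interpolate between $\hat{c}_{J_0k}$ and $c_{J_0k}$ through two ``oracle'' proxies with deterministic coefficients and to control each gap separately. I would introduce
\[
\bar{c}_{J_0k} = \dfrac{\mu^a}{n} \sum_{i=1}^n \dfrac{\phi_{J_0k}\co{H(Y_i)}\, g^{a-1}(Y_i) h(Y_i)}{w^a(Y_i)}, \qquad
\tilde{c}_{J_0k} = \dfrac{\hat{\mu}^a}{n} \sum_{i=1}^n \dfrac{\phi_{J_0k}\co{H(Y_i)}\, g^{a-1}(Y_i) h(Y_i)}{w^a(Y_i)},
\]
so that $\bar{c}_{J_0k}$ uses the true $\mu$ and $g$, $\tilde{c}_{J_0k}$ replaces $\mu$ by $\hat{\mu}$, and $\hat{c}_{J_0k}$ additionally replaces $g$ by $\hat{g}$. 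Since $(x+y+z)^2 \le 3(x^2+y^2+z^2)$, it suffices to bound the three expected squared gaps $\E\abs{\bar{c}_{J_0k}-c_{J_0k}}^2$, $\E\abs{\tilde{c}_{J_0k}-\bar{c}_{J_0k}}^2$ and $\E\abs{\hat{c}_{J_0k}-\tilde{c}_{J_0k}}^2$.

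For the sampling gap, $\bar{c}_{J_0k}$ is by the (numerator-$h$) identity \eqref{eq:cjk} an average of i.i.d.\ terms with mean exactly $c_{J_0k}$, so $\E\abs{\bar{c}_{J_0k}-c_{J_0k}}^2 = n^{-1}\var(\cdot)$. A change of variables $u=H(y)$ together with \eqref{asymp:w}--\eqref{asymp:g} and the orthonormality $\int_0^1 \phi_{J_0k}^2 = 1$ yields $\E\co{\phi_{J_0k}^2\pa{H(Y)}} \asymp 1$, so this variance is $\lesssim 1$ and the gap is $\lesssim n^{-1}$. For the $\mu$-gap, I would factor out $\hat{\mu}^a-\mu^a$, so $\tilde{c}_{J_0k}-\bar{c}_{J_0k} = (\hat{\mu}^a-\mu^a)\,\mu^{-a}\bar{c}_{J_0k}$, and apply Cauchy--Schwarz. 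Because $\hat{\mu}^{-1}=n^{-1}\sum w^{-1}(Y_i)$ is a sample mean of surely bounded variables, a mean-value expansion of $s\mapsto s^{-a}$ around $\mu^{-1}$ (valid since $\hat{\mu},\mu\asymp 1$) gives $\abs{\hat{\mu}^a-\mu^a}\lesssim\abs{\hat{\mu}^{-1}-\mu^{-1}}$ and hence $\E\abs{\hat{\mu}^a-\mu^a}^4\lesssim n^{-2}$; meanwhile $\E\bar{c}_{J_0k}^4\lesssim 1$, since the $O(2^{J_0})$ fourth moment of a single summand is damped by $n^{-3}$ and $2^{J_0}/n\to 0$. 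Thus the $\mu$-gap is also $\lesssim n^{-1}$.

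The decisive term is the $g$-gap, which carries the indicator: when $a=1$ the factor $\hat{g}^{a-1}-g^{a-1}\equiv 0$, so $\hat{c}_{J_0k}-\tilde{c}_{J_0k}=0$ identically. For $a\neq 1$, using $\hat{\mu}^a, h, w^{-a}\asymp 1$ I would bound
\[
\abs{\hat{c}_{J_0k}-\tilde{c}_{J_0k}} \lesssim \dfrac{1}{n}\sum_{i=1}^n \abs{\phi_{J_0k}\co{H(Y_i)}}\,\abs{\hat{g}^{a-1}(Y_i)-g^{a-1}(Y_i)}.
\]
Because $g\asymp 1$ and $\sup_y\abs{\hat{g}(y)-g(y)}\to 0$ almost surely, $\hat{g}$ is eventually bounded away from $0$ and $\infty$, so a mean-value expansion makes $t\mapsto t^{a-1}$ Lipschitz on the relevant range, giving $\sup_y\abs{\hat{g}^{a-1}(y)-g^{a-1}(y)}\lesssim\sup_y\abs{\hat{g}(y)-g(y)}\lesssim D_n^{1/2}$ almost surely. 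Pulling this sup-norm factor out and noting $\E\co{\pa{n^{-1}\sum_i\abs{\phi_{J_0k}(H(Y_i))}}^2}\lesssim 1$ (its mean is $\asymp 2^{-J_0/2}$, its variance $\lesssim n^{-1}$), I obtain $\E\abs{\hat{c}_{J_0k}-\tilde{c}_{J_0k}}^2\lesssim D_n$. Combining the three bounds gives the claim.

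I expect the main obstacle to be this last step: rigorously moving the \emph{almost-sure} uniform bound $\sup_y\abs{\hat{g}-g}^2\lesssim D_n$ inside the expectation while simultaneously invoking the Lipschitz transfer from $\hat{g}-g$ to $\hat{g}^{a-1}-g^{a-1}$, which needs $\hat{g}$ to stay bounded away from $0$. This is handled by the uniform consistency of $\hat{g}$ together with $g\asymp 1$, restricting to the high-probability event on which $\hat{g}\asymp 1$; the care required is to ensure that the complementary event does not spoil the rate, which is where an additional boundedness (or uniform-integrability) argument for $\hat{g}$ enters.
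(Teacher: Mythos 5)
Your proof is correct and follows essentially the same route as the paper's: the three gaps you isolate (sampling error, $\hat{\mu}$ versus $\mu$, and $\hat{g}$ versus $g$) are just a regrouping of the paper's terms $I_2$, $I_3$ and $I_1$, and each is bounded by the same tools --- orthonormality after the change of variables $u = H(y)$, the boundedness relations $w, h, g, \mu, \hat{\mu} \asymp 1$, the Lipschitz transfer from $\hat{g}-g$ to $\hat{g}^{a-1}-g^{a-1}$, and the almost-sure sup-norm bound $D_n$. The only cosmetic differences are that the paper controls the $\hat{\mu}$ error with second moments alone (factoring $(\hat{\mu}/\mu)^a \asymp 1$ out of the mean-zero average and treating $c_{J_0k}\hat{\mu}^a(\mu^{-a}-\hat{\mu}^{-a})$ separately), whereas your Cauchy--Schwarz step needs fourth moments of $\bar{c}_{J_0k}$ and hence $2^{J_0} = o(n^3)$, an assumption not stated in the lemma itself but available wherever it is applied; the subtlety you flag about $\hat{g}$ staying bounded away from zero is likewise present, though glossed over, in the paper's bound \eqref{asymp:gLipschitz}.
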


\begin{proof}[Proof of Theorem \ref{theo:lin-a1}]
Analogous to the proof of Theorem \ref{theo:lin-aa} below.
\end{proof}

\begin{proof}[Proof of Theorem \ref{theo:lin-aa}]
Initially, observe that the convergence of $ \hat{f}_{J_0}^a $ is equivalent to the convergence of $ \hat{r}_{J_0}^a $, where $ r^a $ is defined in \eqref{eq:w-analysis}. In fact, by assumption (a3), it is easy to see that
\begin{equation}\label{asymp:norm-rel}
\norm{f^a}^2 \asymp \norm{r^a}^2.
\end{equation}
Let us denote $\rho_{j} = \norm{r_{j}^a - r^a} $, for a positive integer $ j $. Since, by (a4), $ r^a $ is analyzed by an orthonormal basis, it is easy to see that %Therefore, based on \eqref{asymp:norm-rel}, the rate of convergence of $ \hat{f}_{J_0}^a $ to $ f^a $ is Observe that the wavelet analysis of $ r^a $ and $ f^a $ are related to each other by their wavelet coefficients
%\begin{eqnarray*}
%f^a(x) & = & f^a\pa{H^*\pa{H(x)}} = r^a(y) \\
%& = & \sum_k c_{J_0k} \phi_{J_0k}(y) + \sum_{j \geq j_0} \sum_{k} d_{jk} \psi_{jk}(y) \\
%& = & \sum_k c_{J_0k} \phi_{J_0k}(H(x)) + \sum_{j \geq j_0} \sum_{k} d_{jk} \psi_{jk}(H(x))
%\end{eqnarray*}
%It is easy to see that
%\begin{equation*}
%\E\norm{\hat{f}_{J_0}^{a} - f^a}^2 = \E\norm{\hat{f}_{J_0}^{a} - f_{J_0}^a}^2 + \rho_{J_0}^2. %\E\norm{f_{J_0}^{a} - f^a}^2.
%\end{equation*}
\begin{equation*}
\E\norm{\hat{r}_{J_0}^{a} - r^a}^2 = \E\norm{\hat{r}_{J_0}^{a} - r_{J_0}^a}^2 + \rho_{J_0}^2. %\E\norm{f_{J_0}^{a} - f^a}^2.
\end{equation*}

By (a1) we have \citep{Restrepo.Leaf-1997-IJNME-supp}
\begin{equation}\label{asymp:rho}
\rho_{J_0}^2 \lesssim 2^{-2mJ_0}.
\end{equation}

By (a4) the basis is orthonormal. Therefore, by Parseval's identity and Lemma \ref{lemma:diff-c},
\begin{eqnarray}\label{asymp:diff-f}
\begin{split}
\E\norm{\hat{r}_{J_0}^{a} - r_{J_0}^a}^2  & = \sum_{k = 0}^{2^{J_0}-1} \E\abs{\hat{c}_{J_0k} - c_{J_0k}}^2 \lesssim \sum_{k = 0}^{2^{J_0}-1} \pa{D_n + n^{-1}} \\
& = 2^{J_0} D_n + \dfrac{2^{J_0}}{n}.
\end{split}
\end{eqnarray}

The desired result follows from \eqref{asymp:norm-rel}--\eqref{asymp:diff-f}.
\end{proof}

\begin{proof}[Proof of Corollary \ref{corol:lin-aa}]
Observe that by Theorem \ref{theo:gine.et.al},
\begin{equation*}
D_n \lesssim  \dfrac{J_0 2^{J_0}}{n} + 2^{-2mJ_0}.
\end{equation*}
Hence,
\begin{eqnarray*} \label{asymp:corol}
\begin{split}
2^{J_0} D_n + \dfrac{2^{J_0}}{n} + 2^{-2mJ_0} & \lesssim  \dfrac{J_0 2^{2J_0}}{n} +2^{-(2m-1)J_0} +\dfrac{2^{J_0}}{n} + 2^{-2mJ_0} \\
& \lesssim \dfrac{J_0 2^{2J_0}}{n} +2^{-(2m-1)J_0},
\end{split}
\end{eqnarray*}
which provides the desired result.
\end{proof}

\begin{proof}[Proof of Theorem \ref{theo:nonlin-a1}]
Still using $ r^a $ as defined in \eqref{eq:w-analysis}, observe that $\rho_{j} = \norm{r_{j}^a - r^a} = \sum_{j = J_1}^{+\infty} \sum_{k = 0}^{2^j-1} d_{jk} $. By Parseval's identity, it is easy to see that
\begin{equation}\label{ineq:dist-r}
\E \norm{ \tilde{r}_{J_1}^a - r^a }^2 = \sum_{k = 0}^{2^{J_0} - 1} \E \pa{ \hat{c}_{J_0k} - c_{J_0k} }^2 + \sum_{j = J_0}^{J_1 - 1} \sum_{k = 0}^{2^{j} - 1} \E \pa{ \tilde{d}_{jk} - d_{jk} }^2 + \rho_{J_1}.
\end{equation}

Since $ 0 \leq \lambda_{jk} \leq 1 $, the second term of the right hand side of the inequality above is bounded by
\begin{eqnarray}\label{ineq:nlin-a1}
\begin{split}
\sum_{j = J_0}^{J_1 - 1} \sum_{k = 0}^{2^j-1} \E \pa{ \tilde{d}_{jk} - d_{jk} }^2 & = \sum_{j = J_0}^{J_1 - 1} \sum_{k = 0}^{2^j-1} \E \pa{ \lambda_{jk} \hat{d}_{jk}- d_{jk} }^2 \\
& = \sum_{j = J_0}^{J_1 - 1} \sum_{k = 0}^{2^j-1} \E \pa{ \lambda_{jk} \hat{d}_{jk} - \lambda_{jk} d_{jk} + \lambda_{jk} d_{jk} - d_{jk} }^2 \\
& \leq 2 \sum_{j = J_0}^{J_1 - 1} \sum_{k = 0}^{2^j-1} \E \co{\lambda_{jk}^2 (\hat{d}_{jk} - d_{jk})^2} \\ & \qquad + 2 \sum_{j = J_0}^{J_1 - 1} \sum_{k = 0}^{2^j-1} \E \co{ (\lambda_{jk} - 1)^2 d_{jk}^2 } \\
& \leq 2 \sum_{j = J_0}^{J_1 - 1} \sum_{k = 0}^{2^j-1} \E \pa{\hat{d}_{jk} - d_{jk}}^2 + 2 \sum_{j = J_0}^{J_1 - 1} \sum_{k = 0}^{2^j-1} d_{jk}^2 \\
& = 2 \co{ \sum_{j = J_0}^{J_1 - 1} \sum_{k = 0}^{2^j-1} \E \pa{\hat{d}_{jk} - d_{jk}}^2 + \rho_{J_0}^2 - \rho_{J_1}^2 },
\end{split}
\end{eqnarray}
because $ \sum_{j = J_0}^{J_1 - 1} \sum_{k = 0}^{2^j-1} d_{jk}^2 = \rho_{J_0}^2 - \rho_{J_1}^2 $.

Therefore, \eqref{ineq:dist-r}, \eqref{ineq:nlin-a1} and Theorem \ref{theo:lin-a1} ensure that
\begin{eqnarray}
\begin{split}
\E \norm{ \tilde{r}_{J_1}^a - r^a }^2 \lesssim \sum_{k = 0}^{2^{J_0} - 1} \E \pa{ \hat{c}_{J_0k} - c_{J_0k} }^2 + \sum_{j = J_0}^{J_1 - 1} \sum_{k = 0}^{2^{j} - 1} \E \pa{ \hat{d}_{jk} - d_{jk} }^2 + \rho_{J_1}.
\end{split}
\end{eqnarray}
The last term above comes from the fact that $ \rho_{J_0} \asymp \rho_{J_1} $, because $ 2^{J_0} \asymp 2^{J_1} $, as stated in the above-mentioned theorem.

The desired result is yielded by \eqref{asymp:norm-rel}.
\end{proof}

\begin{proof}[Proofs of Theorem \ref{theo:nonlin-aa} and Corollary \ref{corol:nonlin-aa}]
The proofs of these results are similar to the proof of Theorem \ref{theo:nonlin-a1} presented above.
\end{proof}

\begin{proof}[Proof of Lemma \ref{lemma:diff-c}]
Let us focus initially on the case where $ a \neq 1 $. Then the estimator of the wavelet coefficients in \eqref{est:linear} can be written as
\begin{eqnarray*}
\hat{c}_{J_0k} & = & \dfrac{\hat{\mu}^a}{n} \sum_{i = 1}^{n} \dfrac{\phi_{J_0k}[H(Y_i)] h(Y_i) \hat{g}^{a-1}(Y_i)}{w^a(Y_i)} \\
& = & \dfrac{\hat{\mu}^a}{n} \sum_{i = 1}^{n} \dfrac{\phi_{J_0k}[H(Y_i)] h(Y_i)}{w^a(Y_i)} \co{\hat{g}^{a-1}(Y_i) - g^{a-1}(Y_i)} \\
&& \qquad + \dfrac{\hat{\mu}^a}{n} \sum_{i = 1}^{n} \dfrac{\phi_{J_0k}[H(Y_i)] h(Y_i) g^{a-1}(Y_i)}{w^a(Y_i)}.
\end{eqnarray*}
Thus,
\begin{eqnarray*}
\hat{c}_{J_0k} - c_{J_0k} & = & \dfrac{\hat{\mu}^a}{n} \sum_{i = 1}^{n} \dfrac{\phi_{J_0k}[H(Y_i)] h(Y_i)}{w^a(Y_i)} \co{\hat{g}^{a-1}(Y_i) - g^{a-1}(Y_i)} \\
&& \qquad + \dfrac{\hat{\mu}^a}{n} \sum_{i = 1}^{n} \dfrac{\phi_{J_0k}[H(Y_i)] h(Y_i) g^{a-1}(Y_i)}{w^a(Y_i)} - c_{J_0k} \\
& = & \dfrac{\hat{\mu}^a}{n} \sum_{i = 1}^{n} \dfrac{\phi_{J_0k}[H(Y_i)] h(Y_i)}{w^a(Y_i)} \co{\hat{g}^{a-1}(Y_i) - g^{a-1}(Y_i)} \\
&& \qquad + \pa{\dfrac{\hat{\mu}}{\mu}}^a \co{\dfrac{\mu^a}{n} \sum_{i = 1}^{n} \dfrac{\phi_{J_0k}[H(Y_i)] h(Y_i) g^{a-1}(Y_i)}{w^a(Y_i)} - c_{J_0k}} \\
&& \qquad + c_{J_0k} \hat{\mu}^a \pa{\dfrac{1}{\mu^a} - \dfrac{1}{\hat{\mu}^a}},
\end{eqnarray*}
which implies
\begin{eqnarray}\label{asymp:esp-dif-c}
%\begin{split}
\E\abs{\hat{c}_{J_0k} - c_{J_0k}}^2 & \lesssim & \E\abs{ \dfrac{\hat{\mu}^a}{n} \sum_{i = 1}^{n} \dfrac{\phi_{J_0k}[H(Y_i)] h(Y_i)}{w^a(Y_i)} \co{\hat{g}^{a-1}(Y_i) - g^{a-1}(Y_i)} }^2 \nonumber\\
& & \qquad + \E\abs{ \pa{\dfrac{\hat{\mu}}{\mu}}^a \co{\dfrac{\mu^a}{n} \sum_{i = 1}^{n} \dfrac{\phi_{J_0k}[H(Y_i)] h(Y_i) g^{a-1}(Y_i)}{w^a(Y_i)} - c_{J_0k}} }^2 \nonumber\\
& & \qquad + \E\abs{ c_{J_0k} \hat{\mu}^a \pa{\dfrac{1}{\mu^a} - \dfrac{1}{\hat{\mu}^a}} }^2 \nonumber\\
& \equiv & I_1 + I_2 + I_3.
%\end{split}
\end{eqnarray}

Beginning with $ I_1 $, by \eqref{asymp:g} it is easy to see that $ g^a $ is Lipschitz, satisfying
\begin{equation}\label{asymp:gLipschitz}
\abs{\hat{g}^{a-1}(y) - g^{a-1}(y)} \lesssim \abs{\hat{g}(y) - g(y)} \quad \forall y\in[0,1] \qquad \as
\end{equation}
Hence, since $ Y_1, \ldots, Y_n $ are \iid, by \eqref{asymp:hmu} and due to \eqref{asymp:gLipschitz},
\begin{eqnarray*}
I_1 & = & \E\abs{ \hat{\mu}^a \dfrac{\phi_{J_0k}[H(Y)] h(Y)}{w^a(Y)} \co{\hat{g}^{a-1}(Y) - g^{a-1}(Y)} }^2 \\
& \asymp & \E\abs{ \dfrac{\phi_{J_0k}[H(Y)] h(Y)}{w^a(Y)} \co{\hat{g}^{a-1}(Y) - g^{a-1}(Y)} }^2 \\
& \lesssim & \E\abs{ \dfrac{\phi_{J_0k}[H(Y)] h(Y)}{w^a(Y)} \co{\hat{g}(Y) - g(Y)} }^2.
%& \lesssim & \E\abs{ \dfrac{\phi_{J_0k}[H(Y)] h(Y)}{w^a(Y)} \co{\hat{g}(y) - g(y)} }^2 \\
\end{eqnarray*}

If there is a positive sequence $ D_n $ such that
\[ \sup_{y \in [0,1]} \abs{\hat{g}(y) - g(y)}^2 \lesssim D_n \quad \as, \]
then,
\begin{eqnarray}\label{asymp:I1}
\begin{split}
I_1  \lesssim D_n \E\abs{ \dfrac{\phi_{J_0k}[H(Y)] h(Y)}{w^a(Y)} }^2
 \asymp D_n \int_{0}^{1} \phi_{J_0k}^2(y) dy
& = D_n,
\end{split}
\end{eqnarray}
where the second inequality comes from \eqref{asymp:w}-\eqref{asymp:h}, and the third inequality is due to the fact that $ \int_{0}^{1} \phi_{J_0k}^2(y) dy = 1 $ \citep{Restrepo.Leaf-1997-IJNME-supp}.

In the analysis of $ I_2 $, let us denote
\[ \xi_i = \dfrac{\mu^a \phi_{J_0k}[H(Y_i)] h(Y_i) g^{a-1}(Y_i)}{w^a(Y_i)} - c_{J_0k}, \]
for $ i = 1, 2, \ldots, n $. It is immediate that $ \xi_1, \xi_2, \ldots, \xi_n $ are \iid\ Moreover, $ \E(\xi_i) = 0 $ and $ \var(\xi_i) \ 1 $. In fact, the zero mean comes from \eqref{eq:cjk} and
\begin{eqnarray}\label{asymp:var-xii}
\begin{split}
\var(\xi_i) & = \E\pa{\xi_i^2} \\
& = \E\co{\dfrac{\mu^a \phi_{J_0k}[H(Y_i)] h(Y_i) g^{a-1}(Y_i)}{w^a(Y_i)} - c_{J_0k}}^2 \\
& \lesssim \E\co{\dfrac{\mu^a \phi_{J_0k}[H(Y_i)] h(Y_i) g^{a-1}(Y_i)}{w^a(Y_i)}}^2 + c_{J_0k}^2 \\
& \lesssim \E\co{\dfrac{\mu^a \phi_{J_0k}[H(Y_i)] h(Y_i) g^{a-1}(Y_i)}{w^a(Y_i)}}^2 \\
& \asymp \int_{0}^{1} \phi_{J_0k}^2[H(y)] h(y) dy \\
& = \int_{0}^{1} \phi_{J_0k}^2(x) dx = 1,
\end{split}
\end{eqnarray}
where the fourth inequality comes from the fact that
\begin{eqnarray*}
c_{J_0k}^2 & = & \ch{\E\co{\mu^a \phi_{J_0k}[H(Y_i)] h(Y_i) g^{a-1}(Y_i)/w^a(Y_i)}}^2 \\
& \leq & \E\co{\mu^a \phi_{J_0k}[H(Y_i)] h(Y_i) g^{a-1}(Y_i)/w^a(Y_i)}^2
\end{eqnarray*}
and the fifth from \eqref{asymp:w}--\eqref{asymp:g}. Therefore, by \eqref{asymp:mu} and \eqref{asymp:hmu}, and because of \eqref{asymp:var-xii}, we have that
\begin{eqnarray}\label{asymp:I2}
\begin{split}
I_2  \asymp \var\pa{n^{-1} \sum_{i = 1}^{n} \xi_i}
&= n^{-1} \var(\xi_1) \lesssim n^{-1}.
\end{split}
\end{eqnarray}

Finally, note that, by \eqref{asymp:hmu},
\begin{eqnarray}\label{asymp:I31}
\begin{split}
I_3  \lesssim \E\abs{\hat{\mu}^a \pa{\dfrac{1}{\mu^a} - \dfrac{1}{\hat{\mu}^a}}}^2
& \asymp \E\abs{\dfrac{1}{\mu^a} - \dfrac{1}{\hat{\mu}^a}}^2.
\end{split}
\end{eqnarray}
Because of \eqref{asymp:mu} and \eqref{asymp:hmu}, we have that $ 1/\mu^a $ and $ 1/\hat{\mu}^a $ are Lipschitz continuous functions of $ 1/\mu $ and $ 1/\hat{\mu} $, respectively. Therefore,
\[
\abs{\dfrac{1}{\mu^a} - \dfrac{1}{\hat{\mu}^a}} \lesssim \abs{\dfrac{1}{\mu} - \dfrac{1}{\hat{\mu}}} \quad \as
\]
Hence, by \eqref{asymp:I31},
\begin{equation}\label{asymp:I3}
I_3 \lesssim \E\abs{\dfrac{1}{\mu} - \dfrac{1}{\hat{\mu}}}^2 \lesssim n^{-1}.
\end{equation}
The last inequality is verified as in the end of Proposition 4.1's proof \cite{Chesneau-2010-JotKSS-supp}.

Therefore, \eqref{asymp:esp-dif-c}, \eqref{asymp:I1}, \eqref{asymp:I2} and \eqref{asymp:I3} ensure that, for the case where $ a \neq 1 $,
\begin{equation}\label{asymp:E-diff-c-a}
\E\abs{\hat{c}_{J_0k} - c_{J_0k}}^2 \lesssim D_n + n^{-1}.
\end{equation}

The case where $ a = 1 $ is analogous but simpler. Observe that, when $ a = 1 $, $ \hat{g}^{a-1}(y) = 1 $ and $ g^{a-1}(y) = 1 $ for $ y \in [0,1] $. Therefore, in \eqref{asymp:esp-dif-c}, the term $ I_1 $ becomes null and the terms $ I_2 $ and $ I_3 $ are simplified without changing the upper bounds in \eqref{asymp:I2} and \eqref{asymp:I3}, respectively. Thus, for $ a = 1 $,
\begin{equation}\label{asymp:E-diff-c-a1}
\E\abs{\hat{c}_{J_0k} - c_{J_0k}}^2 \lesssim n^{-1}.
\end{equation}

Hence, \eqref{asymp:E-diff-c-a} and \eqref{asymp:E-diff-c-a1} yield the desired result.
\end{proof}

%%%%%%%%%%%%%%%%%%%%%%%%%%%%%%%%%%%%%%%%%%%%%%
%% Supplementary Material, if any, should   %%
%% be provided in {supplement} environment  %%
%% with title and short description.        %%
%%%%%%%%%%%%%%%%%%%%%%%%%%%%%%%%%%%%%%%%%%%%%%
%\begin{supplement}
%\stitle{???}
%\sdescription{???.}
%\end{supplement}

%% if your bibliography is in bibtex format, uncomment commands:
%\bibliographystyle{imsart-number} % Style BST file (imsart-number.bst or imsart-nameyear.bst)
%\bibliography{bibliography}       % Bibliography file (usually '*.bib')

%% or include bibliography directly:

%\bibliographystyle{apalike}
%\bibliography{biblio1}

\end{document}